\pgfplotsset{compat=1.17}
\newcommand{\bN}{\bm{N}}
\newcommand{\bx}{\bm{x}}
\newcommand{\bY}{\bm{Y}}
\newcommand{\by}{\bm{y}}
\newcommand{\bve}{\bm{\varepsilon}}
\newcommand{\bg}{\bm{g}}
\newcommand{\mF}{\mathcal{F}}
\newcommand{\mB}{\mathcal{B}}
\newcommand{\tF}{\tilde{\mathcal{F}}}
\newcommand{\tB}{\tilde{\mathcal{B}}}
\newcommand{\Halmos}{}
\newtheorem{theorem}{Theorem}
\newtheorem{lemma}{Lemma}
\newtheorem{proposition}{Proposition}
\newtheorem{corollary}{Corollary}
\newtheorem*{lemma*}{Lemma}
\newtheorem*{proposition*}{Proposition}
\theoremstyle{definition}
\newtheorem{definition}{Definition}
\newtheorem{assumption}{Assumption}
\theoremstyle{remark}
\newtheorem{remark}{Remark}
\title{{\fontsize{21}{25}\selectfont\bfseries Distributed Load Balancing with Workload-Dependent Service Rates}\thanks{An extended abstract of this paper appeared in the Proceedings of the 2025 ACM Conference on Economics and Computation.}}
\author{%
\begin{tabular}{c}
Wenxin Zhang\textsuperscript{1}, Santiago R.\ Balseiro\textsuperscript{1}, Robert Kleinberg\textsuperscript{2},\\
Vahab Mirrokni\textsuperscript{3}, Balasubramanian Sivan\textsuperscript{3}, Bartek Wydrowski\textsuperscript{3}\\[0.4em]
\small \textsuperscript{1}Graduate School of Business, Columbia University\\
\small \textsuperscript{2}Department of Computer Science, Cornell University\\
\small \textsuperscript{3}Google Research\\[0.2em]
\small \texttt{wz2574@columbia.edu}, \texttt{srb2155@columbia.edu}, \texttt{rdk@cs.cornell.edu}\\
\small \texttt{mirrokni@google.com}, \texttt{balusivan@google.com}, \texttt{bwydrowski@google.com}
\end{tabular}}
\date{}
\begin{document}

\maketitle
\begin{abstract}
\noindent Modern service systems, including cloud platforms and large language model inference endpoints, must distribute jobs across servers whose processing speeds depend on current workloads. At scale, centralized coordination is costly, while naive distributed policies can perform arbitrarily poorly. We study how to design a simple distributed load balancing policy that achieves globally optimal latency performance in such settings.
We model the system as a bipartite queueing network with an arbitrary compatibility graph and servers with concave, workload-dependent service rates. We propose the Greatest Marginal Service Rate (GMSR) policy, which routes jobs to a connected server where it has the largest marginal impact on service rate. In a discrete-time stochastic model, we show that as time discretization is refined (shrinking time step and job size proportionally), the scaled workload process converges almost surely to a fluid limit governed by a differential inclusion. In the fluid regime, GMSR reaches an $\epsilon$-suboptimal solution in $\mathcal{O}(\delta + \log(1/\epsilon))$ time from any $\delta$-suboptimal initial state, implying global convergence to the centrally optimal routing. When the system is overloaded, GMSR maximizes throughput, maximizes the number of stabilized backends among throughput-optimal policies, and minimizes total workload over those stabilized backends.

GMSR yields a practical routing rule that requires neither demand-rate knowledge nor centralized coordination. By relying only on local information, service providers can achieve near-optimal latency performance through decentralized decisions, making the policy well suited to large-scale cloud computing, LLM serving, and other distributed service environments where centralized control is costly or infeasible.
\end{abstract}

\paragraph{Keywords.}
distributed load balancing; queueing systems; workload-dependent service rates; bipartite graphs; fluid models; Lyapunov stability


\section{Introduction}
{The infrastructure powering artificial intelligence is both enormously expensive and facing surging demand. On the supply side, an estimated \$5--8 trillion in data center capital expenditure will be needed by 2030 to meet worldwide demand for AI compute \citep{mckinsey2025cost}. On the demand side, AI inference workloads---the real-time serving of trained models to end users---are projected to account for two-thirds of all AI compute by 2026, with demand growing four- to five-fold annually \citep{deloitte2025compute}. These systems rely on computational resources deployed across a global network of data centers, and a key lever for efficiently utilizing this infrastructure is \emph{load balancing}: how should multiple frontends (routers) distribute jobs across a pool of backends (servers) to minimize latency? Better load balancing reduces latency at a given capacity, improving user experience, or equivalently, allows the same quality of service with fewer resources, improving return on investment in AI infrastructure.}

We consider a model where a set of frontends $\mF$ route jobs to a set of backends $\mB$ with the goal to minimize the average latency across all jobs. The connectivity between two sides is governed by an arbitrary bipartite graph $\mathcal{G} = (\mathcal{F}, \mathcal{B}, \mathcal{E})$: frontend $f$ can send jobs only to backends in $\mathcal{B}(f)$, reflecting constraints such as data residency regulations and geographical proximity.
Each backend $b\in\mB$ has a \emph{workload-dependent} service rate $\mu_b(N_b) \geq 0$, modeled as a non-decreasing and concave function of the workload $N_b$ (the current number of requests at the backend), with $\mu_b(0) = 0$ (an empty queue has no jobs to serve). {This modeling choice is motivated by the complexities of modern applications. For example, serving LLM inference queries requires expensive GPU computation where throughput depends nonlinearly on the number of concurrent jobs due to continuous batching and scheduling mechanisms, while contention for shared resources such as memory bandwidth causes the processing rate to increase concavely in the workload \citep{yu2022orca,kwon2023efficient,agrawal2024taming,zhong2024distserve,recasens2025mind}---an effect predicted by the Universal Scalability Law \citep[Chapter~6]{gunther2007guerrilla}. While LLM inference is our primary motivating application, our model and algorithms apply broadly to any service system with increasing, concave service rate functions.} 

Classic load-balancing policies are insufficient for this setting. For example, Join-the-Shortest-Queue (JSQ) ignores the heterogeneity of backends entirely: it drives the system toward equalizing workloads across all backends, which can be highly suboptimal when backends differ in service capacity. A natural fix is to account for service rates, as in heterogeneity-aware variants such as Generalized JSQ \citep{banawan1989load,selen2016approximate}: for instance, routing each job to the backend where it would experience the shortest expected latency. However, when service rates are workload-dependent, this greedy approach is also flawed. The root cause is an \emph{externality}: sending an additional job to a backend changes the backend's service rate and potentially degrades the service received by other jobs already present. Routing greedily to minimize one's own expected latency ignores this congestion externality, leading to outcomes that are socially suboptimal (see Pigou's classic example in Appendix~\ref{apx: failure of expected latency}). Figure~\ref{fig:intro-latency-utilization} illustrates this gap.

\begin{figure}[t]
    \centering
    \includegraphics[width=0.72\linewidth]{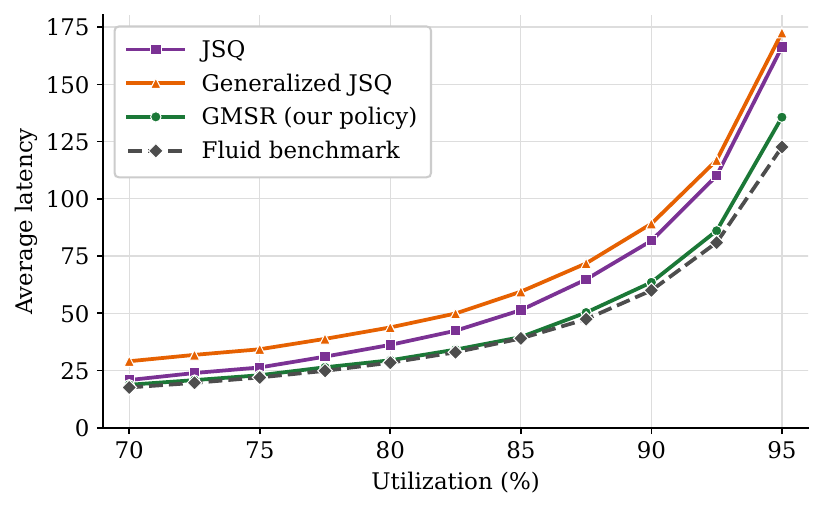}
    \caption{Stochastic simulation in a two-backend system with service rates $\mu_1(N)=N/(1+N)$ and $\mu_2(N)=N/(16+N)$, with arrival rate swept over $\lambda\in[1.4,1.9]$ and utilization defined as $\lambda/(\mu_1(\infty) + \mu_2(\infty))$. The dashed curve is the centralized fluid benchmark from the fluid relaxation (defined in \eqref{eq: FLU}). GMSR remains close to this benchmark, while JSQ and Generalized JSQ incur substantially higher latency at high utilization. Our theory shows that the GMSR--benchmark gap vanishes under the fluid scaling. Section~\ref{sec: randomized simulation} reports simulations for additional randomly generated setups.}
    \label{fig:intro-latency-utilization}
\end{figure}

An additional challenge is that centralized coordination is costly if not infeasible at scale: it would require each frontend to know the arrival rates at all other frontends, imposing communication overhead that grows with the system. We therefore focus on \emph{distributed} control, where each frontend makes routing decisions independently using only local information about its connected backends. Beyond circumventing the need for global information, distributed control offers robustness (no single point of failure), scalability (frontends and backends can be added or removed seamlessly), and reduced communication overhead---all critical for modern web services. However, distribution introduces its own difficulty: a routing decision that looks locally optimal for one frontend can degrade throughput system-wide.

\subsection{Contributions}
We develop a simple distributed load balancing policy called the Greatest Marginal Service Rate policy (GMSR), which facilitates coordination among the frontends and achieves stability and asymptotic latency optimality under minimal assumptions on the problem setting. Our stability analysis introduces a novel Lyapunov function and relies on a combinatorial argument, which may be of independent interest.

When a job arrives at a frontend, GMSR routes it to a connected backend with the highest marginal service rate, i.e., the backend where an additional job would have the highest impact on the current service rate.
The policy is fully distributed (each frontend needs only the workloads at its connected backends) and agnostic to arrival rates, in contrast to LP-based policies that require knowledge of arrival rates (e.g.,~\citealt{bassamboo2006design}).
GMSR is motivated by the structure of the centrally coordinated optimum: at the optimal solution, the marginal service rates are equal for any two backends receiving positive flow from the same frontend (Lemma~\ref{lem: optimal FLU structure}). GMSR drives the system toward this balanced-gradient condition.

\paragraph{\textbf{Main results}.}
We establish the following results for the GMSR policy:
\begin{itemize}
    \item \textit{Convergence to fluid model.} In a large-system limit where job sizes shrink to zero and arrival rates are scaled correspondingly, the discrete-time stochastic system dynamics converge almost surely to a solution of the fluid differential inclusion (Theorem~\ref{thm: converge to fluid model}).
    \item \textit{Global stability and latency optimality.} GMSR is shown to minimize the expected latency jobs experience in the fluid limit. Specifically, the fluid model under GMSR is globally strongly asymptotically stable: regardless of the initial workloads, every trajectory converges to latency minimizing equilibrium point over time (Theorem~\ref{thm:stability}). 
    \item \textit{Convergence rate.} From $\delta$-suboptimal initial workloads, i.e., the absolute difference between the sum of the initial workloads and the sum of the optimal workloads is $\delta > 0$, GMSR attains an $\epsilon$-suboptimal solution in $\mathcal{O}(\delta + \log{1/\epsilon})$ time (Theorem~\ref{thm:convergence}, $0<\epsilon < \delta$). The convergence has two phases: if the system starts far from optimality, at least one backend's workload decreases linearly until the system enters a regime of exponentially fast convergence.
    \item \textit{Overloaded systems.} When the system cannot be fully stabilized, GMSR lexicographically maximizes throughput, maximizes the number of stabilized backends among throughput-optimal policies, and minimizes total workload over those stabilized backends.
\end{itemize}
GMSR can be viewed as a generalization of JSQ: for identical servers with strictly concave service rates, the backend with the highest marginal rate is the one with the smallest workload, so GMSR reduces to JSQ. Our convergence result then implies the asymptotic optimality of JSQ for identical servers with general bipartite graphs---which, to our knowledge, had not previously been established.

\paragraph{\textbf{Main challenges and our techniques}.}
GMSR resembles two classical frameworks, but fits neither.

The first is \emph{selfish routing / Wardrop equilibria} \citep{wardrop1952road}. In a routing game, each agent picks a route to minimize its own latency, where latency is a fixed function of the aggregate flow on each route. The system is static: given a routing decision, latencies are determined instantaneously, and the classical result is that marginal-cost pricing (Pigouvian taxes) induces the socially optimal flow \citep{beckmann1956studies,pigou2017economics}. GMSR uses a similar marginal-cost signal---the marginal service rate---but operates in a fundamentally dynamic setting. Routing decisions change workloads over time, which change service rates, which change the optimal routing. There is no fixed ``latency function'' to optimize against; the landscape shifts with every routing decision.

Second, GMSR differs from \emph{first-order optimization} methods such as Frank-Wolfe or gradient descent. Those methods repeatedly evaluate an objective at the current iterate and move with a controlled, typically diminishing, step size. GMSR does neither. Each arriving job is routed greedily and irrevocably to the backend with the highest current marginal service rate, while the state used to make that decision is itself evolving stochastically under past routing choices.

General concave service rates make the analysis substantially harder. With piecewise-linear rates (e.g., M/M/$k$ queues), faster backends are preferred until they saturate. Under general concavity, marginal service rates vary continuously as workloads evolve, causing the preferred backend to change frequently and the routing structure to reconfigure in complex ways. At the same time, decentralized frontends may herd to the same backend, and ties in marginal service rates create discontinuities in the dynamics.

To handle these challenges, we analyze GMSR through its fluid limit: we show the discrete-time stochastic system converges almost surely to a \emph{differential inclusion}---a set-valued ordinary differential equation that captures tie-breaking ambiguity \citep{aubincellina1984differential} (Theorem~\ref{thm: converge to fluid model}). We then prove that the centrally optimal routing is the unique equilibrium of this differential inclusion (Lemma~\ref{lem: Equilibrium Points are Fluid Optimal}).
To prove global convergence, we construct a novel Lyapunov function measuring total flow imbalance (Lemma~\ref{lem: m V positive definite}) and introduce a combinatorial \emph{tier graph} (Definitions~\ref{def:tier} and~\ref{def:tiergraph}) that tracks how frontends' routing decisions evolve as workloads move across equal-gradient hypersurfaces, including sliding, splitting, and reconfiguration events related to sliding mode control \citep{utkin2013sliding}.
The key step is to show that this Lyapunov function decreases despite these structural changes.

\subsection{Organization of the Paper}
The rest of the paper is organized as follows. In Section \ref{sec: related work}, we review related literature. Section \ref{sec: problem formulation} describes the discrete-time model and the fluid relaxation problem. The GMSR policy is introduced in Section \ref{sec: GMSR}, and we present the convergence to the fluid model in Section \ref{sec: convergence to fluid}. The stability analyses are in Section \ref{sec: stability analysis}. Section~\ref{sec: overload system} analyzes overloaded systems. Section~\ref{sec: randomized simulation} reports additional simulations, and Section~\ref{sec: conclusions} concludes with further research directions. All proofs are deferred to the appendix.

\section{Literature Review}\label{sec: related work}

Our work sits at the intersection of three literatures: workload-dependent service rates, load balancing in stochastic systems, and AI inference serving efficiency. 

\paragraph{\textbf{Workload-dependent service rates.}}
Empirical evidence shows that service rates in practice are often workload-dependent: \cite{edie1954traffic,batt2012doctors,tan2014does} document this phenomenon among human workers in toll booths, hospitals, and call centers.
On the modeling side, queueing systems with state-dependent service rates have been studied both descriptively \citep{harris1967queues,bekker2004queues,abouee2016state} and prescriptively: for admission control \citep{bekker2006optimal}, staffing \citep{dong2013slowdown}, concurrency limits under processor sharing \citep{gupta2022approximations}, and service rate control \citep{xia2017optimal}.
A classical line of work also studies routing to parallel queues with state-dependent rates \citep{johri1989optimality,menich1991optimality,sparaggis1993extremal}, but these are restricted to homogeneous servers with a single router.
We study a substantially different setting: distributed load balancing across heterogeneous servers with workload-dependent rates over an arbitrary bipartite graph.

\paragraph{\textbf{Load balancing.}}
The Join-the-Shortest-Queue (JSQ) policy directs each job to the server with the fewest jobs. Its optimality for a single router with identical servers has been established under various conditions \citep{winston1977optimality,weber1978optimal,johri1989optimality,liu2022steady}. Extensions to heterogeneous servers include heavy-traffic optimality results \citep{chen2012asymptotic,hurtado2021throughput} and faster-server-first policies in the Halfin-Whitt regime \citep{armony2005dynamic,tezcan2008optimal}; these works assume constant service rates and a single router.

JSQ can also be applied to systems with multiple routers in a distributed fashion, where each router sends jobs to the connected server with the shortest queue \citep{foss1998stability,cruise2020stability,stolyar2005optimal,van2018scalable,cardinaels2022heavy}. Among these, \cite{weng2020optimal,rutten2023load,zhao2024exploiting} are closest to our model: they study bipartite load balancing with distributed JSQ variants. Our work differs in three respects: we allow workload-dependent service rates, operate in the fluid regime with a fixed network (versus many-server scaling), and require only a minimal necessary stability condition (Assumption~\ref{assmp: system stable}) rather than additional graph connectivity assumptions.
The closest work is \citet{horvath2019mean}, which models servers with state-dependent rates but studies a single router, analyzes a different policy (Join-Below-Threshold), and uses mean-field scaling. Our setting and the policy analyzed are substantially different.

\paragraph{\textbf{AI inference serving efficiency.}}
There is growing interest in applying operations research tools to improve the efficiency of AI inference. Recent work addresses throughput-optimal scheduling \citep{li2025throughput,bari2025optimal}, online scheduling with KV cache constraints \citep{jaillet2025online,ao2025optimizing}, robust scheduling under output-length uncertainty \citep{chen2025adaptively}, queueing-theoretic modeling of LLM inference latency \citep{yang2024queueing,mitzenmacher2025queueing}, and batch scheduling with variable prefill and decode lengths \citep{wang2025llm}. To our knowledge, the present paper is the first to study the distributed load balancing problem in this context, bridging the gap between the queueing-theoretic foundations of load balancing and the operational needs of large-scale AI inference.

\section{Problem Formulation}\label{sec: problem formulation}
In this section, we introduce a discrete-time model for the load balancing problem with a bipartite queueing system. Consider a bipartite graph $\mathcal{G} = (\mathcal{F}, \mathcal{B}, \mathcal{E})$, where $\mathcal{F}$ is the set of frontends, $\mathcal{B}$ is the set of backends, and $\mathcal{E}$ is the set of edges connecting $\mF$ to $\mB$: if $(f,b) \in \mathcal{E}$ for $f\in \mathcal{F}, b\in \mathcal{B}$, then frontend $f$ can route jobs to backend $b$. We denote by $\mathcal{F}(b)$ and $\mathcal{B}(f)$ the set of frontends connected to backend $b\in \mathcal{B}$ and the set of backends connected to frontend $f\in \mathcal{F}$, respectively. We assume that no frontends or backends are isolated, i.e., $\mathcal{F}(b) \neq \emptyset$, $\mathcal{B}(f) \neq \emptyset$ for all $f \in \mathcal{F}, b\in \mathcal{B}$.

In each period $i\geq 0$, the number of jobs arriving at frontend $f$ is a random variable independently drawn from a distribution that can be time-dependent with mean $\lambda_f$, denoted by $W_f(i)$. Upon arrival, a job can either queue at the frontend or be sent to a connected backend for service. The job service rate at the backend $b\in \mathcal{B}$ is denoted by $\mu_b(N_b)\geq 0$, which is a function of the workload $N_b$ at this backend $b$. Let $G_f(i)$ denote the number of jobs queued at frontend $f$ and let $N_b(i)$ denote the number of jobs at backend $b$ at time period $i$, the system dynamics are
given by
\begin{align*}
G_f(i+1) = G_f(i) + W_f(i) - \sum_{b\in \mathcal{B}(f)} A_{f,b}(i), \quad N_b(i+1) = N_b(i) + \sum_{f\in \mathcal{F}(b)} A_{f,b}(i) - D_b(i),
\end{align*}
where
\begin{itemize}
    \item $A_{f,b}(i)$ denotes the number of arriving jobs routed from frontend $f$ to backend $b$ at time period $i$, and the total number of jobs routed from frontend $f$ cannot exceed the total number of jobs queueing there, i.e., $\sum_{b\in \mathcal{B}(f)} A_{f,b}(i) \leq G_f(i) + W_f(i)$.
    \item $D_b(i)$ denotes the number of job departures from backend $b$ at time period $i$, which satisfies $\mathbb{E}[D_b(i)\mid N_b(i)] = \mu_b(N_b(i))$ and $D_b(i) \leq N_b(i)$.
\end{itemize}

The decision-maker needs to design an online policy that routes arriving jobs to the backends to minimize the long-run average latency each job experiences. If the system can be stabilized, i.e., the workload does not explode, then Little's Law implies the objective is equivalent to minimizing the long-run average number of jobs in the system. 

Let $\pi$ denote any online policy for our load balancing problem, then the long-run average number of jobs in the system under policy $\pi$ is defined as 
\begin{align}\label{eq: def performance of pi}
    \limsup_{k\to\infty} \frac{1}{k} \mathbb{E}_\pi \left[\sum_{i=0}^{k-1}  \left( \sum_{f\in \mathcal{F}} G_f(i)+  \sum_{b\in \mathcal{B}}N_b(i)\right)\right],
\end{align}
where the expectation is taken over the arrival process, job service process, and the probability measure induced by the policy $\pi$.

To analyze the performance of a policy $\pi$, we benchmark against the fluid relaxation of the problem. The fluid optimization problem can be formulated as follows:
\begin{align}
\textsf{OPT =}    \min_{\bN, \bx} \quad & \sum_{b\in \mathcal{B}} N_b \label{eq: FLU}\\
    \text{s.t.} \quad & \sum_{f\in \mathcal{F}} \lambda_f x_{f,b} = \mu_b(N_b), \forall b\in\mathcal{B}, \notag\\
    & \sum_{b\in \mathcal{B}}x_{f,b} = 1, \forall f \in \mathcal{F}, \notag\\
    & x_{f,b} \geq 0, \forall (f,b) \in \mathcal{E}, \notag \\
    & x_{f,b} = 0, \forall (f,b) \notin \mathcal{E}, \notag
\end{align}
where $N_b$ denotes the steady-state load level at each backend $b\in \mathcal{B}$, and $x_{f,b}$ denotes the proportion of jobs that is sent from frontend $f$ to backend $b$. 
The first constraint imposes flow balance at each backend, i.e., at the equilibrium, the total flow into backend $b$, $\sum_{f\in \mathcal{F}} \lambda_f x_{f,b}$ equals the total flow out of backend $b$, $\mu_b(N_b)$. The second constraint imposes flow balance at each frontend, i.e., all jobs are sent to backends. The last two sets of constraints impose the connectivity constraints.

We conduct our analysis under the following assumptions.
\begin{assumption}\label{assmp: increasing concave smooth ell}
    The service rate functions $\{\mu_b\}_{b\in \mathcal{B}}$ are strictly increasing, strictly concave, bounded, and differentiable, with $\mu_b(0)= 0$.
\end{assumption}

Service rates are non-decreasing because higher loads lead to higher service rates: the service rate never decreases when new jobs arrive because the backend always has the option to delay working on those jobs and continue processing the ones that were already in progress. Concavity reflects decreasing returns to scale: the available resources to process the next-arriving job are a decreasing function of the number of jobs already being served. When the workload is zero, so is the service rate. We assume boundedness to reflect the finite capacity of real-world systems. Differentiability and strict concavity are technical assumptions that simplify the analysis. The assumption that service rate functions are strictly increasing is a direct consequence of strict concavity combined with non-decreasingness. 

Conceptually, $\mu_b(\cdot)$ acts as a \emph{performance envelope} for backend~$b$: it summarizes the aggregate processing capacity as a function of the total workload, abstracting away the internal mechanisms---for instance, continuous batching, disaggregated prefill-decode, memory management in LLM inference application---that determine how that capacity arises. Our model is agnostic to how service capacity is allocated among individual requests.

\begin{assumption}\label{assmp: system stable}
    The optimization problem \eqref{eq: FLU} is feasible.
\end{assumption}
The feasibility assumption is necessary for the system to be stable, i.e., the service capacity of the backends is enough to serve all arriving jobs without allowing the queue sizes to explode. A necessary and sufficient condition for feasibility is that for every subset $P\subseteq \mathcal{F}$ of frontends, the total arrival rate does not exceed the aggregate maximum service capacity of all reachable backends:
\[\sum_{f\in P}\lambda_f \leq \sum_{b\in \cup_{f\in P} \mB(f)}\mu_b(\infty).\]
This is the analogue of Hall's marriage condition \citep{tassiulas1990stability} for our setting with workload-dependent service rates.
Assumption~\ref{assmp: system stable} can be relaxed to admit overloaded systems, where GMSR lexicographically maximizes throughput, maximizes the number of backends that remain stable, and minimizes their total workload; see Section~\ref{sec: overload system} and Appendix~\ref{apx: overload system} for details.


The fluid optimum $\textsf{OPT}$ serves as a lower bound on the performance of any online policy:
{ \begin{lemma}\label{lem: OPT lower bound}
    For any online policy $\pi$,
\[\textsf{OPT} \leq \liminf_{k\to\infty} \frac{1}{k} \mathbb{E}_\pi \left[ \sum_{i=0}^{k-1}  \left( \sum_{f\in \mathcal{F}} G_f(i)+  \sum_{b\in \mathcal{B}}N_b(i)\right) \right].\]
\end{lemma}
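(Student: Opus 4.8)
The plan is to show that any policy whose long-run average cost $L$ (the right-hand side of the claimed inequality) is finite induces, through time-averaging, a point that is feasible for the fluid program \eqref{eq: FLU} and whose objective value is at most $L$; since \textsf{OPT} is the minimum of \eqref{eq: FLU}, this yields $\textsf{OPT}\le L$. If $L=\infty$ there is nothing to prove, as \textsf{OPT} is finite by Assumption~\ref{assmp: system stable}. So assume $L<\infty$ and fix a subsequence $(k_n)$ attaining the $\liminf$. For a horizon $k$ set $\bar G_f^{(k)}=\tfrac1k\sum_{i<k}\E_\pi[G_f(i)]$, and define $\bar N_b^{(k)}$, $\bar A_{f,b}^{(k)}$, $\bar D_b^{(k)}$ analogously; note $\tfrac1k\sum_{i<k}\E_\pi[W_f(i)]=\lambda_f$. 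Using nonnegativity of all state variables together with $L<\infty$, the routing bound $\sum_{b}A_{f,b}(i)\le G_f(i)+W_f(i)$, and $M_b:=\sup_{x\ge0}\mu_b(x)<\infty$, all these averages are bounded along $(k_n)$; passing to a further subsequence I may assume $\bar A_{f,b}^{(k_n)}\to\alpha_{f,b}$, $\bar N_b^{(k_n)}\to\nu_b$, $\bar G_f^{(k_n)}\to\gamma_f$, $\bar D_b^{(k_n)}\to d_b$, with $\sum_f\gamma_f+\sum_b\nu_b=L$. Also $\alpha_{f,b}=0$ whenever $(f,b)\notin\mathcal E$, since jobs are only routed along edges.

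Next I extract the constraints of \eqref{eq: FLU} from the Cesàro-averaged dynamics. Telescoping the frontend recursion gives $\tfrac1k(\E_\pi[G_f(k)]-\E_\pi[G_f(0)])=\lambda_f-\sum_b\bar A_{f,b}^{(k)}$ and, likewise, $\tfrac1k(\E_\pi[N_b(k)]-\E_\pi[N_b(0)])=\sum_f\bar A_{f,b}^{(k)}-\bar D_b^{(k)}$. Evaluating along $(k_n)$, the initial terms vanish and the left sides have nonnegative limits (the states are nonnegative), so $\sum_b\alpha_{f,b}\le\lambda_f$ and $\sum_f\alpha_{f,b}\ge d_b$. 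The main point is to upgrade both to equalities. For this I use that $\E_\pi[G_f(\cdot)]$ increases by at most $\lambda_f$ per step (since $\sum_bA_{f,b}(i)\ge0$) and $\E_\pi[N_b(\cdot)]$ decreases by at most $M_b$ per step (since $\E_\pi[D_b(i)]=\E_\pi[\mu_b(N_b(i))]\le M_b$ and arrivals are nonnegative). If $\rho:=\lambda_f-\sum_b\alpha_{f,b}>0$, then $\E_\pi[G_f(k_n)]/k_n\to\rho$, so the slow-change bound forces $\E_\pi[G_f(m)]\ge\tfrac12\E_\pi[G_f(k_n)]$ over a block of $\Theta(k_n)$ indices $m<k_n$, whence $\bar G_f^{(k_n)}$ grows linearly in $k_n$, contradicting $L<\infty$; the identical argument at the backends rules out $\sum_f\alpha_{f,b}-d_b>0$. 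Hence $\sum_b\alpha_{f,b}=\lambda_f$ for every $f$ with $\lambda_f>0$ (if $\lambda_f=0$ no jobs ever arrive at $f$ and that frontend is vacuous), and $\sum_f\alpha_{f,b}=d_b$ for every $b$.

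Finally I identify the backend workloads. By Jensen's inequality applied twice using concavity of $\mu_b$ (first conditionally, then to the finite time-average) one gets $\bar D_b^{(k)}\le\mu_b(\bar N_b^{(k)})$, hence $d_b\le\mu_b(\nu_b)<M_b$. Thus $\sum_f\alpha_{f,b}=d_b\in[0,M_b)$, and since $\mu_b$ is continuous, strictly increasing, with $\mu_b(0)=0$, I may set $\tilde N_b:=\mu_b^{-1}\!\big(\sum_f\alpha_{f,b}\big)\in[0,\nu_b]$. Define $x_{f,b}:=\alpha_{f,b}/\lambda_f$ when $\lambda_f>0$ and let $x_{f,\cdot}$ be any probability vector supported on $\mathcal B(f)$ when $\lambda_f=0$. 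Then $(\tilde{\bN},\bx)$ is feasible for \eqref{eq: FLU}: backend balance holds since $\sum_f\lambda_fx_{f,b}=\sum_f\alpha_{f,b}=\mu_b(\tilde N_b)$, frontend balance since $\sum_bx_{f,b}=1$, and the sign and support constraints by construction. Therefore $\textsf{OPT}\le\sum_b\tilde N_b\le\sum_b\nu_b\le\sum_f\gamma_f+\sum_b\nu_b=L$, as required.

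The step I expect to be the genuine obstacle is the equality-upgrading in the second paragraph. The one-sided inequalities $\sum_b\alpha_{f,b}\le\lambda_f$ and $\sum_f\alpha_{f,b}\ge d_b$ fall out of telescoping immediately, but a policy could in principle let workload accumulate at frontends or backends along the chosen subsequence in a way invisible to a single subsequence of the aggregate cost; ruling this out is precisely where the ``bounded one-step change plus Cesàro blow-up'' argument is needed. Everything else — sequential compactness to obtain the subsequential limits, the two Jensen steps, and invertibility of $\mu_b$ on $[0,M_b)$ — is routine given Assumptions~\ref{assmp: increasing concave smooth ell} and \ref{assmp: system stable}.
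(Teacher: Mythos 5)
Your route differs from the paper's: rather than establishing the necessary cut condition and invoking max--flow/min--cut to manufacture a routing matrix, you build the routing directly from the Ces\`aro-averaged arrival matrix $\alpha_{f,b}$ and try to verify the equality constraints of \eqref{eq: FLU} from the averaged dynamics. That is cleaner in that it avoids the combinatorial argument, and it also works from the weaker hypothesis $L<\infty$ (the paper's argument passes through strong stability). But it hinges on the \emph{backend} balance $\sum_f\alpha_{f,b}=d_b$, which the paper never needs because they only build a feasible point of the relaxed program \eqref{eq: FLU ineq relax} (with $\le$ at the backends) and then argue $\textsf{OPT}=\textsf{OPT}'$.

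The gap is in the ``identical argument at the backends.'' The frontend argument works because $\E_\pi[G_f(\cdot)]$ increases slowly: from $\E_\pi[G_f(m+1)]\le\E_\pi[G_f(m)]+\lambda_f$ you get, for $m<k_n$, the \emph{lower} bound $\E_\pi[G_f(m)]\ge\E_\pi[G_f(k_n)]-(k_n-m)\lambda_f$, which is exactly what the block argument needs. The corresponding control for $\E_\pi[N_b(\cdot)]$ is a slow-\emph{decrease} bound, $\E_\pi[N_b(m+1)]\ge\E_\pi[N_b(m)]-M_b$. Iterating it forward over $[m,k_n]$ yields the \emph{upper} bound $\E_\pi[N_b(m)]\le\E_\pi[N_b(k_n)]+(k_n-m)M_b$ for $m<k_n$, which is useless for forcing a large Ces\`aro average; the lower bound it produces applies to $m>k_n$, outside the averaging window, and a rapidly growing subsequence $(k_n)$ can in principle make a tall spike at $k_n$ invisible to $\bar N_b^{(k_n)}$. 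So as written the proof that $\sum_f\alpha_{f,b}=d_b$ does not close, and without it you cannot conclude $\sum_f\alpha_{f,b}<M_b$ (so $\tilde N_b$ may be undefined) nor $\tilde N_b\le\nu_b$.

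The fix is to run the block argument once on the aggregate workload $T(i)=\sum_f G_f(i)+\sum_b N_b(i)$, which does have a slow-\emph{increase} bound: $T(i+1)-T(i)=\sum_f W_f(i)-\sum_b D_b(i)$, so $\E_\pi[T(m+1)]\le\E_\pi[T(m)]+\sum_f\lambda_f$. Telescoping gives $\E_\pi[T(k_n)]/k_n\to\sum_f\lambda_f-\sum_b d_b\ge 0$; if this limit $\tau$ were positive, the same block argument applied to $T$ makes $\bar T^{(k_n)}=\sum_f\bar G_f^{(k_n)}+\sum_b\bar N_b^{(k_n)}$ grow linearly, contradicting $\bar T^{(k_n)}\to L<\infty$. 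Hence $\E_\pi[T(k_n)]/k_n\to 0$, and nonnegativity of every summand gives $\E_\pi[G_f(k_n)]/k_n\to 0$ and $\E_\pi[N_b(k_n)]/k_n\to 0$ for each $f,b$ simultaneously, i.e.\ both $\sum_b\alpha_{f,b}=\lambda_f$ and $\sum_f\alpha_{f,b}=d_b$. With that substitution your remaining steps (the two Jensen applications, invertibility of $\mu_b$ on $[0,M_b)$, and the chain $\textsf{OPT}\le\sum_b\tilde N_b\le\sum_b\nu_b\le L$) go through.
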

As $\liminf_{k\to\infty} a_k \leq \limsup_{k\to\infty} a_k$ for any sequence $\{a_k\}$, Lemma \ref{lem: OPT lower bound} establishes that $\textsf{OPT}$ is a lower bound on the long-run average number of jobs under any online policy~\eqref{eq: def performance of pi}.}
This holds regardless of whether the policy is centralized or distributed.

Next, we explore two properties of the optimal solution to the fluid optimization problem.

\begin{lemma}\label{lem: unique N star}
The optimal $\bm{N}^*$ to the fluid optimization problem \eqref{eq: FLU} is unique. 
\end{lemma}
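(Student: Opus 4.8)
The plan is to prove uniqueness of $\bm{N}^*$ by exploiting the strict concavity and strict monotonicity of the service rate functions $\mu_b$, together with the linearity of the flow-balance constraints. I would first argue that the feasible set, projected onto the $\bm{N}$ coordinates, is convex. Indeed, if $(\bm{N}^1, \bm{x}^1)$ and $(\bm{N}^2, \bm{x}^2)$ are feasible, I would like to conclude something about the midpoint; the subtlety is that the constraint $\sum_f \lambda_f x_{f,b} = \mu_b(N_b)$ is \emph{not} linear in $(\bm{N},\bm{x})$ because $\mu_b$ is concave, so the feasible set is not obviously convex. The right way around this is to work with the service \emph{rates} $\rho_b := \mu_b(N_b)$ as the decision variables: since $\mu_b$ is a strictly increasing bijection from $[0,\infty)$ onto $[0, \sup \mu_b)$, specifying $\rho_b$ is equivalent to specifying $N_b = \mu_b^{-1}(\rho_b)$, and now the flow-balance constraints $\sum_f \lambda_f x_{f,b} = \rho_b$, $\sum_b x_{f,b} = 1$, $x_{f,b} \ge 0$ are all \emph{linear} in $(\bm{\rho}, \bm{x})$. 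Hence the set of feasible $\bm{\rho}$ vectors is a polytope, in particular convex.

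Next I would rewrite the objective. We are minimizing $\sum_b N_b = \sum_b \mu_b^{-1}(\rho_b)$. Because $\mu_b$ is strictly increasing and strictly concave, its inverse $\mu_b^{-1}$ is strictly increasing and strictly \emph{convex} on its domain (a standard fact: the inverse of a strictly concave increasing function is strictly convex increasing — one can see it from $(\mu_b^{-1})'' = -\mu_b'' / (\mu_b')^3 \circ \mu_b^{-1} > 0$ using Assumption~\ref{assmp: increasing concave smooth ell}). Therefore $\bm{\rho} \mapsto \sum_b \mu_b^{-1}(\rho_b)$ is a strictly convex function on the convex feasible polytope. A strictly convex function on a convex set attains its minimum at a unique point: if $\bm{\rho}^1 \ne \bm{\rho}^2$ were both optimal, the midpoint would be feasible with strictly smaller objective, a contradiction. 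This gives a unique optimal $\bm{\rho}^*$, and then $N_b^* = \mu_b^{-1}(\rho_b^*)$ is uniquely determined coordinatewise. (Note the optimal $\bm{x}^*$ need not be unique — only $\bm{N}^*$ is.)

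I would also need to handle one boundary detail: the domain of $\mu_b^{-1}$ is $[0, \sup\mu_b)$, a half-open interval, so strict convexity and the midpoint argument still apply since the feasible $\rho_b$ values all lie in this interval by construction (any feasible $N_b < \infty$ gives $\rho_b = \mu_b(N_b) < \sup \mu_b$). The main obstacle — or rather the one genuinely non-obvious point — is recognizing that one should change variables from $\bm{N}$ to $\bm{\rho} = \mu(\bm{N})$ to linearize the constraints; once that reparametrization is in place, strict convexity of $\sum_b \mu_b^{-1}$ does all the work and the argument is routine. An alternative that avoids the change of variables is a direct perturbation argument: given two optima, interpolate the $\bm{x}$'s linearly and use concavity of $\mu_b$ to show the interpolated flow into $b$ is at least $\mu_b$ evaluated at the interpolated $N_b$, then shrink $N_b$ to restore equality, strictly decreasing the objective unless all $N_b$ coincided — but the reparametrization is cleaner, so that is the route I would present.
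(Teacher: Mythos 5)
Your proof is correct and rests on exactly the same key fact as the paper's: strict concavity and strict monotonicity of $\mu_b$ make $\mu_b^{-1}$ strictly convex, so averaging two distinct optimal flow matrices would produce a strictly smaller $\sum_b N_b$. The paper presents this as the direct perturbation you sketch as an ``alternative'' (average the $\bm{x}$'s, apply strict convexity of $\mu_b^{-1}$ coordinatewise, contradict optimality); your reparametrization to $\bm{\rho}=\mu(\bm{N})$ is just a cleaner packaging of the same argument.
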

The uniqueness of the optimal workload holds as the service rate functions are strictly increasing and strictly concave, and we prove this by contradiction.
Note that the optimal flow assignment $\bx^*$ could be non-unique if the graph has cycles because we can always push flow through a circulation without changing the objective value. 

\begin{lemma}\label{lem: optimal FLU structure}
Let $(\bN, \bx)$ denote a feasible solution to the fluid optimization problem \eqref{eq: FLU}, then $(\bN, \bx)$ is optimal if and only if for each frontend $f\in \mathcal{F}$ there exists a constant $c_f$ such that for $b\in \mathcal{B}(f)$ we have $1/\mu_b'(N_b) \geq c_f$ with equality holding if $x_{f,b} >0$.
\end{lemma}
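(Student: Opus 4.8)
\section*{Proof proposal for Lemma~\ref{lem: optimal FLU structure}}

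The plan is to recognize \eqref{eq: FLU} as (equivalent to) a smooth convex program over a polytope and to read off the stated condition as its Karush--Kuhn--Tucker conditions. Since each $\mu_b$ is strictly increasing (Assumption~\ref{assmp: increasing concave smooth ell}), the flow-balance constraint lets us eliminate $\bN$: any feasible point satisfies $N_b = \mu_b^{-1}\!\big(\sum_{f}\lambda_f x_{f,b}\big)$, where $\mu_b^{-1}$ is defined on the range $[0,\sup\mu_b)$, is strictly increasing and strictly convex (its second derivative equals $-\mu_b''/(\mu_b')^3>0$ by strict concavity and positivity of $\mu_b'$), and satisfies $(\mu_b^{-1})'(y)=1/\mu_b'(\mu_b^{-1}(y))$. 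Hence \eqref{eq: FLU} is equivalent to minimizing the convex function $\Phi(\bx):=\sum_{b}\mu_b^{-1}\!\big(\sum_{f}\lambda_f x_{f,b}\big)$ over the polytope $P:=\{\bx:\sum_b x_{f,b}=1\ \forall f,\ x_{f,b}\ge 0\ \forall (f,b)\in\mathcal E,\ x_{f,b}=0\ \forall (f,b)\notin\mathcal E\}$. Assumption~\ref{assmp: system stable} makes $P\cap\operatorname{dom}\Phi$ nonempty, and since $\Phi\to+\infty$ as any inflow $\sum_f\lambda_f x_{f,b}$ approaches the (unattained) supremum $\sup\mu_b$, the minimizer is attained in the interior of $\operatorname{dom}\Phi$, where $\Phi$ is differentiable; this also reproduces the coercivity/attainment remark following Assumption~\ref{assmp: system stable}.

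Next I would write the KKT conditions for $\min_{\bx\in P}\Phi(\bx)$. Because $P$ is polyhedral and $\Phi$ is convex and differentiable on the relevant open set, the KKT conditions are both necessary and sufficient for global optimality (no Slater-type qualification is needed for polyhedral constraints). Introducing a real multiplier $\theta_f$ for $\sum_b x_{f,b}=1$ and $\eta_{f,b}\ge 0$ for $x_{f,b}\ge 0$, stationarity reads $\partial\Phi/\partial x_{f,b}=\lambda_f/\mu_b'(N_b)=\theta_f+\eta_{f,b}$ for $(f,b)\in\mathcal E$, together with complementary slackness $\eta_{f,b}x_{f,b}=0$. Hence $(\bN,\bx)$ is optimal iff for every $f$ and every $b\in\mathcal B(f)$ we have $\lambda_f/\mu_b'(N_b)\ge\theta_f$, with equality whenever $x_{f,b}>0$; setting $c_f:=\theta_f/\lambda_f$ gives exactly the claimed characterization. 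The converse is the same computation run backwards: given feasible $(\bN,\bx)$ and constants $c_f$ as in the statement, set $\theta_f:=\lambda_f c_f$ and $\eta_{f,b}:=\lambda_f/\mu_b'(N_b)-\theta_f=\lambda_f\big(1/\mu_b'(N_b)-c_f\big)\ge 0$; these satisfy stationarity and complementary slackness, so by convexity $(\bN,\bx)$ is a global minimizer.

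The two points needing care, which I expect to be the main obstacles, are: (i) justifying the elimination of $\bN$ and that the minimizer lies where $\Phi$ is smooth---in particular ruling out inflows hitting the unattained value $\sup\mu_b$, which uses boundedness of $\mu_b$ together with Assumption~\ref{assmp: system stable}; and (ii) degenerate frontends with $\lambda_f=0$, where division by $\lambda_f$ is invalid. For such a frontend the variables $x_{f,\cdot}$ do not affect $\Phi$, so $(\bN,\bx)$ is optimal regardless of how that flow is split; one either removes zero-rate frontends without loss of generality, or simply takes $c_f:=\min_{b\in\mathcal B(f)}1/\mu_b'(N_b)$ and reads the ``equality if $x_{f,b}>0$'' clause as a (trivially adjustable) convention for these frontends. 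A fully self-contained alternative that sidesteps the elimination is to keep $(\bN,\bx)$ jointly, relax the equality $\sum_f\lambda_f x_{f,b}=\mu_b(N_b)$ to the convex inequality $\sum_f\lambda_f x_{f,b}\le\mu_b(N_b)$ (tight at any optimum, since $\mu_b$ is strictly increasing and we minimize $\sum_b N_b$), check Slater by perturbing $\bN$ upward, and apply KKT: the multiplier $p_b$ of the $b$-th inequality equals $1/\mu_b'(N_b)$ by stationarity in $N_b$, and stationarity in $x_{f,b}$ reproduces the same inequalities.
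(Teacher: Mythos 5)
Your proposal is correct and takes essentially the same approach as the paper: eliminate $\bN$ via invertibility of $\mu_b$, write KKT conditions for the resulting convex program over the polytope, and invoke convexity for sufficiency; your multiplier $\theta_f$ corresponds to the paper's $\lambda_f c_f$. You also carefully handle two edge cases the paper's proof glosses over---the degenerate $\lambda_f=0$ frontends (where the paper implicitly divides by $\lambda_f>0$) and the attainment/domain issue near $\sup\mu_b$---which makes your write-up a bit more complete than the original.
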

This lemma motivates our GMSR policy, as we explain in detail in the next section. 

\section{Greatest Marginal Service Rate Policy and System Dynamics}\label{sec: GMSR}

Lemma \ref{lem: optimal FLU structure} implies that for any two backends $b_1, b_2$ connected to the same frontend $f$, if the optimal proportions of jobs sent to them are positive, i.e., $x_{f,b_1}^* >0, x_{f,b_2}^* >0$, then their service rate gradients at the optimal workloads must be equal:
\[\mu'_{b_1}({N_{b_1}^*}) = \mu'_{b_2}({N_{b_2}^*}).\]

Motivated by this observation---the gradients of service rate functions are balanced in the optimal solution---we consider the following policy, which we refer to as the Greatest Marginal Service Rate policy (GMSR): when a job arrives at a frontend, send it to a connected backend with the highest gradient $\mu_b'(N_b)$.
Because gradients are decreasing in workload, routing jobs to the highest-gradient backend would force the system to equalize gradients across backends.
In this policy, frontends make independent decisions without communication with each other, and each frontend only needs to know the workloads at its connected backends. Notably, the policy does not require knowledge of the job arrival rates $\{\lambda_f\}_{f\in \mathcal{F}}$.

Mathematically, frontends send jobs to a connected backend with the highest gradient, i.e., for all $f\in \mathcal{F}$ we pick $\bx_f \in X_f(\bN)$ where
\begin{align}
    X_f(\bN) := \arg\max & \sum_{b\in\mathcal{B}(f)} {x_{f,b}}{\mu'_b(N_b)} \label{eq: opt X_f}\\
    \text{s.t.} & \sum_{b\in\mathcal{B}} x_{f,b} = 1, \notag \\
    & x_{f,b} \geq 0, \forall (f,b) \in \mathcal{E}, \notag\\
    & x_{f,b} = 0, \forall (f,b) \notin \mathcal{E}. \notag
\end{align}
Here $x_{f,b}$ represents the proportions of jobs routing from frontend $f$ to backend $b$. It is important to note that the system dynamics under GMSR may not be uniquely determined at all times due to potential ties in the gradients $\mu'_b(N_b)$. Thus, $X_f(\bN)$ is a \emph{set-valued} function.

We remark that GMSR can break ties between backends in any way and how ties are broken in the discrete-time stochastic system does not impact our analysis. For example, if there is one frontend connected to two backends and $\mu_1'(N_1(i)) = \mu_2'(N_2(i))$, then the feasible proportion of jobs that are sent to the two backends can be any vector satisfying $x_{1,1}(i) + x_{1,2}(i) = 1, x_{1,1}(i),x_{1,2}(i)\geq 0$. Upon picking $\bx_f$, the frontend can implement this routing proportion in any way as long as the number of jobs distributed follows $\bx_f$ in expectation. For example, the control can be implemented in a weighted round-robin or probabilistic fashion.

As GMSR is an output-queuing policy, i.e., $W_f(i) = \sum_{b\in \mB(f)} A_{f,b}(i)$, the queues at the frontends are assumed to be always empty.
The system dynamics under GMSR is given by
\[G_f(i) = 0, \quad N_b(i+1) = N_b(i) + \sum_{f\in \mathcal{F}(b)} A_{f,b}(i) - D_b(i),\]
where
\begin{itemize}
    \item $A_{f,b}(i)$ denotes the number of arriving jobs routed from frontend $f$ to backend $b$, which satisfies $\mathbb{E}[A_{f,b}(i)\mid \bN(i)] = \lambda_f x_{f,b}(i)$ with $\bx_f(i) \in X_f(\bN(i))$.
    \item $D_b(i)$ denotes the number of job departures from backend $b$, which satisfies $\mathbb{E}[D_b(i)\mid N_b(i)]= \mu_b(N_b(i))$ and $D_b(i) \leq N_b(i)$.
\end{itemize}

\section{Convergence to the Fluid Model}\label{sec: convergence to fluid}
In this section, we show that the discrete system dynamics under GMSR converge to a fluid model in a large-system limit in which we shrink the job size to zero and scale the arrival rates correspondingly. In this fluid model, jobs are modeled as infinitely divisible continuous flows and there is no stochasticity in arrivals or service times.

To formalize this, fix a time horizon $T>0$, and consider a sequence of discrete systems indexed by a ``system size'' parameter $c \in \mathbb{N}$. The dynamics within time interval $[0,T]$ are as follows:
    \[N_b^{(c)}(i+1) = N_b^{(c)}(i) + \sum_{f\in \mathcal{F}(b)} A_{f,b}^{(c)}(i) - D_b^{(c)}(i),\]
where 
\begin{itemize}
    \item $i$ is the index of the discrete time step, each time step has a physical time length of $1/c$ in the $c^{th}$ system. Hence $i$ ranges over $\{0,1,\ldots, \lfloor Tc\rfloor\}$;
    \item $A^{(c)}_{f,b}(i)$ satisfies $\mathbb{E}[A^{(c)}_{f,b}(i)\mid \bN^{(c)}(i)] = \lambda_f x^{(c)}_{f,b}(i)$ with $\bx^{(c)}_f(i) \in X_f(\bN^{(c)}(i)/c)$, and $\sum_{b\in \mathcal{B}(f)} A^{(c)}_{f,b}(i) = W^{(c)}_f(i)$;
    \item $D_b^{(c)}(i)$ satisfies $\mathbb{E}[D_b^{(c)}(i)\mid N_b^{(c)}(i)] = \mu_b(N_b^{(c)}(i)/c)$ and $D_b^{(c)}(i) \leq N_b^{(c)}(i)$.
\end{itemize}
The stochastic recursion above is often referred to as a \emph{stochastic recursive inclusion.}
For both job arrivals and departures, we allow general distributions and impose only the following condition.
\begin{assumption}\label{assmp: moment bound}
    The arrival and departure processes satisfy
    \[\sup_{c \in \mathbb{N}} \sup_{1\leq i\leq Tc}\mathbb{E}[(W^{(c)}_f(i))^4] < \infty \quad \text{for all } f \in \mathcal{F}, \qquad \sup_{c \in \mathbb{N}} \sup_{1\leq i\leq Tc}\mathbb{E}[(D_b^{(c)}(i))^4] < \infty \quad \text{for all } b \in \mathcal{B}.\]
\end{assumption}
This uniform fourth-moment bound ensures that the stochastic fluctuations in arrivals and departures vanish in the large-system limit. It is the key ingredient in establishing a uniform law of large numbers for the martingale difference arrays that arise in the convergence proof (Lemma~\ref{lem: SLLN for epsilon}).


Our scaling shrinks the length of each time step and the size of jobs by a factor of $1/c$, where we scale $c$ to infinity. Let $Y_b^{(c)}(i) := N_b^{(c)}(i)/c$ for all $b\in \mathcal{B}$ be the normalized workloads. To analyze the convergence, let $t$ denote the physical time index, which should be distinguished from the algorithmic discrete time step index $i$. Specifically, in the $c^{th}$ system, the physical time for one discrete time step is $1/c$, thus the physical time after $i$ steps is $i/c$. On the other hand, given physical time $t > 0$, the number of steps passed is given by $\lceil t c \rceil$. 

We construct a continuous piecewise-linear interpolation $\bar{\bY}^{(c)}(t)$ of the discrete process ${\bY}^{(c)}(i)$ over $t\in [0,T]$, where $\bar{\bY}^{(c)}(i/c) = \bY^{(c)}(i), i\geq 0$, and
\[\bar{\bY}^{(c)}(t) = \bY^{(c)}(i) + \left(\bY^{(c)}(i+1) - \bY^{(c)}(i)\right) \frac{t-i/c}{1/c}, t\in [i/c, (i+1)/c).\]

Our convergence result is as follows. We define $C([0,T], \mathbb{R}^{|\mathcal{B}|})$ to be the set of continuous functions with uniform norm topology, i.e., set of continuous functions $Y: [0,T] \mapsto \mathbb{R}^{|\mB|}$ with norm $\|Y\| = \sup_{t\in [0,T]}|Y(t)|$.
\begin{theorem}\label{thm: converge to fluid model}
    Under Assumption~\ref{assmp: moment bound}, almost surely, the set of sub-sequential limits of $\{\bar{\bY}^{(c)}(\cdot)\}$ as $c\to\infty$ is non-empty in $C([0,T], \mathbb{R}^{|\mathcal{B}|})$, and for every such limit point $\bN(\cdot)$, there exists $\bx_f(t) \in X_f(\bN(t))$ for all $f\in \mathcal{F}$, $t\in [0,T]$ such that $\bN(\cdot)$ satisfies 
\[N_b(t) = N_b(0) + \int_0^t \left(\sum_{f\in \mathcal{F}(b)}\lambda_f x_{f,b}(\tau)- \mu_b(N_b(\tau)) \right) \, d\tau \text{ for all } b \in \mathcal{B}, t \in [0,T].\]
\end{theorem}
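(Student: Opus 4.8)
The plan is to establish relative compactness of the interpolated trajectories $\{\bar{\bY}^{(c)}(\cdot)\}$ in $C([0,T],\mathbb{R}^{|\mathcal{B}|})$ via the Arzel\`a--Ascoli theorem, and then identify every sub-sequential limit as a solution of the stated integral inclusion. Write the discrete recursion in telescoped form: for $t=i/c$,
\[
\bar Y_b^{(c)}(t) = Y_b^{(c)}(0) + \frac{1}{c}\sum_{j=0}^{i-1}\Bigl(\sum_{f\in\mathcal{F}(b)} A_{f,b}^{(c)}(j) - D_b^{(c)}(j)\Bigr),
\]
and split each increment into its conditional mean (the \emph{drift}, or compensator) plus a martingale-difference term: $A_{f,b}^{(c)}(j) = \lambda_f x_{f,b}^{(c)}(j) + \Delta_{f,b}^{(c)}(j)$ and $D_b^{(c)}(j) = \mu_b(Y_b^{(c)}(j)) + \Xi_b^{(c)}(j)$, where $\{\Delta_{f,b}^{(c)}(j)\}_j,\{\Xi_b^{(c)}(j)\}_j$ are martingale differences with respect to the natural filtration. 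Accordingly $\bar{\bY}^{(c)}$ decomposes as a ``compensator part'' $\bar{\bY}^{(c)}_{\mathrm{comp}}(t) = \bY^{(c)}(0) + \int_0^t (\sum_{f\in\mathcal{F}(b)}\lambda_f x_{f,b}^{(c)}(\lceil \tau c\rceil/c) - \mu_b(Y_b^{(c)}(\lceil\tau c\rceil/c)))\,d\tau$ (up to $O(1/c)$ interpolation error) plus a ``noise part'' built from the martingale differences.

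The first step is to show the noise part vanishes uniformly on $[0,T]$ almost surely. Using the uniformly bounded fourth moments of $W_f^{(c)}$ and $D_b^{(c)}$, hence of the martingale differences, I would apply Burkholder's inequality to bound $\mathbb{E}\bigl[\sup_{i\le Tc}|\frac{1}{c}\sum_{j<i}\Delta_{f,b}^{(c)}(j)|^4\bigr]$ by $O(c^{-2})$ (the square-function estimate gives a factor $(Tc)^2$ against the $c^{-4}$ normalization), and similarly for the $\Xi$ terms; summing this over $c$ and invoking Borel--Cantelli yields almost-sure uniform convergence of the noise part to zero. This is the ``uniform law of large numbers for triangular arrays of martingale differences'' advertised in the introduction, and I expect it to be the main technical obstacle: one must be careful that the bound on the running maximum is uniform in $c$ and that the fourth-moment hypotheses propagate to the conditional increments (a straightforward $c_r$-inequality argument). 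For the compensator part, note that the integrand is uniformly bounded --- $\lambda_f$ is fixed, $x_{f,b}^{(c)}\le 1$, and $\mu_b$ is bounded by Assumption~\ref{assmp: increasing concave smooth ell} --- so $\bar{\bY}^{(c)}_{\mathrm{comp}}$ is uniformly Lipschitz in $t$ with a constant independent of $c$, and the discrete-system workloads stay in a bounded set over $[0,T]$. Combining, $\{\bar{\bY}^{(c)}\}$ is (almost surely, eventually) uniformly bounded and uniformly equicontinuous, so Arzel\`a--Ascoli gives non-emptiness of the set of sub-sequential limits in $C([0,T],\mathbb{R}^{|\mathcal{B}|})$.

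It remains to characterize a limit $\bN(\cdot)$ of a subsequence $\bar{\bY}^{(c_k)}\to\bN$. Since the noise vanishes, $\bN(t) = \bN(0) + \lim_k \int_0^t(\sum_{f\in\mathcal{F}(b)}\lambda_f x_{f,b}^{(c_k)}(\lceil\tau c_k\rceil/c_k) - \mu_b(\bar Y_b^{(c_k)}(\lceil\tau c_k\rceil/c_k)))\,d\tau$. The $\mu_b$ term converges pointwise (hence, by bounded convergence, in the integral) to $\mu_b(N_b(\tau))$ by uniform convergence of $\bar{\bY}^{(c_k)}$ and continuity of $\mu_b$. For the routing term, the sequence of functions $\tau\mapsto x_{f,b}^{(c_k)}(\lceil\tau c_k\rceil/c_k)$ lies in the ball of $L^\infty([0,T])\subset L^2([0,T])$, so by Banach--Alaoglu it has a weakly convergent subsequence with some limit $x_{f,b}(\cdot)$; passing to that subsequence, the integrals converge and we obtain the stated integral equation. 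Finally I must check the inclusion $\bx_f(t)\in X_f(\bN(t))$ for a.e.\ $t$: here I would use the Banach--Saks theorem to replace weak convergence by almost-everywhere convergence of Ces\`aro averages, exploit that $X_f$ has a closed graph (it is the argmax set of a linear program whose objective coefficients $\mu_b'(N_b)$ depend continuously on $\bN$, and $\bar{\bY}^{(c_k)}(\lceil\tau c_k\rceil/c_k)\to\bN(\tau)$), and note that $X_f(\bN(t))$ is convex, so a convex combination of near-optimal routing vectors at states converging to $\bN(t)$ lands in $X_f(\bN(t))$ in the limit. This closed-graph/convexity argument is what upgrades the weak limit to a genuine selection of the differential inclusion; it is delicate only in bookkeeping the two layers of subsequence extraction (one for Arzel\`a--Ascoli, one for Banach--Alaoglu), which is harmless since $\mathcal{F}$ and $\mathcal{B}$ are finite.
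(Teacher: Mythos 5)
Your proposal matches the paper's proof in every essential step: decompose $\bar{\bY}^{(c)}$ into compensator plus martingale noise, kill the noise uniformly on $[0,T]$ via a Burkholder-type fourth-moment bound plus Borel--Cantelli, use Arzel\`a--Ascoli on the equicontinuous compensator to get relative compactness (transferred to $\bar{\bY}^{(c)}$ once the noise vanishes), and then apply Banach--Alaoglu and Banach--Saks together with upper semi-continuity (closed graph) and convexity of the correspondence to identify every limit point as a solution of the differential inclusion. The only cosmetic deviations are that the paper bounds the fourth moment of the endpoint martingale with plain Burkholder and then applies Doob's maximal inequality (rather than invoking Burkholder--Davis--Gundy on the running supremum directly) and extracts the weak $L^2$ limit at the level of the full drift vector $\bg^{(c)}$ rather than the routing matrix $\bx^{(c)}$, but these are equivalent.
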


The proof builds on Theorem 5.2 of \citet{borkar2008stochastic} and Theorem 3.7 of  \citet{duchi2018stochastic}, who study stochastic recursive inclusions. The main difference is that they consider diminishing time step sizes and analyze the system as time goes to infinity; our results focus on a sequence of systems indexed by $c$ and examine the system as $c$ goes to infinity. This shift necessitates the analysis of a uniform law of large numbers for triangular arrays of martingale differences, which we prove using the Burkholder inequality (see Lemma \ref{lem: SLLN for epsilon}). The proof has two steps. First, we use our uniform law of large numbers to show that the stochastic process $\{\bar{\bY}^{(c)}(\cdot)\}_{c\in \mathbb{N}}$ converges uniformly to its compensator, i.e., the integral of its drift. Because the compensator is equicontinuous, we use the Arzel\`a-Ascoli theorem to extract a converging subsequence, which implies, in turn, the relative compactness of the original process $\{\bar{\bY}^{(c)}(\cdot)\}_{c\in \mathbb{N}}$. Second, we apply some functional analysis results to show that every limit point is a solution to the differential inclusion.

\begin{remark}
Note that $\bx_f(t) \in X_f(\bN(t))$ is a set-valued function due to the potential non-uniqueness of optimal routing decisions when gradients are tied, thus the limit point $\bN(\cdot)$ is a solution to a differential inclusion, which is a mathematical formulation that generalizes differential equations to allow derivatives to belong to a set of possible values (for details, see \citealt{aubincellina1984differential,smirnov2022introduction}). Specifically, the dynamics of the fluid limit is defined by:
\begin{align}\label{eq: differential inclusion}
    \dot{\bN}(t) \in H(\bN(t)),
\end{align}
where 
\begin{align}\label{eq: def H(N)}
    H(\bN(t)):= \left\{(v_1, \ldots, v_{|\mathcal{B}|}): v_b = \sum_{f\in \mathcal{F}}\lambda_f x_{f,b}(t) - \mu_b(N_b(t)), \bx_f \in X_f(\bN), \forall f\in \mathcal{F}, b\in \mathcal{B}\right\}.
\end{align}
\end{remark}

Theorem~\ref{thm: converge to fluid model} shows the stochastic recursive inclusion can be seen as a noisy discretization of the differential inclusion \eqref{eq: differential inclusion}: $\{\bar{\bY}^{(c)}\}$ ``track" a solution with probability 1 on every finite horizon. For proof of the existence of the global solution, i.e., on $[0,\infty)$, see Appendix \ref{apx: DI}

While the differential inclusion allows for arbitrary tie-breaking in routing decisions, the set of feasible routing decisions in a solution to the differential inclusion is in fact quite constrained by the requirement that the solution be absolutely continuous. This is due to the self-correcting feature of the policy. For instance, consider a scenario with one frontend and two identical backends, both starting with zero workloads. If the frontend initially routes unequal flows to the backends, the backend receiving more flow will accumulate more workloads, resulting in a lower gradient and making it less attractive to the frontend. When jobs are discrete, the system state will oscillate around the equal-gradient curve, where both backends have the same gradient. {However, in the fluid model, when we model jobs as infinitely divisible continuous flows, the system always stays on the equal gradient curve. To do so, in the one-frontend-two-backends example, the differential inclusion must pick the unique routing proportions $(x_{1,1}(t), x_{1,2}(t)) \in X_1((N_1(t), N_2(t)))$ that produce a drift tangent to the equal-gradient curve. The routing proportions $\bx_1(t)$ in the differential inclusion can be interpreted as a ``local'' time-average of the routing proportions $\bx_1^{(c)}(i)$ of the stochastic recursive inclusion. }

In the following lemma, we show that the equilibrium of the dynamical system \eqref{eq: differential inclusion} coincides with the unique optimal solution to the fluid optimization problem. 

\begin{lemma}\label{lem: Equilibrium Points are Fluid Optimal}
The dynamic system \eqref{eq: differential inclusion} has a unique equilibrium point $\bN^{eq}$, which is optimal for the fluid optimization problem \eqref{eq: FLU}.
\end{lemma}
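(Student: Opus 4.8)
The plan is to prove the two assertions separately: (i) every equilibrium point of \eqref{eq: differential inclusion} is optimal for \eqref{eq: FLU}, which combined with Lemma~\ref{lem: unique N star} gives uniqueness; and (ii) conversely, the optimal $\bN^*$ is in fact an equilibrium, so that an equilibrium exists. For (i), suppose $\bN^{eq}$ is an equilibrium, meaning $\bm 0 \in H(\bN^{eq})$. By the definition of $H$ in \eqref{eq: def H(N)}, this means there exist routing proportions $\bx_f \in X_f(\bN^{eq})$ for each $f$ such that $\sum_{f\in\mathcal F(b)} \lambda_f x_{f,b} = \mu_b(N_b^{eq})$ for every $b\in\mathcal B$. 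The flow-balance-at-frontends and nonnegativity/support constraints are built into each $X_f(\bN^{eq})$, so $(\bN^{eq}, \bx)$ is feasible for \eqref{eq: FLU}. It then remains to verify the optimality characterization of Lemma~\ref{lem: optimal FLU structure}: for each $f$, there is a constant $c_f$ with $1/\mu_b'(N_b^{eq}) \ge c_f$ for all $b\in\mathcal B(f)$, with equality whenever $x_{f,b}>0$. This is exactly the KKT/complementary-slackness condition for the linear program defining $X_f(\bN^{eq})$ in \eqref{eq: opt X_f}: since $\bx_f$ maximizes $\sum_b x_{f,b}\mu_b'(N_b^{eq})$ over the simplex restricted to $\mathcal B(f)$, the support of $\bx_f$ lies on the set of backends attaining $\max_{b\in\mathcal B(f)}\mu_b'(N_b^{eq})$, and all other connected backends have weakly smaller gradient, hence weakly larger $1/\mu_b'$. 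Taking $c_f = 1/\max_{b\in\mathcal B(f)}\mu_b'(N_b^{eq})$ gives the claim, so $\bN^{eq}$ is optimal.

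For (ii), start from the optimal $(\bN^*, \bx^*)$ of \eqref{eq: FLU}, whose existence is guaranteed by Assumption~\ref{assmp: system stable} and the coercivity/closedness remark following it. By Lemma~\ref{lem: optimal FLU structure}, for each $f$ there is $c_f$ with $1/\mu_b'(N_b^*)\ge c_f$ for $b\in\mathcal B(f)$ and equality when $x_{f,b}^*>0$; equivalently $\mu_b'(N_b^*)\le 1/c_f$ with equality on the support of $\bx_f^*$. This says precisely that $\bx_f^*$ is supported on $\arg\max_{b\in\mathcal B(f)}\mu_b'(N_b^*)$, which (together with feasibility of $\bx_f^*$ in the simplex) means $\bx_f^*\in X_f(\bN^*)$. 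Plugging into \eqref{eq: def H(N)}, the corresponding drift vector has components $\sum_{f\in\mathcal F(b)}\lambda_f x_{f,b}^* - \mu_b(N_b^*) = 0$ by the flow-balance constraint of \eqref{eq: FLU}, so $\bm 0\in H(\bN^*)$ and $\bN^*$ is an equilibrium. Finally, uniqueness of the equilibrium point follows because any equilibrium is optimal by (i), and $\bN^*$ is the unique optimal workload vector by Lemma~\ref{lem: unique N star}; hence $\bN^{eq} = \bN^*$.

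The main obstacle, such as it is, is the precise translation between the LP-optimality condition defining $X_f(\bN)$ in \eqref{eq: opt X_f} and the characterization in Lemma~\ref{lem: optimal FLU structure}. Both amount to the statement ``$\bx_f$ is supported on the maximizers of $b\mapsto\mu_b'(N_b)$ over $\mathcal B(f)$,'' but one must be careful that the ``constant $c_f$'' in Lemma~\ref{lem: optimal FLU structure} is allowed to be any lower bound (the inequality is $\ge$, not $=$), so that the equivalence is with $c_f = 1/\max_{b\in\mathcal B(f)}\mu_b'(N_b)$ rather than an arbitrary value; handling the case where $\mathcal B(f)$ contains backends with very small gradient (large $1/\mu_b'$) is where the one-sided inequality matters. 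Everything else—feasibility bookkeeping, invoking Lemma~\ref{lem: unique N star} for uniqueness—is routine.
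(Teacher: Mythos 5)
Your proposal is correct and follows essentially the same route as the paper for the core direction: at an equilibrium $\bm 0 \in H(\bN^{eq})$ yields a feasible $(\bN^{eq},\bx)$ for \eqref{eq: FLU}, the definition of $X_f$ shows the support of $\bx_f$ lies on the maximizers of $\mu_b'(N_b^{eq})$ over $\mathcal B(f)$, and Lemma~\ref{lem: optimal FLU structure} (whose sufficiency the paper emphasizes) gives optimality; uniqueness then follows from Lemma~\ref{lem: unique N star}. The one place where you go beyond the paper's own proof is step (ii), verifying that the optimizer $\bN^*$ is in fact an equilibrium: the paper's proof establishes only that every equilibrium is optimal (hence at most one equilibrium exists) and leaves existence implicit. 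Your converse argument — running Lemma~\ref{lem: optimal FLU structure} backwards to conclude $\bx_f^*\in X_f(\bN^*)$ and hence $\bm 0\in H(\bN^*)$ — is a clean and correct way to close that gap, so your proof is slightly more complete than the one in the paper while using the same ingredients.
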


Therefore, the unique equilibrium of the dynamic system \eqref{eq: differential inclusion} is precisely $\bN^*$, the unique optimal solution to the fluid optimization problem \eqref{eq: FLU}.

\section{Stability and Convergence Rate Analyses}\label{sec: stability analysis}
In the previous section, we establish the convergence result, and that the equilibrium of the differential inclusion corresponds to the optimal fluid solution to \eqref{eq: FLU}. We now analyze the stability of this equilibrium and the convergence rate, which implies that GMSR is asymptotically optimal and achieves effective coordination in the fluid model.
Specifically, we show that $\bN^*$ is globally strongly asymptotically stable: regardless of the initial state (globally), every trajectory (strongly) of the differential inclusion converges to the equilibrium point $\bN^*$ as $t\to\infty$ (asymptotically).

\noindent Our main results are the following two theorems.
\begin{theorem}[Stability]\label{thm:stability}
The equilibrium $\bN^*$ of the differential inclusion \eqref{eq: differential inclusion} is globally strongly asymptotically stable: every solution $\bN(\cdot)$ with every initial state $\bN(0)$ satisfies $\lim_{t\to\infty} \bN(t) = \bN^*$.
\end{theorem}

{
\begin{theorem}[Convergence Rate]\label{thm:convergence}
    If initial workloads are $\delta$-suboptimal $(\delta >0)$, i.e., 
    \[\left|\sum_{b\in \mathcal{B}} N_b(0) - \sum_{b\in \mathcal{B}} N_b^*\right| = \delta,\] 
    it takes GMSR $\mathcal{O}( \delta + \log{1/\epsilon})$ time to obtain an $\epsilon$-suboptimal solution ($0<\epsilon <\delta$).
\end{theorem}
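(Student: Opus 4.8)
\textbf{Proof proposal for Theorem~\ref{thm:convergence}.}

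The plan is to prove the convergence rate via a Lyapunov argument based on the flow-imbalance function hinted at in the introduction, call it $V(\bN, \bx) := \sum_{b\in\mathcal{B}} |v_b|$ where $v_b = \sum_{f\in\mathcal{F}(b)}\lambda_f x_{f,b} - \mu_b(N_b)$ is the drift at backend $b$. The argument decomposes into three pieces: (i) bound the suboptimality gap $|\sum_b N_b(t) - \sum_b N_b^*|$ in terms of $V$; (ii) show $V$ decays exponentially once the system enters a suitable neighborhood/regime; and (iii) bound the finite time spent in the initial ``high-workload'' phase before exponential decay kicks in, as a linear function of $\delta$.

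First I would establish step (i): the connection between the Lyapunov value and the total-workload suboptimality. Since $\frac{d}{dt}\sum_b N_b(t) = \sum_b v_b(t)$, and at the equilibrium $\bN^*$ we have $v_b^* = 0$ for all $b$ (by Lemma~\ref{lem: Equilibrium Points are Fluid Optimal} and flow balance), I expect an inequality of the form $|\sum_b N_b(t) - \sum_b N_b^*| \le c_1 V(\bN(t), \bx(t))$ for some constant $c_1$ depending on the curvature bounds of the $\mu_b$ (Assumption~\ref{assmp: increasing concave smooth ell} gives $\mu_b'' < 0$ bounded, $\mu_b'$ bounded away from $0$ on compact sets). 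The idea is that if the total workload is far from optimal, then the routing induced by GMSR (which always chases the highest gradient) necessarily produces a net imbalance, because the gradients cannot all be equalized at a non-optimal total workload — this is precisely the content of Lemma~\ref{lem: optimal FLU structure}. Making this quantitative is where I would need to be careful: I would likely argue by contradiction or via a compactness/continuity argument that $V$ vanishes only on the set where total workload equals $\sum_b N_b^*$, then upgrade to a quantitative bound using strict concavity.

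Next, step (ii): exponential decay of $V$. This is the heart of the stability theorem (Theorem~\ref{thm:stability}) refined to a rate. I would want to show $\dot V(t) \le -c_2 V(t)$ along solutions, at least once the system is in a bounded region. The combinatorial machinery of tiers and the TierGraph (Definitions~\ref{def:tier}, \ref{def:tiergraph}) enters here: one analyzes how $V$ changes as tiers \emph{slide}, \emph{split}, and \emph{reconfigure}, and shows that in each case the rate of decrease of $V$ is proportional to $V$ itself. The key analytic input is that when a backend has positive drift its gradient is strictly decreasing at a rate controlled by $\mu_b''$ times the drift, and when negative the gradient increases; the tier structure ensures these changes are coordinated so the \emph{sum} of $|v_b|$ contracts geometrically. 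I expect this to be the main obstacle — reconciling the discontinuous (sliding-mode) dynamics on the equal-gradient hypersurfaces with a clean differential inequality for $V$, and ensuring the contraction constant $c_2$ is uniform over the relevant region rather than degenerating. The strict concavity (bounding $\mu_b''$ away from $0$ on compacta) should be what rescues a uniform constant.

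Finally, step (iii): the first phase. If $\sum_b N_b(0)$ is much larger than $\sum_b N_b^*$, I would argue that at least one backend has strictly negative drift bounded away from zero — intuitively, an overloaded backend in the ``top tier'' sheds workload at a rate $\Omega(1)$ because its gradient is too small to attract enough flow to balance its (large) service rate; here I'd use that $\mu_b$ is bounded, so if $N_b$ is huge the service rate is near its cap while the gradient $\mu_b'(N_b)$ is tiny, making the backend unattractive and starving it of inflow. Hence $\sum_b N_b(t)$ decreases at a linear rate until it reaches a region of size $O(1)$ around $\sum_b N_b^*$, taking time $O(\delta)$. (Symmetrically if the system starts underloaded, though with $\mu_b(0)=0$ and boundedness the analysis is gentler; one can also note $N_b(t)\ge 0$ throughout.) Combining: $O(\delta)$ time for phase one to bring the system into the exponential-contraction region, then by step (ii) $V(t) \le V_0 e^{-c_2(t-t_0)}$, and by step (i) the suboptimality gap is $\le c_1 V(t)$, so it takes an additional $O(\log(1/\epsilon))$ time to reach $\epsilon$-suboptimality. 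This yields the claimed $\mathcal{O}(\delta + \log 1/\epsilon)$ bound. I would close by noting the unified statement follows by simply adding the two phase durations.
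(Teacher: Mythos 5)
Your three-step architecture---bound the suboptimality gap by the Lyapunov function, show exponential decay of $V$, and bound a linear ``drain-down'' phase---matches the paper's proof exactly. But there are two substantive gaps in the proposal.

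First, you misidentify the mechanism of exponential contraction. You attribute it to $\mu_b''$ (``gradient is strictly decreasing at a rate controlled by $\mu_b''$ times the drift''), expecting strict concavity bounded on compacta to yield a uniform rate. In fact, the contraction comes purely from the first derivative: for a backend with drift $v_b = \text{inflow}_b - \mu_b(N_b)$ and temporarily frozen routing, the chain rule gives $\frac{d}{dt}\mu_b(N_b(t)) = \mu_b'(N_b(t))\dot N_b(t) = \mu_b'(N_b(t)) v_b$, so $|v_b|$ shrinks at rate $\mu_b'(N_b)$, not $\mu_b''(N_b)$. The paper establishes $\frac{d}{dt}V \le -\kappa V$ where $\kappa$ is a uniform \emph{lower bound on $\mu_b'(N_b)$}, not a bound on the second derivative. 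A bound via $\mu_b''$ would require different (and unavailable) control, so as stated your step (ii) would fail to produce a clean differential inequality.

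Second, you do not construct the region in which the contraction holds, nor argue that the trajectory stays there. The paper builds a specific invariant set $K = \{\bN : \mu_b'(N_b) \ge \kappa, \forall b\}$, where $\kappa$ comes from a capacity-slack argument (there exist $\tilde N_b$ with $\mu_b'(\tilde N_b) = \kappa$ such that every subset of frontends has strict service surplus at $\tilde{\bN}$), and then proves invariance: if $\bN(t)$ tried to exit $K$, the backend leaving would belong to a tier with positive flow imbalance, contradicting the slack. This invariance is not a cosmetic detail: without it, neither your step (i) (which needs $\mu_b' \ge \kappa$ to convert $V$ into a gap bound, e.g.\ via $R(\bm v) = \sum_b \mu_b^{-1}(v_b)$ and H\"older) nor step (ii) is licensed for all $t \ge t_0$. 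Your phrase ``once the system is in a bounded region'' glosses over exactly the part of the argument that makes the two phases compose into $O(\delta + \log 1/\epsilon)$. Your step (iii) is on the right track---the paper uses the auxiliary potential $J(\bN) = \sum_b \max\{N_b - \tilde N_b, 0\}$ and shows it drops at a rate $\min\{\Delta, \min_b \mu_b(\tilde N_b)\}$---but again the target set and the quantity $\Delta$ (the capacity slack) need to be pinned down before the rate can be made concrete.
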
}

To prove the stability result, we use the Lyapunov direct method. Specifically, we construct a function that serves as a measure of the system's ``energy,'' which is zero at the equilibrium point $\bN^*$, and demonstrate this function consistently decreases over time. The Lyapunov function is formally defined in Section~\ref{sec:preliminaries}.

{To prove the convergence rate result, we notice that the drift of the Lyapunov function restricted to a subset of backends with the same gradient is proportional to the Lyapunov function itself and the gradient of these backends. Therefore, if the gradients can be lower bounded by a constant, the Lyapunov function will converge exponentially fast. We define a set of system states in which gradients are uniformly bounded from below and show it is an invariant set, i.e., once the system reaches the set, it stays there forever. Then, we argue that the system state reaches this invariant set in linear time if starting outside, which gives our two-phase convergence process. We remark that the constants in the convergence rate can depend on the initial workloads and the initial parameters (the arrival rates, service rate functions, and compatibility graph).}

{The proofs for the following results, however complicated in the form, all rely on a simple yet powerful observation: frontends always send jobs to the most preferable connected backends. If a backend with a very high marginal service rate is not receiving jobs from a frontend, then either they are not connected, or the frontend is connected to another backend with an even higher marginal service rate.}

{Our analysis of the performance of GMSR primarily focuses on the fluid model. 
Back to the discrete model, the system may not converge precisely to the centrally coordinated optimal routing due to stochasticity and may oscillate around the optimal solution. The extent to which it oscillates should decrease as we scale the system size, and we leave the analyses of the discrete model with finite $c$ for future work. }

\subsection{Preliminaries}\label{sec:preliminaries}

The next lemma introduces the Lyapunov function and proves its positive definiteness. Recall the system state at time $t$ is $\bN(t) = (N_1(t), N_2(t), \ldots, N_{|\mathcal{B}|}(t))$, the workloads at each backend at time $t$, and the system dynamics follows $\dot{N}_b(t) = \sum_{f\in \mathcal{F}(b)}\lambda_f x_{f,b}(t) -\mu_b(N_b(t))$, where  $\bx_f(t) = (x_{f,1}(t), \ldots, x_{f,|\mathcal{B}|}(t)) \in X_f(\bN(t))$ for $f\in \mathcal{F}$. Let $\bx(t)$ denote the matrix $\{x_{f,b}(t)\}_{f\in \mathcal{F}, b\in \mathcal{B}}$.

\begin{lemma}\label{lem: m V positive definite}
   The function $V(\bN, \bx) = \sum_{b\in\mathcal{B}} \left|\sum_{f\in \mathcal{F}(b)}\lambda_f x_{f,b} - \mu_b(N_b)\right|$ is positive definite, i.e., $V(\bN, \bx) \ge 0$ for all $\bN, \bx$ and for all  $\bx_f \in X_f(\bN)$ we have
   \[V(\bN, \bx) = 0 \text{ if and only if }(\bN, \bx)  = (\bN^*, \bx^*),\]
   where $\bx^*$ satisfies $\sum_{f\in \mathcal{F}(b)}\lambda_f x_{f,b}^* -\mu_b(N_b^*) = 0$ for all $b\in \mathcal{B}$.
\end{lemma}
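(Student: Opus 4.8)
The plan is to prove the two directions of the biconditional separately, after first observing the trivial nonnegativity.

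Nonnegativity is immediate: $V(\bN, \bx)$ is a sum of absolute values, so $V(\bN,\bx) \ge 0$ for every $\bN$ and every feasible routing matrix $\bx$. For the ``if'' direction, suppose $(\bN,\bx) = (\bN^*, \bx^*)$ where $\bx^*$ is any routing matrix with $\bx^*_f \in X_f(\bN^*)$ satisfying $\sum_{f\in\mathcal{F}(b)}\lambda_f x^*_{f,b} - \mu_b(N^*_b) = 0$ for all $b$. Then every summand in $V$ vanishes by construction, so $V(\bN^*,\bx^*) = 0$. I should also note that such an $\bx^*$ exists and is a GMSR-feasible choice: by Lemma~\ref{lem: Equilibrium Points are Fluid Optimal}, $\bN^*$ is the optimal fluid solution, so there is an optimal flow $\bx^*$ for \eqref{eq: FLU}, which satisfies the flow-balance constraint at every backend; and by Lemma~\ref{lem: optimal FLU structure}, this $\bx^*$ puts positive flow only on backends achieving $\max_{b\in\mathcal{B}(f)} 1/\mu_b'(N^*_b)$, equivalently $\min_{b\in\mathcal{B}(f)}\mu_b'(N^*_b)$ --- wait, I must be careful here: GMSR routes to the \emph{highest} gradient $\mu_b'(N_b)$, so I need the optimal flow to be supported on backends maximizing $\mu_b'(N^*_b)$ within each frontend's neighborhood. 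Lemma~\ref{lem: optimal FLU structure} gives exactly this (the condition $1/\mu_b'(N_b)\ge c_f$ with equality on the support means the support maximizes $\mu_b'$), so $\bx^*_f \in X_f(\bN^*)$ and the ``if'' direction is complete.

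The ``only if'' direction is the substantive part. Suppose $\bx_f \in X_f(\bN)$ for all $f$ and $V(\bN,\bx) = 0$. Since $V$ is a sum of nonnegative terms, each must vanish: $\sum_{f\in\mathcal{F}(b)}\lambda_f x_{f,b} = \mu_b(N_b)$ for every $b\in\mathcal{B}$, i.e., flow balance holds at every backend. Combined with $\sum_{b} x_{f,b} = 1$ and $x_{f,b}\ge 0$ (which hold because $\bx_f \in X_f(\bN)$), the pair $(\bN,\bx)$ is feasible for the fluid optimization problem \eqref{eq: FLU}. Now I want to invoke Lemma~\ref{lem: optimal FLU structure} to conclude $(\bN,\bx)$ is optimal, which by Lemma~\ref{lem: unique N star} forces $\bN = \bN^*$. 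To apply Lemma~\ref{lem: optimal FLU structure} I must verify its optimality criterion: for each frontend $f$ there is a constant $c_f$ with $1/\mu_b'(N_b)\ge c_f$ for all $b\in\mathcal{B}(f)$ and equality whenever $x_{f,b}>0$. This is precisely the statement that $\bx_f$ is an optimal solution of the linear program \eqref{eq: opt X_f} defining $X_f(\bN)$: the objective $\sum_b x_{f,b}\mu_b'(N_b)$ is maximized by putting all mass on backends $b\in\mathcal{B}(f)$ achieving $\max_{b'\in\mathcal{B}(f)}\mu'_{b'}(N_b)$; setting $c_f := 1/\max_{b'\in\mathcal{B}(f)}\mu'_{b'}(N_{b'})$, the condition holds. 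Hence $(\bN,\bx)$ satisfies the optimality criterion of Lemma~\ref{lem: optimal FLU structure}, so it is optimal for \eqref{eq: FLU}, and by uniqueness $\bN = \bN^*$. Finally, flow balance at every backend then reads $\sum_{f\in\mathcal{F}(b)}\lambda_f x_{f,b} = \mu_b(N^*_b)$, which is exactly the defining property of $\bx^*$ in the lemma statement, so $\bx = \bx^*$ (in the sense that $\bx$ belongs to the set of such matrices), completing the proof.

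The main obstacle, such as it is, is bookkeeping rather than depth: one must carefully reconcile the ``highest gradient'' description of GMSR with the ``$1/\mu_b'(N_b)\ge c_f$, equality on support'' formulation in Lemma~\ref{lem: optimal FLU structure}, making sure the direction of the inequality and the identification of the active set line up (using that $\mu_b'>0$ by strict monotonicity, so $x\mapsto 1/x$ is order-reversing and positive). A secondary point worth stating explicitly is that the characterization of $\bx^*$ in the conclusion is as a \emph{set} of routing matrices (the optimal flow need not be unique when $\mathcal{G}$ has cycles, as noted after Lemma~\ref{lem: unique N star}), so the equality ``$\bx = \bx^*$'' should be read as membership in that set; the workload $\bN^*$, however, is genuinely unique by Lemma~\ref{lem: unique N star}. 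No hard analysis is needed --- the lemma is essentially a repackaging of the KKT/LP-optimality conditions already recorded in Lemmas~\ref{lem: unique N star} and \ref{lem: optimal FLU structure}.
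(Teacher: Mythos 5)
Your proof is correct and follows essentially the same route as the paper's: show flow balance at every backend forces $(\bN,\bx)$ to be feasible for \eqref{eq: FLU}, observe that $\bx_f \in X_f(\bN)$ means the support of $\bx_f$ lies on gradient-maximizers so the KKT condition of Lemma~\ref{lem: optimal FLU structure} holds, conclude optimality and hence $\bN=\bN^*$ by Lemma~\ref{lem: unique N star}. The extra care you take in unwinding the $1/\mu_b'(N_b) \ge c_f$ reformulation and in noting that $\bx^*$ is really a set (possibly non-singleton) is a useful clarification that the paper leaves implicit, but it does not change the argument.
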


We next introduce some definitions that are based on workloads, $\bN$, at the backends. See Figure \ref{fig: ex G and best backend graph} for an illustration. 

\begin{definition}[Best Backend Graph]\label{def: best backend graph}
Given a connectivity bipartite graph $\mathcal{G} = (\mathcal{F}, \mathcal{B}, \mathcal{E})$ and workloads at the backends $\bm{N}$, we define the best backend graph at $\bN$ to be a subgraph of the original graph $\mathcal{G}$ that involves only those edges that connect each frontend $f$ to its most preferable backends according to $\mu_b'(N_b)$. Mathematically,
\[{G}(\bm{N}) = (\mathcal{F}, \mathcal{B}, {E}(\bm{N})) \text{ with } {E}(\bm{N}) = \{(f,b) \in \mathcal{E} | b \in S_f(\bm{N})\},\] 
where $S_f(\bm{N}) = \{b\in\mathcal{B}(f) | \mu_b'(N_b) = \max_{j\in \mathcal{B}(f)} \mu_j'(N_j)\}$ is the set of best backends for frontend $f\in \mathcal{F}$ at $\bN$.
\end{definition}

\begin{definition}[Tier]\label{def:tier}
    We say that $(F, B)$ with $F \subseteq \mathcal{F}, B\subseteq \mathcal{B}$ is a \emph{tier} at $\bN$ if 
    \begin{itemize}
        \item $F\cup B$ is connected in the best backend graph ${G}(\bN)$,
        \item for all frontends $f\in F$ there is no backend $b\in \mathcal{B} \setminus B$ such that $(f,b) \in {E}(\bN)$,
        \item for all backends $b\in B$ there is no frontend $f\in \mathcal{F} \setminus F$ such that $(f,b) \in {E}(\bN)$.
    \end{itemize}
\end{definition}
In graph theory, the above definition is equivalent to saying $F\cup B$ is the node set of a \textit{connected component} of the best backend graph $G(\bN)$. Note that $F$ can be an empty set for a tier, and this will happen when $B = \{b\}$ with $b\notin \cup_{f\in \mathcal{F}} S_f(\bN)$, i.e., no frontend finds the backend preferable. On the other hand, $B$ can never be an empty set as $\mathcal{B}(f) \neq \emptyset$ for all $f\in \mathcal{F}$, thus every frontend will find at least one backend preferable. By definition, all backends in the same tier share the same gradient value, so we let $\mu'_B(\bN)$ denote $\mu_{b}'(N_b)$ for $b\in B$ at $\bN$.

\begin{figure}[h!]
    \centering
    \begin{subfigure}[t]{0.48\textwidth}
        \centering
        \begin{tikzpicture}[
          every node/.style={circle, draw, minimum size=6mm},
          every edge/.style={draw, thick},
          scale=0.95
        ]
        \node (f1) at (0, 4) {$f_1$};
        \node (f2) at (0, 3) {$f_2$};
        \node (f3) at (0, 2) {$f_3$};
        \node (f4) at (0, 1) {$f_4$};
        \node (b1) at (3, 4) {$b_1$};
        \node (b2) at (3, 3) {$b_2$};
        \node (b3) at (3, 2) {$b_3$};
        \node (b4) at (3, 1) {$b_4$};
        \node (b5) at (3, 0) {$b_5$};
        \draw (f1) -- (b1);
        \draw (f1) -- (b2);
        \draw (f1) -- (b3);
        \draw (f2) -- (b2);
        \draw (f2) -- (b3);
        \draw (f3) -- (b3);
        \draw (f3) -- (b4);
        \draw (f4) -- (b4);
        \draw (f4) -- (b5);
        \draw (f4) -- (b1);
        
        \node[anchor=west, draw=none, opacity=0] at (3.5, 4) {\(\mu'_1(N_1) = 3\)};
        \node[anchor=west, draw=none, opacity=0] at (3.5, 3) {\(\mu'_2(N_2) = 2\)};
        \node[anchor=west, draw=none, opacity=0] at (3.5, 2) {\(\mu'_3(N_3) = 1\)};
        \node[anchor=west, draw=none, opacity=0] at (3.5, 1) {\(\mu'_4(N_4) = 2\)};
        \node[anchor=west, draw=none, opacity=0] at (3.5, 0) {\(\mu'_5(N_5) = 3\)};
        \end{tikzpicture}
        \caption{}
    \end{subfigure}%
    \hfill
    \begin{subfigure}[t]{0.48\textwidth}
        \centering
        \begin{tikzpicture}[
          every node/.style={circle, draw, minimum size=6mm},
          every edge/.style={draw, thick},
          scale=0.95
        ]
        \node (f1) at (0, 4) {$f_1$};
        \node (f2) at (0, 3) {$f_2$};
        \node (f3) at (0, 2) {$f_3$};
        \node (f4) at (0, 1) {$f_4$};
        \node (b1) at (3, 4) {$b_1$};
        \node (b2) at (3, 3) {$b_2$};
        \node (b3) at (3, 2) {$b_3$};
        \node (b4) at (3, 1) {$b_4$};
        \node (b5) at (3, 0) {$b_5$};
        \node[anchor=west, draw=none] at (3.5, 4) {\(\mu'_1(N_1) = 3\)};
        \node[anchor=west, draw=none] at (3.5, 3) {\(\mu'_2(N_2) = 2\)};
        \node[anchor=west, draw=none] at (3.5, 2) {\(\mu'_3(N_3) = 1\)};
        \node[anchor=west, draw=none] at (3.5, 1) {\(\mu'_4(N_4) = 2\)};
        \node[anchor=west, draw=none] at (3.5, 0) {\(\mu'_5(N_5) = 3\)};
        \draw (f1) -- (b1);
        \draw (f2) -- (b2);
        \draw (f3) -- (b4);
        \draw (f4) -- (b5);
        \draw (f4) -- (b1);
        \end{tikzpicture}
        \caption{}
    \end{subfigure}
    \caption{(a) A connectivity bipartite graph with four frontends (left side) and five backends (right side) with edges denoting feasible flow routing; (b) The best backend graph at a system state $\bN$ with $\mu_1'(N_1) = 3, \mu_2'(N_2) = 2, \mu_3'(N_3) = 1, \mu_4'(N_4)= 2, \mu_5'(N_5) = 3$.}
    \label{fig: ex G and best backend graph}
\end{figure}

\begin{definition}[TierGraph]\label{def:tiergraph}
 Let $k(\bN)$ denote the number of disjoint tiers at $\bN$,
 a TierGraph $\mathcal{T}(\bN)$ at $\bN$ is defined as follows:
    \begin{itemize}
        \item the graph has $k(\bN)$ vertices, with each vertex $v_i$ representing a tier $(F_i, B_i)$ at $\bN$,
        \item there is an arc pointing from vertex $v_i = (F_i, B_i)$ to vertex $v_j = (F_j, B_j)$ ($i\neq j$), which is denoted as $v_i\to v_j$, if there exists $f\in F_i, b \in B_j$ such that $(f,b) \in \mathcal{E}$, i.e., they are connected in the connectivity bipartite graph $\mathcal{G} = (\mathcal{F},\mathcal{B}, \mathcal{E})$.
    \end{itemize}
We say $(F_i, B_i) \prec (F_j, B_j)$ if vertex $v_i$ can reach vertex $v_j$ in $\mathcal{T}(\bN)$. 
\end{definition}

{For an example of tiers and TierGraph, please refer to Figure \ref{fig: ex connected graph with tiers} and Figure \ref{fig: ex tiergraph} respectively. In the remainder of the paper, we denote the vertices in the TierGraph and their corresponding tiers using the same index.
We next present some fundamental facts about the structure of the TierGraph and how the reachability relation on the TierGraph reveals the relative order of gradient values of tiers.}

\begin{corollary}\label{cor:gradient decreases along tiers}
    For two tiers $(F_i, B_i), (F_j, B_j)$ with $F_i, F_j \subseteq \mathcal{F}, B_i, B_j \subseteq \mathcal{B}$ at $\bN$, if $(F_i, B_i) \prec (F_j, B_j)$, then $\mu_{B_i}'(\bN) > \mu_{B_j}'(\bN)$. 
\end{corollary}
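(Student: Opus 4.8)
The plan is to reduce the general reachability statement to the single-arc case and then chain the inequalities. First I would establish the base case: suppose there is an arc $v_i \to v_j$ in $\mathcal{T}(\bN)$. By Definition~\ref{def:tiergraph}, this means there exist $f \in F_i$ and $b \in B_j$ with $(f,b) \in \mathcal{E}$. Now $f$ belongs to tier $i$, so by the definition of a tier (Definition~\ref{def:tier}), $f$ has at least one neighbor in the best backend graph $G(\bN)$, and every such neighbor lies in $B_i$; that is, $S_f(\bN) \subseteq B_i$, and the common gradient of the best backends of $f$ equals $\mu'_{B_i}(\bN)$. Since $b \in \mathcal{B}(f)$ but $b \notin S_f(\bN)$ (because $b \in B_j$ and tiers are disjoint, so $b \notin B_i \supseteq S_f(\bN)$), the definition of $S_f$ forces $\mu'_b(N_b) < \max_{j' \in \mathcal{B}(f)} \mu'_{j'}(N_{j'}) = \mu'_{B_i}(\bN)$. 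Finally, $b \in B_j$ gives $\mu'_b(N_b) = \mu'_{B_j}(\bN)$, so $\mu'_{B_j}(\bN) < \mu'_{B_i}(\bN)$, which is the claim for a single arc.

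Next I would handle the general case. If $(F_i, B_i) \prec (F_j, B_j)$ with $i \neq j$, then by definition there is a directed path $v_i = v_{i_0} \to v_{i_1} \to \cdots \to v_{i_m} = v_j$ in $\mathcal{T}(\bN)$ with $m \geq 1$ and all vertices on the path distinct (we may take a simple path). Applying the single-arc result to each consecutive pair yields
\[
\mu'_{B_i}(\bN) = \mu'_{B_{i_0}}(\bN) > \mu'_{B_{i_1}}(\bN) > \cdots > \mu'_{B_{i_m}}(\bN) = \mu'_{B_j}(\bN),
\]
and transitivity of strict inequality gives $\mu'_{B_i}(\bN) > \mu'_{B_j}(\bN)$, as desired. (One should note that the vertices along the path need not be distinct for the argument to work, but since each arc strictly decreases the gradient, a repeated vertex would force a strict inequality of a number with itself, which also confirms that $\mathcal{T}(\bN)$ is in fact acyclic — a fact worth recording.)

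The only genuinely delicate point is the base case, specifically the assertion that $b \notin S_f(\bN)$ when $b \in B_j$. This relies on the fact that distinct tiers have disjoint backend sets: since $(F_i,B_i)$ and $(F_j,B_j)$ are connected components of $G(\bN)$ and $i \neq j$, we have $B_i \cap B_j = \emptyset$, hence $b \in B_j$ implies $b \notin B_i$, and since $S_f(\bN) \subseteq B_i$ (every best backend of $f \in F_i$ is adjacent to $f$ in $G(\bN)$, so lies in the same connected component as $f$), we conclude $b \notin S_f(\bN)$. Once disjointness of tiers is in hand, the rest is a routine unwinding of definitions; I would expect this corollary to follow essentially immediately from Definitions~\ref{def: best backend graph}, \ref{def:tier}, and \ref{def:tiergraph}, with no need for Assumption~\ref{assmp: increasing concave smooth ell} beyond what is already baked into the definition of $S_f$.
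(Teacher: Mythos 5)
Your proof is correct and follows essentially the same route as the paper: establish the single-arc case (the paper isolates this as Lemma~\ref{lem:low tier low gradient}) by observing that $v_i\to v_j$ gives $f\in F_i$, $b\in B_j$ with $(f,b)\in\mathcal{E}$ but $b\notin S_f(\bN)$, then chain the strict inequalities along a directed path. Your extra justification that $S_f(\bN)\subseteq B_i$ and tiers are disjoint simply makes explicit a step the paper leaves implicit, and your parenthetical remark on repeated vertices anticipates Corollary~\ref{cor: TierGraph DAG}.
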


\begin{corollary}\label{cor: TierGraph DAG}
The TierGraph $\mathcal{T}(\bN)$ is a directed acyclic graph. 
\end{corollary}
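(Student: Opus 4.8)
The plan is to prove the statement by contradiction, using Corollary~\ref{cor:gradient decreases along tiers} as the sole engine. Suppose, for the sake of contradiction, that the TierGraph $\mathcal{T}(\bN)$ contains a directed cycle. By the definition of $\mathcal{T}(\bN)$ there are no self-loops (an arc $v_i \to v_j$ requires $i \neq j$), so any such cycle has the form $v_{i_1} \to v_{i_2} \to \cdots \to v_{i_m} \to v_{i_1}$ with $m \geq 2$ and the vertices $v_{i_1}, \ldots, v_{i_m}$ pairwise distinct.

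Next I would observe that each arc $v_{i_\ell} \to v_{i_{\ell+1}}$ in the cycle in particular witnesses that $v_{i_\ell}$ reaches $v_{i_{\ell+1}}$ in $\mathcal{T}(\bN)$, hence $(F_{i_\ell}, B_{i_\ell}) \prec (F_{i_{\ell+1}}, B_{i_{\ell+1}})$, where indices are read cyclically with $i_{m+1} := i_1$. Applying Corollary~\ref{cor:gradient decreases along tiers} to each of these $m$ relations then yields the strict chain
\[
\mu'_{B_{i_1}}(\bN) > \mu'_{B_{i_2}}(\bN) > \cdots > \mu'_{B_{i_m}}(\bN) > \mu'_{B_{i_1}}(\bN),
\]
which is impossible. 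Consequently $\mathcal{T}(\bN)$ contains no directed cycle, i.e., it is a directed acyclic graph.

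The argument is essentially immediate once Corollary~\ref{cor:gradient decreases along tiers} is available; the only points requiring a moment's care are (i) recording that the arc relation carries no self-loops, so a genuine cycle has length at least two, and (ii) noting that a single arc already implies the reachability relation $\prec$, so the corollary applies arc-by-arc along the cycle rather than only to longer reachability chains. I do not anticipate any real obstacle here: the substantive content sits entirely in Corollary~\ref{cor:gradient decreases along tiers}, whose proof (relating the common gradient of a tier to the strictly smaller gradients of the tiers reachable from it through best-backend-graph edges, using the monotonicity and concavity of the $\mu_b$) is what carries the weight.
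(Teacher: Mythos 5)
Your proof is correct and follows exactly the paper's argument: assume a directed cycle, apply Corollary~\ref{cor:gradient decreases along tiers} around the cycle to obtain a strictly decreasing chain of gradients that returns to its starting value, and conclude by contradiction. The extra care you take about self-loops and about single arcs implying reachability is fine but does not change the substance.
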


\begin{figure}[h!]
    \centering
    \begin{minipage}{0.4\textwidth}
    \centering
\begin{tikzpicture}[
  every node/.style={circle, draw, minimum size=6mm},
  every edge/.style={draw, thick},
  scale=0.95
]

\node (f1) at (0, 4) {$f_1$};
\node (f2) at (0, 2) {$f_2$};
\node (f3) at (0, 1) {$f_3$};
\node (f4) at (0, 3) {$f_4$};

\node (b1) at (3, 4) {$b_1$};
\node (b5) at (3, 3) {$b_5$};
\node (b2) at (3, 2) {$b_2$};
\node (b3) at (3, 0) {$b_3$};
\node (b4) at (3, 1) {$b_4$};

\draw (f1) -- (b1);
\draw (f2) -- (b2);
\draw (f3) -- (b4);
\draw (f4) -- (b5);
\draw (f4) -- (b1);

\draw[dotted, thick, blue] ($(f1.north west) + (-0.2,0.2)$) rectangle ($(b5.south east) + (0.2,-0.2)$);
\node[anchor=north west, blue, draw=none] at ($(f1.north west) + (-1.5,0)$) {Tier 1};

\draw[dotted, thick, red]  ($(f2.north west) + (-0.2,0.2)$) rectangle ($(b2.south east) + (0.2,-0.2)$);
\node[anchor=north west, red, draw=none] at ($(f2.north west) + (-1.5,0)$) {Tier 2};

\draw[dotted, thick, orange] ($(f3.north west) + (-0.2,0.2)$) rectangle ($(b4.south east) + (0.2,-0.2)$);
\node[anchor=north west, orange, draw=none] at ($(b4.north west) + (1,0)$) {Tier 3};

\draw[dotted, thick, purple] ($(b3.north west) + (-0.2,0.2)$) rectangle ($(b3.south east) + (0.2,-0.2)$);
\node[anchor=north west, purple, draw=none] at ($(b3.north west) + (1,0)$) {Tier 4};

\draw[dashed] (f1) -- (b2);
\draw[dashed] (f1) -- (b3);
\draw[dashed] (f2) -- (b3);
\draw[dashed] (f3) -- (b3);
\draw[dashed] (f4) -- (b4);
\end{tikzpicture}
\caption{A connectivity bipartite graph from Figure \ref{fig: ex G and best backend graph}  (a) \textit{rearranged} with solid edges denoting that the frontend $f$ is connected to a best backends in $S_f(\bN)$ and dashed edges denoting connected but not best backends. The system state $\bN$ is the same as Figure \ref{fig: ex G and best backend graph} (b). } 
\label{fig: ex connected graph with tiers}
 \end{minipage}%
    \hfill
    \begin{minipage}{0.45\textwidth}
    \centering
\begin{tikzpicture}[
  every node/.style={circle, draw, minimum size=6mm},
  every edge/.style={draw, thick},
  scale=0.95
]

\node[ellipse] (T1) at (0, 4) {$v_1$ (Tier 1)};
\node[ellipse] (T2) at (-2.5, 2.5) {$v_2$ (Tier 2)};
\node[ellipse] (T3) at (2.5, 2.5) {$v_3$ (Tier 3)};
\node[ellipse] (T4) at (0, 1) {$v_4$ (Tier 4)};

\draw[->] (T1) -- (T4);
\draw[->] (T1) -- (T2);
\draw[->] (T1) -- (T3);
\draw[->] (T2) -- (T4);
\draw[->] (T3) -- (T4);
\end{tikzpicture}
\caption{The TierGraph $\mathcal{T}(\bN)$ at $\bN$ that corresponds to the graph in Figure \ref{fig: ex connected graph with tiers}. Here the tiers are $(\{f_1, f_4\}, \{b_1, b_5\}), (\{f_2\}, \{b_2\}), (\{f_3\}, \{b_4\})$ and $(\emptyset,\{b_3\})$, which corresponds to $v_1, v_2, v_3, v_4$ respectively in the TierGraph.}
\label{fig: ex tiergraph}
    \end{minipage}

\end{figure}

\subsection{Stability Analysis}
We proceed to prove the stability of the solution to the differential inclusion \eqref{eq: differential inclusion}. There are two steps in the proof:
\begin{enumerate}
    \item We show that if a tier does not change during a time interval, the workloads at the backends in this tier must all increase or all decrease, depending on the total flow imbalance within the tier (see Lemma \ref{lem: tier N same direction}). Specifically, when the total arrival rates exceed the total service rates, the workloads will increase and vice versa. We also establish flow imbalance inequalities for any subset of the tier, which will be used in the second step (see Lemma \ref{lem: K+ lambda big} and Lemma \ref{lem: K- lambda small}).
    
    \item We next analyze the Lyapunov function. When the tiers do not change during a time interval, we show the Lyapunov function is the sum of absolute values of total flow imbalance within each tier, which decreases with time (see Lemma \ref{lem: slide}). When the tiers' configurations change, we show the value of the Lyapunov function decreases (see Lemma \ref{lem:split} and Lemma \ref{lem:sync}) by analyzing how the tiers change, which relies on some fundamental structural properties of the TierGraph established in the previous section. { It is worth noting that our tier definition shares conceptual similarities with complete resource pooling subsystems in \cite{afeche2022optimal} and the routing components in \cite{ding2021fluid}, but we use it for very different analytical purposes.
    \cite{afeche2022optimal} use this concept to derive steady-state waiting times in heavy-traffic, and \cite{ding2021fluid} focus on characterizing the switch times at which routing components change, while we track the transient evolution of tiers to characterize the changes in the Lyapunov function under GMSR, independent of when switches occur.}
\end{enumerate}

\subsubsection{Analysis of a Tier's Dynamics}
First, we show that within a tier, the workloads of the backends evolve in the same direction. This is because,
by definition, all backends in the same tier will share the same marginal service rate. Therefore to stay in the
same tier, their gradients must remain aligned, and so are the workloads’ drifts.
\begin{lemma}\label{lem: tier N same direction}
    If there exists $t_1 < t_2$ and $F \subseteq \mathcal{F}, B\subseteq \mathcal{B}$ such that $(F,B)$ is a tier at $\bN(t)$ for $t\in [t_1, t_2]$, then for any $b_1, b_2 \in B$, $\frac{d}{dt}N_{b_1}(t) \cdot \frac{d}{dt}N_{b_2}(t) \geq 0$ for $t\in [t_1, t_2]$.
\end{lemma}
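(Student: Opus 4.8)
The plan is to exploit the defining feature of a tier: every backend in $B$ carries the same marginal service rate, so on the interval where the tier is unchanged this common value is a single well-behaved function of time. Fix $F \subseteq \mathcal{F}$, $B \subseteq \mathcal{B}$ with $(F,B)$ a tier at $\bN(t)$ for all $t \in [t_1,t_2]$, and define the common-gradient function $g(t) := \mu_B'(\bN(t))$, which by the tier property equals $\mu_b'(N_b(t))$ for \emph{every} $b \in B$ simultaneously on $[t_1,t_2]$. The first step is to note that $g$, and indeed each $t \mapsto \mu_b'(N_b(t))$, is absolutely continuous: $N_b(\cdot)$ is absolutely continuous since it solves the differential inclusion \eqref{eq: differential inclusion}, and $\mu_b'$ is continuously differentiable because $\mu_b$ is twice continuously differentiable (Assumption \ref{assmp: increasing concave smooth ell}). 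Hence $g$ and each $\mu_b'(N_b(\cdot))$ are differentiable at almost every $t \in [t_1,t_2]$, and at such a $t$ the chain rule gives $\frac{d}{dt}\mu_b'(N_b(t)) = \mu_b''(N_b(t))\,\dot N_b(t)$.

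The second step is to equate these derivatives across the tier. Because $\mu_{b_1}'(N_{b_1}(\cdot))$ and $\mu_{b_2}'(N_{b_2}(\cdot))$ are literally the same function $g$ on $[t_1,t_2]$, at almost every $t$ we have
\[\mu_{b_1}''(N_{b_1}(t))\,\dot N_{b_1}(t) \;=\; \dot g(t) \;=\; \mu_{b_2}''(N_{b_2}(t))\,\dot N_{b_2}(t).\]
Strict concavity of $\mu_b$ forces $\mu_b''(N_b(t)) < 0$ for every $b \in B$, so we may solve $\dot N_{b_i}(t) = \dot g(t)/\mu_{b_i}''(N_{b_i}(t))$ and multiply to obtain
\[\dot N_{b_1}(t)\,\dot N_{b_2}(t) \;=\; \frac{\dot g(t)^2}{\mu_{b_1}''(N_{b_1}(t))\,\mu_{b_2}''(N_{b_2}(t))} \;\ge\; 0,\]
since the numerator is a square and the denominator is a product of two strictly negative numbers. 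This is the claimed inequality, valid at almost every $t \in [t_1,t_2]$, i.e.\ wherever the drifts $\dot N_b(t)$, $b\in B$, exist.

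There is no deep obstacle here; the substance is entirely in correctly introducing the common-gradient function $g$ and invoking absolute continuity so that the chain rule is licensed almost everywhere. The only point needing a little care is that solutions of \eqref{eq: differential inclusion} are absolutely continuous but not $C^1$, so $\dot N_b(t)$ need not exist at every point; the conclusion should therefore be read at points of differentiability (equivalently, $\dot N_{b_1}\dot N_{b_2} \ge 0$ a.e.\ on $[t_1,t_2]$). A byproduct of the same computation, convenient for Lemma \ref{lem: K+ lambda big} and Lemma \ref{lem: K- lambda small}, is the stronger sign identity $\operatorname{sign}(\dot N_b(t)) = -\operatorname{sign}(\dot g(t))$, uniform over $b \in B$: the entire tier moves up together, moves down together, or is momentarily stationary, with the common direction governed by the sign of $\dot g$ (hence, after the next lemmas, by the tier's aggregate flow imbalance).
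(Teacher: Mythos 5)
Your proof is correct and follows essentially the same route as the paper's: differentiate the common-gradient identity $\mu_{b_1}'(N_{b_1}(t)) = \mu_{b_2}'(N_{b_2}(t))$ in $t$, apply the chain rule, and use $\mu_b'' < 0$ from strict concavity. The one refinement you add, which the paper glosses over, is the explicit appeal to absolute continuity of $N_b(\cdot)$ (and hence of $\mu_b'(N_b(\cdot))$) so that the chain rule is licensed almost everywhere rather than everywhere; this is a genuine point of care since solutions of a differential inclusion need not be $C^1$, and your almost-everywhere reading of the conclusion is the right one.
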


Next, we establish flow imbalance inequalities for subsets of a tier, which are critical for analyzing the Lyapunov function. {Within a tier, when the total arrival rates are higher than the total service rates, which we refer to as having ``positive flow imbalance," jobs arrive faster than jobs depart and the workload at \emph{every backend} must increase. Locally, we prove each backend must experience positive flow imbalance too, i.e., in-flow dominates out-flow, and for this to be feasible, the flow imbalance inequalities for any subset of backends and their connected frontends in the tier must also hold. This result leverages that backends are sliding in their equal-gradient hypersurface.}

\begin{lemma}\label{lem: K+ lambda big}
If there exists $t_1 < t_2$ and $F \subseteq \mathcal{F}, B \subseteq \mathcal{B}$ such that for $t\in [t_1, t_2]$,
\begin{itemize}
    \item $(F,B)$ is a tier at $\bN(t)$,
    \item $\sum_{f\in F} \lambda_f - \sum_{b\in B}\mu_b(N_b(t)) \geq 0$.
\end{itemize}
Then, for $t\in [t_1, t_2]$, the following holds:
\begin{itemize}
    \item $N_b(t)$ is non-decreasing in $t$ for all $b\in B$,
    \item for any subset of backends $Q\subseteq B$,
    \[\sum_{f\in F\cap (\cup_{b\in Q} \mathcal{F}(b))} \lambda_f \geq \sum_{b\in Q} \mu_b(N_b(t)).\]
\end{itemize}
\end{lemma}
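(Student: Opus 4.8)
The plan is to establish the two conclusions by exploiting that the tier $(F,B)$ is fixed on $[t_1,t_2]$, so that all backends in $B$ share a common gradient value $\mu'_B(\bN(t))$ that evolves as a single scalar function of $t$. First I would prove the monotonicity claim. By Lemma~\ref{lem: tier N same direction}, the drifts $\dot N_b(t)$ all have the same sign for $b\in B$; I want to show that sign is nonnegative under the hypothesis $\sum_{f\in F}\lambda_f - \sum_{b\in B}\mu_b(N_b(t))\ge 0$. Summing the fluid dynamics over $b\in B$ gives $\sum_{b\in B}\dot N_b(t) = \sum_{b\in B}\sum_{f\in\mathcal F(b)}\lambda_f x_{f,b}(t) - \sum_{b\in B}\mu_b(N_b(t))$. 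Because $(F,B)$ is a tier, every edge of the best backend graph incident to a backend in $B$ comes from a frontend in $F$, and every such frontend sends \emph{all} its flow into $B$ (it has no best backend outside $B$); hence $\sum_{b\in B}\sum_{f\in\mathcal F(b)}\lambda_f x_{f,b}(t) = \sum_{f\in F}\lambda_f$. Therefore $\sum_{b\in B}\dot N_b(t) = \sum_{f\in F}\lambda_f - \sum_{b\in B}\mu_b(N_b(t))\ge 0$, and since all summands $\dot N_b(t)$ share a sign, each is $\ge 0$. Integrating, $N_b(t)$ is non-decreasing on $[t_1,t_2]$ for every $b\in B$.

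Next I would prove the subset inequality. Fix $Q\subseteq B$ and let $F_Q := F\cap(\cup_{b\in Q}\mathcal F(b))$ be the frontends in the tier adjacent to $Q$. Summing the dynamics over $b\in Q$ yields $\sum_{b\in Q}\dot N_b(t) = \sum_{b\in Q}\sum_{f\in\mathcal F(b)}\lambda_f x_{f,b}(t) - \sum_{b\in Q}\mu_b(N_b(t))$. The in-flow term is at most $\sum_{f\in F_Q}\lambda_f$: each frontend contributing flow to $Q$ must be adjacent to $Q$ in the best backend graph, hence lies in $F_Q$, and its total contribution across all backends (in $Q$ or not) is at most $\lambda_f$. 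Combining with the monotonicity just established, $\dot N_b(t)\ge 0$ for each $b\in Q$, so $0 \le \sum_{b\in Q}\dot N_b(t) \le \sum_{f\in F_Q}\lambda_f - \sum_{b\in Q}\mu_b(N_b(t))$, which rearranges to the claimed inequality $\sum_{f\in F_Q}\lambda_f \ge \sum_{b\in Q}\mu_b(N_b(t))$.

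The main obstacle, and the step deserving the most care, is the flow-accounting identity $\sum_{b\in B}\sum_{f\in\mathcal F(b)}\lambda_f x_{f,b}(t) = \sum_{f\in F}\lambda_f$ (and its subset analogue as an inequality). This requires arguing carefully from the tier definition that (i) any frontend sending positive flow to a backend in $B$ must be in $F$ — because $x_{f,b}(t)>0$ with $\bx_f(t)\in X_f(\bN(t))$ forces $(f,b)\in E(\bN(t))$, so $b\in S_f(\bN(t))$, and the third bullet of Definition~\ref{def:tier} puts $f\in F$; and (ii) any frontend in $F$ sends \emph{all} of its unit of flow into $B$ — because its best backends $S_f(\bN(t))$ all lie in $B$ by the second bullet of Definition~\ref{def:tier}, and $\bx_f(t)\in X_f(\bN(t))$ is supported on $S_f(\bN(t))$. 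A subtlety is that $x_{f,b}(t)$ need only be measurable (the solution is absolutely continuous, the routing is a measurable selection), so these identities hold for a.e.\ $t\in[t_1,t_2]$; since $N_b(\cdot)$ is absolutely continuous, integrating a.e.\ inequalities is still valid, and the monotonicity and the final pointwise inequality (using continuity of $\mu_b$ and $N_b$) extend to all $t\in[t_1,t_2]$.
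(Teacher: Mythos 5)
Your proof is correct and takes essentially the same route as the paper: sum the dynamics over $B$, use the tier structure to account for all flow from $F$ landing in $B$ (and none from outside), invoke Lemma~\ref{lem: tier N same direction} to get each $\dot N_b(t)\ge 0$, then sum over $Q$ and bound the inflow by $\sum_{f\in F_Q}\lambda_f$. Your extra remarks making the flow-accounting identity and the a.e.\ measurability issue explicit are sound and only add precision to what the paper treats implicitly.
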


{Similarly, we have the following results for the ``negative flow imbalance'' case, i.e., when the total arrival rates are lower than the total service rates. In this case, jobs arrive slower than jobs depart, so the workloads at the backends must decrease. This implies similar local inequalities for any subset of frontends and their connected backends in the tier.
}

\begin{lemma}\label{lem: K- lambda small}
If there exists $t_1 < t_2$ and $F \subseteq \mathcal{F}, B \subseteq \mathcal{B}$ such that
\begin{itemize}
    \item $(F,B)$ is a tier at $\bN(t), t\in [t_1, t_2]$,
    \item $\sum_{f\in F} \lambda_f - \sum_{b\in B}\mu_b(N_b(t)) \leq 0$.
\end{itemize}
Then, for $t\in [t_1, t_2]$, the following holds:
\begin{itemize}
    \item $N_b(t)$ is non-increasing in $t$ for all $b\in B$,
    \item for any subset of frontends $P\subseteq F$,
    \[\sum_{f\in P} \lambda_f \leq \sum_{b\in B \cap (\cup_{f\in P}\mathcal{B}(f))} \mu_b(N_b(t)).\]
\end{itemize}

\end{lemma}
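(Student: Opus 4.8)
The plan is to mirror the structure of Lemma~\ref{lem: K+ lambda big} but reverse all inequalities, exploiting the antisymmetry between ``positive flow imbalance'' and ``negative flow imbalance'' under GMSR. First I would establish that $N_b(t)$ is non-increasing for all $b \in B$. By Lemma~\ref{lem: tier N same direction}, all drifts $\dot N_b(t)$ within the tier have the same sign; if they were all $\ge 0$ on a subinterval, then summing $\dot N_b(t) = \sum_{f \in \mathcal{F}(b)} \lambda_f x_{f,b}(t) - \mu_b(N_b(t))$ over $b \in B$ and using that the tier is self-contained (no edge in $E(\bN)$ leaves $B$ or enters $B$ from outside $F$, so the only flow into backends of $B$ comes from frontends in $F$, and those frontends route \emph{entirely} into $B$ since their best backends all lie in $B$) would give $\sum_{b \in B} \dot N_b(t) = \sum_{f \in F} \lambda_f - \sum_{b \in B} \mu_b(N_b(t)) \ge 0$, contradicting the hypothesis $\sum_{f \in F} \lambda_f - \sum_{b \in B} \mu_b(N_b(t)) \le 0$ unless equality holds throughout — and in the equality case all drifts are zero, so $N_b$ is (weakly) non-increasing anyway.

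Next I would prove the subset inequality $\sum_{f \in P} \lambda_f \le \sum_{b \in B \cap (\cup_{f \in P} \mathcal{B}(f))} \mu_b(N_b(t))$ for every $P \subseteq F$. The key observation is that each frontend $f \in P$ sends \emph{all} of its flow, i.e. rate $\lambda_f$, into its set of best backends $S_f(\bN(t))$, and since $(F,B)$ is a tier, $S_f(\bN(t)) \subseteq B$; moreover $S_f(\bN(t)) \subseteq \mathcal{B}(f)$, so the flow from $P$ is entirely absorbed by backends in $B^P := B \cap (\cup_{f \in P} \mathcal{B}(f))$. Write $\Phi := \sum_{b \in B^P} \dot N_b(t) = \sum_{b \in B^P}\big(\sum_{f \in \mathcal{F}(b)} \lambda_f x_{f,b}(t) - \mu_b(N_b(t))\big)$. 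The inflow to $B^P$ decomposes as flow originating from $P$ (which totals exactly $\sum_{f \in P} \lambda_f$, since every such frontend dumps its entire rate into $B^P$) plus possibly nonnegative flow originating from frontends in $F \setminus P$. Hence $\sum_{b \in B^P} \sum_{f \in \mathcal{F}(b)} \lambda_f x_{f,b}(t) \ge \sum_{f \in P} \lambda_f$, so $\Phi \ge \sum_{f \in P}\lambda_f - \sum_{b \in B^P}\mu_b(N_b(t))$. But from the first bullet, $\dot N_b(t) \le 0$ for every $b \in B$, so $\Phi \le 0$, giving $\sum_{f \in P} \lambda_f \le \sum_{b \in B^P} \mu_b(N_b(t))$ as claimed.

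For the monotonicity-in-$t$ claim I would note it follows from the non-increasing drift established above together with continuity of $\bN(\cdot)$ and the fact that the hypotheses hold on the whole interval $[t_1, t_2]$; since $\dot N_b(t) \le 0$ a.e. on $[t_1,t_2]$ and $N_b$ is absolutely continuous, $N_b$ is non-increasing there. One subtlety worth spelling out is that the drift sign must be controlled at \emph{every} time in the interval, not just at isolated points: because $(F,B)$ remains a tier throughout $[t_1,t_2]$, the argument of the first paragraph applies pointwise (a.e.), so there is no gap.

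The main obstacle I expect is the careful bookkeeping of \emph{where the flow into $B^P$ comes from}: one must argue that a frontend $f \in F \setminus P$ connected to some $b \in B^P$ contributes only nonnegative flow (trivial, since $x_{f,b} \ge 0$), while a frontend $f \in P$ contributes its \emph{entire} $\lambda_f$ — and crucially that none of $f\in P$'s flow escapes $B^P$. The latter is exactly where ``tier'' is used: $S_f(\bN(t)) \subseteq B$ because $f$'s best backends cannot lie outside $B$ (no edge of $E(\bN)$ leaves $F$), and $S_f(\bN(t)) \subseteq \mathcal{B}(f)$ by definition, so $S_f(\bN(t)) \subseteq B \cap \mathcal{B}(f) \subseteq B^P$. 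Making this decomposition rigorous — rather than asserting it — and handling the measure-zero exceptional set for differentiability are the parts requiring genuine care; everything else is a sign-flipped transcription of Lemma~\ref{lem: K+ lambda big}.
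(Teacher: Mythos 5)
Your proof is correct and follows essentially the same route as the paper's: first establish $\dot N_b(t)\le 0$ for all $b\in B$ via Lemma~\ref{lem: tier N same direction} together with the negative total flow imbalance, then for each $P\subseteq F$ observe that the entire rate $\sum_{f\in P}\lambda_f$ lands in $B^P := B\cap(\cup_{f\in P}\mathcal{B}(f))$ (because $S_f(\bN)\subseteq B\cap\mathcal{B}(f)$ when $(F,B)$ is a tier), so that the non-positive drift on $B^P$ bounds this inflow by $\sum_{b\in B^P}\mu_b(N_b(t))$. Your $\Phi$-decomposition is a cosmetic repackaging of the paper's double-sum rearrangement using $\sum_{b\in B\cap\mathcal{B}(f)} x_{f,b}=1$, and the two arguments coincide.
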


\subsubsection{Analysis of the Lyapunov Function}

In the following, given $B\subseteq \mathcal{B}$, we define
\[
    V_B(\bN, \bx) = \sum_{b\in B}\left|\sum_{f\in \mathcal{F}(b)}\lambda_f x_{f,b} - \mu_b(N_b)\right|\,,
\]
i.e., the sum of absolute values of drifts of backends within the set $B$, which we refer to as \textit{Total Absolute Drifts}.

Before proving the general result that the value of the Lyapunov function decreases over time, we first focus on two specific cases: (1) the \textit{Constant Tier} case, where $V_B$ is shown to decrease during intervals when the tier structure remains unchanged, and (2) the \textit{Single-Tier Splitting} case, where $V_B$ decreases when a single tier $(F, B)$ splits into multiple tiers. These cases help build the foundation for the more general result, where we show that the Lyapunov function decreases when several tiers merge and subsequently split, which we refer to as the \textit{Tiers Reconfiguration} case.

\paragraph{\textbf{Constant Tier Case}}
{When the tier structure does not change, by Lemma~\ref{lem: tier N same direction}, the workloads at the backends change in the same direction, thus the total absolute drifts equals the absolute total drifts, i.e., the absolute value of total flow imbalance. Note that flow imbalance drives workloads towards equilibrium at which flow is balanced, with jobs arriving at the same rate as jobs departing. Therefore, the absolute flow imbalance decreases.}
\begin{lemma}\label{lem: slide}
Suppose there exists $t_1 < t_2$ and $F \subseteq \mathcal{F}, B \subseteq \mathcal{B}$ such that for $t\in [t_1, t_2]$, $(F,B)$ is a tier at $\bN(t)$.
Then, the total absolute drifts for the backends in $B$ satisfy
\[V_B(\bN(t), \bx(t)) =  \left|\sum_{f\in F} \lambda_f - \sum_{b\in B}\mu_b(N_b(t))\right|,\]
and $V_B(\bN(t), \bx(t))$ is {non-increasing} in $[t_1,t_2]$.
\end{lemma}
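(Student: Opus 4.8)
The plan is to prove the two assertions of Lemma~\ref{lem: slide} in order: first the identity $V_B(\bN(t),\bx(t)) = |\sum_{f\in F}\lambda_f - \sum_{b\in B}\mu_b(N_b(t))|$, then the monotonicity. For the identity, the key observation is that since $(F,B)$ is a tier at $\bN(t)$ throughout $[t_1,t_2]$, every frontend $f\in F$ routes all its flow inside $B$ (because the best-backend graph has no edges from $F$ to $\mathcal{B}\setminus B$), and every backend $b\in B$ receives flow only from frontends in $F$ (no edges from $\mathcal{F}\setminus F$ into $B$). Hence $\sum_{b\in B}\dot N_b(t) = \sum_{f\in F}\lambda_f - \sum_{b\in B}\mu_b(N_b(t))$, the total flow imbalance within the tier. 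By Lemma~\ref{lem: tier N same direction}, all the drifts $\dot N_b(t)$ for $b\in B$ share a common sign, so $\sum_{b\in B}|\dot N_b(t)| = |\sum_{b\in B}\dot N_b(t)|$, which gives the claimed identity. (One should note a subtlety: $V_B$ is a function of $(\bN,\bx)$, but along a solution trajectory $\bx(t)\in\prod_f X_f(\bN(t))$ is determined up to the tie-breaking freedom; the identity holds for the particular $\bx(t)$ realizing the solution, and the common-sign property of Lemma~\ref{lem: tier N same direction} is exactly what makes the absolute values collapse regardless of how ties are broken.)

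For the monotonicity, I would write $\Phi(t) := \sum_{f\in F}\lambda_f - \sum_{b\in B}\mu_b(N_b(t))$, so that $V_B = |\Phi(t)|$, and show $|\Phi|$ is non-increasing on $[t_1,t_2]$. Differentiating, $\dot\Phi(t) = -\sum_{b\in B}\mu_b'(N_b(t))\,\dot N_b(t) = -\mu_B'(\bN(t))\sum_{b\in B}\dot N_b(t) = -\mu_B'(\bN(t))\,\Phi(t)$, where I used that all backends in a tier share the common gradient value $\mu_B'(\bN(t))$ (noted right after Definition~\ref{def:tier}) to pull it out of the sum, and then the identity $\sum_{b\in B}\dot N_b(t) = \Phi(t)$ from the previous paragraph. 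Since service rates are strictly increasing (Assumption~\ref{assmp: increasing concave smooth ell}), $\mu_B'(\bN(t)) > 0$. Thus $\Phi$ satisfies a linear ODE $\dot\Phi = -a(t)\Phi$ with $a(t) = \mu_B'(\bN(t)) > 0$, whose solution is $\Phi(t) = \Phi(t_1)\exp(-\int_{t_1}^t a(s)\,ds)$; consequently $|\Phi(t)| = |\Phi(t_1)|\exp(-\int_{t_1}^t a(s)\,ds)$ is non-increasing (indeed strictly decreasing unless $\Phi(t_1)=0$). This proves $V_B(\bN(t),\bx(t))$ is non-increasing on $[t_1,t_2]$.

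A few technical points I would need to handle carefully. First, absolute continuity: the solution $\bN(\cdot)$ is only absolutely continuous, so $t\mapsto\mu_b(N_b(t))$ is absolutely continuous (composition of a $C^2$ function with an absolutely continuous one of bounded variation on the compact interval), hence $\Phi$ is absolutely continuous and the chain rule $\dot\Phi = -\sum_b \mu_b'(N_b)\dot N_b$ holds a.e.; the ODE argument and Grönwall-type conclusion then go through for absolutely continuous functions. Second, I want $\mu_B'(\bN(t))$ to make sense as a single number for each $t$ — this is guaranteed because $(F,B)$ being a tier forces $\mu_b'(N_b(t))$ to be equal across $b\in B$ for every $t\in[t_1,t_2]$, and continuity of this common value in $t$ follows from continuity of $\bN(\cdot)$ and of each $\mu_b'$. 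Third, the case $F=\emptyset$: then $\Phi(t) = -\sum_{b\in B}\mu_b(N_b(t)) < 0$ (since $\mu_b > 0$ away from zero workload and $B\neq\emptyset$), and the same ODE argument shows $|\Phi|$ decreases — workloads drain. I do not expect a genuine obstacle here; the only mildly delicate piece is justifying the chain rule and the exponential-decay conclusion at the level of absolute continuity rather than $C^1$, but this is standard. The conceptual heart is simply recognizing that within a constant tier the dynamics collapse to the scalar linear equation $\dot\Phi = -\mu_B'\,\Phi$.
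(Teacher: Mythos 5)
Your proof is correct and follows the same essential path as the paper's: the identity comes from (i) the tier structure forcing all of $F$'s flow into $B$ and only $F$'s flow into $B$, and (ii) Lemma~\ref{lem: tier N same direction} letting the absolute values collapse; the monotonicity comes from the computation $\dot\Phi = -\mu_B'(\bN(t))\,\Phi$. The one genuine organizational difference is how you handle sign invariance of $\Phi$. The paper first proves a separate corollary (its Corollary~\ref{cor: slide flow imbalance no change}, via Lemmas~\ref{lem: K+ lambda big} and~\ref{lem: K- lambda small}) that $\Phi$ cannot flip sign while the tier persists, and only then case-splits on $\Phi\ge 0$ vs.\ $\Phi\le 0$ to differentiate $|\Phi|$. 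You instead recognize that $\dot\Phi = -a(t)\Phi$ with $a(t)>0$ is a scalar linear ODE whose solution $\Phi(t)=\Phi(t_1)\exp(-\int_{t_1}^t a)$ gives sign invariance and exponential decay of $|\Phi|$ in one stroke, dispensing with the corollary entirely (modulo a Gr\"onwall-type uniqueness argument at the level of absolute continuity, which you correctly flag as standard). This is a modest streamlining: the paper's route makes the sign-invariance fact available as a standalone corollary reused later (e.g., in Lemma~\ref{lem:split}), whereas your route is more self-contained and arguably more transparent about where the decay rate $\mu_B'$ comes from, foreshadowing Proposition~\ref{prof: exponential convergence V in K}. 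Both are valid; neither is substantially shorter than the other once the supporting machinery is counted.
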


\paragraph{\textbf{Single-Tier Splitting Case}}
{When a tier $(F,B)$ splits into multiple tiers at time $\tau$, we show that $(F,B)$ and the resulting tiers will all have the same sign of flow imbalance. The proof relies on the flow imbalance inequalities presented in Lemma \ref{lem: K+ lambda big} and Lemma \ref{lem: K- lambda small} and analyzing the structure of the TierGraph induced by the tiers after splitting. Specifically, if $(F,B)$ has a positive flow imbalance, by the flow imbalance inequalities in Lemma \ref{lem: K+ lambda big}, we can show that the tier that corresponds to a source vertex in the TierGraph must also have positive flow imbalance. Then by the fundamental properties of TierGraph, we can show other tiers must also have positive flow imbalance.
Therefore, we can apply Lemma \ref{lem: slide} to show that $V_B(\bN(\tau^-), \bx(\tau^-)) = V_B(\bN(\tau^+), \bx(\tau^+))$.}

\begin{lemma}\label{lem:split}
Suppose there exists $t_1 < \tau < t_2$ and $F \subseteq \mathcal{F}, B \subseteq \mathcal{B}$ such that
\begin{itemize}
    \item $(F,B)$ is a tier at $\bN(t), t\in[t_1,\tau)$,
    \item $(F,B)$ splits into ${\hat{k}}>1$ tiers $(\hat{F}_1, \hat{B}_1), \ldots, (\hat{F}_{\hat{k}}, \hat{B}_{\hat{k}})$ at $\bN(t), t\in (\tau, t_2]$,
    \item $F = \hat{F}_1 \cup \cdots \cup \hat{F}_{\hat{k}}$ and $B = \hat{B}_1 \cup \cdots \cup \hat{B}_{\hat{k}}$.
\end{itemize}
Then, the total absolute drifts for the backends in $B$ satisfy
\[V_B(\bN(t), \bx(t)) = \left|\sum_{f\in F} \lambda_f - \sum_{b\in B}\mu_b(N_b(t))\right|,\]
and $V_B(\bN(t), \bx(t))$ is {non-increasing} in $[t_1,t_2]$.
\end{lemma}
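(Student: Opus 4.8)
The plan is to reduce the splitting case to the constant-tier case (Lemma \ref{lem: slide}) by showing that the parent tier $(F,B)$ and all of its children $(\hat F_1,\hat B_1),\dots,(\hat F_{\hat k},\hat B_{\hat k})$ carry flow imbalance of the \emph{same sign}, so that the sum of the absolute values of the total imbalances of the children equals the absolute value of the total imbalance of the parent, and that common quantity evolves continuously across $\tau$. More precisely, I would argue:

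\textbf{Step 1: Continuity across $\tau$.} By Theorem~\ref{thm: converge to fluid model} the solution $\bN(\cdot)$ is absolutely continuous, so $\bN(\tau^-)=\bN(\tau^+)=\bN(\tau)$, and each $\mu_b(N_b(\cdot))$ is continuous. Hence the quantity $|\sum_{f\in F}\lambda_f-\sum_{b\in B}\mu_b(N_b(t))|$ is continuous on all of $[t_1,t_2]$; moreover by Lemma \ref{lem: slide} applied on $[t_1,\tau)$ it is non-increasing there, and it suffices to show (a) that on $(\tau,t_2]$ we have $V_B(\bN(t),\bx(t))=|\sum_{f\in F}\lambda_f-\sum_{b\in B}\mu_b(N_b(t))|$ and that this is non-increasing on $(\tau,t_2]$.

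\textbf{Step 2: Common sign of the children's imbalances.} Fix $t\in(\tau,t_2]$ and suppose WLOG the parent had non-negative total imbalance near $\tau$, i.e.\ $\sum_{f\in F}\lambda_f-\sum_{b\in B}\mu_b(N_b(\tau))\ge 0$ (the symmetric case uses Lemma \ref{lem: K- lambda small} in place of Lemma \ref{lem: K+ lambda big}, and by continuity the sign near $\tau^+$ is determined by the sign at $\tau$). Consider the TierGraph $\mathcal{T}(\bN(t))$ restricted to the children $v_1,\dots,v_{\hat k}$; by Corollary \ref{cor: TierGraph DAG} it is a DAG, so it has a source vertex, say $v_1=(\hat F_1,\hat B_1)$. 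A source child has no incoming arcs, so no frontend outside $\hat F_1$ (among those in $F$) is connected to a backend in $\hat B_1$; equivalently, taking $Q=\hat B_1\subseteq B$ in Lemma \ref{lem: K+ lambda big}, we get $F\cap(\cup_{b\in\hat B_1}\mathcal{F}(b))\subseteq \hat F_1$, hence $\sum_{f\in\hat F_1}\lambda_f\ge\sum_{f\in F\cap(\cup_{b\in\hat B_1}\mathcal{F}(b))}\lambda_f\ge\sum_{b\in\hat B_1}\mu_b(N_b(t))$, so $v_1$ has non-negative imbalance. For a general child $v_j$, pick a source $v_i$ in the induced sub-DAG on $\{v_1,\dots,v_{\hat k}\}$ with $v_i\prec v_j$ or $i=j$; iterating the argument down from sources and using that the union of a down-closed family of children again has no incoming edges from $F$ outside that union, apply Lemma \ref{lem: K+ lambda big} with $Q$ equal to the union of backends of all children reachable from (or equal to) a source, peeling off sources one at a time, to conclude every child has non-negative total imbalance. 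Then by Lemma \ref{lem: slide} applied to each child on $(\tau,t_2]$, $V_{\hat B_j}=\sum_{f\in\hat F_j}\lambda_f-\sum_{b\in\hat B_j}\mu_b(N_b(t))\ge 0$, and summing over $j$ (using $F=\cup_j\hat F_j$, $B=\cup_j\hat B_j$ disjointly) gives $V_B(\bN(t),\bx(t))=\sum_j V_{\hat B_j}=\sum_{f\in F}\lambda_f-\sum_{b\in B}\mu_b(N_b(t))=|\sum_{f\in F}\lambda_f-\sum_{b\in B}\mu_b(N_b(t))|$, as claimed.

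\textbf{Step 3: Monotonicity.} Having identified $V_B(\bN(t),\bx(t))$ with $|\sum_{f\in F}\lambda_f-\sum_{b\in B}\mu_b(N_b(t))|$ on all of $[t_1,t_2]$, monotonicity on $(\tau,t_2]$ follows exactly as in Lemma \ref{lem: slide}: all $N_b$, $b\in B$, move in a common direction (non-decreasing here, by Lemma \ref{lem: K+ lambda big}), and each $\mu_b$ is increasing (Assumption \ref{assmp: increasing concave smooth ell}), so $\sum_{b\in B}\mu_b(N_b(t))$ is non-decreasing, hence $\sum_{f\in F}\lambda_f-\sum_{b\in B}\mu_b(N_b(t))$, being non-negative, is non-increasing; continuity at $\tau$ stitches the two pieces together.

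\textbf{Main obstacle.} The delicate point is Step 2: correctly formalizing the "peel off sources one at a time" induction on the children's sub-DAG and certifying at each stage that the hypothesis of Lemma \ref{lem: K+ lambda big} applies with the right choice of $Q$ — specifically, that for a down-closed set $U$ of children (in the reachability order of the induced sub-DAG), no frontend of $F$ outside $\cup_{j\in U}\hat F_j$ is connected to a backend in $\cup_{j\in U}\hat B_j$, which is exactly where the definition of a tier (no outside edges in the best-backend graph) together with the arc-absence in the DAG must be invoked carefully. The other subtlety is handling the sign near $\tau^+$ purely from the value at $\tau$; here one needs that the parent's imbalance is continuous and that if it is strictly positive at $\tau$ the children inherit strict positivity on a right-neighborhood, while if it is exactly zero at $\tau$ then both $V_B$ and the absolute total imbalance are zero at $\tau$ and the monotone-from-the-left / monotone-from-the-right pieces both emanate from $0$, so no sign ambiguity actually arises.
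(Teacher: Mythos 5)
There is a genuine gap in Step 2: the ``peel off sources one at a time'' argument does not actually yield per-child sign information. What Lemma~\ref{lem: K+ lambda big} applied to the parent $(F,B)$ on $[t_1,\tau)$ gives you (after passing to the limit $\tau^-$) is, for any predecessor-closed set $U$ of children, the \emph{aggregate} inequality $\sum_{j\in U}\sum_{f\in\hat F_j}\lambda_f \ge \sum_{j\in U}\sum_{b\in\hat B_j}\mu_b(N_b(\tau))$. Subtracting the aggregate inequality for a smaller down-closed set $U'\subset U$ produces an inequality that goes the \emph{wrong} way: you learn $\sum_{f\in\hat F_j}\lambda_f \ge \sum_{b\in\hat B_j}\mu_b(N_b(\tau)) - (\text{excess of }U')$, with the excess term nonnegative, so nothing is concluded about the sign of the $j$-th child's imbalance. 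Concretely, a non-source child carrying a small negative imbalance is entirely consistent with every aggregate constraint, as long as its ancestors carry enough positive excess to compensate. (You also conflate ``down-closed'' with ``the set of children reachable from a source''; the latter is successor-closed and does admit incoming arcs, so the hypothesis needed for Lemma~\ref{lem: K+ lambda big} may fail.)

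The paper closes this gap with a \emph{temporal}, not spatial, argument: for a non-source child $(\hat F_i,\hat B_i)$, suppose for contradiction it has negative imbalance at some $\hat\tau\in(\tau,t_2]$; then by Corollary~\ref{cor: slide flow imbalance no change} it has negative imbalance on all of $(\tau,t_2]$, so its backends' workloads decrease, hence their common gradient strictly \emph{increases} above its value at $\tau$. Meanwhile a source ancestor $(\hat F_j,\hat B_j)\prec(\hat F_i,\hat B_i)$ has positive imbalance (your Step 2 source case is correct), so its workloads increase and its gradient weakly decreases. Since both tiers shared a common gradient at $\tau$ (they were in the parent tier), this forces $\mu'_{b_i}(N_{b_i}(t)) > \mu'_{b_j}(N_{b_j}(t))$ for $t\in(\tau,t_2]$, contradicting $(\hat F_j,\hat B_j)\prec(\hat F_i,\hat B_i)$ via Corollary~\ref{cor:gradient decreases along tiers}. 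This dynamic contradiction, which uses the whole interval $(\tau,t_2]$, is exactly what your static comparison at $\tau$ cannot replace; otherwise your Steps 1 and 3 are fine.
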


Having established the behavior of the Lyapunov function during a single-tier split, we now consider the most general case, where the tiers reconfigure by first merging and then splitting into multiple tiers.

\paragraph{\textbf{Tiers Reconfiguration Case.}}
{
If, prior to the reconfiguration, all tiers have the same sign of flow imbalance—whether positive or negative—the proof is similar as in Lemma \ref{lem:split}.
The challenge arises when the tiers involved in the reconfiguration have opposing flow imbalances, i.e., some tiers have positive flow imbalances while others have negative flow imbalances. In this case, the dynamics become more complex, as the connectivity graph structure, arrival rates, and current workloads at the time of reconfiguration all influence the system's dynamics.

In this complex case, the proof relies on splitting the frontends and backends in $(F,B)$ into subsets based on the local flow imbalance each of them experiences. This is done separately for $[t_1, \tau)$ and $(\tau, t_2]$ so we will have two subsets of frontends and two subsets of backends defined based on $\bN(t)$ at $[t_1, \tau)$ and the same for $\bN(t)$ at $(\tau, t_2]$. We present an explicit expression of $V_B(\bN(\tau^-), \bx(\tau^-)) - V_B(\bN(\tau^+), \bx(\tau^+))$ that is based on these eight subsets, and show this difference is non-negative by analyzing the structure of the TierGraph induced by the tiers prior to reconfiguration and after reconfiguration.
}

\begin{lemma}\label{lem:sync}
Suppose there exists $t_1 < \tau < t_2$ and $F_1, \ldots, F_k \subseteq \mathcal{F}, B_1,\ldots,  B_k \subseteq \mathcal{B}$ such that
\begin{itemize}
    \item $(F_1,B_1), \ldots, (F_k, B_k)$ are tiers at $\bN(t), t\in[t_1,\tau)$,
    \item $\mu'_{B_1}(\bN(\tau)) = \mu'_{B_2}(\bN(\tau)) =\cdots = \mu'_{B_k}(\bN(\tau))$,
    \item $(F_1,B_1), \ldots, (F_k, B_k)$ split into $(\hat{F}_1,\hat{B}_1), \ldots, (\hat{F}_{\hat{k}}, \hat{B}_{\hat{k}})$ that are tiers at $\bN(t), t\in (\tau, t_2]$,
    \item $F:= F_1 \cup \cdots \cup F_k = \hat{F}_1 \cup \cdots \cup \hat{F}_k$ and $B:= B_1 \cup \cdots \cup B_k =\hat{B}_1 \cup \cdots \cup \hat{B}_k$,
\end{itemize}

Then, the total absolute drifts for the backends in $B$, $V_B(\bN(t), \bx(t))$ is {non-increasing} in $[t_1,t_2]$.
\end{lemma}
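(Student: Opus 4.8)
The plan is to show $V_B$ is non-increasing on the two open intervals separately and then prove it does not jump upward at the reconfiguration time $\tau$. On $[t_1,\tau)$ the tiers $(F_i,B_i)$ are fixed, so applying Lemma~\ref{lem: slide} to each gives $V_B(\bN(t),\bx(t))=\sum_{i=1}^{k}\big|\sum_{f\in F_i}\lambda_f-\sum_{b\in B_i}\mu_b(N_b(t))\big|$, a sum of non-increasing functions; likewise on $(\tau,t_2]$ with the split tiers $(\hat F_j,\hat B_j)$. Because $\bN(\cdot)$ is continuous, the one-sided limits of $V_B$ at $\tau$ are obtained by substituting $\bN(\tau)$ into these two expressions, so the whole claim reduces to
\[
V_B(\bN(\tau^-),\bx(\tau^-))\;\ge\;V_B(\bN(\tau^+),\bx(\tau^+)).
\]

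Next I would introduce the eight subsets. Split the before-tiers into those with flow imbalance $\sum_{f\in F_i}\lambda_f-\sum_{b\in B_i}\mu_b(N_b(\tau))\ge 0$ and those with imbalance $<0$, and let $F^+,B^+$ (resp.\ $F^-,B^-$) be the unions of frontends/backends over the nonnegative (resp.\ negative) before-tiers; define $\hat F^\pm,\hat B^\pm$ the same way from the after-tiers. Using $F=F^+\sqcup F^-=\hat F^+\sqcup\hat F^-$, $B=B^+\sqcup B^-=\hat B^+\sqcup\hat B^-$ and expanding each absolute value according to its sign, a short computation gives
\[
V_B(\bN(\tau^-),\bx(\tau^-))-V_B(\bN(\tau^+),\bx(\tau^+))=2\Big[\Big(\sum_{f\in F^{+-}}\lambda_f-\sum_{b\in B^{+-}}\mu_b(N_b(\tau))\Big)-\Big(\sum_{f\in F^{-+}}\lambda_f-\sum_{b\in B^{-+}}\mu_b(N_b(\tau))\Big)\Big],
\]
where $F^{+-}:=F^+\cap\hat F^-$, $B^{+-}:=B^+\cap\hat B^-$, $F^{-+}:=F^-\cap\hat F^+$, $B^{-+}:=B^-\cap\hat B^+$. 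Hence it suffices to prove (i) $\sum_{f\in F^{+-}}\lambda_f\ge\sum_{b\in B^{+-}}\mu_b(N_b(\tau))$ and (ii) $\sum_{f\in F^{-+}}\lambda_f\le\sum_{b\in B^{-+}}\mu_b(N_b(\tau))$.

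For (i), each $b\in B^{+-}$ lies in a unique nonnegative-imbalance before-tier $(F_i,B_i)$; applying Lemma~\ref{lem: K+ lambda big} to that tier with $Q_i=B_i\cap B^{+-}$, letting $t\uparrow\tau$, and summing over such $i$ (whose frontend/backend sets are disjoint, tiers being connected components) gives $\sum_i\sum_{f\in F_i\cap(\cup_{b\in Q_i}\mathcal{F}(b))}\lambda_f\ge\sum_{b\in B^{+-}}\mu_b(N_b(\tau))$. Then (i) follows once we know that every frontend $f\in F_i$ connected to some backend in $Q_i$ lies in $F^{+-}$: the index sets on the left are then disjoint subsets of $F^{+-}$, and $\lambda_f\ge 0$. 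Inequality (ii) is the mirror image, obtained from Lemma~\ref{lem: K- lambda small} applied to each negative-imbalance before-tier with $P_i=F_i\cap F^{-+}$, together with the dual fact that every backend in $B_i$ connected to some frontend in $P_i$ lies in $B^{-+}$.

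The crux---and the step I expect to be hardest---is the structural fact just invoked: a frontend lying in a nonnegative-imbalance before-tier and connected to a backend that ends up in a negative-imbalance after-tier must itself end up in a negative-imbalance after-tier (plus the dual statement). Let $g:=\mu'_{B_i}(\bN(\tau))$ be the common gradient at $\tau$ of the reconfiguring tiers. On $(\tau,t_2]$ every backend of a negative after-tier has non-increasing workload (Lemma~\ref{lem: K- lambda small}), hence gradient $\ge g$ by strict concavity of $\mu_b$ (Assumption~\ref{assmp: increasing concave smooth ell}); so a frontend connected to such a backend has maximum connected gradient $\ge g$ just after $\tau$. Were this frontend in a nonnegative after-tier, its best backends would have non-decreasing workload there (Lemma~\ref{lem: K+ lambda big}), hence gradient $\le g$, forcing the frontend's maximum gradient to be exactly $g$; but then the connected negative-after-tier backend, whose gradient is sandwiched between $g$ and the frontend's maximum, also has gradient $g$, so it is one of the frontend's best backends and therefore belongs to the frontend's own after-tier---contradicting that this after-tier has nonnegative imbalance whereas the backend's has negative imbalance. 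Comparing gradients of distinct tiers here is where Corollary~\ref{cor:gradient decreases along tiers} and the TierGraph reachability order enter, and the delicate sub-case is when some imbalances are exactly zero, which the final tie-unwinding argument handles.
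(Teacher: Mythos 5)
Your proposal is correct and follows essentially the same route as the paper's proof. The reduction to comparing one-sided limits at $\tau$, the eight-set decomposition $F^\pm,B^\pm,\hat F^\pm,\hat B^\pm$, the algebraic identity for $V_B(\tau^-)-V_B(\tau^+)$, and the use of Lemmas~\ref{lem: K+ lambda big} and \ref{lem: K- lambda small} on each before-tier restricted to the relevant intersection sets all match the paper (compare Lemmas~\ref{lem: difference in V_B change} and \ref{lem: hat B- connects to hat F-} in Appendix~\ref{apx: proof of sync lemma}). Your ``structural fact'' is precisely Lemma~\ref{lem: hat B- connects to hat F-}, and your sandwich argument (backend in a negative after-tier has gradient $\ge g$, best backend of a frontend in a nonnegative after-tier has gradient $\le g$, so any connection forces equality and puts them in the same tier, a contradiction) is a repackaging of the paper's gradient chain $\mu'_{\bar b}(N_{\bar b}(t)) < \mu'_{\bar b}(N_{\bar b}(\tau^-)) = \mu'_b(N_b(\tau^-)) \le \mu'_b(N_b(t))$. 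The one genuine simplification you make is handling all three of the paper's cases uniformly: by establishing (i) and (ii) directly, you never need to separately argue that $F^-=B^-=\emptyset$ forces $\hat F^-=\hat B^-=\emptyset$ (and dually), since in those cases the intersection sets degenerate and your inequalities hold trivially. Two small points to tidy: (a) you put zero-imbalance tiers on the $+$ side while the paper puts them on the $-$ side; either works, but the convention must be applied consistently when invoking Lemma~\ref{lem: K- lambda small} (which needs $\le 0$) versus Lemma~\ref{lem: K+ lambda big} (which needs $\ge 0$), and (b) the concluding ``tie-unwinding'' is already contained in your sandwich argument once you note, via Lemma~\ref{lem: tier N same direction}, that within a sliding tier all drifts share not only the same sign but simultaneously vanish, so the gradient comparison survives the equality sub-case.
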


Combining the three cases, we have shown that the Lyapunov function consistently decreases over time until the system state reaches the equilibrium $\bN^*$.

\subsubsection{Putting Things Together}
We have shown that $V_B(\bN(t), \bx(t))$, the total absolute drifts of the backends within a tier $(F,B)$, is proportional to the total flow imbalance within the tier when the tier structure remains constant. Thus, when the flow is balanced at the tier, $V_B(\bN(t), \bx(t))$ can remain constant. Note in the statements of Lemma \ref{lem: slide}, Lemma \ref{lem:split} and Lemma \ref{lem:sync}, we use ``non-increasing."
Nevertheless, as the flow imbalances are zero for all tiers only at the equilibrium (formalized in the following corollary), this implies that when $\bN(t) \neq \bN^*$, there exists at least one tier at $\bN(t)$ with non-zero flow imbalance. Therefore, the Lyapunov function $V(\bN(t), \bx(t))$, which is the sum of $V_B(\bN(t), \bx(t))$ over all tiers at $\bN(t)$, is strictly decreasing until $\bN(t) = \bN^*$.

\begin{corollary}\label{cor: non N star drift positive}
If there exists $F_1, \ldots, F_k \subseteq \mathcal{F}, B_1, \ldots, B_k \subseteq \mathcal{B}$ such that
\begin{itemize}
    \item $(F_1,B_1), \ldots, (F_k, B_k)$ are tiers at $\bN$,
    \item $\mathcal{F}= F_1 \cup \cdots \cup F_k$ and $\mathcal{B}= B_1 \cup \cdots \cup B_k $,
    \item $\sum_{f\in F_i} \lambda_f = \sum_{b\in B_i} \mu_b(N_b)$ for all $i\in [k]$.
\end{itemize}
Then $\bN = \bN^*$, i.e., $\bN$ is the equilibrium.
\end{corollary}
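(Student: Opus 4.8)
The plan is to show that the hypotheses force $(\bN, \bx)$ — for an appropriate choice of routing $\bx_f \in X_f(\bN)$ — to be a feasible solution of the fluid optimization problem \eqref{eq: FLU} satisfying the optimality characterization of Lemma \ref{lem: optimal FLU structure}, whence uniqueness of $\bN^*$ (Lemma \ref{lem: unique N star}) gives $\bN = \bN^*$. So the work is to (i) produce a feasible flow on the best backend graph achieving backend-by-backend flow balance, and (ii) verify the KKT-type condition.

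\textbf{Step 1: Balanced flow within each tier.} Fix a tier $(F_i, B_i)$. The hypothesis $\sum_{f\in F_i}\lambda_f = \sum_{b\in B_i}\mu_b(N_b)$ says the tier has zero flow imbalance. I want to exhibit, for each $f\in F_i$, a routing $\bx_f \in X_f(\bN)$ — i.e.\ supported on $S_f(\bN) \subseteq B_i$ (the containment holds because, by the tier definition, every best backend of $f\in F_i$ lies in $B_i$) — such that $\sum_{f\in \mathcal{F}(b)}\lambda_f x_{f,b} = \mu_b(N_b)$ for every $b\in B_i$. Existence of such a flow is a Hall/max-flow-min-cut argument on the bipartite graph $(F_i, B_i)$ with edge set $E(\bN)$ restricted to the tier: orient a source into each $f\in F_i$ with capacity $\lambda_f$, each $b\in B_i$ into a sink with capacity $\mu_b(N_b)$, and infinite capacity on the best-backend edges; I must check every cut is respected. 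This is exactly where Lemma \ref{lem: K+ lambda big} (taking it in the boundary case, equality of total flow imbalance, so both it and Lemma \ref{lem: K- lambda small} apply) does the job: for any $Q\subseteq B_i$, $\sum_{f\in F_i\cap(\cup_{b\in Q}\mathcal{F}(b))}\lambda_f \geq \sum_{b\in Q}\mu_b(N_b)$, and symmetrically for any $P\subseteq F_i$, $\sum_{f\in P}\lambda_f \leq \sum_{b\in B_i\cap(\cup_{f\in P}\mathcal{B}(f))}\mu_b(N_b)$. Together with the total equality, these are precisely the cut conditions guaranteeing a feasible flow saturating all source and sink edges.

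\textbf{Step 2: Assemble a global feasible solution and check optimality.} Since $\mathcal{F} = \bigcup_i F_i$ and $\mathcal{B} = \bigcup_i B_i$ are partitions and tiers are the connected components of $G(\bN)$ (so no edge of $E(\bN)$, and in particular no job flow under any $\bx_f\in X_f(\bN)$, crosses between tiers), concatenating the per-tier flows yields $\bx$ with $\sum_{b}x_{f,b}=1$ for all $f$ and $\sum_{f\in\mathcal{F}(b)}\lambda_f x_{f,b} = \mu_b(N_b)$ for all $b$; hence $(\bN,\bx)$ is feasible for \eqref{eq: FLU}. For optimality via Lemma \ref{lem: optimal FLU structure}: for each $f$, set $c_f := 1/\mu'_b(N_b)$ for $b\in S_f(\bN)$ — well-defined since all best backends of $f$ share the max gradient $\max_{j\in\mathcal{B}(f)}\mu'_j(N_j)$. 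For $b\in S_f(\bN)$ we have $1/\mu'_b(N_b) = c_f$; for $b\in\mathcal{B}(f)\setminus S_f(\bN)$ we have $\mu'_b(N_b) < \max_{j}\mu'_j(N_j)$, so $1/\mu'_b(N_b) > c_f$ (gradients are positive since $\mu_b$ is strictly increasing). And $x_{f,b}>0$ only for $b\in S_f(\bN)$, where equality holds. This is exactly the condition in Lemma \ref{lem: optimal FLU structure}, so $(\bN,\bx)$ is optimal, and Lemma \ref{lem: unique N star} gives $\bN = \bN^*$.

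\textbf{Main obstacle.} The crux is Step 1 — justifying that the within-tier cut inequalities of Lemmas \ref{lem: K+ lambda big}/\ref{lem: K- lambda small}, which are stated for a tier persisting over a time interval, are available here from the static hypothesis of a tier at the single point $\bN$ with zero imbalance. I expect this is fine because those lemmas' conclusions about subset inequalities are really consequences of the sliding-mode (equal-gradient) structure plus the sign of the total imbalance, and the zero-imbalance case is the common boundary of the two; but care is needed to invoke them cleanly (possibly the intended route is a direct Hall-type argument rather than citing those lemmas, or a limiting argument from a constant-tier trajectory through $\bN$). Everything after the feasible flow is produced is routine.
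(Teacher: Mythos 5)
Your proof's broad strategy matches the paper's up to repackaging: the paper's proof of this corollary is a one-liner asserting the existence of $\bx_f \in X_f(\bN)$ with $V(\bN,\bx)=0$, then invoking Lemma~\ref{lem: m V positive definite}. You instead re-derive the interior of that lemma's proof — verifying the characterization of Lemma~\ref{lem: optimal FLU structure} directly and closing via uniqueness (Lemma~\ref{lem: unique N star}). Your Step~2 is correct and is precisely the body of the proof of Lemma~\ref{lem: m V positive definite}, so the two routes coincide once a balanced flow is in hand.

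The obstacle you flag in Step~1 is a genuine gap, and it is present in the paper's proof too, hidden behind the word ``imply.'' Lemmas~\ref{lem: K+ lambda big} and \ref{lem: K- lambda small} require the tier to persist over a nondegenerate time interval: their proofs lean on Lemma~\ref{lem: tier N same direction}, which differentiates the equal-gradient constraint along a trajectory to force all backend drifts in a tier to share the total's sign. The purely static hypotheses of the corollary do not by themselves yield the within-tier cut conditions. Concretely, take $\mathcal{F}=\{f_1,f_2\}$, $\mathcal{B}=\{b_1,b_2\}$, $\mathcal{E}=\{(f_1,b_1),(f_2,b_1),(f_2,b_2)\}$, $\lambda_1=3$, $\lambda_2=1$, and a state $\bN$ with $\mu'_{b_1}(N_{b_1})=\mu'_{b_2}(N_{b_2})$, $\mu_{b_1}(N_{b_1})=1$, $\mu_{b_2}(N_{b_2})=3$. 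The whole system is a single tier with zero total imbalance, yet no $\bx_f\in X_f(\bN)$ balances $b_1$: $f_1$ must route all of $\lambda_1=3$ to $b_1$ whose service rate is $1$, so $\bN\neq\bN^*$. (Such a state cannot stay a single tier for any positive time — $b_1$'s workload rises and $b_2$'s falls, splitting the tier instantly — which is exactly the mechanism the interval hypothesis rules out.) The corollary should therefore be read in the trajectory context that surrounds it: for $\bN=\bN(t)$ on a solution with the tier configuration persisting near $t$, the routing $\bx(t)$ is supplied by the differential inclusion, and Lemma~\ref{lem: slide} gives $V_{B_i}(\bN(t),\bx(t))=\bigl|\sum_{f\in F_i}\lambda_f-\sum_{b\in B_i}\mu_b(N_b(t))\bigr|=0$ for each tier, whence $V=0$ and Lemma~\ref{lem: m V positive definite} applies. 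Under that reading no max-flow construction is needed — the balanced $\bx$ is given, not built — so your Step~1 is heavier than required, and your alternative suggestion of arguing via a constant-tier trajectory through $\bN$ is the right repair.
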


\subsection{Convergence Rate}\label{sec: convergence rate}
In this section, we discuss the convergence rate of the GMSR policy. Notably, the Euclidean distance between the workloads at time $t$, $\bN(t)$, and the equilibrium workloads $\bN^*$ does not decrease monotonically with $t$.
Otherwise, we could use this distance as our Lyapunov function.
Nevertheless, we can show the total workloads $\sum_{b\in \mB}N_b(t)$ converge exponentially fast to the optimum $\sum_{b\in \mB}N_b^*$ when $\bN(t)$ reaches an invariant set to be defined later. A set is said to be invariant if once the state enters the set, it never leaves.
A challenging part of our analysis is dealing with the fact that the Lyapunov function depends on both the system state $\bN$ and the routing matrix, $\bx$, whose dependence on $\bN$ is only semi-continuous. In particular, if the optimal solution lies in an equal gradient hypersurface, there are workloads arbitrarily close yet outside of this hypersurface that induce large drifts under the GMSR policy.

To establish the convergence rate, we first define the invariant set $K$, within which the marginal service rates of all backends are bounded from below. Within the set, the Lyapunov function converges to zero exponentially fast (see Proposition \ref{prof: exponential convergence V in K}), and this convergence rate translates to exponential convergence of $\left|\sum_{b\in \mathcal{B}} N_b(t) - \sum_{b\in \mathcal{B}} N_b^*\right|$, as we show this difference can be bounded by the Lyapunov function (see Proposition \ref{prop:exponential-convergence-sum-rifs}). If the system state is outside of $K$, we show at least one backend's workload decreases linearly (see Proposition \ref{prop: linear converges to K}), pushing the system towards $K$ in finite time.

We begin by showing that there exists a workload vector $\tilde{\bN}$ where 1) all backends' marginal service rates are lower bounded by a constant, and 2) capacity slackness holds.
\begin{lemma}[Capacity Slack]\label{lem: capacity slack}
  Under Assumption \ref{assmp: increasing concave smooth ell} and Assumption \ref{assmp: system stable}, there exist constants $\kappa>0$ and $\Delta >0$ that are independent of the initial workload, such that for every subset $P \subseteq \mathcal{F}$ of frontends we have
    \[\sum_{f\in P} \lambda_f + \Delta \leq \sum_{b\in (\cup_{f\in P} \mathcal{B}(f))} \mu_b(\tilde{N}_b),\]
    where $\tilde{N}_b$ is the unique workload satisfying $\mu_b'(\tilde{N}_b) = \kappa$.
\end{lemma}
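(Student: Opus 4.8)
The plan is to exploit the strict concavity of the service rate functions to obtain a single workload vector $\tilde{\bN}$ that simultaneously (i) has all marginal service rates equal to a common small value $\kappa$ and (ii) leaves enough slack in every subset-flow-balance inequality. First I would recall the consequence of Assumption~\ref{assmp: system stable} and the max-flow--min-cut argument used in the proof of Lemma~\ref{lem: OPT lower bound}: feasibility of \eqref{eq: FLU} is equivalent to the condition that for every $P\subseteq\mathcal{F}$, the optimal workloads $\bN^*$ satisfy $\sum_{f\in P}\lambda_f \le \sum_{b\in(\cup_{f\in P}\mathcal{B}(f))}\mu_b(N_b^*)$. In fact I want to argue there is strict slack available: since the service rates are strictly increasing (Assumption~\ref{assmp: increasing concave smooth ell}), there is a workload vector $\bN^\circ$ with $N_b^\circ > N_b^*$ for every $b$ for which all these subset inequalities hold strictly, with a uniform gap $\Delta_0>0$ (take $\Delta_0$ to be the minimum over the finitely many subsets $P$ of the positive slack). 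The slack is strictly positive for every $P$ because increasing all workloads strictly increases every right-hand side while the left-hand sides are unchanged; we only need $\bN^\circ$ large enough coordinate-wise.

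Next I would bring in the parametrization by $\kappa$. For each backend $b$, since $\mu_b$ is strictly concave and twice continuously differentiable with $\mu_b(0)=0$, the derivative $\mu_b'$ is a continuous strictly decreasing function, so for each value $\kappa$ in the range $(0,\mu_b'(0))$ there is a unique $\tilde N_b(\kappa)$ with $\mu_b'(\tilde N_b(\kappa))=\kappa$, and $\tilde N_b(\kappa)\to\infty$ as $\kappa\downarrow 0$ (because $\mu_b$ is bounded, $\mu_b'$ must tend to $0$, so its inverse is unbounded). Hence by choosing $\kappa$ small enough — smaller than $\min_b \mu_b'(N_b^\circ)$ — we get $\tilde N_b(\kappa) \ge N_b^\circ$ for every $b$ simultaneously. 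I would then define $\tilde N_b := \tilde N_b(\kappa)$ for this choice of $\kappa$. Monotonicity of $\mu_b$ gives $\mu_b(\tilde N_b) \ge \mu_b(N_b^\circ)$, so the strict slack inequalities established at $\bN^\circ$ carry over to $\tilde{\bN}$, yielding
\[
\sum_{f\in P}\lambda_f + \Delta \le \sum_{b\in(\cup_{f\in P}\mathcal{B}(f))}\mu_b(\tilde N_b)
\]
for all $P\subseteq\mathcal{F}$, with $\Delta := \Delta_0 > 0$. Both $\kappa$ and $\Delta$ depend only on the problem primitives (the graph, the arrival rates $\{\lambda_f\}$, and the functions $\{\mu_b\}$), not on the initial workload, which is exactly what the statement requires.

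The main obstacle, and the step that needs the most care, is establishing the existence of a workload vector with uniform strict slack in \emph{all} subset inequalities at once — i.e. producing $\bN^\circ$ and $\Delta_0$. One has to verify that strict inequality can be achieved for the ``tight'' subsets of $\bN^*$ (those $P$ realizing equality, which correspond to the min-cut), and that increasing workloads does not create a new violated inequality; this follows because raising any coordinate only weakly helps every right-hand side, but one should phrase it so that all subsets are handled uniformly — taking the minimum over the $2^{|\mathcal{F}|}$ subsets is legitimate since there are finitely many. A secondary technical point is confirming $\tilde N_b(\kappa)\to\infty$ as $\kappa\downarrow 0$, which uses boundedness of $\mu_b$ together with concavity (an increasing concave bounded function has derivative tending to $0$); after that, the coordinate-wise comparison and the monotonicity transfer are routine.
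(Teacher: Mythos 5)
Your proposal is correct and matches the paper's proof in substance. Both arguments take a feasible solution $\bN$ to the fluid problem, which gives $\sum_{f\in P}\lambda_f \le \sum_{b\in\cup_{f\in P}\mathcal{B}(f)}\mu_b(N_b)$ for every $P$; then choose $\kappa$ strictly below $\min_b \mu_b'(N_b)$ (the paper sets $\kappa=\min_b\mu_b'(N_b)/2$, you pass through an intermediate $\bN^\circ\succ\bN^*$ and pick $\kappa<\min_b\mu_b'(N_b^\circ)$ — the same idea); observe that bounded concave $\mu_b$ forces $\mu_b'(N)\to0$, so $\tilde N_b(\kappa)$ exists and exceeds $N_b$; and extract a uniform $\Delta>0$ from a finite minimum — the paper takes $\min_b\{\mu_b(\tilde N_b)-\mu_b(N_b)\}$ and uses that every nonempty $P$ has $|\cup_{f\in P}\mathcal{B}(f)|\ge 1$, while you take a minimum over the finitely many nonempty subsets $P$; these are interchangeable. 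The one corner case your write-up should keep silently in mind (as the paper does) is $P=\emptyset$, where the stated inequality cannot hold with $\Delta>0$; the intended reading, and the way the lemma is actually used in Proposition~\ref{prop: linear converges to K}, is that $P$ ranges over nonempty subsets, and that case is handled separately there.
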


With the existence of $\kappa$ established, we define an invariant set within which the marginal service rates are bounded from below.
\begin{proposition}\label{prop: K invariant set}
    Define $K = \{\bN \in \mathbb{R}^{|\mathcal{B}|}_+: \mu_b'(N_b) \geq \kappa, \forall b\in \mathcal{B}\}$, then $K$ is an invariant set, i.e., for all $\bN_0 \in K$, all the trajectories of the differential inclusion \eqref{eq: differential inclusion} with $\bN(0) = \bN_0$, remain in $K$.
\end{proposition}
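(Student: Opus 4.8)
The plan is to show that $K$ is invariant by verifying that on its boundary the dynamics never push the state out of $K$. The boundary of $K$ consists of states $\bN$ for which $\mu_b'(N_b) = \kappa$ for some backend $b$ (and $\mu_{b'}'(N_{b'}) \ge \kappa$ for all others). Since $\mu_b$ is strictly concave, $\mu_b'$ is strictly decreasing, so the constraint $\mu_b'(N_b) \ge \kappa$ is equivalent to $N_b \le \tilde{N}_b$, where $\tilde{N}_b$ is as in Lemma~\ref{lem: capacity slack}. Thus $K = \{\bN : N_b \le \tilde{N}_b \text{ for all } b \in \mathcal{B}\}$ is a box, and to prove invariance it suffices to show that whenever $N_b(t) = \tilde{N}_b$ for some $b$, we have $\dot{N}_b(t) \le 0$ along every solution of the differential inclusion. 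Equivalently, I want to show $\sum_{f \in \mathcal{F}(b)} \lambda_f x_{f,b}(t) - \mu_b(\tilde{N}_b) \le 0$ whenever the state lies in $K$ with $N_b = \tilde{N}_b$.

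First I would argue that on $K$, if $N_b = \tilde{N}_b$ then backend $b$ lies in a tier $(F, B)$ whose gradient $\mu_B'(\bN) = \kappa$; and since all gradients in $K$ are at least $\kappa$, this tier is a \emph{minimal} tier in the TierGraph ordering, i.e., $v_{(F,B)}$ is a sink (by Corollary~\ref{cor:gradient decreases along tiers}, no tier can be reached from a tier with strictly larger gradient... actually I need that $(F,B)$ is not strictly below any other tier, which follows since its gradient is the minimum possible value $\kappa$ on $K$). The key consequence is that the only frontends routing flow into $B$ are those in $F$ (by the tier definition, backends in $B$ receive flow only from frontends in $F$), so $\sum_{f \in \mathcal{F}(b)} \lambda_f x_{f,b}(t) \le \sum_{f \in F} \lambda_f x_{f,b}(t) \le \sum_{f \in F'} \lambda_f$ where $F' = F \cap (\cup_{b' \in Q}\mathcal{F}(b'))$ for an appropriate subset $Q \subseteq B$ of backends that have hit their ceiling. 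More carefully: let $Q = \{b' \in B : N_{b'}(t) = \tilde{N}_{b'}\}$ be the set of backends in this tier currently at the boundary. The total in-flow to $Q$ is at most $\sum_{f \in F \cap (\cup_{b' \in Q}\mathcal{F}(b'))} \lambda_f$, which by Lemma~\ref{lem: capacity slack} is at most $\sum_{b' \in (\cup_{f \in F'}\mathcal{B}(f))}\mu_{b'}(\tilde{N}_{b'}) - \Delta$. I would then need that $\cup_{f \in F'}\mathcal{B}(f) \subseteq B$ (again from the tier structure: frontends in $F$ only connect to backends in $B$) and that $\mu_{b'}(\tilde{N}_{b'}) \ge \mu_{b'}(N_{b'}(t))$ for $b' \in B \setminus Q$ — wait, that inequality goes the wrong way; instead I should compare the total in-flow to $Q$ against $\sum_{b' \in Q}\mu_{b'}(\tilde{N}_{b'})$ directly, not against the whole neighborhood.

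So the cleaner route: suppose for contradiction that the state leaves $K$, i.e., at some first time $t_0$ some backend crosses its ceiling. At $t_0$, let $Q$ be the set of backends at their ceiling with strictly positive drift; nonempty by assumption. These backends all lie in tiers with gradient exactly $\kappa$ (the minimum on $K$), and I claim $Q$ together with its in-neighboring frontends forms a subset of a union of sink-tiers. Summing the drifts over $Q$: $\sum_{b \in Q}\dot{N}_b(t_0) = \sum_{b \in Q}\big(\sum_{f \in \mathcal{F}(b)}\lambda_f x_{f,b} - \mu_b(\tilde N_b)\big) \le \sum_{f \in F_Q}\lambda_f - \sum_{b \in Q}\mu_b(\tilde N_b)$, where $F_Q = F \cap (\cup_{b \in Q}\mathcal{F}(b))$ is the set of frontends in the relevant tier(s) connected to $Q$ (the inequality uses $\sum_{f}x_{f,b} \le 1$ and that only frontends in $F_Q$ send flow to $Q$). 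Now I need $\sum_{f \in F_Q}\lambda_f \le \sum_{b \in Q}\mu_b(\tilde N_b) - \Delta$ to conclude. This would follow from Lemma~\ref{lem: capacity slack} applied to $P = F_Q$ \emph{if} $\cup_{f \in F_Q}\mathcal{B}(f) \subseteq Q$, but in general $F_Q$'s backend-neighborhood is all of $B$ (the whole tier), not just $Q$. The fix: restrict to backends at the ceiling. For $b \in B \setminus Q$ in the same tier, $N_b(t_0) < \tilde N_b$, but we also know $\mu_b'(N_b(t_0)) = \kappa = \mu_b'(\tilde N_b)$ forces $N_b(t_0) = \tilde N_b$ by strict monotonicity of $\mu_b'$ — so actually \emph{every} backend in a gradient-$\kappa$ tier is automatically at its ceiling, hence $Q$ equals the full backend set of those tiers and $\cup_{f \in F_Q}\mathcal{B}(f) = Q$. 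That resolves it: apply Lemma~\ref{lem: capacity slack} with $P = F_Q$ to get $\sum_{f \in F_Q}\lambda_f + \Delta \le \sum_{b \in \cup_{f\in F_Q}\mathcal{B}(f)}\mu_b(\tilde N_b) = \sum_{b \in Q}\mu_b(\tilde N_b)$, so $\sum_{b \in Q}\dot N_b(t_0) \le -\Delta < 0$, contradicting that $Q$ backends have positive drift.

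\textbf{Main obstacle.} The delicate point is making the ``first exit time'' argument rigorous for a differential inclusion rather than an ODE: solutions are only absolutely continuous and the drift set $H(\bN)$ is discontinuous across equal-gradient hypersurfaces, so I must be careful about what ``$\dot N_b(t_0) \le 0$'' means and argue at a point of differentiability, or instead work with the integral form $N_b(t) = N_b(t_0) + \int_{t_0}^t(\cdots)d\tau$ and show the integrand is $\le 0$ a.e.\ on a neighborhood where $N_b$ is near $\tilde N_b$. The correct formulation is: on the closed set $K$, for a.e.\ $t$ and every $b$ with $N_b(t) = \tilde N_b$, the $b$-component of every element of $H(\bN(t))$ is $\le 0$; the observation above that gradient-$\kappa$ tiers have their backends pinned exactly at the ceiling is what makes the combinatorial/tier machinery (Lemma~\ref{lem: capacity slack}, the tier definition, Corollary~\ref{cor:gradient decreases along tiers}) directly applicable, and the rest is a standard invariance/Nagumo-type argument for box-shaped invariant sets.
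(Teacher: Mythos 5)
Your overall strategy---a first-exit-time contradiction, exploiting that on $K$ all gradients are at least $\kappa$, using the tier structure to pin the relevant backends at their ceilings, and then invoking Lemma~\ref{lem: capacity slack}---matches the paper's proof closely. But there is a genuine gap in the step where you conclude that $Q$, which you define as ``backends at their ceiling with strictly positive drift,'' equals the full backend set $B$ of the gradient-$\kappa$ tier. What you actually establish is only that every backend in that tier is at its ceiling (since $\mu_b'(N_b)=\kappa=\mu_b'(\tilde N_b)$ forces $N_b=\tilde N_b$); you never show that every such backend has strictly positive drift, and without further argument some may well have non-positive drift. This matters because your application of Lemma~\ref{lem: capacity slack} needs $\cup_{f\in F_Q}\mathcal{B}(f)\subseteq Q$, and that containment fails if $Q\subsetneq B$: a frontend $f\in F_Q$ can be connected in $\mathcal{G}$ to a backend of $B\setminus Q$ (still at gradient $\kappa$ and at its ceiling, but with non-positive drift), so the right-hand side of the capacity-slack inequality runs over a set strictly larger than $Q$ and the contradiction does not close.

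The missing ingredient is Lemma~\ref{lem: tier N same direction} (backends in the same tier drift in the same direction), or equivalently the contrapositive of Lemma~\ref{lem: K- lambda small}, which is the route the paper's proof actually takes. Once the crossing backend $b_1$ has positive drift, Lemma~\ref{lem: tier N same direction} forces non-negative drift throughout $B$. You can then sum over all of $B$ rather than $Q$ and use $\cup_{f\in F}\mathcal{B}(f)=B$ (which you correctly argue holds on $K$) to get $0<\sum_{b\in B}\dot N_b=\sum_{f\in F}\lambda_f-\sum_{b\in B}\mu_b(\tilde N_b)\le -\Delta$, a contradiction. Alternatively, follow the paper and invoke the contrapositive of Lemma~\ref{lem: K- lambda small} to conclude directly that $\sum_{f\in F}\lambda_f>\sum_{b\in B}\mu_b(\tilde N_b)$, contradicting Lemma~\ref{lem: capacity slack}. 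Your concerns in the ``main obstacle'' paragraph about the a.e.\ formulation for solutions of a differential inclusion are legitimate, but the paper treats them at the same informal level, so that is not where your argument breaks; the actual break is the unproved claim that $Q=B$.
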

The invariant set can be equivalently defined as  $K = \{\bN \in \mathbb{R}^{|\mathcal{B}|}_+: N_b \leq \tilde{N}_b, \forall b\in \mathcal{B}\}$.
To see why $K$ is invariant, suppose by contradiction that some backend $b_1$ escapes $K$, so its workload increases above $\tilde{N}_{b_1}$. Since $\mu_{b_1}'$ is strictly decreasing, $b_1$'s gradient falls below $\kappa$ while all backends still in $K$ have gradient $\geq \kappa$. The escaping backend's tier must therefore have all backends at gradient $\leq \kappa$, implying all are at workload $\geq \tilde{N}_b$. This tier has positive flow imbalance ($\sum \lambda_f > \sum \mu_b(\tilde{N}_b)$), contradicting the capacity slack established in Lemma~\ref{lem: capacity slack}.

{The following result establishes that the Lyapunov function $V(\bN(t), \bx(t))$ converges to zero exponentially fast in time $t$ if the system starts in the invariant set $K$.}
\begin{proposition}\label{prof: exponential convergence V in K}
  If $\bN(0) \in K$, the value of the Lyapunov function $V(\bN(t),\bx(t))$ converges to zero exponentially fast for $t\geq 0$, i.e.,
  \[\frac{d}{dt}V(\bN(t), \bx(t)) \leq -\kappa V(\bN(t), \bx(t)) {\text{ almost everywhere.}}\]
\end{proposition}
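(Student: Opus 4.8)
\textbf{Proof proposal for Proposition~\ref{prof: exponential convergence V in K}.}

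The plan is to combine the structural lemmas on tier dynamics (Lemma~\ref{lem: slide}, Lemma~\ref{lem:split}, Lemma~\ref{lem:sync}) with the lower bound on marginal service rates guaranteed inside the invariant set $K$ (Proposition~\ref{prop: K invariant set}). Since $\bN(0)\in K$ and $K$ is invariant, we have $\mu_b'(N_b(t))\geq\kappa$ for all $b\in\mathcal{B}$ and all $t\geq 0$ along the trajectory. The key observation, already flagged in the text preceding the proposition, is that the instantaneous decay of $V_B$ for a tier $(F,B)$ that persists on an interval is governed by a factor proportional to the common gradient $\mu'_B(\bN(t))$ of the backends in that tier. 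So the first step is to make this precise: on any time interval where the tier structure is constant, by Lemma~\ref{lem: slide} we have $V_B(\bN(t),\bx(t)) = \left|\sum_{f\in F}\lambda_f - \sum_{b\in B}\mu_b(N_b(t))\right|$. Differentiating, and using Lemma~\ref{lem: tier N same direction} (all backends in the tier move in the same direction, with a common sign $\sigma\in\{+1,-1\}$ determined by the sign of the flow imbalance), we get
\[
\frac{d}{dt}V_B(\bN(t),\bx(t)) = -\sigma\sum_{b\in B}\mu_b'(N_b(t))\,\dot N_b(t)\,,
\]
where each $\dot N_b(t)$ has sign $\sigma$ as well, so each term $\mu_b'(N_b(t))\dot N_b(t)$ has sign $\sigma$. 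Bounding $\mu_b'(N_b(t))\geq\kappa$ from below gives $\frac{d}{dt}V_B \leq -\kappa\sum_{b\in B}|\dot N_b(t)| = -\kappa\,V_B(\bN(t),\bx(t))$, using that the total absolute drift equals the sum of the $|\dot N_b|$.

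Second, I would sum over all tiers present at time $t$. Because the whole system $\mathcal{B}$ partitions into tiers (every backend lies in exactly one tier), $V(\bN(t),\bx(t)) = \sum_i V_{B_i}(\bN(t),\bx(t))$, and summing the per-tier inequality yields $\frac{d}{dt}V(\bN(t),\bx(t)) \leq -\kappa\,V(\bN(t),\bx(t))$ at every time where the tier structure is locally constant — which is all but a measure-zero (in fact, as argued elsewhere in the paper, discrete) set of times, namely the reconfiguration times. At reconfiguration times $\tau$, Lemmas~\ref{lem:split} and \ref{lem:sync} (together with the constant-tier Lemma~\ref{lem: slide} applied on the pieces) guarantee $V(\bN(\tau^+),\bx(\tau^+)) \leq V(\bN(\tau^-),\bx(\tau^-))$, so $V$ has no upward jumps; combined with the absolute continuity of $\bN(\cdot)$ and continuity of the $\mu_b$, the function $t\mapsto V(\bN(t),\bx(t))$ is (at least on each inter-reconfiguration interval, and globally by the no-jump property) absolutely continuous, so the differential inequality holding almost everywhere is exactly the statement claimed. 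Integrating gives $V(\bN(t),\bx(t)) \leq V(\bN(0),\bx(0))e^{-\kappa t}$.

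The main obstacle I anticipate is the bookkeeping around the non-smooth points: carefully justifying that $t\mapsto V(\bN(t),\bx(t))$ is absolutely continuous (not merely of bounded variation) so that ``$\frac{d}{dt}V \leq -\kappa V$ almost everywhere'' propagates to the integral bound, and handling the fact that $\bx(t)$ — and hence $V$ as a function of $(\bN,\bx)$ rather than $\bN$ alone — is only semicontinuous in $\bN$. The clean way around this is to avoid differentiating $V$ through $\bx$ at all: on each interval of constant tier structure we use the closed-form expression for $V_B$ from Lemma~\ref{lem: slide}, which depends only on $\bN(t)$ (the $\bx$-dependence has been eliminated), and $\bN(t)$ is absolutely continuous with $\dot N_b$ bounded; the only places $\bx$ re-enters are the reconfiguration times, where the relevant lemmas already give the one-sided inequality $V(\tau^+)\leq V(\tau^-)$ directly. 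A secondary subtlety is confirming that reconfiguration times form a set over which we need not worry — either because they are isolated/countable, or simply because $V$ is continuous from the left and right with $V(\tau^+)\le V(\tau^-)$, so the a.e. differential inequality on the complement suffices for exponential decay via the standard comparison argument.
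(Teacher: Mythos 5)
Your proposal is correct and takes essentially the same route as the paper: on intervals of constant tier structure, invoke Lemma~\ref{lem: slide} to write $V_B$ in closed form as a function of $\bN$ alone (eliminating the set-valued $\bx$), differentiate, use the sign-alignment from Lemma~\ref{lem: tier N same direction} to convert the derivative to $-\sum_{b\in B}\mu_b'(N_b)|\dot N_b|$, bound $\mu_b'\geq\kappa$ via the invariance of $K$ (Proposition~\ref{prop: K invariant set}), and recognize $\sum_{b\in B}|\dot N_b| = V_B$; then sum over the tier partition of $\mathcal{B}$ to get the aggregate inequality.

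The one place your argument is a little looser than the paper's is the measure-zero bookkeeping at the end. You exclude all ``reconfiguration times'' (times where the tier structure changes) and assert these form a measure-zero, even discrete, set. The paper does not actually establish that tier-structure changes are discrete; it sidesteps the issue. Its argument aggregates the tier decomposition into a coarser object --- the four-way partition $F^+,F^-,B^+,B^-$ defined by flow-imbalance sign --- and writes $V$ directly as a smooth function of $\bN$ whenever \emph{that partition} is constant. The key structural observation (flagged right after Lemma~\ref{lem:sync}'s proof) is that constant-tier sliding, single-tier splits, and same-sign merges/splits all leave $B^\pm$ unchanged, so $V$ remains differentiable through those transitions; only a mixed-sign reconfiguration can alter $B^\pm$, and each such event produces a \emph{strict} drop in $V$ by \eqref{eq: lyapunov strictly decreasing}, forcing the number of $B^\pm$-change times to be countable (a nonincreasing function has at most countably many downward jumps). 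That countability is what makes the ``almost everywhere'' qualifier rigorous. To close your version, you could either add this observation --- that your per-tier expressions sum to the $B^\pm$-based closed form, which stays differentiable across same-sign reconfigurations --- or separately justify that the exceptional set has measure zero; the former matches the paper and is the cleaner route.
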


{Next, we establish the relationship between 1) the Lyapunov function, and  2) the absolute difference between the total workloads and the optimal workloads when the system state lies in the invariant set $K$.}
\newcommand{\invrate}{r}
\newcommand{\inflow}{w}
\newcommand{\binflow}{\bm{w}}
\begin{proposition}
    \label{prop:exponential-convergence-sum-rifs}
    For all $(\bN,\bx)$ with $\bN \in K$ and $\bx_f \in X_f(\bN)$ for all $f \in \mathcal{F}$,
    we have
    $$ \left|\sum_{b \in \mathcal{B}} N_b \; - \; \sum_{b \in \mathcal{B}} N^*_b \right| \leq \frac{V(\bN,\bx)}{\kappa} . $$
\end{proposition}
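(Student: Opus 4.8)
The plan is to reduce the (routing‑independent) left‑hand side to a sum of absolute tier imbalances and then bound it by integrating the total drift of $\bN$ forward in time along a solution of \eqref{eq: differential inclusion}, using the exponential decay of the Lyapunov function on $K$ (Proposition~\ref{prof: exponential convergence V in K}). Observe first that $\sum_{b\in\mathcal B}N_b$ and $\sum_{b\in\mathcal B}N_b^*$ are independent of the routing, so the left‑hand side depends only on $\bN$. Let $(F_1,B_1),\dots,(F_k,B_k)$ be the tiers at $\bN$; they partition $\mathcal B$, and because every frontend in $F_i$ routes all its flow into $B_i$ while every backend in $B_i$ receives flow only from $F_i$, the total inflow into $B_i$ equals $\sum_{f\in F_i}\lambda_f$ for \emph{every} admissible routing. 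Hence the signed tier imbalance $\sigma_i:=\sum_{f\in F_i}\lambda_f-\sum_{b\in B_i}\mu_b(N_b)$ does not depend on $\bx$, and by the triangle inequality $V(\bN,\bx)=\sum_{i=1}^k V_{B_i}(\bN,\bx)\ge\sum_{i=1}^k|\sigma_i|$. I further claim that along any absolutely continuous solution of \eqref{eq: differential inclusion} through $\bN$ one has $\lim_{t\to0^+}V(\bN(t),\bx(t))=\sum_{i=1}^k|\sigma_i|$: on a small initial interval the tiers at $\bN$ can only persist or split (two distinct tiers with a common gradient share no edge in $\mathcal E$, so tiers cannot merge immediately), and in either case Lemmas~\ref{lem: K+ lambda big}, \ref{lem: K- lambda small} and \ref{lem:split} force all per‑backend drifts inside each $B_i$ to share the sign of $\sigma_i$, so that $V_{B_i}=|\sigma_i|$ there. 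It therefore suffices to establish $\bigl|\sum_b N_b-\sum_b N_b^*\bigr|\le\frac1\kappa\sum_i|\sigma_i|$, which I do by running such a trajectory.

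So fix $(\bN,\bx)$ with $\bN\in K$ and let $\bN(\cdot):[0,\infty)\to\mathbb R^{|\mathcal B|}_+$ be an absolutely continuous solution of \eqref{eq: differential inclusion} with $\bN(0)=\bN$ (global existence is established in Appendix~\ref{apx: existence of solution}). Since $\bN\in K$ and $K$ is invariant (Proposition~\ref{prop: K invariant set}), $\bN(t)\in K$ for all $t\ge0$; by Proposition~\ref{prof: exponential convergence V in K} and a Gronwall argument, $V(\bN(t),\bx(t))\le V(\bN(s),\bx(s))\,e^{-\kappa(t-s)}$ for $t\ge s>0$; and by Theorem~\ref{thm:stability}, $\bN(t)\to\bN^*$, hence $\sum_{b\in\mathcal B}N_b(t)\to\sum_{b\in\mathcal B}N_b^*$. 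The map $t\mapsto\sum_{b\in\mathcal B}N_b(t)$ is absolutely continuous, $V$ is uniformly bounded along the trajectory (each drift is bounded by $\sum_f\lambda_f+\sup_b\mu_b$), and its a.e.\ derivative $\sum_{b\in\mathcal B}\bigl(\sum_{f\in\mathcal F(b)}\lambda_f x_{f,b}(t)-\mu_b(N_b(t))\bigr)$ has modulus at most $V(\bN(t),\bx(t))$, which is integrable on $(0,\infty)$; integrating the derivative from $s$ to $T$, letting $T\to\infty$ using $\sum_b N_b(T)\to\sum_b N_b^*$, and then $s\to0^+$, gives
\[
\left|\sum_{b\in\mathcal B}N_b-\sum_{b\in\mathcal B}N_b^*\right|\;\le\;\int_0^\infty V(\bN(t),\bx(t))\,dt\;\le\;\frac{1}{\kappa}\lim_{s\to0^+}V(\bN(s),\bx(s))\;=\;\frac{1}{\kappa}\sum_{i=1}^k|\sigma_i|\;\le\;\frac{V(\bN,\bx)}{\kappa}.
\]

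The Gronwall‑plus‑integration step is routine. The main obstacle is the reduction in the first paragraph: making rigorous that an absolutely continuous solution through $\bN$ selects a routing attaining the routing‑minimal Lyapunov value $\sum_i|\sigma_i|$ at $\bN$ — equivalently, that the trajectory may be started at $(\bN,\bx)$ with Lyapunov value no larger than $V(\bN,\bx)$. This is precisely where the tier/TierGraph structure and the flow‑imbalance lemmas of the stability analysis come in; one must verify the ``persist‑or‑split only'' behavior on an initial interval and the common‑sign property of the drifts carefully, in particular at the instant $t=0$.
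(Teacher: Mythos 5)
The approach is genuinely different from the paper's: the paper gives a direct, static argument via convexity of the inverse service-rate functions $r_b := \mu_b^{-1}$ and first-order inequalities for $R(\bm v)=\sum_b r_b(v_b)$ combined with H\"older's inequality and the GMSR optimality conditions, whereas you integrate $\frac{d}{dt}\sum_b N_b$ along a trajectory and invoke the full machinery of global existence, invariance of $K$, Theorem~\ref{thm:stability}, and the exponential decay of $V$ on $K$.

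However, there is a concrete gap, and it sits precisely where you flagged trouble. Your reduction asserts that any absolutely continuous solution through $\bN$ satisfies $\lim_{t\to 0^+} V(\bN(t),\bx(t)) = \sum_i|\sigma_i|$, with $\sigma_i$ the tier imbalances at $\bN$. This is false when a tier must split \emph{instantaneously} at $t=0$: Lemma~\ref{lem:split} requires the parent tier to persist on some $[t_1,\tau)$ before splitting, and the same-sign conclusion for the child tiers does not survive when that pre-interval is absent. Concretely, take the ``N'' graph with $f_1$ connected only to $b_1$, $f_2$ connected to both, identical $\mu_b(N)=\tfrac{N}{N+1}$, $\lambda_1=0.7$, $\lambda_2=0.3$, and $\bN(0)=(1,1)\in K$. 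Then $\mu_1'(1)=\mu_2'(1)$ so $\{b_1,b_2\}$ is a single tier with $\sigma=\lambda_1+\lambda_2-\mu_1(1)-\mu_2(1)=0$, hence $\sum_i|\sigma_i|=0$. But no admissible routing can keep the gradients equal (it would need $x_{2,1}<0$), so the unique trajectory immediately splits: $f_2$ routes everything to $b_2$, giving $\lim_{t\to 0^+}V(\bN(t),\bx(t)) = (\lambda_1-\mu_1(1))+(\mu_2(1)-\lambda_2)=0.4 > 0$. Your claimed equality thus fails, and the crucial inequality $\lim_{s\to 0^+}V(\bN(s),\bx(s)) \le V(\bN,\bx)$ is not established by anything you cite. (It happens to hold in this example because $\inf_{\bx'} V(\bN,\bx')$ equals $0.4$; the correct target of your reduction is $\inf_{\bx'} V(\bN,\bx')$, not $\sum_i|\sigma_i|$, and showing the trajectory's limit is at most that infimum is itself a nontrivial selection/absolute-continuity statement the paper never proves.) The paper avoids all of this by never touching trajectories: its convexity argument bounds $|\sum_b N_b - \sum_b N_b^*|$ directly from the pointwise relation $V(\bN,\bx)=\|\binflow - \tilde{\binflow}\|_1$ and $\|\nabla R\|_\infty \le 1/\kappa$ on $K$.
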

{Therefore, Proposition \ref{prof: exponential convergence V in K} and Proposition \ref{prop:exponential-convergence-sum-rifs} together imply that the total workloads converge exponentially fast to the optimal workloads if the system state lies in the invariant set $K$.}

Lastly, we show that if the system state starts outside of the invariant set $K$, it approaches $K$ in finite time.
\begin{proposition}\label{prop: linear converges to K}
If $\bN(0) \notin K$, there exists 
\[T \leq \frac{\sum_{b\in \mB} \max\{N_b(0) - \tilde{N}_b, 0\}}{\min\{\Delta,  \min_{b\in \mB} \mu_b(\tilde{N}_b)\}}\] such that $\bN(t)$ enters the invariant set $K$ no later than $T$.
\end{proposition}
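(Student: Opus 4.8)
The plan is to control the potential $\Phi(t):=\sum_{b\in\mathcal{B}}\max\{N_b(t)-\tilde N_b,\,0\}$, the total excess of the workloads over the thresholds $\tilde N_b$ supplied by Lemma~\ref{lem: capacity slack}. Using the equivalent description $K=\{\bN:N_b\le\tilde N_b\ \forall b\}$ recorded after Proposition~\ref{prop: K invariant set}, we have $\bN(t)\in K$ iff $\Phi(t)=0$, and $\bN(0)\notin K$ gives $\Phi(0)>0$. Since $\bN(\cdot)$ is absolutely continuous and $\bN\mapsto\Phi(\bN)$ is Lipschitz, $\Phi$ is absolutely continuous. The goal is to show
\[
\dot\Phi(t)\ \le\ -m\quad\text{for a.e.\ }t\text{ with }\Phi(t)>0,\qquad m:=\min\{\Delta,\ \min_{b\in\mathcal{B}}\mu_b(\tilde N_b)\}>0 .
\]
Given this, an elementary integration argument (on $[0,T^\ast)$ with $T^\ast:=\inf\{t:\Phi(t)=0\}$ one gets $\Phi(t)\le\Phi(0)-mt$, forcing $T^\ast\le\Phi(0)/m$) combined with the invariance of $K$ (Proposition~\ref{prop: K invariant set}) shows $\bN(t)$ enters $K$ no later than $\Phi(0)/m$, which is exactly the claimed bound $\sum_{b}\max\{N_b(0)-\tilde N_b,0\}/\min\{\Delta,\min_b\mu_b(\tilde N_b)\}$.

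To compute $\dot\Phi$, observe that for each $b$ the map $t\mapsto\max\{N_b(t)-\tilde N_b,0\}$ is absolutely continuous with $\frac{d}{dt}\max\{N_b(t)-\tilde N_b,0\}=\dot N_b(t)\,\mathbf{1}[N_b(t)>\tilde N_b]$ for a.e.\ $t$ (the level set $\{t:N_b(t)=\tilde N_b\}$ contributes nothing, as $\dot N_b=0$ a.e.\ on any level set of an absolutely continuous function). Hence, writing $\mathcal{O}(t):=\{b:N_b(t)>\tilde N_b\}$ for the set of overloaded backends,
\[
\dot\Phi(t)=\sum_{b\in\mathcal{O}(t)}\Big(\sum_{f\in\mathcal{F}(b)}\lambda_f x_{f,b}(t)-\mu_b(N_b(t))\Big)=I(t)-\sum_{b\in\mathcal{O}(t)}\mu_b(N_b(t)),
\]
where $I(t)$ denotes the total flow routed into overloaded backends. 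Since each $\mu_b$ is increasing, $\mu_b(N_b(t))\ge\mu_b(\tilde N_b)$ for $b\in\mathcal{O}(t)$, so it remains to bound $I(t)$.

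The crux is a combinatorial observation about which frontends can feed overloaded backends. Let $P(t):=\{f:x_{f,b}(t)>0\text{ for some }b\in\mathcal{O}(t)\}$. If $f\in P(t)$ via $x_{f,b}(t)>0$ with $b\in\mathcal{O}(t)$, then $b$ is a best backend for $f$, so $\max_{j\in\mathcal{B}(f)}\mu_j'(N_j(t))=\mu_b'(N_b(t))<\mu_b'(\tilde N_b)=\kappa$ by strict concavity; therefore every $j\in\mathcal{B}(f)$ satisfies $\mu_j'(N_j(t))<\kappa=\mu_j'(\tilde N_j)$, i.e.\ $N_j(t)>\tilde N_j$. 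Thus $\mathcal{B}(f)\subseteq\mathcal{O}(t)$ for every $f\in P(t)$, so $\bigcup_{f\in P(t)}\mathcal{B}(f)\subseteq\mathcal{O}(t)$, and since only frontends in $P(t)$ contribute to $I(t)$ while each routes total proportion at most $1$, we get $I(t)\le\sum_{f\in P(t)}\lambda_f$. If $P(t)\ne\emptyset$, applying Lemma~\ref{lem: capacity slack} to $P(t)$ and using $\bigcup_{f\in P(t)}\mathcal{B}(f)\subseteq\mathcal{O}(t)$ together with $\mu_b(\tilde N_b)\ge0$ yields $I(t)\le\sum_{b\in\mathcal{O}(t)}\mu_b(\tilde N_b)-\Delta$, whence $\dot\Phi(t)\le-\Delta+\sum_{b\in\mathcal{O}(t)}\big(\mu_b(\tilde N_b)-\mu_b(N_b(t))\big)\le-\Delta$. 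If $P(t)=\emptyset$, then $I(t)=0$, and since $\mathcal{O}(t)\ne\emptyset$ whenever $\Phi(t)>0$, $\dot\Phi(t)=-\sum_{b\in\mathcal{O}(t)}\mu_b(N_b(t))\le-\min_{b}\mu_b(\tilde N_b)$. In both cases $\dot\Phi(t)\le-m$, as required.

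I expect the substantive content to be precisely this ``feeding-set'' step — it is where GMSR's preference rule and the strict concavity of the $\mu_b$ enter. The remaining ingredients are routine: the a.e.\ differentiation formula for $\max\{\cdot,0\}$ follows from absolute continuity, and $m>0$ holds because the $\kappa$ furnished by Lemma~\ref{lem: capacity slack} can be taken strictly below $\mu_b'(0)$ for every $b$, forcing $\tilde N_b>0$ and hence $\mu_b(\tilde N_b)>\mu_b(0)=0$.
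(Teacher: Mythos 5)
Your proposal is correct, and it proves the same $\dot J \le -m$ drift bound that the paper does, but via a more self-contained route. Both arguments use the potential $\sum_b\max\{N_b-\tilde N_b,0\}$ and split on whether a certain ``feeding set'' of frontends is empty. The paper defines $Q(t)=\{b:N_b(t)>\tilde N_b\}\cup\{b:N_b(t)=\tilde N_b,\dot N_b(t)\ge 0\}$, takes $P(t)$ to be the frontends that share tiers with backends in $Q(t)$, and then invokes the tier machinery — specifically that $Q(t)$ is a union of full tiers (using Lemma~\ref{lem: tier N same direction}) and the per-tier identity \eqref{eq: sum of dotN in B} — to write $\sum_{b\in Q(t)}\dot N_b(t)=\sum_{f\in P(t)}\lambda_f-\sum_{b\in Q(t)}\mu_b(N_b(t))$. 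You instead restrict to the strictly overloaded set $\mathcal{O}(t)$, absorb the boundary $\{N_b=\tilde N_b\}$ into an a.e.\ differentiation argument for $\max\{\cdot,0\}$ along an absolutely continuous trajectory, and define $P(t)$ directly as the frontends with $x_{f,b}>0$ to some $b\in\mathcal{O}(t)$. Your ``feeding-set'' step — if $f$ routes positively to an overloaded backend, then by GMSR that backend is a gradient maximizer in $\mathcal{B}(f)$ with gradient below $\kappa$, whence every $j\in\mathcal{B}(f)$ has $\mu_j'(N_j)<\kappa$, i.e.\ $\mathcal{B}(f)\subseteq\mathcal{O}(t)$ — replaces the paper's tier argument by a direct appeal to the definition of $X_f$ and strict concavity, and avoids having to argue that $Q(t)$ and $P(t)$ decompose into whole tiers. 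The remaining estimates (applying Lemma~\ref{lem: capacity slack} to $P(t)$ when nonempty, and $\mu_b(N_b)\ge\mu_b(\tilde N_b)$ for overloaded $b$, with the fallback bound $\min_b\mu_b(\tilde N_b)$ when $P(t)=\emptyset$) match the paper. Your remark that $\kappa<\mu_b'(0)$ forces $\tilde N_b>0$ and hence $m>0$ is also correct and is implicit in the paper's construction of $\kappa$.
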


\section{Overloaded System in Fluid Regime Under GMSR}\label{sec: overload system}
In this section, we extend the fluid-regime analysis to scenarios where the system may not be strictly feasible, relaxing Assumption~\ref{assmp: system stable}.

When the system is overloaded, total workload inevitably grows unbounded. Therefore, we focus on characterizing the equilibrium in terms of the limiting service rates $\lim_{t\to\infty} \mu_b(N_b(t))$ for every $b\in \mB$.
Consider the following optimization problem that maximizes system throughput:
\begin{align}
\textsf{OPT-TP} =    \max_{\bm{L}, \bx} \quad & \sum_{b\in \mathcal{B}} L_b \label{eq: FLU TP}\\
    \text{s.t.} \quad & \sum_{f\in \mathcal{F}} \lambda_f x_{f,b} \geq L_b , \forall b\in\mathcal{B}, \notag\\
    & L_b \leq \mu_b(\infty), \forall b\in \mB, \notag\\
    & \sum_{b\in \mathcal{B}}x_{f,b} = 1, \forall f \in \mathcal{F}, \notag\\
    & x_{f,b} \geq 0, \forall (f,b) \in \mathcal{E}, \notag \\
    & x_{f,b} = 0, \forall (f,b) \notin \mathcal{E}. \notag
\end{align}
Here $L_b$ denotes the service rate at each backend $b\in \mathcal{B}$, and $x_{f,b}$ denotes the proportion of jobs sent from frontend $f$ to backend $b$. The objective is to maximize the total service rates across all backends.
The first constraint, combined with the objective function, ensures that $L_b$ is set as close to the total incoming flow as possible, while not exceeding the service rate upper bound $\mu_b(\infty)$ as given by the second constraint.
The third constraint imposes flow balance at each frontend, i.e., all jobs are sent to backends. The last two sets of constraints impose the connectivity constraints. This optimization problem is always feasible; for example, a feasible solution is $\bm{L}=\bm{0}$, $x_{f,b} = 1/|\mB(f)|$ for $(f,b) \in \mathcal{E}$.
This formulation can be equivalently recast as minimizing the total queue overflow (i.e., $\sum_{f\in \mF}\lambda_f x_{f,b}-L_b$), a metric studied in previous works \citep{georgiadis2006optimal,li2014dynamic,wu2022overload}.

When the system is not overloaded, i.e., the fluid optimization problem \eqref{eq: FLU} is feasible, all policies that can stabilize the system also maximize throughput, which equals the total arrival rates. That is, all feasible solutions to \eqref{eq: FLU} are optimal to \eqref{eq: FLU TP}. However, when the system cannot be fully stabilized, the optimal service rate vector saturates some backends at their capacity $\mu_b(\infty)$.

We now establish the optimality of GMSR through several key results.
\begin{lemma}\label{lem: OPT TP bounds service rate}
For any online policy $\pi$,
    \[\textsf{OPT-TP}  \geq \limsup_{k\to\infty}\frac{1}{k}\sum_{i=0}^{k-1}\mathbb{E}_{\pi}\left[\sum_{b\in \mB}\mu_b(N_b(i))\right],\]
where $N_b(i)$ denotes the number of jobs at backend $b$ at time period $i$ in the discrete-time stochastic model.
\end{lemma}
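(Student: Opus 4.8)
The plan is to show that, for any online policy $\pi$, the long-run average service rate vector induced by $\pi$ is a feasible point of the throughput optimization problem \eqref{eq: FLU TP} (possibly after passing to a subsequence that realizes the $\limsup$), so that its objective value $\sum_b \bar L_b$ is at most $\textsf{OPT-TP}$. This mirrors the argument behind Lemma~\ref{lem: OPT lower bound}: build a feasible solution to the relevant LP out of time-averaged quantities, then invoke optimality of the LP value.

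First I would fix a policy $\pi$ and a time horizon $k$, and define the time-averaged routing proportions $\bar x_{f,b}^{(k)} := \frac{1}{k}\sum_{i=0}^{k-1} \mathbb{E}_\pi[A_{f,b}(i)]/\lambda_f$ (for $\lambda_f>0$; frontends with $\lambda_f = 0$ contribute nothing and can be handled by setting $\bar x_{f,b}$ to any feasible value) and the time-averaged service rates $\bar L_b^{(k)} := \frac{1}{k}\sum_{i=0}^{k-1}\mathbb{E}_\pi[\mu_b(N_b(i))] = \frac{1}{k}\sum_{i=0}^{k-1}\mathbb{E}_\pi[D_b(i)]$, using $\mathbb{E}[D_b(i)] = \mu_b(N_b(i))$. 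By construction $\sum_b \bar x_{f,b}^{(k)} \le 1$ (total routed jobs cannot exceed arrivals in expectation), $\bar x_{f,b}^{(k)} \ge 0$, and $\bar x_{f,b}^{(k)} = 0$ for $(f,b)\notin\mathcal{E}$; slack in the frontend flow-balance constraint can be absorbed by adding unrouted mass onto any connected backend without decreasing the objective, so WLOG $\sum_b \bar x_{f,b}^{(k)} = 1$. Also $\bar L_b^{(k)} \le \mu_b(\infty)$ since $\mu_b$ is bounded above by $\mu_b(\infty)$. These quantities all lie in a compact set, so along a subsequence $k_j\to\infty$ realizing the $\limsup$ of $\frac1k\sum_i \mathbb{E}_\pi[\sum_b \mu_b(N_b(i))]$ we may further extract convergent subsequences $\bar x_{f,b}^{(k_j)}\to x_{f,b}$, $\bar L_b^{(k_j)}\to L_b$.

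Next I would verify the remaining constraint $\sum_f \lambda_f x_{f,b} \ge L_b$ in the limit. Averaging the backend dynamics $N_b(i+1) = N_b(i) + \sum_{f\in\mathcal{F}(b)} A_{f,b}(i) - D_b(i)$ over $i=0,\dots,k-1$ and taking expectations gives $\frac{1}{k}\mathbb{E}_\pi[N_b(k) - N_b(0)] = \sum_{f\in\mathcal{F}(b)}\lambda_f \bar x_{f,b}^{(k)} - \bar L_b^{(k)}$. Since $N_b(k)\ge 0$, the left side is $\ge -\frac{1}{k}\mathbb{E}_\pi[N_b(0)]$, which tends to $0$; hence $\sum_{f\in\mathcal{F}(b)}\lambda_f x_{f,b} - L_b \ge 0$ in the limit. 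Thus $(\bm L, \bx)$ is feasible for \eqref{eq: FLU TP}, so $\sum_b L_b \le \textsf{OPT-TP}$, i.e. $\limsup_{k\to\infty}\frac1k\sum_{i=0}^{k-1}\mathbb{E}_\pi[\sum_b\mu_b(N_b(i))] = \sum_b L_b \le \textsf{OPT-TP}$, as claimed.

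The main obstacle, as in Lemma~\ref{lem: OPT lower bound}, is the interchange of $\limsup$ with the averaging and the extraction of a single subsequence along which \emph{all} of the relevant averages ($\bar L_b^{(k)}$ for each $b$, $\bar x_{f,b}^{(k)}$ for each edge, and the scalar objective average) converge simultaneously while still realizing the $\limsup$ of the objective; this requires a careful diagonal/subsequence argument but is routine once the compactness of the feasible region is noted. A secondary technical point is handling the $\mathbb{E}_\pi[N_b(0)]/k \to 0$ step cleanly — this needs $\mathbb{E}_\pi[N_b(0)] < \infty$, which is a standard assumption on the initial state (finite initial workload); if the $\limsup$ is $+\infty$ one would need to rule that out, but boundedness of each $\mu_b$ makes the objective trivially bounded by $\sum_b \mu_b(\infty) < \infty$, so this case does not arise.
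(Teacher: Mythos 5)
Your proof is essentially correct, but it takes a genuinely different route from the paper's, so let me compare. The paper shows that the time-averaged quantities $\bar{\bm L}(k), \bar{\bx}(k)$ are feasible for an explicitly perturbed LP $\textsf{OPT-TP}(k)$ (with slack $N_b(0)/k$ and $G_f(0)/(k\lambda_f)$ in the constraints), then proves the bound $\textsf{OPT-TP}(k) \le \textsf{OPT-TP} + \sum_{b\in\tB}N_b(0)/k + \sum_{f\in\tF}G_f(0)/k$ via a direct argument that leans on the stability decomposition $(\tF,\tB)$, and finally takes $\limsup$. You instead observe that the finite-$k$ averages live in a compact set, pass to a subsequence realizing the $\limsup$ of the objective (and, after finitely many further extractions, along which every $\bar x_{f,b}^{(k_j)}$ and $\bar L_b^{(k_j)}$ converges), and show that the limit point is feasible for the \emph{original} LP \eqref{eq: FLU TP}. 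Your route is more elementary — it dispenses with the perturbed LP and with any reference to the stability decomposition — at the cost of losing the explicit $O(1/k)$ rate at which the time average approaches $\textsf{OPT-TP}$, which the paper's argument gives for free.

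One small inaccuracy: the claim ``$\sum_b \bar x_{f,b}^{(k)} \le 1$ (total routed jobs cannot exceed arrivals in expectation)'' is not quite right for finite $k$. The per-step constraint is $\sum_b A_{f,b}(i) \le G_f(i) + W_f(i)$, so a frontend can route more jobs than arrive by draining its initial backlog. Telescoping the frontend recursion and using $G_f(k)\ge 0$ gives $\sum_{i=0}^{k-1}\sum_b A_{f,b}(i) \le G_f(0) + \sum_{i=0}^{k-1}W_f(i)$, hence $\sum_b \bar x_{f,b}^{(k)} \le 1 + G_f(0)/(k\lambda_f)$, which only converges to $1$. Since you only need the limit point to satisfy $\sum_b x_{f,b}\le 1$ (and then pad up to equality), this does not break the argument — the compactness and subsequential-limit machinery still works because the averages remain uniformly bounded — but the stated justification should be corrected.
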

This lemma establishes that $\textsf{OPT-TP}$ provides an upper bound on the throughput achievable by any online policy. We next show that GMSR can achieve this maximum throughput in the fluid model.
\begin{proposition}[Throughput Optimality]\label{pro: maximum throughput}
Each solution $\bN(\cdot)$ of the differential inclusion with any initial state $\bN(0)$ satisfies $\lim_{t\to\infty} \mu_b(N_b(t)) = L_b^*$ for every $b\in \mB$,
where $\bm{L}^*$ is an optimal solution to $\textsf{OPT-TP}$.
\end{proposition}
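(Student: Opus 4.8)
The plan is to partition the backends into a \emph{saturated} part $\mathcal{B}_\infty$, whose workloads diverge and whose service rates therefore tend to $\mu_b(\infty)$, and a \emph{stabilizable} part $\mathcal{B}_f=\mathcal{B}\setminus\mathcal{B}_\infty$, on which the GMSR dynamics eventually \emph{decouple} from $\mathcal{B}_\infty$ and converge by the machinery of Section~\ref{sec: stability analysis}; the resulting limiting service-rate vector is then matched against a cut of the augmented max-flow network to certify optimality for \eqref{eq: FLU TP}. The first task is to locate $\mathcal{B}_\infty$. Since \eqref{eq: FLU} is infeasible, there is a set $P\subseteq\mathcal{F}$ with $\sum_{f\in P}\lambda_f>\sum_{b\in\mathcal{B}(P)}\mu_b(\infty)$, where $\mathcal{B}(P):=\cup_{f\in P}\mathcal{B}(f)$; integrating $\frac{d}{dt}\sum_{b\in\mathcal{B}(P)}N_b(t)\ge\sum_{f\in P}\lambda_f-\sum_{b\in\mathcal{B}(P)}\mu_b(N_b(t))\ge\sum_{f\in P}\lambda_f-\sum_{b\in\mathcal{B}(P)}\mu_b(\infty)>0$ shows the aggregate workload over $\mathcal{B}(P)$ grows at least linearly. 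To upgrade ``some backend has unbounded $\limsup$'' to ``some backend has $\lim=\infty$'' I use that $\mu_b$ is bounded and strictly concave, hence $\mu_b'(N_b)\to0$ as $N_b\to\infty$: a backend carrying a very large workload is \emph{strictly} least preferable to any connected frontend that also touches a bounded-workload backend, so it receives no flow and drains only at the bounded rate $\mu_b(\cdot)\le\mu_b(\infty)$; combined with the linear growth of the aggregate this forces at least one workload to be eventually monotone and unbounded. For $b\in\mathcal{B}_\infty:=\{b:\lim_{t\to\infty}N_b(t)=\infty\}$ we then get $\mu_b(N_b(t))\to\mu_b(\infty)$ and $\mu_b'(N_b(t))\to0$.

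For the stabilizable part, let $\mathcal{F}_f:=\{f\in\mathcal{F}:\mathcal{B}(f)\cap\mathcal{B}_f\neq\emptyset\}$. On $\mathcal{B}_f$ the workloads are bounded, so the gradients $\mu_b'(N_b(t))$, $b\in\mathcal{B}_f$, stay bounded \emph{below} by a positive constant along the trajectory; since the $\mathcal{B}_\infty$-gradients vanish, there is a finite $T_0$ after which no frontend in $\mathcal{F}_f$ routes any flow to $\mathcal{B}_\infty$. Hence for $t\ge T_0$ the trajectory restricted to $\mathcal{B}_f$ solves the differential inclusion \eqref{eq: differential inclusion} for the \emph{sub-instance} with frontends $\mathcal{F}_f$ (rates $\lambda_f$) and backends $\mathcal{B}_f$. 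I claim this is exactly the sub-instance singled out by the stability decomposition (Definition~\ref{def: stability decomposition}); in particular its fluid problem \eqref{eq: FLU} is feasible, so Assumption~\ref{assmp: system stable} holds for it and Theorem~\ref{thm:stability} gives $N_b(t)\to N^{\mathrm{sub}}_b$ for $b\in\mathcal{B}_f$ (this limit being unique, and instance-determined, by Lemma~\ref{lem: unique N star} for the sub-instance). Setting $L_b^*:=\mu_b(N^{\mathrm{sub}}_b)$ for $b\in\mathcal{B}_f$ and $L_b^*:=\mu_b(\infty)$ for $b\in\mathcal{B}_\infty$ yields $\lim_{t\to\infty}\mu_b(N_b(t))=L_b^*$ for every $b$. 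Throughout, the Lyapunov function $V(\bN(t),\bx(t))$ remains a usable tool: Lemmas~\ref{lem: slide}, \ref{lem:split}, \ref{lem:sync} never invoke Assumption~\ref{assmp: system stable}, so $V$ is still non-increasing along any trajectory, which is what controls the drainage of $\mathcal{B}_\infty$ and supplies the invariance keeping the $\mathcal{B}_f$-gradients bounded away from $0$.

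It remains to show $\bm L^*$ is optimal for \eqref{eq: FLU TP}. A limit point $\bx^*$ of $\bx(t)$ is a feasible routing, $L_b^*\le\mu_b(\infty)$ by construction, and flow balance at the stabilized sub-instance (together with $\mathcal{F}_f$ sending all its flow into $\mathcal{B}_f$ for large $t$) gives $\sum_f\lambda_f x_{f,b}^*=L_b^*$ for $b\in\mathcal{B}_f$ and $\ge L_b^*$ for $b\in\mathcal{B}_\infty$; so $(\bm L^*,\bx^*)$ is feasible for \eqref{eq: FLU TP} and $\sum_b L_b^*\le\textsf{OPT-TP}$. For the matching lower bound I exhibit a cut of value $\sum_b L_b^*$ in the max-flow network (source$\to f$ with capacity $\lambda_f$, $f\to b$ with capacity $\infty$, $b\to$sink with capacity $\mu_b(\infty)$): put $\mathcal{B}_\infty$ together with the frontends $\mathcal{F}\setminus\mathcal{F}_f$ on the source side. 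No infinite-capacity edge crosses (any frontend with a neighbor in $\mathcal{B}_f$ is on the sink side by definition of $\mathcal{F}_f$), so the cut capacity is $\sum_{f\in\mathcal{F}_f}\lambda_f+\sum_{b\in\mathcal{B}_\infty}\mu_b(\infty)$; and since $\mathcal{F}_f$ absorbs all of its flow inside $\mathcal{B}_f$, summing the sub-instance flow-balance constraints gives $\sum_{b\in\mathcal{B}_f}\mu_b(N^{\mathrm{sub}}_b)=\sum_{f\in\mathcal{F}_f}\lambda_f$, so the cut capacity equals $\sum_b L_b^*$. As every cut upper-bounds the max flow, which by the min-$s$-$t$-cut characterization of the stability decomposition equals $\textsf{OPT-TP}$, we conclude $\textsf{OPT-TP}\le\sum_b L_b^*$, hence equality.

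The main obstacle is the passage underlying Steps 1--2: showing that $\mathcal{B}_\infty$ is genuinely characterized by $\lim_{t\to\infty}N_b(t)=\infty$ (ruling out a backend whose workload oscillates with infinite $\limsup$ but finite $\liminf$), and that the $\mathcal{B}_f$-gradients cannot be pushed toward $0$ along the trajectory, so that the decoupling time $T_0$ is finite. Both require a quantitative form of the ``large-workload backend attracts no flow and drains slowly'' observation, together with an invariant-set argument in the spirit of Proposition~\ref{prop: K invariant set} run on the residual sub-instance, and the tier/TierGraph bookkeeping of Section~\ref{sec: stability analysis} applied to the whole set $\mathcal{B}_\infty$ rather than to a single tier; one must also verify that the sub-instance carved out by $(\mathcal{F}_f,\mathcal{B}_f)$ is precisely the feasible remainder guaranteed by the stability decomposition, which is the place where the min-cut structure of Definition~\ref{def: stability decomposition} does real work.
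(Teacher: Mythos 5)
Your proposal inverts the logical order used in the paper, and in doing so introduces a circularity that you acknowledge but do not resolve; the proof as written is incomplete.

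The paper defines the stability decomposition $(\tilde{\mathcal{F}},\tilde{\mathcal{B}})$ \emph{statically} from the input data (Definition~\ref{def: stability decomposition}), shows it coincides with a minimum $s$-$t$ cut (Lemma~\ref{lem: stability decomposition is a cut}), proves a matching lower bound on the Lyapunov function in terms of $(\tilde{\mathcal{F}},\tilde{\mathcal{B}})$ (Lemma~\ref{lem: property of V general}), and then shows the Lyapunov function is driven down to this bound, which pins the limit of every trajectory (Theorem~\ref{thm: general N convergence result}); Proposition~\ref{pro: maximum throughput} then follows in a few lines. Your proposal instead tries to \emph{read the decomposition off the trajectory}: define $\mathcal{B}_\infty$ as the set of backends whose workload diverges, argue the complement eventually decouples, and then verify that the dynamic and static decompositions agree. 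This is precisely where the argument does not close. The claim that (a) some workloads go monotonically to $\infty$, (b) the remaining workloads are bounded so their gradients stay bounded below, and (c) after a finite $T_0$ the two groups decouple, are mutually dependent: (a) needs a backend with large workload to become least preferable, which needs some other connected backend to have a bounded workload (i.e.\ (b)); (b) needs (c), since otherwise inflow from saturated neighbors could push nominally ``stable'' workloads up without bound; and (c) needs (a), since it requires the $\mathcal{B}_\infty$-gradients to vanish. Linear growth of an aggregate $\sum_{b\in\mathcal{B}(P)} N_b(t)$ does not by itself force any single coordinate to have a $\lim$ equal to $\infty$ (as opposed to only a $\limsup$), and the observation that ``large workload attracts no flow'' is not a priori quantitative enough to break ties among several simultaneously large backends.

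There is a second, independent gap: even granting that the trajectory splits cleanly into stable and divergent backends, you \emph{assert} that the sub-instance $(\mathcal{F}_f,\mathcal{B}_f)$ equals $(\tilde{\mathcal{F}},\tilde{\mathcal{B}})$, but this is exactly the content of Theorem~\ref{thm: general N convergence result} and cannot be assumed. Without it you lose the guarantee that the sub-instance fluid problem is feasible (Assumption~\ref{assmp: system stable} on the sub-instance), so Theorem~\ref{thm:stability} is not available for it; and your closing cut argument needs the cut induced by $(\mathcal{F}_f,\mathcal{B}_\infty)$ to match the paper's minimum cut, which again is the decomposition-identification step. The cut computation itself is sound and mirrors Lemma~\ref{lem: stability decomposition is a cut}, so the missing content is entirely in Steps~1--2: proving that the static decomposition governs the dynamics. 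The paper accomplishes this by putting the Lyapunov function in charge -- it bounds $V$ below by $\sum_{f\in\mathcal{F}\setminus\tilde{\mathcal{F}}}\lambda_f-\sum_{b\in\mathcal{B}\setminus\tilde{\mathcal{B}}}\mu_b(\infty)$ and shows $V$ decreases to this bound, so the trajectory has no choice but to converge to the prescribed equilibrium -- avoiding the need for a dynamic classification altogether.
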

We defer the precise values of the equilibrium service rates $\bm{L}^*$ under GMSR to Theorem \ref{thm: general N convergence result} in Appendix~\ref{apx: overload system}, and focus on discussing the properties of $\bm{L}^*$ here.
Note that multiple optimal service rates typically exist for \eqref{eq: FLU TP}, as the objective function only concerns maximizing total throughput. Therefore, the marginal impact on the throughput of routing one additional unit of job to any backend (when service capacity permits) is the same across backends, creating flexibility in how jobs are distributed among backends. This is in stark contrast to \eqref{eq: FLU}, where the marginal impact is different due to the concavity of service rate functions, forcing a unique optimal workload.
Among all optimal solutions to \eqref{eq: FLU TP}, our policy GMSR will induce the system to converge to one satisfying the following two properties.

First, GMSR stabilizes the largest number of backends among all throughput-optimal policies at equilibrium.
\begin{proposition}[Maximal Stabilized Backends]\label{pro: maximum stabilizing backend}
Let $\mathcal{L}^*$ denote the set of optimal $\bm{L}$ to the optimization problem \eqref{eq: FLU TP}, and $B(\bm{L}): = \{b\in \mB: L_b <\mu_b(\infty)\}$ denote the set of backends that are stabilized. We have for every $\bm{L} \in \mathcal{L}^*$, $B(\bm{L})\subseteq B(\bm{L}^*)$.
\end{proposition}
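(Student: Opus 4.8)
The plan is to argue that $B(\bm{L}^*)$ is the \emph{unique maximal} set of simultaneously-stabilizable backends, in the sense that for any $\bm{L} \in \mathcal{L}^*$ the set $B(\bm{L})$ is contained in it. I would first characterize which subsets $S \subseteq \mathcal{B}$ can appear as $B(\bm{L})$ for some throughput-optimal $\bm{L}$: a throughput-optimal solution stabilizes exactly those backends in $S$ iff (i) the backends outside $S$ can be saturated at $\mu_b(\infty)$, and (ii) the residual flow that cannot be absorbed by $\mathcal{B}\setminus S$ at capacity, combined with flow balance at the frontends, still fits inside $\sum_{b\in S}\mu_b(\infty)$ without exceeding it, while keeping total throughput at $\textsf{OPT-TP}$. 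Concretely, $S$ is achievable iff there is a fractional routing $\bx$ with $\sum_b x_{f,b}=1$ for all $f$, $\sum_f \lambda_f x_{f,b}\ge \mu_b(\infty)$ for $b\notin S$, and $\sum_{b\in S}\min\{\sum_f \lambda_f x_{f,b}, \mu_b(\infty)\} = \textsf{OPT-TP} - \sum_{b\notin S}\mu_b(\infty)$; equivalently, via max-flow/min-cut on the augmented source–frontend–backend–sink network (as used for Lemma~\ref{lem: OPT lower bound}), $S$ is achievable iff a certain cut condition holds. The key structural fact I would establish is that the family of achievable $S$ is closed under union: if $S_1$ and $S_2$ are each stabilizable in a throughput-optimal way, then $S_1 \cup S_2$ is too. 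This follows from a standard flow-augmentation / uncrossing argument — given optimal routings $\bx^{(1)}, \bx^{(2)}$ witnessing $S_1, S_2$, one builds a routing witnessing $S_1\cup S_2$ by routing, for each frontend, flow preferentially away from $(\mathcal{B}\setminus S_1)\cap(\mathcal{B}\setminus S_2)$; submodularity of the cut function makes this precise. Hence there is a unique maximal achievable set $S^{\max}$, and every $B(\bm{L})$ with $\bm{L}\in\mathcal{L}^*$ satisfies $B(\bm{L}) \subseteq S^{\max}$.

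The remaining step is to show that GMSR's equilibrium satisfies $B(\bm{L}^*) = S^{\max}$, i.e., GMSR actually attains the maximal set. Here I would invoke the dynamics: from Proposition~\ref{pro: maximum throughput} the limiting service rates $\bm{L}^*$ are throughput-optimal, so $B(\bm{L}^*)$ is \emph{some} achievable set and thus $\subseteq S^{\max}$. For the reverse inclusion, suppose some backend $b_0 \in S^{\max}\setminus B(\bm{L}^*)$, i.e., GMSR drives $N_{b_0}(t)\to\infty$ even though $b_0$ is stabilizable in some throughput-optimal solution. I would derive a contradiction using the Lyapunov/flow-imbalance machinery: consider the tier (in the limiting sense) containing $b_0$ together with all backends whose workload also diverges. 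Because $b_0$'s workload diverges, $\mu_{b_0}'(N_{b_0}(t))\to 0$, so eventually $b_0$ has (weakly) the lowest gradient among its neighbors, meaning frontends connected to $b_0$ route to it only the overflow they cannot place on higher-gradient backends. The set of diverging backends $D$ then satisfies a "saturated cut" condition: the frontends that send flow into $D$ are exactly those whose other neighbors are all saturated, and $\sum_{b\in D}\mu_b(\infty) < $ (flow forced into $D$). But the existence of a throughput-optimal solution stabilizing $b_0$ means this forced flow could instead have been spread to keep $b_0$ finite while still saturating everything that must be saturated — contradicting that the cut around $D$ is actually tight under GMSR. Making this rigorous requires relating the asymptotic routing proportions $\bx_f(t)$ under GMSR to a feasible point of the flow problem defining $S^{\max}$.

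I expect the main obstacle to be precisely this last contradiction argument: passing from the \emph{transient} behavior of the differential inclusion (tiers that slide, split, and reconfigure; gradients of diverging backends tending to zero at possibly different rates) to a clean \emph{static} statement about a min-cut in the augmented network. The subtlety is that "the set of diverging backends" need not stabilize as a fixed set in finite time, and different diverging backends may have $\mu_b(N_b(t))$ approaching $\mu_b(\infty)$ at different rates, so one must be careful that the limiting routing proportions exist (along a subsequence) and form a feasible fractional flow. I would handle this by working with $\limsup$/$\liminf$ of the service rates, using boundedness of $\mu_b$ to extract convergent subsequences of $(\mu_b(N_b(t)), \bx(t))$, and then showing any such limit is throughput-optimal (Proposition~\ref{pro: maximum throughput}) and that the associated cut is tight — the tightness being forced by the Lyapunov function having driven all flow imbalances on the stabilized part to zero. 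The union-closure / uncrossing lemma for achievable sets is comparatively routine but still needs the max-flow–min-cut reformulation to be set up carefully, mirroring the construction already used in the proof of Lemma~\ref{lem: OPT lower bound} and in the stability-decomposition discussion (Definition~\ref{def: stability decomposition}).
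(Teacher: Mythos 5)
Your strategy is a genuinely different and considerably more elaborate route, and its hardest piece --- your step~4, showing the reverse inclusion $S^{\max}\subseteq B(\bm{L}^*)$, which you flag as the main obstacle --- is precisely what the paper delegates to Theorem~\ref{thm: general N convergence result}: that theorem already identifies the equilibrium service rates under GMSR as $L_b^*=\mu_b(\tilde N_b^*)$ for $b\in\tB$ and $L_b^*=\mu_b(\infty)$ for $b\in\mB\setminus\tB$, so $B(\bm{L}^*)=\tB$ is known before this proposition is ever stated. Given that identification, the paper's proof is a short, purely static contradiction requiring neither an uncrossing/union-closure lemma nor any analysis of transients or subsequential limits. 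Concretely: if some $\bm{L}'\in\mathcal{L}^*$ had $L_b'<\mu_b(\infty)$ for a backend $b\in\mB\setminus\tB$, then since every backend in $\tB$ receives flow only from frontends in $\tF$ (the property $\tF=\bigcup_{b\in\tB}\mF(b)$ of the stability decomposition), the total throughput obeys
\[
\sum_{b\in\mB}L_b' \;<\; \sum_{b\in\tB}L_b' + \sum_{b\in\mB\setminus\tB}\mu_b(\infty)\;\leq\; \sum_{f\in\tF}\lambda_f + \sum_{b\in\mB\setminus\tB}\mu_b(\infty)\;=\;\textsf{OPT-TP},
\]
contradicting optimality of $\bm{L}'$. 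Your step-1 characterization of achievable sets and your step-2 union-closure via submodularity are conceptually in the same spirit as Lemma~\ref{lem: stability decomposition is a cut} and Corollary~\ref{cor: unique stability composition}, where min-cut uncrossing already appears, but they are not needed for this proposition once the stability decomposition is in hand. And your step~4, as you yourself acknowledge, would amount to re-proving Theorem~\ref{thm: general N convergence result} from the dynamics --- tracking which backends diverge, extracting convergent subsequences of routing proportions, and arguing tightness of a cut --- a much larger undertaking than the proposition itself, which the paper deliberately factors out so that this proof stays elementary.
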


Next, GMSR minimizes the total workload at stabilized backends, among all policies
that achieve maximum throughput and maximum number of stabilized backends at equilibrium.
\begin{proposition}[Least Workload at Stabilized Backends]\label{prof: minimum stabilized workload}
For every $\bm{L} \in \mathcal{L}^*$ with $B(\bm{L}) = B(\bm{L}^*)$,
    \[\sum_{b\in B(\bm{L}^*)} \mu_b^{-1}(L_b^*) \leq \sum_{b\in B(\bm{L})} \mu_b^{-1}(L_b).\]
\end{proposition}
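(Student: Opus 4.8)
The plan is to leverage the Lyapunov machinery already developed (in particular the tier structure and the flow-imbalance inequalities of Lemmas~\ref{lem: K+ lambda big} and~\ref{lem: K- lambda small}) together with the characterization of the limiting service rates in the still-to-come Theorem~\ref{thm: general N convergence result}. The key observation is that the equilibrium reached by GMSR in the overloaded regime is itself the solution of a \emph{constrained} fluid optimization problem---the one obtained from \eqref{eq: FLU} by restricting attention to the stabilized backends $B(\bm L^*)$ and the frontends that (in the limit) route exclusively into them, with the unstabilized backends pinned at capacity $\mu_b(\infty)$. Concretely, let $\mathcal F^{stab}$ be the set of frontends whose entire arrival flow is, in the limit, absorbed by backends in $B(\bm L^*)$, and consider the problem of minimizing $\sum_{b\in B(\bm L^*)} N_b$ subject to flow balance $\sum_{f\in\mathcal F(b)}\lambda_f x_{f,b}=\mu_b(N_b)$ for $b\in B(\bm L^*)$ and the frontend constraints restricted to $\mathcal F^{stab}$. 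I would first show, using the stability decomposition (Definition~\ref{def: stability decomposition}) and the equilibrium characterization, that this restricted problem is feasible and that GMSR's limiting workload on $B(\bm L^*)$ is precisely its (unique, by Lemma~\ref{lem: unique N star} applied to the sub-instance) optimal solution. This step reuses exactly the argument that equilibrium points of the differential inclusion are fluid-optimal (Lemma~\ref{lem: Equilibrium Points are Fluid Optimal}), now applied to the induced sub-network.

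Next I would argue that any other throughput-optimal $\bm L\in\mathcal L^*$ with $B(\bm L)=B(\bm L^*)$ gives a \emph{feasible} point for this restricted optimization problem, so that its objective value $\sum_{b\in B(\bm L^*)}\mu_b^{-1}(L_b)$ is at least the optimum $\sum_{b\in B(\bm L^*)}\mu_b^{-1}(L_b^*)$, which is the claim. The content here is showing feasibility: one must check that the same frontends $\mathcal F^{stab}$ can supply the flows $L_b$ to $B(\bm L^*)$ without help from frontends outside $\mathcal F^{stab}$. This is where the minimum-cut characterization of the stability decomposition does the work---by maximality of the stabilized set (Proposition~\ref{pro: maximum stabilizing backend}) and the max-flow/min-cut structure, the frontends outside $\mathcal F^{stab}$ are exactly those that must saturate some unstabilized backend, so no feasible $\bm L$ with the same stabilized set can route their flow into $B(\bm L^*)$. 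A Hall-type / cut argument, identical in spirit to the necessary condition used in the proof of Lemma~\ref{lem: OPT lower bound}, formalizes this.

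I expect the main obstacle to be the \textbf{bookkeeping of which frontends feed the stabilized subsystem in the limit}, i.e.\ pinning down $\mathcal F^{stab}$ correctly and proving it is the same set of frontends for \emph{every} competing $\bm L\in\mathcal L^*$ sharing the stabilized set $B(\bm L^*)$. A frontend could in principle split its flow between stabilized and unstabilized backends, so ``frontends feeding $B(\bm L^*)$'' is not a clean partition a priori; resolving this requires the fine structure of the stability decomposition (that the min-cut is unique, per the discussion following Definition~\ref{def: stability decomposition}) and a careful complementary-slackness argument showing that, at any throughput-optimal solution with the maximal stabilized set, frontends connected to a stabilized backend with positive marginal value route \emph{nothing} to saturated backends. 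Once that partition is established, the rest is a routine convexity/uniqueness argument on the sub-instance. A secondary technical point is handling backends $b$ with $L_b^*=\mu_b(\infty)$ that are nonetheless ``stabilized'' in a degenerate sense (workload finite but at the boundary of the effective domain of $\mu_b^{-1}$); Assumption~\ref{assmp: increasing concave smooth ell} and the definition $B(\bm L):=\{b: L_b<\mu_b(\infty)\}$ are designed to exclude this, so it should only require a remark.
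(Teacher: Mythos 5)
Your approach is essentially the paper's: Theorem~\ref{thm: general N convergence result} already identifies GMSR's limiting workloads on $\tB$ with the unique optimizer of the restricted fluid problem \eqref{eq: FLU two parts 1}, and the proposition then reduces to checking that any $\bm{L}\in\mathcal{L}^*$ with $B(\bm{L})=B(\bm{L}^*)=\tB$ induces a feasible point of \eqref{eq: FLU two parts 1}, from which the inequality is immediate by optimality.

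The obstacles you flag, however, are lighter than you expect, and you can shortcut the ``bookkeeping of $\mathcal{F}^{stab}$.'' The stability decomposition $(\tF,\tB)$ is a \emph{fixed} structural object of the instance (unique by Corollary~\ref{cor: unique stability composition}), not something re-derived per competing $\bm{L}$, and it satisfies $\tF=\cup_{b\in\tB}\mF(b)$. So backends in $\tB$ are adjacent \emph{only} to frontends in $\tF$ in $\mathcal{G}$, which already rules out cross-flow from $\mF\setminus\tF$ into $\tB$ without any cut argument. The worry about a frontend in $\tF$ splitting flow to saturated backends is also resolved in one line: since $\sum_{b\in\tB}L_b=\sum_{f\in\tF}\lambda_f$ (both sides equal $\textsf{OPT-TP}-\sum_{b\in\mB\setminus\tB}\mu_b(\infty)$) while $\sum_{b\in\tB}L_b\le\sum_{b\in\tB}\sum_f\lambda_f x_{f,b}=\sum_{f\in\tF}\lambda_f\sum_{b\in\tB}x_{f,b}\le\sum_{f\in\tF}\lambda_f$, equality forces $\sum_{b\in\tB}x_{f,b}=1$ for every $f\in\tF$ with $\lambda_f>0$ and $\sum_{f\in\tF}\lambda_f x_{f,b}=L_b$ for $b\in\tB$. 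Setting $N_b=\mu_b^{-1}(L_b)$ for $b\in\tB$, which is finite exactly because $B(\bm{L})=\tB$, yields the required feasible point. The degenerate-boundary issue you mention is indeed vacuous by the definition $B(\bm{L})=\{b:L_b<\mu_b(\infty)\}$, so only the remark is needed.
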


Detailed analyses can be found in Appendix~\ref{apx: overload system}; here we provide the high-level ideas.
Our analyses build on a concept called \textit{stability decomposition} (Definition \ref{def: stability decomposition}) which decomposes frontends and backends based on the graph structure, arrival rates, and service capacities. This decomposition identifies which backends' workloads will inevitably grow without bound and which ones can remain stable under GMSR. We show this decomposition is equivalent to a minimum s-t cut on the graph augmented by a source connecting to all frontends and a sink connecting to all backends (Lemma \ref{lem: stability decomposition is a cut}), and we show it is unique (Corollary \ref{cor: unique stability composition}) and can be constructed using maximum flow algorithms (Corollary \ref{cor: construct mF using max flow}).

We then use the same Lyapunov function as in the stability analysis to show that the service rates in the system converge to an equilibrium $\bm{L}^*$ (Theorem \ref{thm: general N convergence result}), where the system splits into two parts:
\begin{itemize}
    \item A subset of backends that can be stabilized, with total workload converging to a finite value;
    \item A subset of backends whose workloads grow to infinity, with total service capacity converging to the maximum possible level, i.e., $\mu_b(\infty)$.
\end{itemize}

Finally, we use the properties of the stability decomposition to show the three properties of the equilibrium service rate $\bm{L}^*$ described in the propositions above.

\section{{Simulation of Policy Performances}}\label{sec: randomized simulation}
Figure~\ref{fig:intro-latency-utilization} in the introduction reports a representative instance with clear gains from GMSR. We next report simulations for additional randomly generated setups. In each setup, the numbers of frontends and backends are sampled uniformly from $\{1,\ldots,10\}$ and $\{2,\ldots,20\}$, respectively. We consider both complete graphs, where every frontend connects to every backend, and sparse graphs, where each edge is included independently with probability $0.5$ and isolated nodes are repaired by adding one uniformly random incident edge. Motivated by Universal Scalability Law \citep[Chapter~6]{gunther2007guerrilla}, backend service rates are
\[
    \mu_b(N)=\frac{N}{N+\alpha_b},
\]
where $\lfloor |\mB|/2\rfloor$ backends have $\alpha_b$ sampled uniformly from $\{1,\ldots,10\}$ and the rest from $\{10,\ldots,20\}$. Frontend arrival weights are sampled uniformly from $\{1,\ldots,10\}$. For complete graphs, the total arrival rate is $\rho\sum_{b\in \mB} \mu_b(\infty) = \rho|\mB|$, where $\rho$ denotes utilization, and this total is split in proportion to these weights. For sparse graphs, we scale the weights so that bottleneck utilization equals $\rho$, where bottleneck utilization is the maximum, over nonempty subsets $P\subseteq\mF$, of arrivals into $P$ divided by the limiting capacity ($\mu_b(\infty)$) of the backends reachable from $P$.

Arrivals to each frontend are independent Poisson random variables. If backend $b$ has current workload $N_b$, its departures are Binomial with parameters $N_b$ and $1/(N_b+\alpha_b)$, so the expected number of departures equals $\mu_b(N_b)$. We compare JSQ, Generalized JSQ, and GMSR against the fluid benchmark defined in \eqref{eq: FLU}. Each setup-utilization pair is simulated for 25{,}000 periods with 8{,}000 warmup periods, using one sample path and common frontend arrival paths across policies. We choose the initial state to be the fluid-optimal workloads (rounded to the nearest integer) to focus on the steady-state performance of the policy as opposed to transient behavior.

Figure~\ref{fig:rational-service-two-family-excess} reports aggregate results over 1{,}000 systems per graph family. At $\rho=0.9$, GMSR reduces latency by 6.9\% and 9.5\% relative to JSQ and Generalized JSQ in complete graphs, and by 5.9\% and 8.5\% in sparse graphs. Sparse graphs are harder (GMSR's mean benchmark excess is 2.7\%, versus 1.6\% for complete graphs). The smaller policy improvement is consistent with limited routing flexibility: if a frontend is connected to only a few backends, or if a local bottleneck forces traffic into a particular neighborhood, then JSQ, GJSQ, and GMSR are partly constrained to make similar allocations.

The monotone patterns in Figure~\ref{fig:rational-service-two-family-excess} reflect how the three policies allocate scarce service capacity.
For the rational service family $\mu_b(N)=N/(N+\alpha_b)$, the marginal service rate is $\alpha_b/(N+\alpha_b)^2$.
As the optimal workload equalizes marginal service rates (Lemma \ref{lem: optimal FLU structure}), $N_b+\alpha_b$ should scale with $\sqrt{\alpha_b}$ across active backends.
GMSR is aligned with this condition, thus its excess latency is largely a finite-system stochastic gap.
JSQ, by contrast, equalizes raw queue lengths $N_b$, which becomes increasingly costly at high utilization because heterogeneous backends need different workload levels to contribute comparable throughput.
Generalized JSQ roughly equalizes $N_b+\alpha_b$, which over-favors low-$\alpha_b$ backends: the marginal optimum instead allows high-$\alpha_b$ backends to carry larger adjusted workloads. By steering too much traffic away from high-$\alpha_b$ backends, Generalized JSQ can therefore perform worse than JSQ. At high utilization, queues grow and the additive $\alpha_b$ term becomes less important, so Generalized JSQ increasingly resembles JSQ.
Overall, GMSR and JSQ reflect marginal-rate versus queue-length balancing, while the performance of GJSQ is more service-family dependent.

\begin{figure}[H]
    \centering
    \setlength{\abovecaptionskip}{2pt}
    \setlength{\belowcaptionskip}{0pt}
    \includegraphics[width=0.8\linewidth]{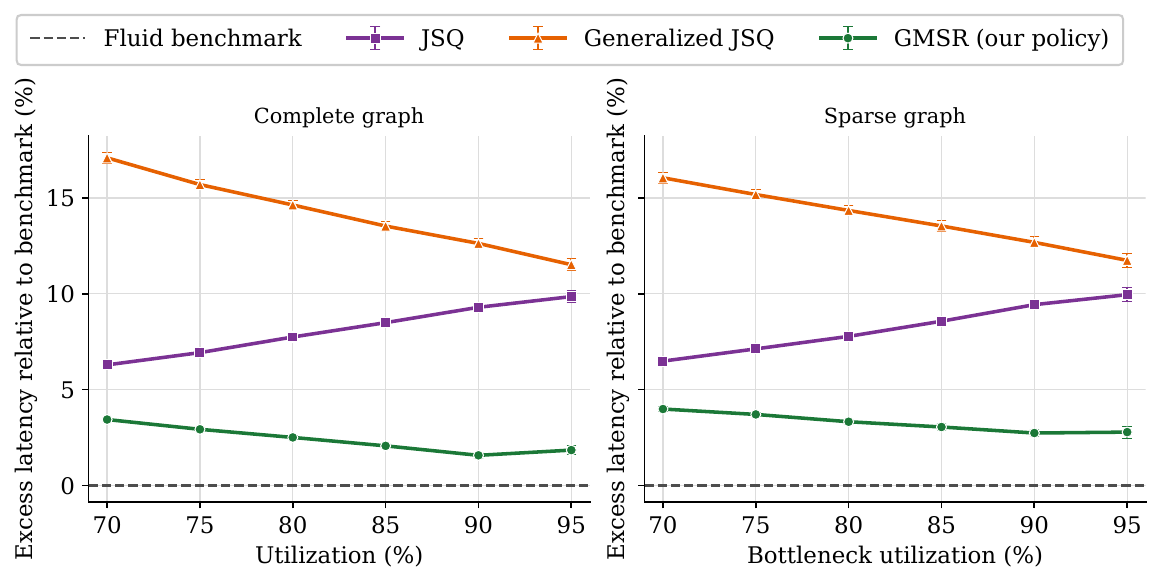}
    \caption{Policy comparison under complete and sparse compatibility graphs. Each panel reports mean excess latency relative to the fluid benchmark over 1{,}000 systems, with error bars denoting standard errors across setups.}
    \label{fig:rational-service-two-family-excess}
\end{figure}

\section{Conclusions}\label{sec: conclusions}
In this paper, we investigate distributed load balancing for general bipartite queueing systems. We propose the Greatest Marginal Service Rate policy that routes jobs based on the marginal service rates of connected backends. Our analyses show that GMSR minimizes expected latency asymptotically, matching the performance of a centrally coordinated optimal routing policy.

We believe this work opens new avenues for future research. In some settings, it could take time for jobs to travel from frontends to backends and for backends to communicate their state to the frontends. How can we design latency-aware routing policies that are robust to this delay? In practice, there are often multiple layers of routers, e.g., each backend also has a local router that decides how to dispatch jobs to individual physical machines---how to design multi-layer service networks and the routing policies?
Lastly, in our model, we assume the service rate functions are known to the decision-maker, but in reality, these functions need to be estimated. How to route while learning is a practically relevant and theoretically interesting future research direction.

\bibliographystyle{informs2014}
\bibliography{references}

\appendix
\section{Failure of Routing based on Expected Latency}\label{apx: failure of expected latency}
A natural idea is to route the job to a connected backend with the shortest expected latency. However, this approach can lead to suboptimal outcomes. Let's illustrate this using the famous Pigou's Example~\citep{pigou2017economics}.
Suppose there is one frontend connected to two backends. The arrival rate to the frontend is $1$. If we route $x_1$ proportion of jobs to backend 1, the job's expected latency is $1$, regardless of $x_1$. For backend 2, the expected latency equals the proportion of jobs routed there, which is denoted by $x_2$. Therefore, the total expected latency is $x_1\cdot 1 + x_2\cdot x_2$ subject to $x_1, x_2 \geq 0$ and $x_1+x_2 = 1$. 
The optimal solution is $x_1 = x_2 = 1/2$, which yields the minimal total expected latency of $3/4$. However, if the frontend routes jobs based on expected latency, then $x_1 = 0, x_2 = 1$, and the resulting expected latency is $1$.

Moreover, the gap can be made arbitrarily large. Suppose backend 2's expected latency is $x_2^p$ for an integer $p \geq 1$ (the case above is $p=1$). Routing based on expected latency still sends all jobs to backend 2---since $x_2^p < 1$ whenever $x_2 < 1$---yielding total expected latency $1$ for every $p$. The optimal routing instead sets $x_2 = (p+1)^{-1/p}$, with total expected latency $1 - \frac{p}{p+1}(p+1)^{-1/p} \to 0$ as $p \to \infty$. Hence routing based on expected latency can be arbitrarily suboptimal, a manifestation of the unbounded price of anarchy \citep{roughgarden2002bad}. In systems with multiple frontends operating independently, the problem becomes even more complex.

\section{Fluid Optimization Problem}
\subsection{Proof of Lemma \ref{lem: OPT lower bound}}

\begin{proof}
Fix an arbitrary online policy $\pi$. For simplicity, we drop the subscript $\pi$ from $\mathbb{E}_\pi$ and write $\mathbb{E}$ throughout this proof.
For any subset $P\subseteq \mathcal{F}$, let $Q := \cup_{f\in P} \mathcal{B}(f)$ denote the set of backends connected to frontends in $P$.

By the system dynamics, the number of jobs at frontend set $P$ and at the backend set $Q$ at time period $k$ is given by 
\begin{align*}
    \sum_{f\in P}G_f(k) + \sum_{b \in Q }N_b(k) & \stackrel{(a)}{=} \sum_{f\in P}G_f(0) + \sum_{i=0}^{k-1} \sum_{f\in P} \left( W_f(i) - \sum_{b\in \mathcal{B}(f)} A_{f,b}(i)\right) \\
    & \quad \quad + \sum_{b \in Q }N_b(0) + \sum_{i=0}^{k-1} \sum_{b \in Q }\left( \sum_{f\in \mathcal{F}(b)} A_{f,b}(i) - D_b(i) \right) \\
    & = \sum_{f\in P}G_f(0) + \sum_{b \in Q }N_b(0) + \sum_{i=0}^{k-1} \left(\sum_{f\in P}  W_f(i) - \sum_{b \in Q } D_b(i) \right) \\
    & \quad \quad + \sum_{i=0}^{k-1} \left( \sum_{b \in Q } \sum_{f\in \mathcal{F}(b)} A_{f,b}(i) -\sum_{f\in P}\sum_{b\in \mathcal{B}(f)} A_{f,b}(i) \right) \\
    & \stackrel{(b)}{\geq} \sum_{i=0}^{k-1} \left(\sum_{f\in P}  W_f(i) - \sum_{b \in Q } D_b(i) \right) +\sum_{i=0}^{k-1} \left( \sum_{b \in Q } \sum_{f\in \mathcal{F}(b)} A_{f,b}(i) -\sum_{f\in P}\sum_{b\in \mathcal{B}(f)} A_{f,b}(i) \right) \\
    & \stackrel{(c)}{\geq} \sum_{i=0}^{k-1} \left(\sum_{f\in P}  W_f(i) - \sum_{b \in Q } D_b(i) \right),
\end{align*}
where $(a)$ follows from the system dynamic recursions; $(b)$ follows as $G_f(0), N_b(0) \geq 0$; $(c)$ follows as 1) $A_{f,b}(i) \geq 0$ for $i\geq 1$, and 2) $Q =  \cup_{f\in P} \mathcal{B}(f)$, which means $P \subseteq\cup_{b\in Q} \mathcal{F}(b)$, thus $\{(f,b): f\in P, b \in Q \} \subseteq \{(f,b): f\in \cup_{b\in Q} \mathcal{F}(b), b\in Q\}$. 

Dividing both sides by $k$ and taking expectation and $\limsup$ over both sides, we obtain
\begin{align}
    \limsup_{k\to\infty} & \frac{1}{k}\mathbb{E}\left[\sum_{f\in P}G_f(k)+ \sum_{b \in Q}N_b(k)\right] \notag \\
    &\geq  \limsup_{k\to\infty}  \frac{1}{k} \mathbb{E}\left[\sum_{i=0}^{k-1} \left( \sum_{f\in P} W_f(i) - \sum_{b \in Q} D_b(i) \right)\right] \stackrel{(a)}{=}\limsup_{k\to\infty} \left(\sum_{f \in P}\lambda_f -  \frac{1}{k} \sum_{b \in Q}\mathbb{E}\left[\sum_{i=0}^{k-1} D_b(i)\right]\right)  \notag \\
   &  \stackrel{(b)}{\geq}  \limsup_{k\to\infty} \left(\sum_{f \in P}\lambda_f - \sum_{b \in Q}\mu_b \left(\mathbb{E}\left[\frac{1}{k}\sum_{i=0}^{k-1} N_b(i)\right]\right)\right)\stackrel{(c)}{=} \sum_{f \in P}\lambda_f + \limsup_{k\to\infty} \left(- \sum_{b \in Q}\mu_b \left(\mathbb{E}\left[\frac{1}{k}\sum_{i=0}^{k-1} N_b(i)\right]\right)\right) \notag\\
   & = \sum_{f \in P}\lambda_f - \liminf_{k\to\infty} \left(\sum_{b \in Q}\mu_b \left(\mathbb{E}\left[\frac{1}{k}\sum_{i=0}^{k-1} N_b(i)\right]\right)\right) \stackrel{(d)}{\geq}   \sum_{f \in P}\lambda_f -  \sum_{b \in Q}\mu_b \left(\liminf_{k\to\infty}\mathbb{E}\left[\frac{1}{k}\sum_{i=0}^{k-1} N_b(i)\right]\right)   \label{eq: bound liminf Qn Nn}
\end{align}
where $(a)$ holds as $W_f(i)$ has mean $\lambda_f$ for all $i\geq 0$; $(b)$ follows from \eqref{eq: bound D using ellN} below; $(c)$ follows from Lemma \ref{lem: liminf ineq bounded} as $\mu_b(\cdot)$ is bounded; $(d)$ follows from Lemma \ref{lem: fliminf > liminff}. 
To see $(b)$ holds, note that
\begin{align}
\mathbb{E}\left[\frac{1}{k}\sum_{i=0}^{k-1} D_b(i)\right]  & \stackrel{(a)}{=} \mathbb{E}\left[\frac{1}{k}\sum_{i=0}^{k-1}\mathbb{E}\left[ D_b(i)\mid N_b(i)\right]\right] \stackrel{(b)}{=}  \mathbb{E}\left[\frac{1}{k}\sum_{i=0}^{k-1}  \mu_b(N_b(i))\right] \\
& \stackrel{(c)}{\leq} \mathbb{E}\left[\mu_b \left(\frac{1}{k}\sum_{i=0}^{k-1} N_b(i)\right)\right] \stackrel{(d)}{\leq} \mu_b \left(\mathbb{E}\left[\frac{1}{k}\sum_{i=0}^{k-1} N_b(i)\right]\right). \label{eq: bound D using ellN}
\end{align}
where $(a)$ follows from the tower property of conditional expectation; $(b)$ follows from $\mathbb{E}[D_b(i)\mid N_b(i)] = \mu_b(N_b(i))$; $(c)$ and $(d)$ both follow from Jensen's inequality as $\mu_b(\cdot)$ is concave.

Let $\bar{N}_b := \liminf_{k\to\infty}\mathbb{E}\left[ \sum_{i=0}^{k-1} N_b(i)\right]/k$, if $\sum_{f\in P}\lambda_f >  \sum_{b \in Q}\mu_b \left(\bar{N}_b\right)$, then from \eqref{eq: bound liminf Qn Nn},
\[\limsup_{k\to\infty} \frac{1}{k}\mathbb{E}\left[\sum_{f\in P}G_f(k)+ \sum_{b \in Q}N_b(k)\right] \geq \sum_{f\in P}\lambda_f - \sum_{b\in Q}\mu_b(\bar{N}_b) > 0. \]
Since $G_f(k) \geq 0$, this implies $\limsup_{k\to\infty} \mathbb{E}\left[\sum_{b \in Q}N_b(k)\right]/k > 0$, which means $\sum_{b\in Q} N_b(k)$ is not mean rate stable (Definition \ref{def: mean rate stable}). By Theorem 2.8 from \citet{neely2022stochastic}, as the expected arrivals are bounded by a constant, mean rate stability is necessary for strong stability (Definition \ref{def: strongly stable}).

Therefore, a necessary condition for the system to be strongly stable is for any $P\subseteq \mathcal{F}$,
\begin{align}\label{eq: stability necessary condition}
    \sum_{f\in P}\lambda_f \leq  \sum_{b \in Q}\mu_b \left(\bar{N}_b\right).
\end{align}

We proceed to show that for any $\{\bar{N}_b\}_{b\in \mathcal{B}}$ that satisfies \eqref{eq: stability necessary condition}, there exists a routing matrix $\{x_{f,b}\}_{f\in \mathcal{F}, b\in \mathcal{B}}$ such that
$(\{\bar{N}_b\}_{b\in \mathcal{B}}, \{x_{f,b}\}_{f\in \mathcal{F}, b\in \mathcal{B}})$ is a feasible solution to the following optimization problem:
    \begin{align}
    \textsf{OPT' =}    \min_{\bN, \bx} \quad & \sum_{b\in \mathcal{B}} N_b \label{eq: FLU ineq relax}\\
    \text{s.t.} \quad & \sum_{f\in \mathcal{F}} \lambda_f x_{f,b} \leq \mu_b(N_b), \forall b\in\mathcal{B}, \notag\\
    & \sum_{b\in \mathcal{B}}x_{f,b} = 1, \forall f \in \mathcal{F}, x_{f,b} \geq 0, \forall (f,b) \in \mathcal{E}, x_{f,b} = 0, \forall (f,b) \notin \mathcal{E}. \notag
    \end{align}
    
To see this, consider a graph with a source connected to all frontends with capacities $\lambda_f$ and a sink connected to all backends with capacities $\mu_b(\bar N_b)$, and the arcs $\mathcal{E}$ having infinite capacities. Note that there are three types of cuts with finite capacity in this network: 1) cuts involving only the edges that connect the source to all front-end nodes, with capacity $\sum_{f\in \mathcal{F}} \lambda_f$; 2) cuts involving only the edges that connect all back-end nodes to the sink $\sum_{b\in \mathcal{B}} \mu_b(\bar{N_b})$; 3) cuts that combine edges from both the first and second types; that is, cuts formed by selecting edges from the source-to-$\mF\setminus P$ connections and some from the $ Q$-to-sink connections, see Figure \ref{fig:maxflowmincut}. Here $P$ is an arbitrary subset of $\mathcal{F}$ and $Q := \cup_{f\in P} \mathcal{B}(f)$ as before.
The capacity of the third type of cut is 
\[\sum_{f\in \mathcal{F} \setminus P} \lambda_f + \sum_{b\in Q} \mu_b(\bar{N}_b) = \sum_{f\in \mathcal{F} } \lambda_f  - \sum_{f\in P} \lambda_f + \sum_{b\in Q} \mu_b(\bar{N}_b)  \geq \sum_{f\in \mathcal{F}}\lambda_f.\]

Therefore, by the Max-Flow Min-Cut theorem, the max flow is $\min\{\sum_{f\in \mathcal{F}} \lambda_f, \sum_{b\in \mathcal{B}} \mu_b(\bar{N_b})\} = \sum_{f\in \mathcal{F}} \lambda_f$. Therefore, a feasible routing matrix $\{x_{f,b}\}_{f\in \mathcal{F}, b\in \mathcal{B}}$ exists. 
Putting things together,
\begin{align*}
    \textsf{OPT} & \stackrel{(a)}{=} \textsf{OPT}' \stackrel{(b)}{\leq} \sum_{b\in \mB} \bar{N}_b \stackrel{(c)}{=} \sum_{b\in \mB} \liminf_{k\to\infty}\frac{1}{k}\mathbb{E}\left[\sum_{i=0}^{k-1} N_b(i)\right] \\
    &  \stackrel{(d)}{\leq} \liminf_{k\to\infty}\sum_{b\in \mB}\frac{1}{k}\mathbb{E}\left[ \sum_{i=0}^{k-1}N_b(i)\right]  \stackrel{(e)}{\leq}\liminf_{k\to\infty} \frac{1}{k} \mathbb{E}\left[ \sum_{i=0}^{k-1}  \left( \sum_{f\in \mathcal{F}} G_f(i)+  \sum_{b\in \mathcal{B}}N_b(i)\right) \right].
\end{align*}
where $(a)$ follows as for any feasible solution $(\bN, \bx)$ to \eqref{eq: FLU ineq relax}, as $\mu_b$ are increasing, we can construct a feasible solution $(\bN', \bx)$ to \eqref{eq: FLU} such that $\sum_{b\in \mathcal{B}} N_b' \leq \sum_{b\in \mathcal{B}} N_b$. The results follow as the optimal solution to \eqref{eq: FLU} is also feasible to \eqref{eq: FLU ineq relax}; $(b)$ follows as $\bar{\bN}$ is feasible to the optimization problem \eqref{eq: FLU ineq relax}; $(c)$ follows from definition of $\bar{N}_b$; $(d)$ follows as for every $m\geq 1$ we have 
$\sum_{b\in \mB} \inf_{k\geq m} \frac{1}{k}\mathbb{E}\left[\sum_{i=0}^{k-1} N_b(i)\right] \leq \inf_{k\geq m} 
 \sum_{b\in \mB} \frac{1}{k}\mathbb{E}\left[\sum_{i=0}^{k-1} N_b(i)\right]$,
the inequality thus holds by taking $m\to\infty$; $(e)$ holds as $G_f(i) \geq 0$.
\Halmos \end{proof}
\begin{definition}[Definition 2.3 \citet{neely2022stochastic}, Mean Rate Stable]\label{def: mean rate stable}
A discrete-time process $N(k)$ is mean rate stable if:
$\lim_{k\to\infty} \frac{\mathbb{E}[|N(k)|]}{k} = 0$. 
\end{definition}

\begin{definition}[Definition 2.7 \citet{neely2022stochastic}, Strongly Stable]\label{def: strongly stable}
A discrete-time process $N(i)$ is strongly stable if:
$\limsup_{k\to\infty} \frac{1}{k} \sum_{i=1}^k \mathbb{E}[|N(i)|] < \infty$. 
\end{definition}


\begin{lemma}\label{lem: liminf ineq bounded}
    Given two bounded real sequence $\{x_k\}_{k\geq 1}$ and $\{y_k\}_{k\geq 1}$ such that $y_k$ converges to $y$, we have 
    \[\limsup_{k\to\infty}(x_k + y_k) = \limsup_{k\to\infty}x_k + y.\]
\end{lemma}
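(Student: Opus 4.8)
The statement to prove is Lemma~\ref{lem: liminf ineq bounded}: for bounded real sequences $\{x_k\}$, $\{y_k\}$ with $y_k \to y$, we have $\limsup_{k\to\infty}(x_k+y_k) = \limsup_{k\to\infty} x_k + y$. This is a standard fact, and I would prove it by a two-sided inequality argument using only the definition of $\limsup$ and the convergence of $\{y_k\}$.

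First I would establish the $\leq$ direction. Fix $\varepsilon > 0$. Since $y_k \to y$, there exists $K$ such that $y_k \le y + \varepsilon$ for all $k \ge K$. Hence for $k \ge K$ we have $x_k + y_k \le x_k + y + \varepsilon$, so $\sup_{k\ge m}(x_k+y_k) \le \sup_{k\ge m} x_k + y + \varepsilon$ for all $m \ge K$. Taking $m \to \infty$ gives $\limsup_{k}(x_k+y_k) \le \limsup_k x_k + y + \varepsilon$, and since $\varepsilon$ was arbitrary, $\limsup_k(x_k+y_k) \le \limsup_k x_k + y$. Boundedness of both sequences ensures all these $\limsup$'s are finite real numbers, so there is no $\infty-\infty$ issue.

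For the reverse inequality $\ge$, I would apply the same argument to the sequences $x_k' := x_k + y_k$ and $y_k' := -y_k$, which also converges (to $-y$). The direction just proved yields $\limsup_k(x_k' + y_k') \le \limsup_k x_k' + (-y)$, i.e. $\limsup_k x_k \le \limsup_k(x_k+y_k) - y$, which rearranges to $\limsup_k x_k + y \le \limsup_k(x_k+y_k)$. Combining the two inequalities gives the claimed equality. There is no real obstacle here; the only point requiring a word of care is confirming that boundedness makes every quantity a finite real so that addition and rearrangement are valid — this is where the hypothesis that both sequences are bounded is used.
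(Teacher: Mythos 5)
Your proof is correct. The paper itself states this lemma without proof (the proof block that follows it in the appendix belongs to the next lemma, Lemma~\ref{lem: fliminf > liminff}), treating it as a standard real-analysis fact. Your two-sided $\varepsilon$-argument, with the neat trick of deriving the reverse inequality by applying the already-proved direction to $x_k':=x_k+y_k$ and $y_k':=-y_k$, is clean and complete; the boundedness hypothesis is used exactly where you say, to keep every $\limsup$ a finite real so the rearrangement is licit.
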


\begin{lemma}\label{lem: fliminf > liminff}
    Given a continuous function $g(\cdot)$, and a real sequence $\{x_k\}_{k\geq 1}$, we have 
    \[g\left(\liminf_{k\to\infty} x_k\right) \geq  \liminf_{k \to\infty} g(x_k).\]
\end{lemma}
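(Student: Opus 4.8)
The plan is to prove the inequality directly from the subsequence characterization of the lower limit, with no machinery beyond continuity of $g$. Throughout I would set $\ell := \liminf_{k\to\infty} x_k$ and assume, as holds in every application of this lemma (in particular in the proof of Lemma~\ref{lem: OPT lower bound}, where $g=\mu_b$ is bounded and increasing and $x_k$ is a sequence of nonnegative averaged workloads), that $\ell$ is finite, so that $g(\ell)$ is well defined; the degenerate case $\ell=+\infty$ is harmless, since then $x_k\to+\infty$ and both sides of the claimed inequality reduce to $\lim_{x\to\infty} g(x)$, which exists because the relevant $g$'s are continuous and bounded.

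The argument has three steps. First, by the definition of $\ell=\sup_n\inf_{k\ge n}x_k$, there is a subsequence $(x_{k_j})_{j\ge 1}$ with $x_{k_j}\to\ell$ as $j\to\infty$; that is, $\ell$ is a cluster point of $(x_k)_k$. Second, by continuity of $g$ at $\ell$, this gives $g(x_{k_j})\to g(\ell)$, so $g(\ell)$ is a cluster point of the sequence $\bigl(g(x_k)\bigr)_{k\ge 1}$. Third, the lower limit of a sequence is dominated by each of its cluster points, hence
\[
\liminf_{k\to\infty} g(x_k)\;\le\; g(\ell)\;=\;g\Bigl(\liminf_{k\to\infty} x_k\Bigr),
\]
which is exactly the claim.

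There is really no obstacle here: the lemma is an elementary fact whose entire content is the observation that $g(\liminf_k x_k)$ is a cluster point of $\bigl(g(x_k)\bigr)_k$. I would flag only two minor points for the writeup. First, only \emph{continuity} of $g$ is used, not monotonicity; correspondingly, the analogous statement for the upper limit goes the other way, $g(\limsup_k x_k)\le \limsup_k g(x_k)$, so one should be careful to invoke the $\liminf$ version, as is done in Lemma~\ref{lem: OPT lower bound}. Second, one must check that $\liminf_k x_k$ lies in the domain of $g$; in the applications this is immediate, since there $g=\mu_b$ has domain $[0,\infty)$ (extended by its limit at $+\infty$) and $x_k\ge 0$.
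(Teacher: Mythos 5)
Your proof is correct and is essentially the same elementary argument as the paper's: both exploit continuity of $g$ together with a standard characterization of $\liminf$, the only cosmetic difference being that you phrase it via the subsequence/cluster-point characterization ($\liminf_k x_k$ is a cluster point of $(x_k)$, so by continuity $g(\liminf_k x_k)$ is a cluster point of $(g(x_k))$, and $\liminf$ is the smallest cluster point), while the paper writes $\liminf_k x_k = \lim_{K\to\infty}\inf_{k\ge K}x_k$, pushes $g$ inside the outer limit by continuity, and then uses $g(\inf_{k\ge K}x_k)\ge\inf_{k\ge K}g(x_k)$ — a step that is justified by the same cluster-point reasoning you make explicit. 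Your remarks about checking $\liminf_k x_k$ lies in the domain of $g$, and that monotonicity is not needed, are correct and appropriate caveats that the paper leaves implicit.
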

\begin{proof}
By definition, 
    \[g\left(\liminf_{k\to\infty} x_k\right) = g\left(\lim_{K\to\infty} \inf_{k\geq K} x_k\right) = \lim_{K\to\infty}g\left(\inf_{k\geq K} x_k\right) \geq \lim_{K\to \infty} \inf_{k\geq K}g(x_k) = \liminf_{k} g(x_k). \Halmos \]
\end{proof}

\begin{figure}
    \centering
    \includegraphics[width=0.5\linewidth]{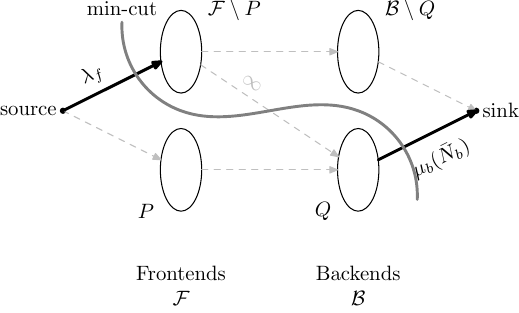}
    \caption{Illustration of the Third Type of Cut}
    \label{fig:maxflowmincut}
\end{figure}

\subsection{Proof of Lemma \ref{lem: unique N star}}

\begin{proof}
\textit{Existence.} Under Assumption~\ref{assmp: system stable}, the feasible set of~\eqref{eq: FLU} is non-empty. Since every feasible solution satisfies $\mu_b(N_b) = \sum_{f\in \mathcal{F}(b)} \lambda_f x_{f,b} \leq \sum_{f\in \mathcal{F}} \lambda_f$ and $\mu_b$ is non-decreasing with $\mu_b(0)=0$, the workloads $N_b$ are bounded above. The feasible set is therefore closed and bounded, and the continuous objective $\sum_b N_b$ attains its minimum.

\textit{Uniqueness.} Suppose $(\bm{x}, \bm{N})$ and $(\bm{x}', \bm{N}')$ are both optimal with $\bN \neq \bN'$. Since both have finite objective value, no backend is saturated, so $\mu_b^{-1}$ is well-defined on the relevant range. Consider $\tilde{\bm{x}} = ({\bm{x}+\bm{x}'})/{2}$. For all $b\in \mathcal{B}$,
\[\tilde{N}_b = \mu^{-1}_b \left(\sum_{f \in \mathcal{F}(b)} \lambda_f \frac{x_{f,b} + x'_{f,b}}{2}\right) < \frac{1}{2}\mu_b^{-1}\left(\sum_{f\in \mathcal{F}(b)} \lambda_f x_{f,b}\right) + \frac{1}{2}\mu_b^{-1}\left(\sum_{f\in \mathcal{F}(b)} \lambda_f x'_{f,b}\right) =  \frac{N_b+ N'_b}{2},\]
where the strict inequality holds for any $b$ with $N_b \neq N'_b$, since strict concavity of $\mu_b$ implies strict convexity of $\mu^{-1}_b$.
Therefore $\sum_b \tilde{N}_b < (\sum_b N_b + \sum_b N'_b)/2 = \textsf{OPT}$, contradicting the optimality of $(\bm{x}, \bm{N})$ and $(\bm{x}', \bm{N}')$.\Halmos \end{proof}

\subsection{Proof of Lemma \ref{lem: optimal FLU structure}}
\begin{proof}
    We first argue the conditions in the statement are necessary. As the service rate function is strictly increasing by assumption, it is invertible so we can rewrite the optimization problem with $\bx$ as the decision variables. Let $N_b(\bx) = \mu_b^{-1}\left(\sum_{f\in \mathcal{F}(b)} \lambda_f x_{f,b}\right)$. Pre-multiplying the flow balance constraint of frontend $f$ by $\lambda_f$, we obtain by using the method of Lagrange multipliers: 
    \[L(\bx, \bm{c}, \bm{\alpha}) = \sum_{b\in \mathcal{B}} N_b(\bx) + \sum_{f\in \mathcal{F}} c_f \lambda_f \left(1-\sum_{b\in \mathcal{B}(f)} x_{f,b}\right) - \sum_{(f,b) \in \mathcal{E}} \alpha_{f,b}x_{f,b}\,,\]
where $c_f$ are the Lagrange multipliers of the frontend flow balance constraints and $\alpha_{f,b}$ are the Lagrange multipliers of the non-negativity constraints.
    The Karush–Kuhn–Tucker conditions are 
    \begin{align*}
        &\frac{\partial N_b(\bx)}{\partial x_{f,b}} - c_f \lambda_f - \alpha_{f,b} = 0, \forall (f,b) \in \mathcal{E},\\
        &\sum_{b\in \mathcal{B}(f)}x_{f,b} = 1, \forall f \in \mathcal{F}, \\
        &\alpha_{f,b}x_{f,b} = 0,  \forall (f,b) \in \mathcal{E},\\
        &\alpha_{f,b}\geq 0, x_{f,b} \geq 0,  \forall (f,b) \in \mathcal{E},\\
        &x_{f,b} = 0, \forall (f,b) \not\in \mathcal{E}.
    \end{align*}
    By the implicit function theorem, $\partial N_b(\bx)/\partial x_{f,b} = \lambda_f/\mu_b'(N_b)$. Therefore, for $x_{f,b}^* >0$ complementary slackness implies $\alpha_{f,b} =0$ and  
    $\frac{\lambda_f}{\mu_b'(N_b^*)}= \lambda_f c_f\,.$
    For  $(f,b) \in \mathcal{E}$ with $x_{f,b}^* = 0$ we have $\alpha_{f,b}\geq 0$ and thus
    $\frac{\lambda_f}{\mu_b'(N_b^*)} \geq \lambda_f c_f.$
    The result follows by dividing both conditions by $\lambda_f > 0$.

Sufficiency follows because the problem is convex since the feasible set is linear and $N_b(\bx)$ is convex. Therefore, the first-order conditions are sufficient for optimality.
\Halmos \end{proof}

\subsection{Proof of Theorem \ref{thm: converge to fluid model}}
\paragraph{First Part}
We have that $Y_b^{(c)}(i) := N_b^{(c)}(i)/c$ satisfies the following stochastic recursive inclusion
\begin{align*}
    &Y_b^{(c)}(i+1)= Y_b^{(c)}(i) + \frac{1}{c} \left( \sum_{f\in \mathcal{F}(b)} A_{f,b}^{(c)}(i) - D_b^{(c)}(i)\right) \\
    & = Y_b^{(c)}(i) + \frac{1}{c} \underbrace{\left( \sum_{f\in \mathcal{F}(b)}\lambda_f x^{(c)}_{f,b}(i) -\mu_b(Y_b^{(c)}(i)) \right)}_{g^{(c)}_b(i)} + \frac{1}{c} \underbrace{\left( \sum_{f\in \mathcal{F}(b)} A_{f,b}^{(c)}(i) - D_b^{(c)}(i) - \left( \sum_{f\in \mathcal{F}(b)}\lambda_f x^{(c)}_{f,b}(i) -\mu_b(Y_b^{(c)}(i))\right)\right)}_{\varepsilon_b^{(c)}(i+1)}.
\end{align*}
We hereafter move the time step index $i$ to the subscript, i.e., 
\begin{align}\label{eq: Yn+1 from Yn}
    \bY^{(c)}_{i+1} = \bY^{(c)}_i + \frac{1}{c}\left(\bg^{(c)}_i + \bve^{(c)}_{i+1}\right),
\end{align}
where 
$\bve^{(c)}_{i+1}$ is the difference between a random variable and its mean, which can be viewed as the noise added to the system. Specifically, $\{\bve^{(c)}_i\}$ is a martingale difference sequence with respect to the past:
$\mathbb{E}[\bve^{(c)}_{i+1}| \sigma(\bY^{(c)}_m, \bve^{(c)}_m, m \leq i)] = 0 \text{ a.s., }i \geq 0.$

To prove the relative compactness of the linear interpolations $\{\bar{\bY}^{(c)}(\cdot)\}_{c\in \mathbb{N}}$, we show an equivalence between it and the following differential sequence, whose relative compactness is straight-forward. 

Let's define a continuous, piecewise constant $\bar{\bg}^{(c)}(t)$, $t \in [0,T]$ where $\bar{\bg}^{(c)}(i/c) = \bg^{(c)}_i$, $i\geq 0$ and on each interval $t\in [i/c, (i+1)/c)$,
$\bar{\bg}^{(c)}(t) = \bg^{(c)}_i.$

Let $\by^{(c)}(\cdot)$ denote the integrals of the piecewise linear $\bar{\bg}^{(c)}(\cdot)$ that start with $\bar{\bY}^{(c)}(0)$ at time 0, i.e.,  
${\by}^{(c)}(t) = \by^{(c)}(0) + \int_0^t \bar{\bg}^{(c)}(u)\,du, \by^{(c)}(0) = \bar{\bY}^{(c)}(0).$

Here process $\by^{(c)}(\cdot)$ is the compensator of the process $\bY^{(c)}(\cdot)$, i.e., the integral of the drift; and we will work with $\by^{(c)}(\cdot)$ as it allows us to work with the expected values. Using that $\sup_{i \leq \lceil T c\rceil} \|\bg^{(c)}_i\| < \infty$ a.s., we have that $\{\by^{(c)}(\cdot)\}_{c\in \mathbb{N}}$ is equicontinuous and pointwise bounded. By the Arzel\`a-Ascoli Theorem, it is relatively compact in $C([0,T]; \mathbb{R}^{|\mathcal{B}|})$. The following lemma allows us to pass the relative compactness to $\bY^{(c)}(\cdot)$.

\begin{lemma}\label{lem: barY to y}
For any $T >0$, let $\|\cdot \|$ denote Euclidean norm, we have 
\[\lim_{c\to\infty} \sup_{t \in [0,T]}\|\bar{\bY}^{(c)}(t) - \by^{(c)}(t)\| = 0, \text{ a.s..}\]
\end{lemma}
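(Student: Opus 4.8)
\textbf{Proof plan for Lemma~\ref{lem: barY to y}.}

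The plan is to compare the interpolation $\bar{\bY}^{(c)}(\cdot)$ and the compensator $\by^{(c)}(\cdot)$ at the grid points first, and then control the error between grid points by equicontinuity. At a grid point $t = i/c$, unrolling the recursion \eqref{eq: Yn+1 from Yn} gives $\bar{\bY}^{(c)}(i/c) = \bY^{(c)}_i = \bY^{(c)}_0 + \frac{1}{c}\sum_{m=0}^{i-1}\bg^{(c)}_m + \frac{1}{c}\sum_{m=0}^{i-1}\bve^{(c)}_{m+1}$, whereas $\by^{(c)}(i/c) = \bY^{(c)}_0 + \frac{1}{c}\sum_{m=0}^{i-1}\bg^{(c)}_m$ by the definition of the piecewise-constant $\bar{\bg}^{(c)}$ and its integral. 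Hence the difference at grid points is exactly the martingale sum $M^{(c)}_i := \frac{1}{c}\sum_{m=0}^{i-1}\bve^{(c)}_{m+1}$, and it suffices to show $\sup_{0\le i\le \lceil Tc\rceil}\|M^{(c)}_i\|\to 0$ a.s.\ and then bound the contribution of non-grid times.

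First I would invoke the uniform law of large numbers for triangular arrays of martingale differences (Lemma~\ref{lem: SLLN for epsilon} referenced in the text, proved via Burkholder's inequality), which is precisely designed to handle $\sup_i \|M^{(c)}_i\|$. Concretely, $\{\bve^{(c)}_{m+1}\}$ is a martingale difference sequence with respect to the filtration $\sigma(\bY^{(c)}_l, \bve^{(c)}_l : l\le m)$, and each $\bve^{(c)}_{m+1}$ has uniformly bounded fourth moments: indeed $\bve^{(c)}_{m+1}$ is a sum and difference of $A^{(c)}_{f,b}(m)$, $D^{(c)}_b(m)$, $\lambda_f x^{(c)}_{f,b}(m)$ and $\mu_b(Y^{(c)}_b(m))$, all of which have uniformly bounded fourth moments by the standing hypotheses ($\sup_{c,i}\E[(W^{(c)}_f(i))^4]<\infty$, $\sup_{c,i}\E[(D^{(c)}_b(i))^4]<\infty$, boundedness of $\mu_b$, and $x^{(c)}_{f,b}\in[0,1]$). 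Burkholder's inequality then gives $\E[\max_{i\le \lceil Tc\rceil}\|{\textstyle\sum_{m=0}^{i-1}}\bve^{(c)}_{m+1}\|^4] = O(c^2)$ (the fourth moment of a martingale sum of length $O(c)$ with bounded fourth-moment increments), so $\E[\sup_i\|M^{(c)}_i\|^4] = O(c^2)/c^4 = O(1/c^2)$. Summing over $c\in\mathbb{N}$ makes $\sum_c \E[\sup_i\|M^{(c)}_i\|^4]<\infty$, hence by Markov's inequality and the Borel--Cantelli lemma $\sup_{0\le i\le\lceil Tc\rceil}\|M^{(c)}_i\|\to 0$ almost surely.

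It remains to pass from grid points to all $t\in[0,T]$. Both $\bar{\bY}^{(c)}$ and $\by^{(c)}$ are piecewise linear (resp.\ the integral of a piecewise-constant function, hence also piecewise linear on the same grid), so on any interval $[i/c,(i+1)/c]$ the difference $\bar{\bY}^{(c)}(t)-\by^{(c)}(t)$ is an affine function of $t$ and therefore attains its sup over that interval at an endpoint $i/c$ or $(i+1)/c$. Consequently $\sup_{t\in[0,T]}\|\bar{\bY}^{(c)}(t)-\by^{(c)}(t)\| = \max_{0\le i\le\lceil Tc\rceil}\|M^{(c)}_i\|$ up to the last partial interval, which is handled by the same bound (or, if one prefers not to rely on piecewise linearity, use $\|\bg^{(c)}_i\|\le C$ a.s.\ uniformly to bound the increment over one step of length $1/c$ by $C/c\to 0$). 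Combining with the previous paragraph yields $\sup_{t\in[0,T]}\|\bar{\bY}^{(c)}(t)-\by^{(c)}(t)\|\to 0$ a.s.

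\textbf{Main obstacle.} The crux is the uniform (in $i$, over a window of length growing like $c$) control of the martingale sum with the correct scaling---i.e.\ establishing $\E[\sup_i\|M^{(c)}_i\|^4]=O(1/c^2)$ so that the bound is summable in $c$ and Borel--Cantelli applies. This is exactly where Burkholder's inequality for triangular arrays enters, and verifying the uniform fourth-moment bound on the martingale differences $\bve^{(c)}_{i+1}$ (in particular that the conditional-mean-subtraction does not inflate moments beyond a constant factor, and that the $x^{(c)}_{f,b}(i)$, though set-valued and state-dependent, are bounded) is the step that requires care; the grid-to-continuum passage and the algebraic identification of the difference as a martingale sum are routine.
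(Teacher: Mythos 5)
Your proposal is correct and follows essentially the same route as the paper: identify the discrepancy at grid points $t=i/c$ as the scaled martingale sum $M^{(c)}_i = \tfrac{1}{c}\sum_{m<i}\bve^{(c)}_{m+1}$, control $\sup_i\|M^{(c)}_i\|$ via the Burkholder-based uniform SLLN (Lemma~\ref{lem: SLLN for epsilon}), and then handle non-grid times. The only meaningful difference is in that last step: you observe that $\bar{\bY}^{(c)}-\by^{(c)}$ is affine on each cell $[i/c,(i+1)/c]$ so its norm attains its sup at a grid endpoint, whereas the paper instead bounds the residual $\bar{\bY}^{(c)}([t])+\int_{[t]}^t\bar{\bg}^{(c)}-\bar{\bY}^{(c)}(t)$ directly and invokes Lemma~\ref{lem: change in N bounded} to kill it; your convexity-of-the-norm shortcut is a small but genuine simplification since it makes that auxiliary lemma unnecessary for this step. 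One small imprecision to fix in a writeup: Burkholder's inequality alone gives a bound on $\mathbb{E}\|M^{(c)}_{\lceil Tc\rceil}\|^4$ at the \emph{fixed} endpoint; you need Doob's $L^p$ maximal inequality on top of it (as the paper's proof of Lemma~\ref{lem: SLLN for epsilon} does) to upgrade to $\mathbb{E}[\sup_i\|M^{(c)}_i\|^4]$ before applying Markov and Borel--Cantelli.
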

\begin{proof}
   Let's fix a sample path and fix $t\in [0,T]$, let $[t]: = \max\{i/c: i/c \leq t, i\in \mathbb{N}\}$, then 
\begin{align*}
    \by^{(c)}(t) & \stackrel{(a)}{=} \by^{(c)}(0) + \int_0^t \bar{\bg}^{(c)}(u)\,du \stackrel{(b)}{=} \bar\bY^{(c)}(0) + \int_0^{[t]} \bar{\bg}^{(c)}(u)\,du + \int_{[t]}^{t} \bar{\bg}^{(c)}(u)\,du \\
    & \stackrel{(c)}{=} \bar\bY^{(c)}(0) + \sum_{k=0}^{[t]-1}\frac{1}{c}{\bg}^{(c)}_k + \int_{[t]}^{t} \bar{\bg}^{(c)}(u)\,du \stackrel{(d)}{=} \bar\bY^{(c)}([t]) - \sum_{k=1}^{[t]}\frac{1}{c}\bve^{(c)}_k + \int_{[t]}^{t} \bar{\bg}^{(c)}(u)\,du \\
    & \stackrel{(e)}{=} \bar{\bY}^{(c)}(t) - \sum_{k=1}^{[t]}\frac{1}{c}\bve^{(c)}_k + \left(\bar{\bY}^{(c)}([t]) + \int_{[t]}^{t} \bar{\bg}^{(c)}(u)\,du  - \bar{\bY}^{(c)}(t) \right),
\end{align*}
where $(a)$ follows as $\by^{(c)}(\cdot)$ is the solution to the differential equation $\dot{\by}^{(c)}(t) = \bar{\bg}^{(c)}(t)$; $(b)$ follows as $\by^{(c)}(0) = \bar{\bY}^{(c)}(0)$ by definition; $(c)$ follows as $\bar{\bg}^{(c)}(t) = \bg^{(c)}_i, t\in [i/c, (i+1)/c)$ are piecewise constant in each interval; $(d)$ follows as $\bar\bY^{(c)}(0) = \bY^{(c)}_0$, $\bar\bY^{(c)}([t]) = \bY^{(c)}_{[t]}$ and the stochastic recursion \eqref{eq: Yn+1 from Yn}; $(e)$ follows by adding and subtracting $\bar{\bY}^{(c)}(t)$.

We proceed to bound the norm of the term in the bracket:
\begin{align*}
    \left\|\bar{\bY}^{(c)}([t]) + \int_{[t]}^{t} \bar{\bg}^{(c)}(u)\,du  - \bar{\bY}^{(c)}(t) \right\| & \stackrel{(a)}{\leq} \|\bar{\bY}^{(c)}([t])  - \bar{\bY}^{(c)}(t)\| + \int_{[t]}^{t} \|\bar{\bg}^{(c)}(u)\|\,du \\
    & \stackrel{(b)}{\leq} \frac{1}{c} \left(\|\bve^{(c)}_{[t]+1}\| + 2 \|{\bg}^{(c)}_{[t]}\|\right),
\end{align*}
where $(a)$ follows the triangle inequality; $(b)$ follows from $\bar{\bY}^{(c)}([t])  - \bar{\bY}^{(c)}(t) \leq ([t]-t)( {\bg}^{(c)}_{[t]}+ \bve^{(c)}_{[t]+1})$ and $[t]-t < 1/c$, and $\int_{[t]}^{t} \|\bar{\bg}^{(c)}(u)\|\,du = (t-[t]) \|{\bg}^{(c)}_{[t]}\| \leq \frac{1}{c}\|{\bg}^{(c)}_{[t]}\|$.

By Lemma \ref{lem: change in N bounded} and $\mu_b$ are bounded by assumption, we have 
$\lim_{c\to \infty} \frac{1}{c} \sup_{1\leq [t] \leq Tc}(\|\bve^{(c)}_{[t]}\| + 2 \|{\bg}^{(c)}_{[t]}\|) = 0$.

Combining with Lemma \ref{lem: SLLN for epsilon}, the result thus follows. \Halmos \end{proof}

Therefore, Lemma \ref{lem: barY to y} implies that $\{\bar{\bY}^{(c)}(\cdot)\}_{c\in \mathbb{N}}$ is also relatively compact, otherwise this contradicts the compactness of $\{\by^{(c)}(\cdot)\}_{c\in \mathbb{N}}$.

Thus, there exists a subsequence\footnote{{Throughout this proof, whenever we pass to a subsequence, we will denote it by the same symbol as the original sequence to simplify notation.}} such that 
$\bar{\bY}^{(c)}(\cdot) \to \by(\cdot) \text{ in } C([0,T]; \mathbb{R}^{|\mathcal{B}|}).$

As $\{\bar{\bg}^{(c)}(\cdot)\}_{c\in \mathbb{N}}$ is a subset of $L_2([0,T], \mathbb{R}^{|\mathcal{B}|})$ with bounded norm, by the Banach-Alaoglu Theorem \citep{rudin1991functional} it is weakly relatively sequentially compact. Thus, the subsequence has a further subsubsequence such that 
$\bar{\bg}^{(c)}(\cdot) \to \bg(\cdot) \text{ weakly in } L^2([0,T]; \mathbb{R}^{|\mathcal{B}|}).$

Recall that
$\by^{(c)}(t) = \by^{(c)}(0) + \int_0^t \bar{\bg}^{(c)}(u)\, du,$
taking $c\to \infty$ we have 
$\by(t) = \by(0) + \int_0^t \bg(u)\, du.$

We conclude by showing that $\bg(u) \in H(\by(u))$ for $u\in [0,t]$. 

By the Banach-Saks Theorem \citep{partington1977banach}, since $\bar{\bg}^{(c)}(\cdot) \to \bg(\cdot)$ weakly in $L^2([0,T]; \mathbb{R}^{|\mathcal{B}|})$, passing to a further subsequence (denoted by the same symbol), the Ces\`{a}ro means converge almost everywhere:
\[\frac{1}{c} \sum_{i=1}^{c} \bar{\bg}^{(i)}(u) \to \bg(u) \quad \text{a.e.\ } u \in [0,T].\]

We now argue that $\bg(u) \in H(\by(u))$ for a.e.\ $u$. Write $[u] := \max\{k/c: k/c \leq u, k\in \mathbb{N}\}$, so $\bar{\bg}^{(c)}(u) = \bg^{(c)}_{[u]} \in H(\bY^{(c)}_{[u]})$ by definition of GMSR. From the linear-interpolation definition of $\bar{\bY}^{(c)}$,
\[\bar{\bY}^{(c)}(u) = \bY^{(c)}_{[u]} + (u-[u])\bigl(\bg^{(c)}_{[u]} + \bve^{(c)}_{[u]+1}\bigr),\]
so $\|\bY^{(c)}_{[u]} - \bar{\bY}^{(c)}(u)\| \leq \frac{1}{c}\bigl(\|\bg^{(c)}_{[u]}\| + \|\bve^{(c)}_{[u]+1}\|\bigr) \to 0$, since $\|\bg^{(c)}_{[u]}\|$ is bounded (service rates are bounded) and $\frac{1}{c}\|\bve^{(c)}_{[u]+1}\| \to 0$ by Lemma~\ref{lem: change in N bounded}. Combined with $\bar{\bY}^{(c)}(u) \to \by(u)$, this gives $\bY^{(c)}_{[u]} \to \by(u)$ as $c\to\infty$. By the upper semi-continuity of $H$,
$\mathrm{dist}\!\left(\bar{\bg}^{(c)}(u),\, H(\by(u))\right) \to 0 \quad \text{as } c\to\infty.$
Since $H(\by(u))$ is convex, the distance function $v \mapsto \mathrm{dist}(v,\, H(\by(u)))$ is convex, so
\[\mathrm{dist}\!\left(\frac{1}{c}\sum_{i=1}^{c} \bar{\bg}^{(i)}(u),\, H(\by(u))\right) \leq \frac{1}{c}\sum_{i=1}^{c} \mathrm{dist}\!\left(\bar{\bg}^{(i)}(u),\, H(\by(u))\right) \to 0,\]
where the last step uses Ces\`{a}ro's theorem (the average of a sequence converging to 0 also converges to 0). Since $\frac{1}{c}\sum_{i=1}^{c}\bar{\bg}^{(i)}(u) \to \bg(u)$ and $H(\by(u))$ is closed, continuity of distance gives $\mathrm{dist}(\bg(u),\, H(\by(u))) = 0$, and hence $\bg(u) \in H(\by(u))$ for a.e.\ $u \in [0,T]$.

\begin{lemma}\label{lem: change in N bounded}
Let $Z_b^{(c)}(i) = \sum_{f\in \mathcal{F}(b)} A_{f,b}^{(c)}(i) - D_b^{(c)}(i)$, for any $T >0$, $\lim_{c\to\infty} \sup_{1\leq i\leq Tc} \frac{1}{c}  |Z_b^{(c)}(i)| = 0 \text{ a.s. }$.
\end{lemma}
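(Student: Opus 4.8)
The plan is to dominate $|Z_b^{(c)}(i)|$ by a nonnegative random variable with uniformly bounded fourth moment, and then run a Borel--Cantelli argument over the sequence of systems. First I would note that the routed amounts are nonnegative counts conserving the arrivals, so $0 \le A_{f,b}^{(c)}(i) \le \sum_{b'\in\mathcal{B}(f)} A_{f,b'}^{(c)}(i) = W_f^{(c)}(i)$, and $D_b^{(c)}(i)\ge 0$; hence
\[
|Z_b^{(c)}(i)| \;\le\; \sum_{f\in\mathcal{F}(b)} A_{f,b}^{(c)}(i) + D_b^{(c)}(i) \;\le\; \underbrace{\sum_{f\in\mathcal{F}(b)} W_f^{(c)}(i) + D_b^{(c)}(i)}_{=:\,M_b^{(c)}(i)}.
\]
By Minkowski's inequality, $\|M_b^{(c)}(i)\|_{4} \le \sum_{f\in\mathcal{F}(b)}\|W_f^{(c)}(i)\|_{4} + \|D_b^{(c)}(i)\|_{4}$, and by the standing fourth-moment hypotheses on arrivals and departures the right-hand side is bounded by a constant $C_0$ that does not depend on $c$ or on $1\le i\le\lfloor Tc\rfloor$. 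Thus $C := \sup_c \sup_{1\le i\le\lfloor Tc\rfloor}\mathbb{E}\big[(M_b^{(c)}(i))^{4}\big] \le C_0^{4} < \infty$.

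Next, for a fixed $\varepsilon > 0$ I would combine a union bound over the (at most $\lfloor Tc\rfloor$) time steps with Markov's inequality at the fourth power:
\[
\mathbb{P}\!\left(\sup_{1\le i\le\lfloor Tc\rfloor}\tfrac{1}{c}|Z_b^{(c)}(i)| > \varepsilon\right)
\le \sum_{i=1}^{\lfloor Tc\rfloor}\mathbb{P}\!\left(M_b^{(c)}(i) > \varepsilon c\right)
\le \sum_{i=1}^{\lfloor Tc\rfloor}\frac{\mathbb{E}\big[(M_b^{(c)}(i))^{4}\big]}{\varepsilon^{4} c^{4}}
\le \frac{T C}{\varepsilon^{4}\, c^{3}}.
\]
Since $\sum_{c\ge 1} c^{-3} < \infty$, the Borel--Cantelli lemma gives that, almost surely, $\sup_{1\le i\le\lfloor Tc\rfloor}\tfrac{1}{c}|Z_b^{(c)}(i)| \le \varepsilon$ for all sufficiently large $c$. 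Applying this with $\varepsilon = 1/m$ for every $m\in\mathbb{N}$ and intersecting the resulting full-measure events yields $\lim_{c\to\infty}\sup_{1\le i\le Tc}\tfrac{1}{c}|Z_b^{(c)}(i)| = 0$ almost surely. (As in the rest of this section, this is read on a common probability space carrying all the systems; if instead the systems are constructed separately, the almost-sure conclusion still follows from the same tail bound per realization.)

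There is no deep obstacle here: this is a routine maximal-inequality-plus-Borel--Cantelli estimate. The only place the hypotheses genuinely bite is the exponent --- a second-moment bound would leave a tail of order $c^{-1}$ after the union bound, which is not summable, so the uniform \emph{fourth}-moment control on $W_f^{(c)}$ and $D_b^{(c)}$ is exactly what makes the Borel--Cantelli step go through. A minor point to get right is the domination $A_{f,b}^{(c)}(i)\le W_f^{(c)}(i)$, which relies on nonnegativity of the routed flows together with flow conservation at the frontends.
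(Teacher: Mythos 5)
Your proof is correct and follows essentially the same route as the paper's: dominate $|Z_b^{(c)}(i)|$ by $\sum_{f}W_f^{(c)}(i)+D_b^{(c)}(i)$ via $A_{f,b}^{(c)}(i)\le W_f^{(c)}(i)$, pass to a uniform fourth-moment bound using Minkowski, combine a union bound over $i$ with Markov at the fourth power, and finish with Borel--Cantelli over $c$. The paper phrases the maximal step as bounding $\sup_i|Z|^q$ by $\sum_i|Z|^q$ with a general $2<q\le 4$ before applying Markov once, which is equivalent to your step-by-step union bound with $q=4$.
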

\begin{proof}
    For any $\epsilon >0$, let $2< q \leq 4$, 
    \begin{align*}
        \mathbb{P}\left(\sup_{1\leq i\leq Tc} \frac{1}{c}  |Z_b^{(c)}(i)|>\epsilon\right) & \stackrel{(a)}{\leq} \frac{\mathbb{E}\left[\sup_{1\leq i\leq Tc} |Z_b^{(c)}(i)|^q\right]}{c^q \epsilon^q} \stackrel{(b)}{\leq} \frac{\sum_{1\leq i\leq Tc}\mathbb{E}\left[ |Z_b^{(c)}(i)|^q\right]}{c^q \epsilon^q} \stackrel{(c)}{\leq} \frac{Tc K_z}{c^q \epsilon^q},
    \end{align*}
where $(a)$ follows Chebyshev’s inequality; $(b)$ follows as for any sequence $\{a_i\}$, we have $\sum_{i} |a_i| \geq \sup_i |a_i|$; $(c)$ follows by Minkowski’s Inequality and the definition of $Z_b^{(c)}(i)$:
\begin{align}\label{eq: Z momnets bounded}
    \left(\mathbb{E}\left[ |Z_b^{(c)}(i)|^q\right]\right)^{1/q} \leq \sum_{f\in \mathcal{F}(b)}\left(\mathbb{E}\left[ |A_{f,b}^{(c)}(i)|^q\right]\right)^{1/q}  +\left(\mathbb{E}\left[ |D_b^{(c)}(i)|^q\right]\right)^{1/q} \leq K_z^{1/q}, 
\end{align}
    here $K_z^{1/q}$ is a finite constant as $A_{f,b}^{(c)}(i) \leq \sum_{b\in \mathcal{B}(f)}A_{f,b}^{(c)}(i) = W^{(c)}_f(i)$, and both $W^{(c)}_f(i)$ and $D_b^{(c)}(i)$ has finite fourth moment.
Therefore, 
\[\sum_{c=1}^\infty \mathbb{P}\left(\sup_{1\leq i\leq Tc} \frac{1}{c}  |Z_b^{(c)}(i)|>\epsilon\right) \leq \sum_{c=1}^\infty \frac{T K_z}{c^{q-1} \epsilon^q} < \infty.\]
By Borel-Cantelli lemma, as $\epsilon >0$ was arbitrary, the result follows.
\Halmos \end{proof} 
\begin{lemma}\label{lem: SLLN for epsilon}
For any $T >0$, 
\[\lim_{c\to\infty} \sup_{1\leq k \leq Tc} \frac{1}{c}\sum_{i=1}^{k}\bve^{(c)}_i = 0 \text{ a.s..}\]
\end{lemma}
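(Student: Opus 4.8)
The plan is to exploit that, for each fixed $c$, the partial sums $M^{(c)}_k := \sum_{i=1}^{k}\bve^{(c)}_i$ form a martingale with respect to the filtration $\mathcal{F}^{(c)}_k := \sigma(\bY^{(c)}_m, \bve^{(c)}_m : m\le k)$, since $\{\bve^{(c)}_i\}$ is a martingale difference sequence by construction. Because $\mathcal{B}$ is finite, it suffices to prove the statement coordinate by coordinate: fix $b\in\mathcal{B}$, set $M^{(c)}_{k,b} := \sum_{i=1}^{k}\varepsilon_b^{(c)}(i)$, and show $\tfrac1c\sup_{1\le k\le \lceil Tc\rceil}|M^{(c)}_{k,b}|\to 0$ a.s.; since $|\sum_{i\le k}\bve^{(c)}_i|\le\sum_{b\in\mathcal B}|M^{(c)}_{k,b}|$, summing the finitely many coordinates then yields the claim for the Euclidean norm. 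The overall strategy is a moment bound on the maximal function of this martingale followed by Borel--Cantelli in $c$.

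First I would record a uniform bound on the increments. Writing $\varepsilon_b^{(c)}(i+1) = Z_b^{(c)}(i) - g_b^{(c)}(i)$ with $Z_b^{(c)}(i) = \sum_{f\in\mathcal{F}(b)}A_{f,b}^{(c)}(i) - D_b^{(c)}(i)$ and $g_b^{(c)}(i) = \sum_{f\in\mathcal{F}(b)}\lambda_f x^{(c)}_{f,b}(i)-\mu_b(Y_b^{(c)}(i))$, the drift term is uniformly bounded, $|g_b^{(c)}(i)|\le \sum_{f}\lambda_f + \sup_x\mu_b(x)$ (routing weights lie in the simplex and $\mu_b$ is bounded by Assumption~\ref{assmp: increasing concave smooth ell}), while $Z_b^{(c)}(i)$ has uniformly bounded $q$-th moments for $2<q\le 4$ by the hypotheses on $W^{(c)}_f$ and $D^{(c)}_b$ (this is exactly \eqref{eq: Z momnets bounded}). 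Hence, by Minkowski's inequality, there is a constant $K_\varepsilon<\infty$ with $\sup_{c}\sup_{i\ge 1}\mathbb{E}[|\varepsilon_b^{(c)}(i)|^q]\le K_\varepsilon$ for every $q\in(2,4]$.

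Next I would fix $q\in(2,4]$ and $c\in\mathbb{N}$, set $n:=\lceil Tc\rceil$, and apply the Burkholder--Davis--Gundy inequality together with Doob's $L^q$ maximal inequality to the scalar martingale $M^{(c)}_{\cdot,b}$:
\[
\mathbb{E}\Big[\sup_{1\le k\le n}|M^{(c)}_{k,b}|^{q}\Big] \;\le\; C_q\,\mathbb{E}\Big[\Big(\textstyle\sum_{i=1}^{n}\varepsilon_b^{(c)}(i)^{2}\Big)^{q/2}\Big] \;\le\; C_q\, n^{q/2-1}\sum_{i=1}^{n}\mathbb{E}\big[|\varepsilon_b^{(c)}(i)|^{q}\big] \;\le\; C_q K_\varepsilon\, n^{q/2},
\]
where the middle step uses convexity of $x\mapsto x^{q/2}$ (valid since $q\ge 2$); alternatively one can bypass BDG and obtain the same bound by expanding $\mathbb{E}[|M^{(c)}_{n,b}|^{q}]$ and discarding the mixed terms carrying an index of odd multiplicity via the martingale property. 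Markov's inequality then gives, for any $\epsilon>0$,
\[
\mathbb{P}\Big(\tfrac1c\sup_{1\le k\le n}|M^{(c)}_{k,b}|>\epsilon\Big)\;\le\;\frac{\mathbb{E}\big[\sup_{k\le n}|M^{(c)}_{k,b}|^{q}\big]}{c^{q}\epsilon^{q}}\;\le\;\frac{C_q K_\varepsilon (\lceil Tc\rceil)^{q/2}}{c^{q}\epsilon^{q}}\;\le\;\frac{C'}{c^{q/2}\,\epsilon^{q}}.
\]
Since $q/2>1$, the bound is summable in $c$, so Borel--Cantelli gives $\limsup_{c}\tfrac1c\sup_{k\le \lceil Tc\rceil}|M^{(c)}_{k,b}|\le\epsilon$ a.s.; intersecting over $\epsilon=1/m$, $m\in\mathbb{N}$, yields convergence to $0$ a.s., and summing over $b\in\mathcal{B}$ finishes the proof.

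The crux of the argument — and the reason more than two moments are needed — is the competition between the growth of the martingale over its $\Theta(Tc)$ steps and summability over $c$: an $L^2$ bound would only produce a $\sum_c n/(c^2\epsilon^2)\sim\sum_c 1/(c\epsilon^2)$ tail, which diverges, so Borel--Cantelli would fail. The fourth-moment hypotheses on arrivals and departures are precisely what is needed to run the estimate with some $q>2$, and BDG (or the direct $q$-th-moment expansion) is the tool converting the uniform increment moments into the $n^{q/2}$ scaling of the maximal function. The only other point to verify carefully is that $g_b^{(c)}(i)$ really is the conditional mean of $Z_b^{(c)}(i)$ given $\mathcal{F}^{(c)}_i$, so that $\{\varepsilon_b^{(c)}(i)\}$ is a martingale difference sequence and $M^{(c)}_{\cdot,b}$ is adapted; this is immediate from the way $A_{f,b}^{(c)}(i)$ and $D_b^{(c)}(i)$ are specified.
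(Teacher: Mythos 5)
Your proof is correct and follows essentially the same route as the paper's: martingale structure of the partial sums, a Burkholder-type moment bound giving $n^{q/2}$ scaling of the maximal function, and Borel--Cantelli in $c$. The only differences are cosmetic -- the paper fixes $q=4$ (using Burkholder plus Doob's maximal inequality and expanding $\mathbb{E}[(\sum_i \bve_i^2)^2]$ via Cauchy--Schwarz), whereas you phrase it for general $q\in(2,4]$ with the convexity bound $(\sum_{i\le n} a_i)^{q/2}\le n^{q/2-1}\sum_i a_i^{q/2}$, and you make the coordinate-wise reduction explicit; your closing remark pinpointing $q>2$ as the exact summability threshold is a useful observation that the paper leaves implicit.
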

\begin{proof}
As $\{\bve^{(c)}_i\}$ is martingale difference, we have $M_k^{(c)} = \sum_{i=1}^{k}\bve^{(c)}_i$ is a martingale with respect to $\{\sigma(\bve^{(c)}_1, \ldots, \bve^{(c)}_{k-1})\}$. 
We then have 
\begin{align*}
    \mathbb{E}\left[ (M_k^{c})^4 \right] & \stackrel{(a)}{\leq} K_4 \mathbb{E}\left[\left(\sum_{i=1}^k (\bve^{(c)}_i)^2\right)^2\right] = K_4 \mathbb{E}\left[\sum_{1\leq i, j \leq k} (\bve^{(c)}_i)^2  (\bve^{(c)}_j)^2\right] \\
    & \stackrel{(b)}{=} K_4 \sum_{1\leq i, j \leq k}  \mathbb{E}\left[(\bve^{(c)}_i)^2  (\bve^{(c)}_j)^2\right] \stackrel{(c)}{\leq} K_4 \sum_{1\leq i, j \leq k}  \sqrt{\mathbb{E}[(\bve^{(c)}_i)^4]}  \sqrt{\mathbb{E}[(\bve^{(c)}_j)^4]} \stackrel{(d)}{\leq} K_m k^2.
\end{align*}
where $(a)$ follows from Burkholder’s Inequality (\citet{burkholder1966martingale} Theorem 9) with $K_4>0$ being a constant; $(b)$ follows from linearity of expectation; $(c)$ follows from Cauchy–Schwarz  inequality; $(d)$ follows from \eqref{eq: Z momnets bounded} where we show $\mathbb{E}[(\bve^{(c)}_i)^4] < \infty$, $K_m >0$ is a constant.
Therefore, by Doob's Maximal inequality, 
\[\mathbb{P}\left(\sup_{1\leq k \leq Tc} |M_k^{(c)}| \geq c\epsilon\right) \leq \frac{\mathbb{E}[\max\{(M_{\lceil Tc \rceil}^{(c)})^4, 0\}]}{(c\epsilon)^4} = \frac{\mathbb{E}[(M_{\lceil Tc \rceil}^{(c)})^4]}{(c\epsilon)^4} \leq \frac{K_m T^2}{\epsilon^4 c^2},\]
which implies 
$\sum_{c=1}^\infty\mathbb{P}\left(\sup_{1\leq k \leq Tc} |M_k^{(c)}|  \geq c\epsilon \right) < \infty$.
By Borel-Cantelli Lemma, the result follows.
\Halmos \end{proof}

\begin{theorem}[Theorem 9 \citet{burkholder1966martingale}]
    Let $1<p<\infty$. There are positive real numbers $k_p, K_p$ such that if $\{M_k\}$ is a martingale then 
    $k_p \mathbb{E}[S_k^p] \leq \mathbb{E}[|M_k|^p] \leq K_p \mathbb{E}[S_k^p]$,
    where $S_k = \sqrt{\sum_{i=1}^k (M_{i}-M_{i-1})^2}$, $M_0 = 0$.
\end{theorem}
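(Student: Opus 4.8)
The plan is to recognize this as the classical discrete Burkholder square‑function inequality (\citet{burkholder1966martingale}) and, beyond simply citing it, to sketch a self‑contained argument for the range $p\ge 2$, which is all that Lemma~\ref{lem: SLLN for epsilon} actually invokes (there $p=4$, and only the upper direction). Write $d_i = M_i - M_{i-1}$, $M_k^* = \max_{j\le k}|M_j|$, and recall two standard facts: Doob's maximal inequality $\mathbb{E}[(M_k^*)^p] \le (p/(p-1))^p\,\mathbb{E}[|M_k|^p]$ for $p>1$, and the $p=2$ isometry $\mathbb{E}[|M_k|^2] = \mathbb{E}[S_k^2] = \sum_i\mathbb{E}[d_i^2]$ (the cross terms $\sum_{i<j}d_id_j$ form a martingale and vanish in expectation).

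For the upper estimate $\mathbb{E}[|M_k|^p]\le K_p\,\mathbb{E}[S_k^p]$ I would Taylor‑expand $x\mapsto|x|^p$ (which is $C^2$ with second derivative $p(p-1)|x|^{p-2}$): there is $\xi_i$ between $M_{i-1}$ and $M_i$ with $|M_i|^p-|M_{i-1}|^p = p|M_{i-1}|^{p-2}M_{i-1}d_i + \tfrac12 p(p-1)|\xi_i|^{p-2}d_i^2$. Conditioning on $\mathcal F_{i-1}$ kills the linear term, and for $p\ge2$ one has $|\xi_i|^{p-2}\lesssim_p |M_{i-1}|^{p-2}+|d_i|^{p-2}$; summing in $i$, taking expectations, and using $\sum_i|d_i|^p\le S_k^p$ gives $\mathbb{E}[|M_k|^p]\lesssim_p \mathbb{E}[(M_k^*)^{p-2}S_k^2]+\mathbb{E}[S_k^p]$. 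Hölder with exponents $p/(p-2),p/2$ on the first term and then Doob bound it by $\lesssim_p \|M_k\|_p^{p-2}\|S_k\|_p^2$, and Young's inequality lets one absorb this into $\|M_k\|_p^p$. The reverse estimate $k_p\,\mathbb{E}[S_k^p]\le\mathbb{E}[|M_k|^p]$ follows by writing $S_k^2 = M_k^2-2T_k$ with $T_k:=\sum_{i<j}d_id_j$ a martingale whose increments are $d_jM_{j-1}$, so $\mathbb{E}[S_k^p]\lesssim_p \mathbb{E}[|M_k|^p]+\mathbb{E}[|T_k|^{p/2}]$; since the square function of $T_k$ is $(\sum_j d_j^2M_{j-1}^2)^{1/2}\le M_k^*S_k$, applying the $L^2$‑isometry (when $p=4$, exactly the case needed) or the already‑proven upper estimate at exponent $p/2$ (for larger $p$), together with Cauchy--Schwarz and Doob, gives $\mathbb{E}[|T_k|^{p/2}]\lesssim_p \|M_k\|_p^{p/2}\|S_k\|_p^{p/2}$, and Young's inequality with absorption again closes the loop.

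The main obstacle — and the reason a clean proof of the full $1<p<\infty$ statement is genuinely harder — is twofold. First, the absorption steps are only legitimate once $\|M_k\|_p$ and $\|S_k\|_p$ are known to be finite; for a martingale with finitely many steps this is arranged by an elementary truncation (replace each $d_i$ by $d_i\mathbf{1}\{|d_i|\le m\}$ and let $m\to\infty$, or assume bounded increments throughout), but it must be said explicitly. Second, the range $0<p\le2$ (not needed here) is out of reach of the Taylor‑expansion argument and requires one of the heavier classical tools: a duality argument against the $L^{p'}$ bound, the Davis decomposition combined with weak‑$(1,1)$ and $L^\infty$ estimates, a good‑$\lambda$ inequality relating $M_k^*$ and $S_k$, or Burkholder's later biconcave‑function method (which additionally yields sharp constants, at the cost of being much less transparent to discover). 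Since this paper invokes only $p=4$, the argument above suffices.
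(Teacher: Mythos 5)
The paper does not prove this statement: it is quoted verbatim as Theorem~9 of \citet{burkholder1966martingale} and invoked as a black box in the proof of Lemma~\ref{lem: SLLN for epsilon} (only the upper inequality, and only at $p=4$). There is no internal proof to compare against, so your proposal supplies a derivation where the paper merely cites.

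Your sketch of the upper bound for $p\ge 2$ is sound and is essentially the textbook argument: telescoping $|M_k|^p$, Taylor with the martingale term vanishing after conditioning, the pointwise bounds $\sum_i |M_{i-1}|^{p-2}d_i^2 \le (M_k^*)^{p-2}S_k^2$ and $\sum_i |d_i|^p \le S_k^p$ (valid since $p\ge 2$), then H\"older, Doob, and Young with absorption guarded by a truncation to ensure finiteness. That is precisely the estimate the paper uses, so the derivation suffices for its application.

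One small inaccuracy in your self-assessment: you locate the difficulty only in the range $0<p\le 2$, but your lower-bound bootstrap --- bounding $\mathbb{E}[|T_k|^{p/2}]$ by ``the already-proven upper estimate at exponent $p/2$'' --- also breaks for $2<p<4$, since there $p/2\in(1,2)$ lies outside what the Taylor argument covers, and neither Jensen nor the $L^2$ isometry closes the loop. You cover $p=4$ separately via the isometry and note that is the only case needed, so your conclusion stands, but the opening claim of a ``self-contained argument for the range $p\ge 2$'' is slightly too broad for the two-sided inequality as sketched.
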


\section{Differential Inclusion}\label{apx: DI}
\begin{definition}[Definition 2.2.1 \citet{kunze2000nonsmooth}]
    Let $I \subseteq \mathbb{R}$ be an interval with $0 \in I$, $\mathcal{N} \subseteq \mathbb{R}^n, \bN_0 \in \mathcal{N}$, and $H: \mathcal{N}\mapsto \mathcal{V}$ a multi-valued mapping. A function $\bN$ such that 
    \begin{itemize}
        \item $\bN: I \mapsto \mathcal{N}$ is absolutely continuous on $I$,
        \item $\bN(0) = \bN_0$, and 
        \item $\dot{\bN}(t) \in H(\bN(t))$ for almost everywhere $t\in I$
    \end{itemize}
    is called a solution of the differential inclusion $\dot{\bN} \in H(\bN)$ a.e., $\bN(0) = \bN_0$.
\end{definition}

\subsection{Proof of Lemma \ref{lem: Equilibrium Points are Fluid Optimal}}
\begin{proof}
At an equilibrium point $\bN^{eq}$, we must have the temporal derivatives equal zero, i.e., $\bm{0} \in H(\bN^{eq})$. Therefore, there exists $\bx_f\in X_f(\bN^{eq})$ for all $f\in \mathcal{F}$ such that
\[\sum_{f\in\mathcal{F}(b)} \lambda_f x_{f,b} = \mu_b(N_b^{eq}), \forall b \in \mathcal{B}.\]
Therefore, $(\bN^{eq}, \bx)$ is a feasible solution to the optimization problem \eqref{eq: FLU}. We claim that $\mu_b'(N_b^{eq}) > 0$ for all backends $b$ with positive flow. Under GMSR, if $\mu_b'(N_b^{eq}) = 0$ for some backend $b$, no frontend routes to $b$ unless every connected backend also has gradient zero (since any connected backend with positive gradient is strictly preferred). Let $P = \{f \in \mathcal{F}: \mu_{b'}'(N_{b'}^{eq}) = 0 \text{ for all } b' \in \mathcal{B}(f)\}$. For each backend $b$ connected to a frontend in $P$, since $\mu_b'(N_b^{eq}) = 0$ and $\mu_b$ is concave and non-decreasing, $\mu_b$ is constant on $[N_b^{eq}, \infty)$, so $\mu_b(N_b^{eq}) = \mu_b(\infty)$. No frontend $f \notin P$ sends flow to such backends (since $f \notin P$ has a connected backend with gradient $> 0$, which is strictly preferred). Flow balance then gives $\sum_{f\in P}\lambda_f = \sum_{b\in \cup_{f\in P}\mathcal{B}(f)} \mu_b(\infty)$, contradicting Assumption~\ref{assmp: system stable} (strict feasibility). Therefore $P = \emptyset$, and $\mu_b'(N_b^{eq}) > 0$ for all $b$ with positive flow.

By definition of $X_f(\bm{N})$, if $x_{f,b}> 0$, $\mu'_b(N_b) = \max_{j \in \mathcal{B}(f)} \mu'_j(N_j)$. Then by Lemma \ref{lem: optimal FLU structure}, $\bN^{eq}$ satisfies the first-order conditions, which we prove is also sufficient, thus $\bN^{eq}$ is optimal for \eqref{eq: FLU}. Since $\bN^*$ is the unique optimum (Lemma~\ref{lem: unique N star}), we conclude $\bN^{eq} = \bN^*$. Conversely, $\bN^*$ is an equilibrium because flow balance holds at optimality. \Halmos \end{proof}

\section{Stability Analyses}\label{sec:appendix-stability}


\subsection{Proof of Lemma \ref{lem: m V positive definite}}
\begin{proof}
If $(\bN, \bx)  = (\bN^*, \bx^*)$, then $V(\bN, \bx)= \sum_{b \in\mathcal{B}} |\sum_{f\in \mathcal{F}(b)}\lambda_f x_{f,b}^* -\mu_b(N_b^*)| = 0$ by the definition of $\bx^*$.

If $V(\bN, \bx)=0$, then we must have $\sum_{f\in \mathcal{F}(b)}\lambda_f x_{f,b}  - \mu_b(N_b) = 0$ for all $b\in\mathcal{B}$. Meanwhile, as $\bx_f \in X_f(\bN)$, this implies $\sum_{b\in \mathcal{B}(f)} x_{f,b} = 1$ and $\bx_f \geq \bm{0}$ for all $f\in \mathcal{F}$. Therefore, $\bN$ is a feasible solution to the optimization problem \eqref{eq: FLU}.

Moreover, for any $f \in \mathcal{F}$, as $\bx_f \in X_f(\bN)$,
$\mu_{b_1}'(N_{b_1}) = \mu_{b_2}'(N_{b_2}) \text{ for } b_1, b_2 \in \{b\in \mathcal{B}(f): x_{f,b} >0\},$
Therefore, by Lemma \ref{lem: optimal FLU structure}, $\bN = \bN^*$, which implies $\bx = \bx^*$.\Halmos \end{proof}

\subsection{TierGraphs}

\paragraph{Proof of Corollary \ref{cor:gradient decreases along tiers}}
We first prove a useful lemma.
\begin{lemma}\label{lem:low tier low gradient}
    For two tiers $(F_i, B_i), (F_j, B_j)$ with $F_i, F_j \subseteq \mathcal{F}, B_i, B_j \subseteq \mathcal{B}$ at $\bN$, let vertex $v_i$ and vertex $v_j$ denote their corresponding vertices in the TierGraph $\mathcal{T}(\bN)$. If $v_i\to v_j$, then $\mu_{B_i}'(\bN) > \mu_{B_j}'(\bN)$.
\end{lemma}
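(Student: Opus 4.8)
The plan is to prove Lemma~\ref{lem:low tier low gradient} directly from the definitions of the best backend graph and of an arc in the TierGraph. Suppose $v_i \to v_j$. By Definition~\ref{def:tiergraph}, there exist a frontend $f \in F_i$ and a backend $b \in B_j$ with $(f,b) \in \mathcal{E}$. The key observation is that $f$ belongs to tier $i$, so by the second condition in the definition of a tier (Definition~\ref{def:tier}), every edge of the best backend graph ${E}(\bN)$ incident to $f$ must lead into $B_i$. Hence the backend $b \in B_j$, which lies outside $B_i$ (tiers are disjoint, and $i \neq j$), satisfies $(f,b) \notin {E}(\bN)$, i.e.\ $b \notin S_f(\bN)$.

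Now I would unpack what $b \notin S_f(\bN)$ means. By Definition~\ref{def: best backend graph}, $S_f(\bN) = \{b' \in \mathcal{B}(f) : \mu'_{b'}(N_{b'}) = \max_{j' \in \mathcal{B}(f)} \mu'_{j'}(N_{j'})\}$. Since $(f,b)\in\mathcal{E}$ we have $b \in \mathcal{B}(f)$, so $b \notin S_f(\bN)$ forces $\mu'_b(N_b) < \max_{j' \in \mathcal{B}(f)} \mu'_{j'}(N_{j'})$. On the other hand, $f \in F_i$ has at least one best backend, and all of $f$'s best backends lie in $B_i$, so the maximum $\max_{j' \in \mathcal{B}(f)} \mu'_{j'}(N_{j'})$ is attained at some backend in $B_i$ and equals $\mu'_{B_i}(\bN)$ (recall all backends in a tier share the same gradient value, which we denoted $\mu'_{B_i}(\bN)$). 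Similarly $\mu'_b(N_b) = \mu'_{B_j}(\bN)$ since $b \in B_j$. Combining, $\mu'_{B_j}(\bN) = \mu'_b(N_b) < \mu'_{B_i}(\bN)$, which is exactly the claim.

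I do not anticipate a serious obstacle here; the only point requiring care is the bookkeeping that tiers are genuinely disjoint as vertex sets (so $b \in B_j$ really does imply $b \notin B_i$) and that every frontend has a nonempty set of best backends, both of which are noted right after Definition~\ref{def:tier}. Once Lemma~\ref{lem:low tier low gradient} is in hand, Corollary~\ref{cor:gradient decreases along tiers} follows immediately by induction on the length of a directed path witnessing $(F_i,B_i) \prec (F_j,B_j)$: if $v_i = v_{\ell_0} \to v_{\ell_1} \to \cdots \to v_{\ell_m} = v_j$, then applying the lemma to each consecutive pair gives $\mu'_{B_i}(\bN) = \mu'_{B_{\ell_0}}(\bN) > \mu'_{B_{\ell_1}}(\bN) > \cdots > \mu'_{B_{\ell_m}}(\bN) = \mu'_{B_j}(\bN)$, and strict inequalities chain to a strict inequality. (The base case $m=0$ cannot occur since $i \neq j$, and $m=1$ is the lemma itself.) As a side remark, this chain of strict inequalities also prevents any directed cycle in $\mathcal{T}(\bN)$, which is precisely Corollary~\ref{cor: TierGraph DAG}.
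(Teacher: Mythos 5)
Your proof is correct and follows the same approach as the paper's: extract the edge $(f,b)\in\mathcal{E}$ from the arc $v_i\to v_j$, observe $b\notin S_f(\bN)$, and unpack what that says about gradients. You simply spell out in more detail the step from $b\notin S_f(\bN)$ to the strict gradient inequality, which the paper compresses into a single sentence.
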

\begin{proof}
    By Definition \ref{def:tiergraph}, $v_i\to v_j$ implies there exists $f \in F_i, b \in B_j$ such that $(f, b) \in \mathcal{E}$. Therefore, by Definition \ref{def:tier}, $b \notin S_{f}(\bN)$, i.e., $\mu_{B_i}'(\bN) > \mu_{B_j}'(\bN)$.\Halmos \end{proof}

We next state the proof of Corollary \ref{cor:gradient decreases along tiers}.
\begin{proof}
    By Definition \ref{def:tiergraph}, $(F_i, B_i) \prec (F_j, B_j)$ implies that in the TierGraph $\mathcal{T}(\bN)$, vertex $v_i$ can reach vertex $v_j$, i.e., there exists a sequence of vertices $v_{a_1}, v_{a_2},\ldots, v_{a_k}$ such that
    $v_i\to v_{a_1} \to v_{a_2} \to \cdots \to v_{a_k} \to v_j$
    in the TierGraph $\mathcal{T}(\bN)$. By Lemma \ref{lem:low tier low gradient}, we then have
    \[\mu_{B_i}'(\bN) > \mu_{B_{a_1}}'(\bN) > \mu_{B_{a_2}}'(\bN) > \cdots > \mu_{B_{a_k}}'(\bN) > \mu_{B_j}'(\bN).\Halmos \]
\end{proof}

\paragraph{Proof of Corollary \ref{cor: TierGraph DAG}}
\begin{proof}
Suppose not, i.e., there exists a sequence of vertices such that $v_1 \to v_2 \to \cdots \to v_m \to v_1$ in the TierGraph $\mathcal{T}(\bN)$. Then by Corollary \ref{cor:gradient decreases along tiers}, we have
$\mu_{B_{1}}'(\bN) > \mu_{B_{2}}'(\bN) > \cdots > \mu_{B_{m}}'(\bN) > \mu_{B_{1}}'(\bN),$
which is a contradiction.\Halmos \end{proof}

\subsection{Proof of Lemma \ref{lem: tier N same direction}}
\begin{proof}
   As $(F,B)$ is a tier at $\bN(t)$ for $t\in [t_1, t_2]$, for any $b_1, b_2 \in B$, we have
$\mu_{b_1}'(N_{b_1}(t)) = \mu_{b_2}'(N_{b_2}(t)), \forall t \in [t_1, t_2].$
If $\frac{d}{dt}N_{b_1}(t) = 0$ or $\frac{d}{dt}N_{b_2}(t) = 0$, the product $\frac{d}{dt}N_{b_1}(t) \cdot \frac{d}{dt}N_{b_2}(t) \geq 0$ holds trivially. Otherwise, suppose for contradiction that $\frac{d}{dt}N_{b_1}(t_0) > 0$ and $\frac{d}{dt}N_{b_2}(t_0) < 0$ for some $t_0 \in (t_1, t_2)$. Since the trajectories are continuous, for sufficiently small $\epsilon > 0$ we have $N_{b_1}(t_0 + \epsilon) > N_{b_1}(t_0)$ and $N_{b_2}(t_0 + \epsilon) < N_{b_2}(t_0)$. Since $\mu_b'$ is strictly decreasing (by strict concavity), this gives $\mu_{b_1}'(N_{b_1}(t_0+\epsilon)) < \mu_{b_1}'(N_{b_1}(t_0))$ and $\mu_{b_2}'(N_{b_2}(t_0+\epsilon)) > \mu_{b_2}'(N_{b_2}(t_0))$, contradicting the equal-gradient condition at $t_0 + \epsilon$.\Halmos \end{proof}

\subsection{Proof of Lemma \ref{lem: K+ lambda big}}
\begin{proof}
We first show $N_b(t)$ is increasing
for all $b \in B, t\in [t_1, t_2]$.
\begin{align}
    \sum_{b \in B} \dot{N}_b(t)& \stackrel{(a)}{=} \sum_{b \in B} \left(\sum_{f\in F\cap \mathcal{F}(b)} \lambda_f x_{f,b}(t) - \mu_b(N_b(t))\right) = \sum_{f\in F} \lambda_f \sum_{b\in B\cap\mathcal{B}(f)} x_{f,b} - \sum_{b \in B} \mu_b(N_b(t)) \notag \\
    & \stackrel{(b)}{=}  \sum_{f\in F}\lambda_f - \sum_{b \in B} \mu_b(N_b(t))  \stackrel{(c)}{\geq} 0, \label{eq: sum of dotN in B}
\end{align}
where $(a)$ follows from the definition of the differential inclusion \eqref{eq: differential inclusion}; $(b)$ follows from $\sum_{b\in B\cap\mathcal{B}(f)} x_{f,b}(t)= 1$ for all $f\in F$; $(c)$ follows from the assumption.

Combined with Lemma \ref{lem: tier N same direction}, i.e., the change in workloads at the backends in the same tier must have the same direction, we have $\dot{N}_b(t) \geq 0$ for all $b \in B$.
Therefore, this means for $t\in [t_1, t_2]$, there must exist $\{x_{f,b}(t)\}_{f\in F\cap \mathcal{F}(b)}$ such that
\begin{align}
    & \frac{d}{dt}N_b(t) = \sum_{f\in F\cap \mathcal{F}(b)} \lambda_f x_{f,b}(t) -\mu_b(N_b(t)) \geq 0\label{eq: single ellb small} \\
    & 0\leq x_{f,b} \leq 1, \forall f \in F\cap \mathcal{F}(b), \quad \sum_{b\in B \cap \mathcal{B}(f)} x_{f,b} = 1, \forall f\in F, \notag
\end{align}
where the last two constraints follow from the definition of the differential inclusion \eqref{eq: differential inclusion}.

Therefore, for any subset of backends $Q\subseteq B$,
\[\sum_{b\in Q} \mu_b(N_b(t)) \stackrel{(d)}{\leq} \sum_{b\in Q} \sum_{f\in F\cap \mathcal{F}(b)} \lambda_f x_{f,b}(t) = \sum_{f\in  F\cap (\cup_{b\in Q}\mathcal{F}(b))}\lambda_f x_{f,b}(t) \stackrel{(e)}{\leq}  \sum_{f\in F\cap (\cup_{b\in Q}\mathcal{F}(b))}\lambda_f.\]
where $(d)$ follows inequality \eqref{eq: single ellb small}; $(e)$ holds as $x_{f,b}(t) \leq 1$.\Halmos \end{proof}

\subsection{Proof of Lemma \ref{lem: K- lambda small}}
\begin{proof}
    Similar to the proof for Lemma \ref{lem: K+ lambda big}, we can show that $N_b(\tau_1) \geq N_b(\tau_2)$ for any $\tau_1 \leq \tau_2 \in [t_1, t_2]$, for all $b\in B$. Therefore, by definition of differential inclusion, there must exist $\{x_{f,b}(t)\}_{f\in F\cap \mF(b)}$ such that
\begin{align}
    & \frac{d}{dt}N_b(t) = \sum_{f\in F \cap \mathcal{F}(b)} \lambda_f x_{f,b}(t) -\mu_b(N_b(t)) \leq 0, \label{eq: single ellb big}\\
    & 0\leq x_{f,b} \leq 1, \forall f_1\in F\cap \mF(b), \quad \sum_{b\in B \cap \mathcal{B}(f)} x_{f,b} = 1, \forall f\in F. \label{eq: feasible x sum to 1}
\end{align}

Therefore, for any subset of frontends $P\subseteq F$,
\begin{align*}
    \sum_{f\in P} \lambda_f &\stackrel{(a)}{=} \sum_{f\in P} \lambda_f \sum_{b\in B \cap \mathcal{B}(f)} x_{f,b}(t) = \sum_{b\in B\cap(\cup_{f\in P}\mathcal{B}(f)) } \sum_{f\in P\cap\mathcal{F}(b) } \lambda_f x_{f,b}(t) \\
    & \stackrel{(b)}{\leq}  \sum_{b\in B\cap(\cup_{f\in P}\mathcal{B}(f))} \sum_{f\in F \cap \mathcal{F}(b)} \lambda_f x_{f,b}(t) \stackrel{(c)}{\leq}  \sum_{b\in B\cap(\cup_{f\in P}\mathcal{B}(f)) } \mu_b(N_b(t)),
\end{align*}
where $(a)$ follows equality \eqref{eq: feasible x sum to 1}; $(b)$ holds as $P\subseteq F$; $(c)$ follows inequality \eqref{eq: single ellb big}.\Halmos \end{proof}

\subsection{Proof of Corollary \ref{cor: slide flow imbalance no change}}
When the tier structure remains unchanged, the total flow imbalance never flips sign from positive to negative or vice versa. {This is because flow imbalance drives workloads towards equilibrium, and when the flow is balanced, i.e., jobs arrive at the same rate as jobs depart, the workloads stabilize, and the total flow imbalance remains at zero. }This observation is formalized in the following corollary.

\begin{corollary}\label{cor: slide flow imbalance no change}
If there exists $t_1 < t_2$ and $F \subseteq \mathcal{F}, B \subseteq \mathcal{B}$ such that for $t\in [t_1, t_2]$, $(F,B)$ is a tier at $\bN(t)$, then for $\tau_1, \tau_2 \in [t_1, t_2]$,
    \[\left(\sum_{f\in F} \lambda_f - \sum_{b\in B}\mu_b(N_b(\tau_1))\right)\cdot \left(\sum_{f\in F} \lambda_f - \sum_{b\in B}\mu_b(N_b(\tau_2))\right) \geq 0.\]
\end{corollary}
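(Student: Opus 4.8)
The plan is to reduce the statement to the claim that the scalar \emph{flow imbalance} of the tier,
\[
h(t)\;:=\;\sum_{f\in F}\lambda_f-\sum_{b\in B}\mu_b(N_b(t)),
\]
never changes sign on $[t_1,t_2]$. First I would record the regularity we need: since $\bN(\cdot)$ is absolutely continuous on $[t_1,t_2]$ and each $\mu_b$ is $C^1$ (hence Lipschitz on the bounded range of $N_b$), the function $h$ is absolutely continuous on $[t_1,t_2]$, and so is $h^2$. I would also note that $\mu_b'>0$ on $[0,\infty)$ under Assumption~\ref{assmp: increasing concave smooth ell}: if $\mu_b'(x_0)\le 0$ for some $x_0$, strict concavity would force $\mu_b'<0$, hence $\mu_b$ strictly decreasing, on $(x_0,\infty)$, contradicting strict monotonicity.

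The key step is to show $h(t)\,\dot h(t)\le 0$ for almost every $t\in[t_1,t_2]$. Differentiating, $\dot h(t)=-\sum_{b\in B}\mu_b'(N_b(t))\,\dot N_b(t)$ a.e. On the other hand, exactly as in steps $(a)$--$(b)$ of the computation in \eqref{eq: sum of dotN in B} (which use only that $(F,B)$ is a tier, i.e.\ $\sum_{b\in B\cap\mathcal B(f)}x_{f,b}(t)=1$ for $f\in F$ and $x_{f,b}(t)=0$ for $(f,b)\notin E(\bN)$), we have $\sum_{b\in B}\dot N_b(t)=h(t)$ a.e. Now fix a full-measure time $t$ at which all these derivatives exist and $h(t)>0$. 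Then $\sum_{b\in B}\dot N_b(t)=h(t)>0$, so the drifts $\dot N_b(t)$ are not all $\le 0$; by Lemma~\ref{lem: tier N same direction} the drifts within a tier are pairwise non-negatively correlated, hence in fact $\dot N_b(t)\ge 0$ for \emph{every} $b\in B$. Since $\mu_b'(N_b(t))>0$, this yields $\dot h(t)\le 0$, so $h(t)\dot h(t)\le 0$. The case $h(t)<0$ is symmetric (all $\dot N_b(t)\le 0$, hence $\dot h(t)\ge 0$), and the case $h(t)=0$ is immediate. Therefore $\tfrac{d}{dt}h(t)^2=2h(t)\dot h(t)\le 0$ a.e., and since $h^2$ is absolutely continuous, $h^2$ is non-increasing on $[t_1,t_2]$.

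Finally I would conclude by contradiction. Suppose the product in the statement were strictly negative for some $\tau_1,\tau_2\in[t_1,t_2]$. Then $h(\tau_1)$ and $h(\tau_2)$ are nonzero with opposite signs, so $\tau_1\ne\tau_2$; relabel so that $\tau_1<\tau_2$. By continuity of $h$ and the intermediate value theorem there is $\tau_0\in(\tau_1,\tau_2)$ with $h(\tau_0)=0$. Monotonicity of $h^2$ on $[t_1,t_2]\supseteq[\tau_0,\tau_2]$ then gives $0\le h(\tau_2)^2\le h(\tau_0)^2=0$, so $h(\tau_2)=0$, contradicting $h(\tau_2)\ne 0$. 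Hence $h(\tau_1)h(\tau_2)\ge 0$ for all $\tau_1,\tau_2\in[t_1,t_2]$, which is the claimed inequality.

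I expect the main obstacle to be the bookkeeping around ``almost everywhere'': ensuring the chain rule $\dot h=-\sum_b\mu_b'(N_b)\dot N_b$, the tier identity $\sum_b\dot N_b=h$, and the pointwise invocation of Lemma~\ref{lem: tier N same direction} all hold on a common full-measure set. An alternative route that sidesteps the $h^2$ device is to use Lemma~\ref{lem: K+ lambda big} and Lemma~\ref{lem: K- lambda small} directly: on any sub-interval of $[t_1,t_2]$ where $h\le 0$, every $N_b$ ($b\in B$) is non-increasing, so $\sum_b\mu_b(N_b(\cdot))$ is non-increasing and $h$ is non-decreasing there, and symmetrically where $h\ge 0$; examining the last zero of $h$ preceding a putative sign change then produces the same contradiction. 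Both routes are routine once the a.e.\ differentiation is in place.
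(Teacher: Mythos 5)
Your proof is correct and rests on the same core mechanism as the paper's: the tier identity $\sum_{b\in B}\dot N_b(t)=h(t)$ together with Lemma~\ref{lem: tier N same direction} forces every drift to share the sign of $h(t)$, so $h$ is always pulled toward zero and cannot cross it. The paper realizes this by invoking Lemmas~\ref{lem: K+ lambda big} and \ref{lem: K- lambda small} at a single zero $\tau$ of $h$ and asserting that $h\equiv 0$ thereafter, whereas you package the same observation as $h\dot h\le 0$ a.e., giving monotonicity of $h^2$ and a clean intermediate-value contradiction --- a more rigorous formalization of the same argument rather than a different route.
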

\begin{proof}
Let $I(t) = \sum_{f\in F} \lambda_f - \sum_{b\in B}\mu_b(N_b(t))$ denote the flow imbalance. By continuity of each $\mu_b(N_b(\cdot))$, $I(\cdot)$ is continuous. We show $I$ cannot change sign on $[t_1, t_2]$.
Suppose for contradiction that $I(\tau_1) > 0$ and $I(\tau_2) < 0$ for some $\tau_1, \tau_2 \in [t_1, t_2]$. By the intermediate value theorem, there exists $\tau$ between $\tau_1$ and $\tau_2$ with $I(\tau) = 0$. We claim $I(t) = 0$ for all $t \in [\tau, t_2]$: at any time $t$ where $I(t) = 0$, the tier has zero flow imbalance, so both Lemma~\ref{lem: K+ lambda big} (with $\geq 0$) and Lemma~\ref{lem: K- lambda small} (with $\leq 0$) apply, giving $\frac{d}{dt} N_b(t) = 0$ for all $b\in B$. Therefore $I(t) = 0$ for all $t\in [\tau, t_2]$, contradicting $I(\tau_2) < 0$. The case $I(\tau_1) < 0, I(\tau_2) > 0$ is symmetric. \Halmos \end{proof}

\subsection{Proof of Lemma \ref{lem: slide}}\label{apx: proof of slide}

\begin{proof}
The total drift for the backends in $B$ is
\begin{align*}
V_B(\bN(t), \bx(t)) & = \sum_{b\in B} \left|\sum_{f\in \mathcal{F}(b)} \lambda_f x_{f,b}(t) - \mu_b(N_b(t))\right| \stackrel{(a)}{=}  \sum_{b\in B} \left|\sum_{f\in F \cap \mathcal{F}(b)} \lambda_f x_{f,b}(t) - \mu_b(N_b(t))\right| \\
& \stackrel{(b)}{=} \left|\sum_{b\in B} \sum_{f\in F \cap \mathcal{F}(b)} \lambda_f x_{f,b}(t) - \sum_{b\in B}\mu_b(N_b(t))\right|  = \left|\sum_{f\in F} \lambda_f \sum_{b\in B \cap \mathcal{B}(f)}  x_{f,b}(t) - \sum_{b\in B}\mu_b(N_b(t))\right| \\
    & \stackrel{(c)}{=}  \left| \sum_{f \in F} \lambda_f - \sum_{b\in B}\mu_b(N_b(t)) \right|,
\end{align*}
where $(a)$ follows as $(F,B)$ is a tier, thus $x_{f,b}(t)=0$ for $f \in \mF(b)\setminus F$; $(b)$ follows from Lemma \ref{lem: tier N same direction} because flow balance for backends in a tier have the same sign; $(c)$ follows as $\sum_{b\in B \cap \mathcal{B}(f)} x_{f,b}(t) = 1$ by flow balance at the frontends.

We conclude the proof by showing that $V_B(\bN(t), \bx(t))$ is non-increasing in $[t_1, t_2]$.
If $\sum_{f\in F}\lambda_f \geq \sum_{b\in B} \mu_b(N_b(t_1))$. Then by Corollary \ref{cor: slide flow imbalance no change}, we have $\sum_{f\in F}\lambda_f \geq \sum_{b\in B} \mu_b(N_b(t))$ for $t\in [t_1, t_2]$. Therefore, for $t\in [t_1, t_2]$,
\begin{align}
    \frac{d}{dt} V_B(\bN(t), \bx(t)) & = \frac{d}{dt} \left(\sum_{f\in F}\lambda_f - \sum_{b\in B} \mu_b(N_b(t))\right) = -\sum_{b\in B} \mu'_b(N_b(t)) \frac{d}{dt} N_b(t) \notag\\
    & \stackrel{(a)}{=} - \mu_B'(\bN(t)) \sum_{b\in B} \frac{d}{dt} N_b(t) \stackrel{(b)}{=} - \mu_B'(\bN(t))\left(\sum_{f\in F}\lambda_f - \sum_{b\in B} \mu_b(N_b(t))\right) \stackrel{(c)}{\leq} 0. \label{eq: derivative of VB +}
\end{align}
where $(a)$ follows as all backends in a tier share the same gradient value, denoted by $\mu_B'(\bN(t))$; $(b)$ follows \eqref{eq: sum of dotN in B}; $(c)$ follows as $\mu_b'(N_b) >0$ by assumption and $\sum_{f\in F}\lambda_f \geq \sum_{b\in B} \mu_b(N_b(t))$ for $t\in [t_1, t_2]$.

Similarly, if the tier has negative flow imbalance,  for $t\in [t_1, t_2]$,
\begin{align}
    \frac{d}{dt} V_B(\bN(t), \bx(t)) & = \frac{d}{dt} \left( \sum_{b\in B} \mu_b(N_b(t)) - \sum_{f\in F}\lambda_f \right) = \sum_{b\in B} \mu'_b(N_b(t)) \frac{d}{dt} N_b(t) \notag\\
    & \stackrel{(a)}{=} \mu_B'(\bN(t)) \sum_{b\in B} \frac{d}{dt} N_b(t) \stackrel{(b)}{=} \mu_B'(\bN(t))\left( \sum_{f\in F}\lambda_f  - \sum_{b\in B} \mu_b(N_b(t))\right) \notag\\
    & = - \mu_B'(\bN(t))\left(\sum_{b\in B} \mu_b(N_b(t)) - \sum_{f\in F}\lambda_f \right) \stackrel{(c)}{\leq} 0. \label{eq: derivative of VB -}
\end{align}
where $(a)$ follows as all backends in a tier share the same gradient value; $(b)$ follows \eqref{eq: sum of dotN in B}; $(c)$ follows as $\mu_b'(N_b) >0$ by assumption and $\sum_{f\in F}\lambda_f \leq \sum_{b\in B} \mu_b(N_b(t))$ for $t\in [t_1, t_2]$.
\Halmos \end{proof}

\subsection{Proof of Lemma \ref{lem:split}}\label{apx: proof of split}

\begin{proof}
First by Lemma \ref{lem: slide}, $V_B(\bN(t), \bx(t)) = \left|\sum_{f\in F} \lambda_f - \sum_{b\in B}\mu_b(N_b(t))\right|$ for $t\in [t_1, \tau)$. We proceed to show that this holds for $t\in (\tau, t_2]$.
    \begin{itemize}
        \item Suppose $\sum_{f\in F} \lambda_f - \sum_{b\in B}\mu_b(N_b(t)) \geq 0$ for $t\in [t_1, \tau)$, which we refer to as the tier having positive flow imbalance. We proceed to show that all tiers $(\hat{F}_1, \hat{B}_1), \ldots, (\hat{F}_{\hat{k}}, \hat{B}_{\hat{k}})$ also have positive flow imbalance for $t\in (\tau, t_2]$.

        Without loss of generality, let $\mathcal{T}(\bN(t))$ denote the TierGraph that is restricted to tiers $(\hat{F}_1, \hat{B}_1), \ldots, (\hat{F}_{\hat{k}}, \hat{B}_{\hat{k}})$ for $t \in (\tau, t_2]$.
        \begin{itemize}
            \item If $(\hat F_i, \hat B_i)$ corresponds to a source vertex in the TierGraph $\mathcal{T}(\bN(t))$\footnote{Note that it's possible for the TierGraph to have multiple source vertices. This argument holds for all such source vertices and thus all corresponding tiers.}, then
            \begin{align}\label{eq: first tier positive}
            \sum_{b\in \hat B_i} \mu_b(N_b(\tau)) \stackrel{(a)}{=} \sum_{b\in \hat B_i} \mu_b(N_b(\tau^-)) \stackrel{(b)}{\leq} \sum_{f\in F\cap (\cup_{b\in \hat B_i}\mathcal{F}(b))} \lambda_f \stackrel{(c)}{=} \sum_{f\in \hat F_i} \lambda_f,
        \end{align}
         where $(a)$ follows from the continuity of $\mu_b(N_b(t))$; $(b)$ follows from Lemma \ref{lem: K+ lambda big}; $(c)$ follows as $(\hat F_i, \hat B_i)$ corresponds to a source vertex, i.e., $F\cap (\cup_{b\in \hat B_i}\mathcal{F}(b))= \hat F_i$.

         Therefore, by Corollary \ref{cor: slide flow imbalance no change}, $\sum_{b\in \hat B_i} \mu_b(N_b(t)) \leq \sum_{f\in \hat F_i} \lambda_f$ for $t\in [\tau, t_2]$.

         \item If $(\hat F_i, \hat B_i)$ does not correspond to a source vertex in the TierGraph $\mathcal{T}(\bN(t))$, i.e., there exists a tier $(\hat F_j, \hat B_j)$ that corresponds to a source vertex such that $ (\hat F_j,\hat B_j) \prec (\hat F_i,\hat B_i)$.

         We prove by contradiction, i.e., suppose there exists $\hat{\tau} \in (\tau, t_2]$ such that $\sum_{b\in \hat B_i} \mu_b(N_b(\hat{\tau})) > \sum_{f\in \hat F_i} \lambda_f$. Then by Corollary \ref{cor: slide flow imbalance no change}, as $(\hat F_i, \hat B_i)$ is a tier at $\bN(t)$ for $t\in (\tau, t_2]$, we have that tier $(\hat F_i, \hat B_i)$ has negative flow imbalance in the whole interval $(\tau, t_2]$, i.e., $\sum_{b\in \hat B_i} \mu_b(N_b(t)) \geq \sum_{f\in \hat F_i} \lambda_f$ for $t\in (\tau, t_2]$. Therefore, for $b_i \in \hat B_i, b_j \in \hat B_j$,
            \[\mu'_{b_i}(N_{b_i}(t)) \stackrel{(d)}{>} \mu'_{b_i}(N_{b_i}(\tau)) \stackrel{(e)}{=} \mu'_{b_i}(N_{b_i}(\tau^-)) \stackrel{(f)}{=} \mu'_{b_j}(N_{b_j}(\tau^-)) \stackrel{(e)}{=} \mu'_{b_j}(N_{b_j}(\tau)) \stackrel{(g)}{\geq} \mu'_{b_j}(N_{b_j}(t)),\]
            where $(d)$ follows because by Lemma \ref{lem: K- lambda small} we have $N_{b_i}(t) < N_{b_i}(\tau)$ and $\mu_b'$ is strictly decreasing (by strict concavity); $(e)$ follows from the continuity of $\mu_b'(N_b(t))$; $(f)$ follows because $b_i, b_j$ are in the same tier for $t\in [t_1,\tau)$; $(g)$ follows from \eqref{eq: first tier positive}, i.e., tier $(\hat F_j, \hat B_j)$ corresponds to a source vertex and thus has positive flow imbalance, which implies $N_{b_j}(t) \geq N_{b_j}(\tau)$ by Lemma \ref{lem: K+ lambda big}.

            This contradicts $(\hat F_j,\hat  B_j) \prec (\hat F_i, \hat B_i)$ by Corollary \ref{cor:gradient decreases along tiers}.
        \end{itemize}

        Therefore, we have shown that all tiers $(\hat F_1, \hat B_1), \ldots, (\hat F_k, \hat B_k)$ have positive flow imbalance. Then for $t \in (\tau, t_2]$,
        \begin{align*}
            V_B(\bN(t), \bx(t)) & = \sum_{b\in B} \left|\sum_{f\in \mathcal{F}(b)} \lambda _f x_{f,b}(t) -\mu_b(N_b(t))\right| \stackrel{(a)}{=} \sum_{i\in [\hat{k}]} \sum_{b\in \hat B_i} \left|\sum_{f\in \mathcal{F}(b)} \lambda _f x_{f,b}(t) -\mu_b(N_b(t))\right| \\
            & \stackrel{(b)}{=} \sum_{i\in [\hat{k}]} \left|\sum_{f\in \hat{F}_i} \lambda _f -\sum_{b\in \hat B_i}\mu_b(N_b(t))\right| \stackrel{(c)}{=} \sum_{f\in F} \lambda_f - \sum_{b\in B}\mu_b(N_b(t)) = \left|\sum_{f\in F} \lambda_f - \sum_{b\in B}\mu_b(N_b(t)) \right|,
        \end{align*}
        where $(a)$ follows as $(\hat{F}_1, \hat{B}_1), \ldots, (\hat{F}_{\hat{k}}, \hat{B}_{\hat{k}})$ are tiers at $\bN(t), t\in (\tau, t_2]$ and $B = \hat{B}_1 \cup \cdots \cup \hat{B}_{\hat{k}}$; $(b)$ follows from Lemma \ref{lem: slide}; $(c)$ follows as all tiers have positive flow imbalance.

        \item Suppose $\sum_{f\in F} \lambda_f - \sum_{b\in B}\mu_b(N_b(t)) \leq 0$ for $t\in [t_1, \tau)$, which we refer to as the tier having negative flow imbalance. We proceed to show that all tiers also have negative flow imbalance for $t\in (\tau, t_2]$.

        Without loss of generality, let $\mathcal{T}(\bN(t))$ denote the TierGraph that is restricted to tiers $(\hat F_1, \hat B_1), \ldots, (\hat F_k, \hat B_k)$ for $t \in (\tau, t_2]$.
        \begin{itemize}
            \item If $(\hat F_i, \hat B_i)$ corresponds to a sink vertex in the TierGraph $\mathcal{T}(\bN(t))$\footnote{Note that it's possible for the TierGraph to have multiple sink vertices. This argument holds for all such sink vertices and thus all corresponding tiers.}, then
            \begin{align}\label{eq: last tier negative}
            \sum_{f \in \hat F_i} \lambda_f \stackrel{(a)}{\leq} \sum_{b\in B \cap (\cup_{f\in \hat F_i}\mathcal{B}(f))} \mu_b(N_b(\tau^-)) \stackrel{(b)}{=} \sum_{b\in \hat B_i} \mu_b(N_b(\tau^-))\stackrel{(c)}{=} \sum_{b\in \hat B_i} \mu_b(N_b(\tau)),
        \end{align}
         where $(a)$ follows from Lemma \ref{lem: K- lambda small}; $(b)$ follows as $(\hat F_i, \hat B_i)$ corresponds to a sink vertex, i.e., $B \cap (\cup_{f\in \hat F_i}\mathcal{B}(f)) = \hat B_i$; $(c)$ follows from the continuity of $\mu_b(N_b(t))$.

         Therefore, by Corollary \ref{cor: slide flow imbalance no change}, $\sum_{f\in \hat F_i} \lambda_f \leq \sum_{b\in \hat B_i} \mu_b(N_b(t)) $ for $t\in [\tau, t_2]$.
            \item If $(\hat F_i, \hat B_i)$ does not correspond to a sink vertex in the TierGraph $\mathcal{T}(\bN(t))$, i.e., there exists a tier $(\hat F_j, \hat B_j)$ that corresponds to a sink vertex such that $ (\hat F_i, \hat B_i) \prec (\hat F_j, \hat B_j)$.

         We prove by contradiction, i.e., suppose there exists $\hat{\tau} \in (\tau, t_2]$ such that $\sum_{b\in \hat B_i} \mu_b(N_b(\hat{\tau})) < \sum_{f\in F_i} \lambda_f$. Then by Corollary \ref{cor: slide flow imbalance no change}, as $(\hat F_i, \hat B_i)$ is a tier at $\bN(t)$ for $t\in (\tau, t_2]$, we have that $\sum_{b\in \hat B_i} \mu_b(N_b(t))\leq \sum_{f\in \hat F_i} \lambda_f$ for $t\in (\tau, t_2]$. Therefore, for $b_i \in \hat B_i, b_j \in \hat B_j$,
            \[\mu'_{b_i}(N_{b_i}(t)) \stackrel{(d)}{<} \mu'_{b_i}(N_{b_i}(\tau)) \stackrel{(e)}{=} \mu'_{b_i}(N_{b_i}(\tau^-)) \stackrel{(f)}{=} \mu'_{b_j}(N_{b_j}(\tau^-)) \stackrel{(e)}{=} \mu'_{b_j}(N_{b_j}(\tau)) \stackrel{(g)}{\leq} \mu'_{b_j}(N_{b_j}(t)),\]
            where $(d)$ follows as by Lemma \ref{lem: K+ lambda big}, we have $N_{b_i}(t) > N_{b_i}(\tau)$ and $\mu_b'$ is strictly decreasing (by strict concavity); $(e)$ follows from the continuity of $\mu_b'(N_b(t))$; $(f)$ follows as $b_i, b_j$ are in the same tier for $t\in [t_1,\tau)$; $(g)$ follows from \eqref{eq: last tier negative}, i.e., tier $(\hat F_j, \hat B_j)$ corresponds to a sink vertex and thus has negative flow imbalance, which implies $N_{b_j}(t) \leq N_{b_j}(\tau)$ by Lemma \ref{lem: K- lambda small}.

            This contradicts $(F_i, B_i) \prec (F_j, B_j)$ by Corollary \ref{cor:gradient decreases along tiers}.

        \end{itemize}
         Therefore, similarly to the first case, for $t\in (\tau, t_2]$,
         \[V_B(\bN(t), \bx(t))  = \sum_{b\in B}\mu_b(N_b(t))- \sum_{f\in F} \lambda_f  = \left|\sum_{f\in F} \lambda_f - \sum_{b\in B}\mu_b(N_b(t)) \right|.\]

    \end{itemize}

Therefore, we have shown that $V_B(\bN(t), \bx(t)) = \left|\sum_{f\in F} \lambda_f - \sum_{b\in B}\mu_b(N_b(t))\right|$. By \eqref{eq: derivative of VB +} and \eqref{eq: derivative of VB -}, we can show the total absolute drifts are decreasing in $[t_1, t_2]$. \Halmos \end{proof}

\subsection{Proof of Lemma \ref{lem:sync}}\label{apx: proof of sync lemma}


\begin{proof}
Let's define the following sets that depend on the flow imbalance of each frontend and backend experience at state $\bN(t_1)$.

\begin{align}
        & F^+ = \bigcup_{i \in [k]: \sum_{f\in F_i} \lambda_f > \sum_{b\in B_i} \mu_b(N_b(t_1))} F_i, \quad B^+ =  \bigcup_{i \in [k]: \sum_{f\in F_i} \lambda_f > \sum_{b\in B_i} \mu_b(N_b(t_1))} B_i, \label{eq: def of 8 subsets}\\
        & F^- =  \bigcup_{i \in [k]: \sum_{f\in F_i} \lambda_f \leq \sum_{b\in B_i} \mu_b(N_b(t_1))}F_i, \quad B^- = \bigcup_{i \in [k]: \sum_{f\in F_i} \lambda_f \leq \sum_{b\in B_i} \mu_b(N_b(t_1))} B_i.\notag
\end{align}
Similarly we can define $\hat{F}^+, \hat{F}^-, \hat{B}^+$ and $\hat{B}^-$ based on $(\hat{F}_1,\hat{B}_1), \ldots, (\hat{F}_{\hat{k}}, \hat{B}_{\hat{k}})$ based on $\bN(t_2)$.

By Corollary \ref{cor: slide flow imbalance no change}, the flow imbalance for each tier will never flip signs for the time interval $[t_1, \tau)$ and $(\tau, t_2]$, thus by Lemma \ref{lem: slide}, we have
        \begin{align*}
            V_B(\bN(t),\bx(t)) & = \sum_{f\in F^+} \lambda_f - \sum_{b\in B^+} \mu_b(N_b(t)) + \sum_{b\in B^-} \mu_b(N_b(t))-\sum_{f\in F^-} \lambda_f, \text{ for } t\in [t_1, \tau),\\
            V_B(\bN(t),\bx(t)) & = \sum_{f\in \hat{F}^+} \lambda_f - \sum_{b\in \hat{B}^+} \mu_b(N_b(t)) + \sum_{b\in \hat{B}^-} \mu_b(N_b(t))-\sum_{f\in \hat{F}^-} \lambda_f, \text{ for } t\in (\tau, t_2].
        \end{align*}
and $V_B(\bN(t),\bx(t))$ is decreasing within $[t_1, \tau)$ and $(\tau, t_2]$. Therefore, we proceed to prove that
\[V_B(\bN(\tau^-), \bx(\tau^-)) \geq V_B(\bN(\tau^+), \bx(\tau^+)).\]

We first present a useful result that states the difference between $V_B$ at $\tau^-$ and $\tau^+$ (see proof in Appendix \ref{apx: proof of difference in V_B}).

\begin{lemma}\label{lem: difference in V_B change}
The difference in $V_B$ is
\begin{align}\label{eq: difference in V_B change}
    &V_B(\bN(\tau^-), \bx(\tau^-))  -V_B(\bN(\tau^+),\bx(\tau^+)) \notag\\
    & \quad\quad=  2 \left(\sum_{f\in F^+ \cap \hat{F}^-}\lambda_f - \sum_{b\in B^+\cap \hat{B}^-}\mu_b(N_b(\tau))\right) + 2 \left(\sum_{b\in B^-\cap \hat{B}^+} \mu_b(N_b(\tau)) - \sum_{f\in F^-\cap\hat{F}^+} \lambda_f \right).
\end{align}
\end{lemma}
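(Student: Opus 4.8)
The plan is to obtain \eqref{eq: difference in V_B change} by elementary bookkeeping on the two closed-form expressions for $V_B$ that were just derived (valid on $[t_1,\tau)$ and on $(\tau,t_2]$ respectively). Since $\mu_b(N_b(\cdot))$ is continuous at $\tau$, taking the one-sided limits $t\to\tau^-$ and $t\to\tau^+$ in those expressions gives
\[
V_B(\bN(\tau^-),\bx(\tau^-)) = \sum_{f\in F^+}\lambda_f - \sum_{b\in B^+}\mu_b(N_b(\tau)) + \sum_{b\in B^-}\mu_b(N_b(\tau)) - \sum_{f\in F^-}\lambda_f
\]
together with the identical formula having $F^+,F^-,B^+,B^-$ replaced by $\hat{F}^+,\hat{F}^-,\hat{B}^+,\hat{B}^-$ for $V_B(\bN(\tau^+),\bx(\tau^+))$. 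Writing $\mu_b$ for $\mu_b(N_b(\tau))$, subtracting then reduces the claim to simplifying $\bigl(\sum_{F^+}\lambda_f-\sum_{\hat{F}^+}\lambda_f\bigr)-\bigl(\sum_{F^-}\lambda_f-\sum_{\hat{F}^-}\lambda_f\bigr)$ for the arrival-rate terms and the analogous $-\bigl(\sum_{B^+}\mu_b-\sum_{\hat{B}^+}\mu_b\bigr)+\bigl(\sum_{B^-}\mu_b-\sum_{\hat{B}^-}\mu_b\bigr)$ for the service-rate terms.

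The key point is that $\{F^+,F^-\}$ and $\{\hat{F}^+,\hat{F}^-\}$ are both partitions of $F$, and likewise $\{B^+,B^-\}$ and $\{\hat{B}^+,\hat{B}^-\}$ are both partitions of $B$: the tiers $(F_i,B_i)$ are connected components of the best backend graph, so the $F_i$ are pairwise disjoint with union $F$ (and similarly the $B_i$), and each tier lands in exactly one class by the sign of its flow imbalance, which does not flip over $[t_1,\tau)$ or over $(\tau,t_2]$ by Corollary~\ref{cor: slide flow imbalance no change}. Refining along the common partition into the four intersections $F^+\cap\hat{F}^+$, $F^+\cap\hat{F}^-$, $F^-\cap\hat{F}^+$, $F^-\cap\hat{F}^-$, the matching pieces $F^+\cap\hat{F}^+$ and $F^-\cap\hat{F}^-$ cancel and one is left with $\sum_{F^+}\lambda_f-\sum_{\hat{F}^+}\lambda_f=\sum_{F^+\cap\hat{F}^-}\lambda_f-\sum_{F^-\cap\hat{F}^+}\lambda_f$ and $\sum_{F^-}\lambda_f-\sum_{\hat{F}^-}\lambda_f=\sum_{F^-\cap\hat{F}^+}\lambda_f-\sum_{F^+\cap\hat{F}^-}\lambda_f$; subtracting these doubles the surviving crossing terms to $2\sum_{F^+\cap\hat{F}^-}\lambda_f-2\sum_{F^-\cap\hat{F}^+}\lambda_f$. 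The identical computation on the service-rate terms yields $-2\sum_{B^+\cap\hat{B}^-}\mu_b+2\sum_{B^-\cap\hat{B}^+}\mu_b$, and adding the two contributions and regrouping gives precisely \eqref{eq: difference in V_B change}.

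There is no genuine analytic obstacle in this lemma: everything it relies on---the closed forms for $V_B$ on each side of $\tau$ (via Lemma~\ref{lem: slide} applied tier by tier), the no-sign-flip property (Corollary~\ref{cor: slide flow imbalance no change}), and continuity of $\mu_b(N_b(\cdot))$---is already available, and the only care required is in tracking signs and verifying the cancellation of the matching intersection terms. The real difficulty of the Tiers Reconfiguration case is downstream of this lemma: showing that the right-hand side of \eqref{eq: difference in V_B change} is non-negative, which will require analyzing the TierGraphs before and after the reconfiguration jointly with the flow-imbalance inequalities of Lemma~\ref{lem: K+ lambda big} and Lemma~\ref{lem: K- lambda small}.
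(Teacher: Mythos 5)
Your proposal is correct and follows essentially the same route as the paper: write the two closed-form expressions for $V_B$ from Lemma~\ref{lem: slide} on $[t_1,\tau)$ and $(\tau,t_2]$, take one-sided limits using continuity of $\mu_b(N_b(\cdot))$, and simplify using that $\{F^+,F^-\}$, $\{\hat F^+,\hat F^-\}$ both partition $F$ (and likewise for $B$). The only cosmetic difference is that you pass directly to the common refinement into four intersections while the paper first writes set differences (e.g.\ $F^+\setminus\hat F^+$) and then identifies them with intersections; the algebra is identical and both give the factor of $2$ in the same way.
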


\noindent We proceed to discuss the following three cases:
\begin{itemize}
    \item if $F^- = \emptyset, B^- = \emptyset$, we can show $\hat{F}^- = \emptyset, \hat{B}^- = \emptyset$, thus $V_B(\bN(\tau^-), \bx(\tau^-))  -V_B(\bN(\tau^+),\bx(\tau^+)) = 0$.
    \item if $F^+ = \emptyset, B^+ = \emptyset$, we can show $\hat{F}^+ = \emptyset, \hat{B}^+ = \emptyset$, thus $V_B(\bN(\tau^-), \bx(\tau^-))  -V_B(\bN(\tau^+),\bx(\tau^+)) = 0$.
    \item otherwise, $V_B(\bN(\tau^-), \bx(\tau^-))  -V_B(\bN(\tau^+),\bx(\tau^+)) \geq 0$.
\end{itemize}

\paragraph{First case.}
If for $t\in [t_1,\tau)$, $F^- = \emptyset, B^- = \emptyset$, i.e., $\sum_{f\in F_i} \lambda_f \geq \sum_{b\in B_i} \mu_b(N_b(t))$ for all $i\in [k]$, we first show a generalization of Lemma \ref{lem: K+ lambda big} for multiple tiers.

For any subset of backends $Q\subseteq B$ and for $t\in [t_1,\tau)$,
\begin{align}\label{eq: generalization flow imbalance +}
    \sum_{b\in Q} \mu_b(N_b(t)) = \sum_{i \in [k]} \sum_{b\in Q\cap B_i} \mu_b(N_b(t)) \stackrel{(a)}{\leq} \sum_{i\in [k]}\sum_{f\in F_i\cap (\cup_{b\in Q\cap B_i} \mathcal{F}(b))} \lambda_f  \stackrel{(b)}{\leq} \sum_{f\in F\cap (\cup_{b\in Q} \mathcal{F}(b))} \lambda_f,
\end{align}
    where $(a)$ follows from Lemma \ref{lem: K+ lambda big}; $(b)$ follows as for sets $\{A_i\}_{i\in [n]}, \{B_i\}_{i\in [n]}$,
    \begin{align}\label{eq: union intersection ineq}
        \bigcup_{i\in [n]} (A_i \cap B_i) \subseteq \left(\bigcup_{i\in [n]} A_i\right) \cap \left(\bigcup_{i\in [n]} B_i\right).
    \end{align}

    Therefore similar to the proof for Lemma \ref{lem:split}, which is a special case with $K=1$, we can show that all tiers that correspond to a source vertex in the TierGraph restricted to tiers $(\hat F_1,\hat B_1), \ldots, (\hat F_{\hat{k}}, \hat B_{\hat{k}})$ have positive flow imbalance; and by the same proof by contradiction argument, all tiers at $t\in (\tau, t_2]$ have positive flow imbalance.

    Therefore, we have $F^- = \emptyset, B^- = \emptyset, \hat F^-= \emptyset, \hat B^- = \emptyset$. This means $F^+\cap \hat{F}^- = \emptyset, B^+\cap \hat{B}^- = \emptyset, B^-\cap \hat{B}^+ = \emptyset$ and $F^- \cap \hat{F}^+ = \emptyset$, so $V_B(\bN(\tau^-), \bx(\tau^-))  -V_B(\bN(\tau^+),\bx(\tau^+)) = 0$.

\paragraph{Second Case.}
If for $t\in [t_1,\tau)$, $F^+ = \emptyset, B^+ = \emptyset$, i.e., $\sum_{f\in F_i} \lambda_f \leq \sum_{b\in B_i} \mu_b(N_b(t))$ for all $i\in [k]$, we first show a generalization of Lemma \ref{lem: K- lambda small} for multiple tiers.

For any subset of frontends $P\subseteq F$ and for $t\in [t_1,t_2]$,
\begin{align}\label{eq: generalization flow imbalance -}
    \sum_{f\in P} \lambda_f = \sum_{i\in [k]} \sum_{f\in P\cap F_i}\lambda_f \stackrel{(a)}{\leq} \sum_{i\in [k]}\sum_{b\in B_i\cap(\cup_{f\in P\cap F_i}\mathcal{B}(f))} \mu_b(N_b(t)) \stackrel{(b)}{\leq}\sum_{b\in B\cap(\cup_{f\in P}\mathcal{B}(f))} \mu_b(N_b(t)),
\end{align}
    where $(a)$ follows from Lemma \ref{lem: K- lambda small}; $(b)$ follows as \eqref{eq: union intersection ineq}. Therefore similar to the proof for Lemma \ref{lem:split}, which is a special case with $K=1$, we can show that the tiers at $t\in (\tau, t_2]$ also have negative flow imbalance. Therefore, $V_B(\bN(\tau^-), \bx(\tau^-))  -V_B(\bN(\tau^+),\bx(\tau^+)) = 0$.

\paragraph{Third Case.}
If for $t\in [t_1,\tau)$, $\sum_{f\in F_i} \lambda_f \geq \sum_{b\in B_i} \mu_b(N_b(t))$ for a strict subset of tiers.
We first present a useful lemma, and the proof of this result is deferred to Appendix \ref{apx: proof of B- F-}.

\begin{lemma}\label{lem: hat B- connects to hat F-}
Let $\hat{F}^+, \hat{F}^-, \hat{B}^+, \hat{B}^-$ be defined as in \eqref{eq: def of 8 subsets}, and $F = \hat{F}^+ \cup \hat{F}^-, B = \hat{B}^+ \cup \hat{B}^-$.
If $b\in \hat{B}^-$, then $\mathcal{F}(b) \cap F \subseteq \hat{F}^-$. If $f\in \hat{F}^+$, then $\mathcal{B}(f) \cap B \subseteq \hat{B}^+$.
\end{lemma}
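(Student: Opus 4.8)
\textbf{Proof proposal for Lemma \ref{lem: hat B- connects to hat F-}.}

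The plan is to argue by contradiction using the gradient-ordering structure of the TierGraph restricted to the post-reconfiguration tiers $(\hat{F}_1,\hat{B}_1),\ldots,(\hat{F}_{\hat{k}},\hat{B}_{\hat{k}})$ at time $t \in (\tau, t_2]$. First I would set up the two claims symmetrically. For the first claim, suppose $b \in \hat{B}^-$ but there is a frontend $f \in \mathcal{F}(b) \cap F$ with $f \in \hat{F}^+$. Since $f$ and $b$ are connected in $\mathcal{G}$, and $b$ lies in some post-reconfiguration tier $(\hat{F}_j,\hat{B}_j)$ with $\hat{F}^-$-membership while $f$ lies in some tier $(\hat{F}_i,\hat{B}_i)$ with $\hat{F}^+$-membership, there are two subcases: either $f$ and $b$ are in the same post-reconfiguration tier, or they are in different tiers connected by an arc $v_i \to v_j$ in the TierGraph $\mathcal{T}(\bN(t))$.

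In the same-tier subcase, I would use the definition of $\hat{B}^\pm$ and $\hat{F}^\pm$: a whole tier goes into the $+$ or $-$ bucket based on the sign of its total flow imbalance $\sum_{f'\in \hat{F}_i}\lambda_{f'} - \sum_{b'\in \hat{B}_i}\mu_{b'}(N_{b'})$, so a single tier cannot simultaneously contribute a frontend to $\hat{F}^+$ and a backend to $\hat{B}^-$; this is an immediate contradiction. In the different-tiers subcase with $v_i \to v_j$, Corollary \ref{cor:gradient decreases along tiers} gives $\mu'_{\hat{B}_i}(\bN(t)) > \mu'_{\hat{B}_j}(\bN(t))$. Now I would compare gradients across the reconfiguration point $\tau$: at $\tau$ all the original tiers $B_1,\ldots,B_k$ share a common gradient (the hypothesis $\mu'_{B_1}(\bN(\tau)) = \cdots = \mu'_{B_k}(\bN(\tau))$), and by continuity this common value equals $\mu'_{\hat{B}_\ell}(\bN(\tau))$ for every post-reconfiguration tier $\ell$ as well. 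For the tier $(\hat{F}_i,\hat{B}_i)$ in $\hat{F}^+$: positive flow imbalance forces workloads to increase (Lemma \ref{lem: K+ lambda big} applied with the generalization \eqref{eq: generalization flow imbalance +}, or directly since each backend's drift is nonnegative), so $\mu'_{\hat{B}_i}(\bN(t)) \le \mu'_{\hat{B}_i}(\bN(\tau))$ because $\mu_b'$ is decreasing. For the tier $(\hat{F}_j,\hat{B}_j)$ in $\hat{B}^-$: negative flow imbalance forces workloads to decrease (Lemma \ref{lem: K- lambda small}), so $\mu'_{\hat{B}_j}(\bN(t)) \ge \mu'_{\hat{B}_j}(\bN(\tau))$. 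Combining, $\mu'_{\hat{B}_i}(\bN(t)) \le \mu'_{\hat{B}_i}(\bN(\tau)) = \mu'_{\hat{B}_j}(\bN(\tau)) \le \mu'_{\hat{B}_j}(\bN(t))$, which contradicts $\mu'_{\hat{B}_i}(\bN(t)) > \mu'_{\hat{B}_j}(\bN(t))$.

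The second claim is the mirror image: if $f \in \hat{F}^+$ and $b \in \mathcal{B}(f) \cap B$, then $f$ is in tier $(\hat{F}_i,\hat{B}_i)$ and $b$ is in some tier $(\hat{F}_j,\hat{B}_j)$; connectivity in $\mathcal{G}$ means either $b$ is in the same tier as $f$ (in which case $b \in \hat{B}_i \subseteq \hat{B}^+$ by definition of the $\pm$ buckets) or there is an arc $v_i \to v_j$ in $\mathcal{T}(\bN(t))$. In the latter case, suppose for contradiction $b \in \hat{B}^-$; then the same gradient inequality chain as above yields $\mu'_{\hat{B}_i}(\bN(t)) \le \mu'_{\hat{B}_j}(\bN(t))$, contradicting Corollary \ref{cor:gradient decreases along tiers} which gives strict inequality the other way. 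Hence $b \in \hat{B}^+$.

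The main obstacle I anticipate is handling the bookkeeping at the reconfiguration time $\tau$ cleanly: one must be careful that the ``common gradient'' statement transfers correctly from the original tiers to the new tiers (this needs continuity of $t \mapsto \mu'_b(N_b(t))$ together with the fact that $F = \bigcup \hat{F}_\ell$, $B = \bigcup \hat{B}_\ell$ so every new backend was in one of the old tiers at $\tau$), and that the monotonicity conclusions from Lemmas \ref{lem: K+ lambda big} and \ref{lem: K- lambda small} are being applied on the correct half-interval ($(\tau, t_2]$ for the new tiers, where Corollary \ref{cor: slide flow imbalance no change} guarantees the flow-imbalance sign is constant so each new tier genuinely sits in exactly one of $\hat{F}^+/\hat{F}^-$). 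Once that setup is pinned down, each case is a short gradient-monotonicity argument.
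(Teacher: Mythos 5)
Your proof is correct and takes essentially the same approach as the paper. The paper picks a best backend $\bar b \in S_{\bar f}(\bN(t))$ for the offending frontend and shows via the same continuity-plus-monotonicity chain across $\tau$ that $\mu'_{\bar b}(N_{\bar b}(t)) < \mu'_b(N_b(t))$, contradicting $\bar b \in S_{\bar f}(\bN(t))$ directly; you phrase the identical gradient comparison through the TierGraph and Corollary~\ref{cor:gradient decreases along tiers}, and you explicitly split off the same-tier subcase that the paper handles implicitly (both phrasings rest on the same fact that a frontend cannot prefer a backend with strictly lower marginal rate).
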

        Therefore,
        \begin{align}\label{eq:backends to frontends flow imbalance}
            \sum_{b\in B^+\cap \hat{B}^-} \mu_b(N_b(\tau)) \stackrel{(a)}{=} \sum_{b\in B^+\cap \hat{B}^-} \mu_b(N_b(\tau^-)) \stackrel{(b)}{<} \sum_{f\in  F^+ \cap (\cup_{b\in B^+\cap \hat{B}^-}\mathcal{F}(b))} \lambda_f \stackrel{(c)}{\leq} \sum_{f\in F^+\cap \hat{F}^-} \lambda_f,
        \end{align}
        where $(a)$ follows from the continuity of $\mu_b(N_b(t))$; $(b)$ follows from $b\in B^+$ at time $\tau^-$ and \eqref{eq: generalization flow imbalance +}; $(c)$ follows from $F \cap (\cup_{b\in B^+\cap \hat{B}^-}\mathcal{F}(b)) \subseteq\hat{F}^-$ by Lemma \ref{lem: hat B- connects to hat F-}.
        Similarly,
        \begin{align}\label{eq:frontends to backends flow imbalance}
            \sum_{f\in F^-\cap \hat{F}^+} \lambda_f  \stackrel{(a)}{\leq} \sum_{b \in B^-\cap (\cup_{f \in F^-\cap \hat{F}^+} \mathcal{B}(f))} \mu_b(N_b(\tau^-)) \stackrel{(b)}{\leq}\sum_{b\in B^-\cap \hat{B}^+} \mu_b(N_b(\tau^-)) \stackrel{(c)}{=} \sum_{b\in B^-\cap \hat{B}^+} \mu_b(N_b(\tau)),
        \end{align}
        where $(a)$ follows from $f\in F^-$ at time $\tau^-$ and \eqref{eq: generalization flow imbalance -}; $(b)$ follows from $B \cap (\cup_{f \in F^-\cap \hat{F}^+} \mathcal{B}(f)) \subseteq \hat{B}^+$ by Lemma \ref{lem: hat B- connects to hat F-}; $(c)$ follows from the continuity of $\mu_b(N_b(t))$.
        Therefore, by Lemma \ref{lem: difference in V_B change},
        \begin{align}
            V_B&(\bN(\tau^-), \bx(\tau^-))  -V_B(\bN(\tau^+),\bx(\tau^+)) \notag \\
            & = 2 \left(\sum_{f\in F^+\cap \hat{F}^-}\lambda_f - \sum_{b\in B^+\cap \hat{B}^-}\mu_b(N_b(\tau))\right) + 2 \left(\sum_{b\in B^-\cap \hat{B}^+} \mu_b(N_b(\tau)) - \sum_{f\in F^-\cap \hat{F}^+} \lambda_f \right)\quad > 0,\label{eq: lyapunov strictly decreasing}
        \end{align}
        where the inequality follows from \eqref{eq:backends to frontends flow imbalance} and \eqref{eq:frontends to backends flow imbalance}.\Halmos \end{proof}
\subsection{Proof of Lemma \ref{lem: difference in V_B change}}\label{apx: proof of difference in V_B}
\begin{proof}
The difference in $V_B$ is
        \begin{align*}
            V_B&(\bN(\tau^-), \bx(\tau^-))  -V_B(\bN(\tau^+),\bx(\tau^+))  \\
            & \stackrel{(a)}{=} \left(\sum_{f\in F^+} \lambda_f - \sum_{b\in B^+} \mu_b(N_b(\tau^-)) + \sum_{b\in B^-} \mu_b(N_b(\tau^-))-\sum_{f\in F^-} \lambda_f \right) \\
            & \quad\quad - \left(\sum_{f\in \hat{F}^+} \lambda_f - \sum_{b\in \hat{B}^+} \mu_b(N_b(\tau^+)) + \sum_{b\in \hat{B}^-} \mu_b(N_b(\tau^+))-\sum_{f\in \hat{F}^-} \lambda_f\right) \\
            & \stackrel{(b)}{=} \left(\sum_{f\in F^+} \lambda_f - \sum_{b\in B^+} \mu_b(N_b(\tau)) + \sum_{b\in B^-} \mu_b(N_b(\tau))-\sum_{f\in F^-} \lambda_f\right) \\
            & \quad\quad - \left(\sum_{f\in \hat{F}^+} \lambda_f - \sum_{b\in \hat{B}^+} \mu_b(N_b(\tau)) + \sum_{b\in \hat{B}^-} \mu_b(N_b(\tau))-\sum_{f\in \hat{F}^-} \lambda_f  \right)\\
            & = \sum_{f\in F^+ \setminus \hat{F}^+} \lambda_f - \sum_{f \in \hat{F}^+ \setminus F^+} \lambda_f + \sum_{b \in \hat{B}^+ \setminus B^+} \mu_b(N_b(\tau)) - \sum_{b\in B^+\setminus \hat{B}^+} \mu_b(N_b(\tau)) \\
            & \quad\quad + \sum_{b\in B^-\setminus \hat{B}^-} \mu_b(N_b(\tau)) - \sum_{b\in \hat{B}^- \setminus B^-} \mu_b(N_b(\tau)) + \sum_{f\in \hat{F}^- \setminus F^-} \lambda_f - \sum_{f\in F^- \setminus \hat{F}^-} \lambda_f \\
            & \stackrel{(c)}{=} \sum_{f\in F^+\cap \hat{F}^-} \lambda_f - \sum_{f\in F^-\cap \hat{F}^+} \lambda_f + \sum_{b\in B^-\cap \hat{B}^+} \mu_b(N_b(\tau)) - \sum_{b\in B^+\cap \hat{B}^-} \mu_b(N_b(\tau)) \\
            & \quad\quad + \sum_{b\in B^-\cap \hat{B}^+} \mu_b(N_b(\tau)) - \sum_{b\in B^+\cap \hat{B}^-} \mu_b(N_b(\tau)) + \sum_{f\in F^+\cap \hat{F}^-} \lambda_f - \sum_{f\in F^-\cap \hat{F}^+} \lambda_f \\
            & = 2 \left(\sum_{f\in F^+\cap \hat{F}^-}\lambda_f - \sum_{b\in B^+\cap \hat{B}^-}\mu_b(N_b(\tau))\right) + 2 \left(\sum_{b\in B^-\cap \hat{B}^+} \mu_b(N_b(\tau)) - \sum_{f\in F^-\cap \hat{F}^+} \lambda_f \right),
        \end{align*}
        where $(a)$ follows from Lemma \ref{lem: slide}; $(b)$ follows from the continuity of $\mu_b(N_b(t))$; $(c)$ follows from $F = F^+ \cup F^- = \hat{F}^+ \cup \hat{F}^-, B = B^+ \cup B^- = \hat{B}^+ \cup \hat{B}^-$.\Halmos \end{proof}

\subsection{Proof of Lemma \ref{lem: hat B- connects to hat F-}}\label{apx: proof of B- F-}
\begin{proof}
For the first half of the result, suppose not, i.e., there exists $\bar{f} \in \mathcal{F}(b) \cap F, \bar f \in \hat{F}^+$. Therefore, for $t\in (\tau, t_2]$, there exists $\bar{b} \in S_{\bar f}(\bN(t))$,
\[\mu'_{\bar{b}}(N_{\bar{b}} (t)) \stackrel{(a)}{<} \mu'_{\bar{b}}(N_{\bar{b}} (\tau))  \stackrel{(b)}{=} \mu'_{\bar{b}}(N_{\bar{b}} (\tau^-)) \stackrel{(c)}{=}  \mu'_{{b}}(N_{{b}} (\tau^-)) \stackrel{(b)}{=} \mu'_{b}(N_b(\tau)) \stackrel{(d)}{\leq} \mu'_b(N_b(t)),\]
where $(a)$ follows as $\bar f\in \hat{F}^+$ thus $\bar{b} \in \hat{B}^+$ and by Lemma \ref{lem: K+ lambda big} $N_{\bar{b}}(t) > N_{\bar{b}}(\tau)$ and $\mu_b'$ is decreasing; $(b)$ follows from the continuity of $\mu_b'(N_b(t))$; $(c)$ follows from the condition that all the backends share the same gradient at time $\tau$; $(d)$ follows as $b\in \hat{B}^-$ and by Lemma \ref{lem: K- lambda small} $N_{{b}}(t) \leq N_{{b}}(\tau)$ and $\mu_b'$ is decreasing. This contradicts $\bar{b} \in S_{\bar f}(\bN(t))$ but $b\notin S_{\bar f}(\bN(t))$ for $t\in (\tau, t_2]$.

The second half can be shown using a similar argument. Suppose not, i.e., there exists $\bar{b} \in \mathcal{B}(f) \cap B, \bar b \in \hat{B}^-$ and $b \in \mathcal{B}(f) \cap B, b \in \hat{B}^+$. Therefore, for $t\in (\tau, t_2]$,
\[\mu'_{\bar{b}}(N_{\bar{b}} (t)) \stackrel{(a)}{>} \mu'_{\bar{b}}(N_{\bar{b}} (\tau))  \stackrel{(b)}{=} \mu'_{\bar{b}}(N_{\bar{b}} (\tau^-)) \stackrel{(c)}{=}  \mu'_{{b}}(N_{{b}} (\tau^-)) \stackrel{(b)}{=} \mu'_{b}(N_b(\tau)) \stackrel{(d)}{\geq} \mu'_b(N_b(t)),\]
where $(a)$ follows as $\bar b\in \hat{B}^-$ and by Lemma \ref{lem: K- lambda small} $N_{\bar{b}}(t) < N_{\bar{b}}(\tau)$ and $\mu_b'$ is decreasing; $(b)$ follows from the continuity of $\mu_b'(N_b(t))$; $(c)$ follows as all the backends share the same gradient at time $\tau$; $(d)$ follows as $b\in \hat{B}^+$ and by Lemma \ref{lem: K+ lambda big} $N_{{b}}(t) \geq N_{{b}}(\tau)$ and $\mu_b'$ is decreasing. This contradicts $\bar{b} \notin S_{\bar f}(\bN(t))$ but $b\in S_{\bar f}(\bN(t))$ for $t\in (\tau, t_2]$.             \Halmos \end{proof}

\subsection{Proof of Corollary \ref{cor: non N star drift positive}}
\begin{proof}
The conditions imply that for all $f\in \mathcal{F}$ there exists $\bx_f \in X_f(\bN)$ such that $\sum_{b\in \mathcal{B}}\left|\sum_{f\in \mathcal{F}(b)}\lambda_f x_{f,b}  - \mu_b(N_b)\right| =0.$
Therefore, by Lemma \ref{lem: m V positive definite}, we must have $\bN = \bN^*$.\Halmos \end{proof}

\section{Convergence Rate}
\subsection{Proof of Lemma \ref{lem: capacity slack}}
\begin{proof} Let $\bN$ be a feasible solution, which is guaranteed to exist by Assumption~\ref{assmp: system stable}. We have for any subset $P \subseteq \mathcal{F}$, $\sum_{f\in P} \lambda_f \leq \sum_{b\in (\cup_{f\in P} \mathcal{B}(f))} \mu_b({N}_b)$.
    Note that $\limsup_{N\to\infty} \mu'_b(N) \leq \limsup_{N\to\infty} \mu_b(N)/N = 0,$
    where the inequality holds as the service rate functions are concave with $\mu_b(0)=0$, and the equality holds as the functions are bounded (Assumption \ref{assmp: increasing concave smooth ell}). Let $\kappa = \min_{b\in \mathcal{B}} \mu_b'(N_b)/2$. Since $\mu_b'$ is continuous (concavity and differentiability imply continuous differentiability) with $\mu_b'(0) > 0$ and $\mu_b'(N) \to 0$, by the intermediate value theorem there exists $\tilde{N}_b$ such that $\mu_b'(\tilde{N}_b) = \kappa$ for all $b\in \mathcal{B}$. Uniqueness follows by strict concavity.

    Then for any subset $P\subseteq \mathcal{F}$, $\sum_{f\in P} \lambda_f \leq \sum_{b\in (\cup_{f\in P} \mathcal{B}(f))} \mu_b({N}_b) < \sum_{b\in (\cup_{f\in P} \mathcal{B}(f))} \mu_b(\tilde{N}_b)$.
    Letting $\Delta=  \min_{b \in \mathcal{B}} \left\{ \mu_b(\tilde{N}_b) - \mu_b({N}_b) \right\}$
    finishes the proof.
\Halmos \end{proof}

\subsection{Proof of Proposition \ref{prop: K invariant set}}
\begin{proof}
    Suppose not, and let $b_1 \in \mathcal{B}$ be the first backend such that
    $\mu'_{b_1}(N_{b_1}(\tau^-)) > \kappa$ and $\mu'_{b_1}(N_{b_1}(\tau^+)) < \kappa$. Let $(F,B)$ be the tier in which $b_1$ lies. By definition of a tier, we have $\mu_b'(N_b(\tau)) \leq \kappa$ for all $b\in \cup_{f\in F} \mathcal{B}(f)$, otherwise the frontends in $F$ should send jobs to better connected backends than $b_1$. Meanwhile, for $b_1$ to be the first backend that moves out of $K$, we have that $\mu'_b(N_b(\tau)) \geq \kappa$ for all other backends $b\in \mathcal{B}$. Therefore, $\mu_b'(N_b(\tau)) = \kappa$ for all $b\in \cup_{f\in F} \mathcal{B}(f)$, i.e., $B = \cup_{f\in F} \mathcal{B}(f)$.

    As $\mu_b'(\cdot)$ is a decreasing function, this implies $\frac{d}{dt}N_{b_1}(\tau) >0$, i.e., the total job arrivals to backend $b_1$ must exceed its service rate. In other words, the tier that $b_1$ lies in, denoted by $(F,B)$, must have 1) $F\neq \emptyset$, and 2) by the contrapositive of Lemma~\ref{lem: K- lambda small}, we must have positive flow imbalance:
    $\sum_{f \in F} \lambda_f > \sum_{b\in B} \mu_b(N_b(\tau)) = \sum_{b\in B} \mu_b(\tilde{N}_b), $
    where the equality holds as 1) $\mu_{b}'(N_{b}(\tau)) = \kappa$ for all $b \in B$ as backends in the same tier share the gradient, and 2) $\kappa = \mu_b'(\tilde N_b)$ by definition of $\tilde N_b$. However, by Lemma \ref{lem: capacity slack}, we have $\sum_{f\in F}\lambda_f \leq \sum_{b\in B}\mu_b(\tilde{N}_b)$, a contradiction.
\Halmos \end{proof}

\subsection{Proof of Proposition \ref{prof: exponential convergence V in K}}
{
\begin{proof}
For state $\bN(t)$, let's define $F^+, F^-, B^+, B^-$ to be a partition of \textit{all} tiers (unlike in \eqref{eq: def of 8 subsets} where the partition is restricted to those involved in the tier reconfiguration):
\begin{align*}
        & F^+ = \bigcup_{i \in [k]: \sum_{f\in F_i} \lambda_f > \sum_{b\in B_i} \mu_b(N_b(t))} F_i, \quad B^+ =  \bigcup_{i \in [k]: \sum_{f\in F_i} \lambda_f > \sum_{b\in B_i} \mu_b(N_b(t))} B_i,\\
        & F^- =  \bigcup_{i \in [k]: \sum_{f\in F_i} \lambda_f \leq \sum_{b\in B_i} \mu_b(N_b(t))}F_i, \quad B^- = \bigcup_{i \in [k]: \sum_{f\in F_i} \lambda_f \leq \sum_{b\in B_i} \mu_b(N_b(t))} B_i.
\end{align*}
where $F_i \subseteq \mF, B_i \subseteq \mB$ and $(F_i, B_i)$ is a tier at $\bN(t)$ for all $i\in [k]$; $F_1 \cup \cdots \cup F_k = \mF, B_1 \cup \cdots B_k = \mB$. We have
    \begin{align*}
    V(\bN(t), \bx(t)) &  = \sum_{b\in \mB} \left|\sum_{f\in \mF} \lambda_f x_{f,b}(t) - \mu_b(N_b(t))\right|\\
    &  = \sum_{b\in B^+} \left(\sum_{f\in \mF} \lambda_f x_{f,b}(t) - \mu_b(N_b(t))\right) + \sum_{b\in B^-} \left(\mu_b(N_b(t)) - \sum_{f\in \mF} \lambda_f x_{f,b}\right)\\
    & = \sum_{f\in F^+}\lambda_f - \sum_{b\in B^+} \mu_b(N_b(t)) + \sum_{b\in B^-}\mu_b(N_b(t)) - \sum_{f\in F^-}\lambda_f
    \end{align*}

The partition $F^+, F^-, B^+, B^-$ remains unchanged in the constant tier case, single-tier splitting case, and the first two subcases of the tier reconfiguration case (where tiers involved all have positive/negative flow imbalances, see proof of Lemma \ref{lem:sync} in Appendix \ref{apx: proof of sync lemma}). As the service rate functions are differentiable, $V(\bN(t), \bx(t))$ is differentiable:
    \begin{align*}
    \frac{d}{dt} V(\bN(t), \bx(t)) & = -\sum_{b\in B^+} \mu_b'(N_b(t))\frac{d}{dt}N_b(t) + \sum_{b\in B^-} \mu_b'(N_b(t))\frac{d}{dt}N_b(t)\\
    & \stackrel{(a)}{=} -\sum_{b\in B^+} \mu_b'(N_b(t))\left|\frac{d}{dt}N_b(t)\right| - \sum_{b\in B^-} \mu_b'(N_b(t))\left|\frac{d}{dt}N_b(t)\right|\\
    & \stackrel{(b)}{\leq} -\sum_{b\in B^+} \kappa \left|\frac{d}{dt}N_b(t)\right|- \sum_{b\in B^-} \kappa\left|\frac{d}{dt}N_b(t)\right| \stackrel{(c)}{=} -\kappa V(\bN(t), \bx(t)),
    \end{align*}
    where $(a)$ follows from the definition of $B^+, B^-$ where backends in $B^+$ experience positive flow imbalance and vice versa; $(b)$ follows from $\mu_b'(N_b(t)) \geq \kappa$ as $\bN(t) \in K$; $(c)$ follows from the definition of the Lyapunov function.

    The remaining subcase of the tier reconfiguration case (where tiers involved don't all have positive/negative flow imbalances) is the only situation in which the partition $F^+, F^-, B^+, B^-$ can change. In this case, the Lyapunov function might not be differentiable. By \eqref{eq: lyapunov strictly decreasing} in Appendix \ref{apx: proof of sync lemma}, each such change causes a strict drop in the Lyapunov function. Because the Lyapunov function is non-increasing, it can have a countable number of discontinuities. Therefore, the number of times of the partition changes is countable and the condition on the statement of the proposition holds almost everywhere.
\Halmos \end{proof} }

\subsection{Proof of Proposition \ref{prop:exponential-convergence-sum-rifs}}
\begin{proof}
    For all $b \in \mathcal{B}$ let $\invrate_b$ denote the inverse function of $\mu_b$.
    Since $\mu_b$ is concave and strictly increasing, $\invrate_b$ is convex and strictly
    increasing on its domain of definition. The value $\invrate_b(y)$ is undefined when
    $y > \sup_{N} \{ \mu_b(N) \}$.

    For all $b \in \mathcal{B}$ let
    $\inflow^*_b = \sum_{f \in \mathcal{F}(b)} \lambda_f x^*_{f,b} = \mu_b(N^*_b)\, \inflow_b = \sum_{f \in \mathcal{F}(b)} \lambda_f x_{f,b}\, \tilde{\inflow}_b = \mu_b(N_b)\,.$
    We can interpret $\inflow^*_b$ as the inflow/service rate of backend $b$ at the optimal solution, $\inflow_b$ as the inflow at backend $b$ under $\bx$, and $\tilde{\inflow}_b$ as the service rate of backend $b$ at $\bN$.
    By the definition of the Lyapunov function $V$, we have
    \begin{equation} \label{eq:lyapunov-is-l1}
        V(\bN,\bx) = \| \binflow - \tilde{\binflow} \|_1 .
    \end{equation}
    By the chain rule and our assumption that $\bN \in K$, we have
    \begin{equation} \label{eq:chain-rule}
        \invrate'_b(\tilde{\inflow}_b) = \frac{1}{\mu'_b(N_b)} \in \left( 0, \frac{1}{\kappa} \right] .
    \end{equation}

    We first argue that $\sum_{b \in \mathcal{B}} N^*_b - \sum_{b \in \mathcal{B}} N_b \ge - V(\bN,\bx) / \kappa$. For the convex function
    $R(\bm{v}) = \sum_{b \in \mathcal{B}} \invrate_b(v)$
    we have
    \begin{align*}
        \sum_{b \in \mathcal{B}} N^*_b =
        R(\binflow^*) & \geq
        R(\tilde{\binflow}) +
        \langle \nabla R(\tilde{\binflow}), \, \binflow^* - \tilde{\binflow} \rangle  = R(\tilde{\binflow}) +
        \langle \nabla R(\tilde{\binflow}), \, \binflow^* - \binflow\rangle
        - \langle \nabla R(\tilde{\binflow}), \, \tilde{\binflow} - \binflow \rangle \\
        & \stackrel{(a)}{\geq}
        R(\tilde{\binflow})
        \, - \, \| \nabla R(\tilde{\binflow}) \|_{\infty}
        \| \binflow - \tilde{\binflow} \|_1 \, + \,
        \langle \nabla R(\tilde{\binflow}), \, \binflow^* - \binflow\rangle \\
        & \stackrel{(b)}{\geq}
        \sum_{b \in \mathcal{B}} N_b \, - \,
        \frac{V(\bN,\bx)}{\kappa} \, + \,
        \sum_{f \in \mathcal{F}} \lambda_f \langle \nabla R(\tilde{\binflow}), \, \bx^*_f - \bx_f \rangle  \stackrel{(c)}{\geq}
        \sum_{b \in \mathcal{B}} N_b \, - \,
        \frac{V(\bN,\bx)}{\kappa}\,,
    \end{align*}
    where (a) follows from H\"older inequality.
    Line (b) is justified by the following considerations.
    First, $R(\tilde{\binflow}) = \sum_{b \in \mathcal{B}} \invrate_b(\mu_b(N_b)) = \sum_{b \in \mathcal{B}} N_b$ by the definitions of $R(\cdot)$ and $\tilde{\binflow}$.
    Second, $\| \nabla R(\tilde{\binflow}) \|_{\infty} \| \binflow - \tilde{\binflow} \|_1 \leq V(\bN,\bx)/\kappa$ by equation~\eqref{eq:lyapunov-is-l1} and inequality~\eqref{eq:chain-rule}.
    Third, $\binflow^* = \sum_{f \in \mathcal{F}}\lambda_f \bx^*_f$ and $\binflow = \sum_{f \in \mathcal{F}} \lambda_f \bx_f$.
    Lastly, to justify line (c), we argue that for each $f \in \mathcal{F}$,
    \begin{equation} \langle \nabla R(\tilde{\binflow}), \, \bx_f \rangle
    \leq \langle \nabla R(\tilde{\binflow}), \, \bx^*_f \rangle.
    \label{eq:bxf-optimizes}
    \end{equation}
    In fact, the relation $\bx_f \in X_f(N_b)$ implies that
    $x_{f,b} > 0$ only for those $b \in \mathcal{B}(f)$ that
    maximize $\mu'_b(N_b)$. According to formula~\eqref{eq:chain-rule}
    these are exactly the $b \in \mathcal{B}(f)$ that
    minimize $\invrate'_b(\tilde{\inflow}_b)$. Accordingly,
    $\bx_f$ attains the minimum of
    $\langle \nabla R (\tilde{\binflow}), \, \bm{z} \rangle$
    over all $\bm{z}$ in the probability simplex
    $\bm{\Delta}(\mathcal{B}(f))$.
    Since $\bx^*_f$ also belongs to
    $\bm{\Delta}(\mathcal{B}(f))$, the
    inequality~\eqref{eq:bxf-optimizes} follows.

{
    We next argue that $\sum_{b \in \mathcal{B}} N_b - \sum_{b \in \mathcal{B}} N_b^* \ge - V(\bN,\bx) / \kappa$. As before, we have by convexity of $R(\cdot)$
    \begin{align*}
        \sum_{b \in \mathcal{B}} N_b =
        R(\tilde{\binflow}) & \geq
        R(\binflow^*) +
        \langle \nabla R(\binflow^*), \, \tilde{\binflow} - \binflow^*  \rangle = R(\binflow^*) -
        \langle \nabla R(\binflow^*), \, \binflow - \tilde{\binflow} \rangle
        + \langle \nabla R(\binflow^*), \,  \binflow -  \binflow^*\rangle \\
        & \stackrel{(a)}{\geq}
        R(\binflow^*)
        \, - \, \| \nabla R(\binflow^*) \|_{\infty}
        \| \binflow - \tilde{\binflow} \|_1 \, + \,
        \langle \nabla R(\binflow^*), \, \binflow - \binflow^* \rangle \\
        & \stackrel{(b)}{\geq}
        \sum_{b \in \mathcal{B}} N^*_b \, - \,
        \frac{V(\bN,\bx)}{\kappa} \, + \,
        \sum_{f \in \mathcal{F}} \lambda_f \langle \nabla R(\binflow^*), \, \bx_f - \bx^*_f \rangle \stackrel{(c)}{\geq}
        \sum_{b \in \mathcal{B}} N_b^* \, - \,
        \frac{V(\bN,\bx)}{\kappa}\,,
    \end{align*}
    where (a) follows from H\"older inequality.

    Line (b) is justified by the following considerations.
    First, $R(\binflow^*) = \sum_{b \in \mathcal{B}} \invrate_b(\mu_b(N_b^*)) = \sum_{b \in \mathcal{B}} N_b^*$ by the definitions of $R(\cdot)$ and $\binflow^*$.
    Second, $\| \nabla R(\binflow^*) \|_{\infty} \| \binflow - \tilde{\binflow} \|_1 \leq V(\bN,\bx)/\kappa$ by equation~\eqref{eq:lyapunov-is-l1} and inequality~\eqref{eq:chain-rule}.
    Third, $\binflow^* = \sum_{f \in \mathcal{F}} \lambda_f \bx^*_f$ and $\binflow = \sum_{f \in \mathcal{F}} \lambda_f \bx_f$.
    Lastly, to justify line (c), we argue that for each $f \in \mathcal{F}$,
    \begin{equation} \langle \nabla R(\binflow^*), \, \bx_f^* \rangle
    \leq \langle \nabla R(\binflow^*), \, \bx_f \rangle.
    \label{eq:bxf-optimizes-opp}
    \end{equation}
    This is because $\bx^*_f$ attains the minimum of
    $\langle \nabla R (\binflow^*), \, \bm{z} \rangle$
    over all $\bm{z}$ in the probability simplex
    $\bm{\Delta}(\mathcal{B}(f))$.
    Since $\bx_f$ also belongs to
    $\bm{\Delta}(\mathcal{B}(f))$, the
    inequality~\eqref{eq:bxf-optimizes-opp} follows.}
\Halmos \end{proof}

\subsection{Proof of Proposition \ref{prop: linear converges to K}}
\begin{proof}
Define $Q(t) = \{b \in \mB: N_b(t) > \tilde{N}_b\}\cup \{b \in \mB: N_b(t) = \tilde{N}_b, \dot{N}_b(t) \geq 0\}$.
When $\bN(t) \notin K$, $Q(t)$ is non-empty, and for all $b \in Q(t)$  the service rates satisfy
\begin{align}\label{eq:monotonicity-of-ell}
    \mu_{b}(N_{b}(t)) \geq \mu_{b}(\tilde{N}_{b})\,,
\end{align}
because $\mu_{b}(\cdot)$ is increasing.

Consider a Lyapunov function $J(\bN) = \sum_{b\in \mB} \max\{N_b - \tilde{N}_b, 0\} $, then we have $J(\bN) \geq 0$ for all $\bN$, and $J(\bN) = 0$ if and only if $\bN \in K$.
We proceed to show that $J(\bN(t))$ decreases (at least) linearly fast with rate $\min\{\Delta,  \min_{b\in \mB} \mu_b(\tilde{N}_b)\}$ when $\bN(t) \notin K$, thus it takes no more than $J(\bN(0))/\min\{\Delta,  \min_{b\in \mB} \mu_b(\tilde{N}_b)\}$ for $\bN(t)$ to reach the invariant set $K$ starting from $\bN(0)$ at time $t=0$.
By definition of $J(\bN(t))$ and $Q(t)$, we have $\frac{d}{dt} J(\bN(t)) {=} \sum_{b\in Q(t)} \frac{d}{dt} N_b(t).$

Let $P(t)$ denote the set of frontends that share tiers with backends $b\in Q(t)$, which could be an empty set.
Under GMSR, frontends always send arrivals to connected backends with highest gradient, thus no frontend in $P(t)$ is connected to a backend not in $Q(t)$ ($(f,b) \notin \mathcal{E}$ for $f\in P(t), b \notin Q(t)$), otherwise it does not belong to the current tier.
\begin{itemize}
    \item If $P(t) = \emptyset$,
    \[    \frac{d}{dt} J(\bN(t)) = \sum_{b\in Q(t)} \frac{d}{dt} N_b(t) \stackrel{(a)}{=}  - \sum_{b\in Q(t)} \mu_b(N_b(t)) \stackrel{(b)}{\leq} - \sum_{b\in Q(t)} \mu_b(\tilde{N}_b) \leq - \min_{b\in \mB} \mu_b(\tilde{N}_b), \]
    where $(a)$ holds as $P(t) = \emptyset$; $(b)$ follows from \eqref{eq:monotonicity-of-ell}.
    \item If $P(t) \neq \emptyset$,
    \[    \frac{d}{dt} J(\bN(t)) = \sum_{b\in Q(t)} \frac{d}{dt} N_b(t) \stackrel{(c)}{=} \sum_{f\in P(t)} \lambda_f - \sum_{b\in Q(t)} \mu_b(N_b(t)) \stackrel{(d)}{\leq} \sum_{f\in P(t)} \lambda_f - \sum_{b\in Q(t)} \mu_b(\tilde{N}_b) \stackrel{(e)}{\leq} -\Delta, \]
where $(c)$ follows from \eqref{eq: sum of dotN in B} in Appendix~\ref{sec:appendix-stability} that holds for each tier and here we sum over all tiers that are formed with frontends $P(t)$ and backends $Q(t)$; $(d)$ follows from \eqref{eq:monotonicity-of-ell}; and $(e)$ follows from Lemma \ref{lem: capacity slack}.
\end{itemize} \Halmos \end{proof}

\section{Preliminary for Network Flow Terminologies}\label{apx: graph def preliminary}
\begin{definition}[Maximum s-t Flow, Definition 2.3 \cite{williamson2019network}]
Given a directed graph $G = (V,A)$, and capacities on the arcs $u(i,j)\geq 0$. An s-t flow $w(i,j)$ is an assignment of reals to the arcs such that the following three properties are obeyed:
\begin{itemize}
    \item for all arcs $(i,j)\in A, w(i,j)\leq u(i,j)$;
    \item for all $i\in V$ such that $i\neq s, t$, the net flow leaving $i$ is zero; that is
    \[\sum_{k:(i,k)\in A} w(i,k) = 0;\]
    \item for all $(i,j) \in A$,
    \[w(i,j) = -w(j,i).\]
\end{itemize}
The value of an s-t flow $w$ is
\[|w| = \sum_{k:(s,k)\in A} w(s,k) - \sum_{k:(k,s) \in A} w(k,s).\]
A maximum s-t flow is an s-t flow of maximum value.
\end{definition}

\begin{definition}[Minimum s-t Cut]
Given a directed graph $G = (V,A)$, and capacities on the arcs $u(i,j)\geq 0$. An s-t cut $(S,T)$ is a partition of $V$ such that $s\in S$ and $t \in T$. The capacity of an s-t cut $(S,T)$ is the sum of the capacities of all the arcs leaving $S$ to $T$,
    \[c(S,T) = \sum_{(i,j) \in A: i\in S, j\in T}u(i,j).\]
A minimum s-t cut is an s-t cut of minimum capacity.
\end{definition}

\begin{definition}[Residual Graph]
Given flow $w$ on a directed graph $G = (V,A)$, the residual graph with respect to flow $w$ is a graph $G_w = (V,A)$, where each arc has a residual capacity $u_w(i,j) = u(i,j) - w(i,j)$. Let $A_w = \{(i,j)\in A: u_w(i,j) >0\}$ denote the set of all arcs with positive residual capacity.
Arcs $(i,j)$ with zero residual capacity are said to be saturated. Note that with the definition that uses reverse arcs, the residual capacity of such reverse arcs $(j,i)$ is $w(i,j)$.
\end{definition}

\section{Proofs and Details for the Overloaded System}\label{apx: overload system}

\subsection{Stability Decomposition}
When the total processing capacity of the backends cannot satisfy all incoming requests, the total workload in the system will explode, i.e.,
\[\lim_{t\to\infty}\sum_{b\in \mB} N_b(t) = \infty.\]

To identify the subset of backends that can still be stabilized, we introduce the following stability decomposition.

\begin{definition}[Stability Decomposition]\label{def: stability decomposition}
    Given the connectivity graph $\mathcal{G} = (\mathcal{F},\mathcal{B},\mathcal{E})$, arrival rates $\{\lambda_f\}_{f\in \mathcal{F}}$ and service rate functions $\{\mu_b(\cdot)\}_{b\in \mathcal{B}}$, we define $\tilde{\mathcal{F}}\subseteq \mathcal{F}, \tilde{\mathcal{B}}\subseteq \mathcal{B}$ to be the sets that satisfy the following properties:
\begin{itemize}
    \item $\tF = \cup_{b\in \tB} \mF(b)$;
    \item for every subset $P\subseteq \tilde{\mathcal{F}}$,
    \[\sum_{f\in P}\lambda_f < \sum_{b\in \tilde{\mathcal{B}} \cap (\cup_{f\in P} \mathcal{B}(f))} \mu_b(\infty);\]
    \item for every subset $Q\subseteq \mathcal{B}\setminus\tilde{\mathcal{B}}$,
    \[\sum_{f\in \cup_{b\in Q}\mathcal{F}(b) \setminus \tilde{\mathcal{F}}}\lambda_f \geq \sum_{b\in Q} \mu_b(\infty).\]
\end{itemize}
\end{definition}
The first condition establishes the ``inverted-Z'' structure between $\tF, \tB, \mF\setminus \tF, \mB\setminus\tB$ as shown in Figure \ref{fig:maxflowmincutQ}.
Specifically, the first condition requires $\tF$ to be the set of frontends that connect to backends in $\tB$.

The last two conditions establish contrasting stability properties for the two subgraphs:
\begin{itemize}
    \item First subgraph: $(\tF, \tB, \{(f,b) \in \mathcal{E}: f\in \tF, b\in \tB\})$.
    \item Second subgraph: $(\mF\setminus\tF, \mB\setminus\tB, \{(f,b) \in \mathcal{E}: f\in \mF\setminus\tF, b\in \mB\setminus\tB\})$.
\end{itemize}
The second condition ensures the workload in the first subgraph can be stabilized by requiring that for any subset of frontends in $\tF$, the maximum service capacity of their connected backends in $\tB$ strictly exceed the total arrival rates.
The last condition establishes that the second subgraph is unstable, i.e., all backends' workload will grow unbounded by requiring that for any subset of backends in $\mB\setminus\tB$, their total service capacity be insufficient to handle the incoming jobs from frontends in $\mF\setminus\tF$ connected to them.

Next, we establish the equivalence between stability decomposition and minimum cuts by constructing an extended flow network.
Consider a graph, denoted by $\tilde{\mathcal{G}}$, with a source $s$ connected to all frontends in $\mF$ and a sink $t$ connected to all backends in $\mB$:
\begin{align}\label{eq: def of tG}
    &\tilde{\mathcal{G}} = (\{s, t\}\cup \mF\cup \mB, \{(f,b):(f,b) \in \mathcal{E}\}\cup\{(s,f): f\in \mF\}\cup\{(b,t): b \in \mB\}),\\
    & \text{ with }u(f,b) = \infty \text{ for }(f,b) \in \mathcal{E}, u(s,f) = \lambda_f, \text{ for } f\in \mF, u(b,t) = \mu_b(\infty), \text{ for } b\in \mB. \notag
\end{align}
where $u(i,j)$ denote the capacity on the arcs $(i,j)$.

\begin{figure}[htbp]
    \centering
    \includegraphics[width=0.5\linewidth]{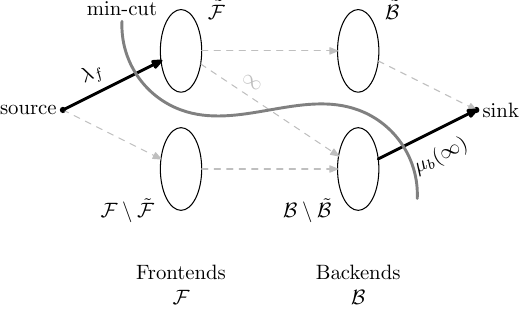}
    \caption{Illustration of the Minimum s-t Cut Given by the Stability Decomposition in $\tilde{\mathcal{G}}$ (defined in \eqref{eq: def of tG}).}
    \label{fig:maxflowmincutQ}
\end{figure}

\begin{lemma}[A Stability Decomposition is a Minimum s-t Cut with Least Balanced Cardinality]\label{lem: stability decomposition is a cut}
Let $\tF, \tB$ be the sets defined in Definition \ref{def: stability decomposition}, then $(\{s\}\cup\{\mF\setminus\tF\}\cup\{\mB\setminus\tB\}, \{t\}\cup \tF\cup\tB)$ is a minimum s-t cut for the directed graph $\tilde{\mathcal{G}}$ defined in \eqref{eq: def of tG} with least balanced cardinality. That is, for every minimum s-t cut $(S,T)$ of graph $\tilde{\mathcal{G}}$, we have $\{\{t\}\cup \tF \cup \tB \} \subseteq T$.
\end{lemma}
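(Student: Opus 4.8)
Here is the plan of attack. The whole argument takes place in the flow network $\tilde{\mathcal G}$ of \eqref{eq: def of tG}. For $P\subseteq\mF$ write $\mathcal B(P):=\bigcup_{f\in P}\mathcal B(f)$ for the set of backends reachable from $P$, and put $h(P):=\sum_{f\in\mF\setminus P}\lambda_f+\sum_{b\in\mathcal B(P)}\mu_b(\infty)$. First I would record two consequences of Definition~\ref{def: stability decomposition}: (i) by the first property no edge of $\mathcal E$ joins $\mF\setminus\tF$ to $\tB$, so the proposed partition $(\{s\}\cup(\mF\setminus\tF)\cup(\mB\setminus\tB),\ \{t\}\cup\tF\cup\tB)$ carries no infinite-capacity edge and has capacity exactly $\sum_{f\in\tF}\lambda_f+\sum_{b\in\mB\setminus\tB}\mu_b(\infty)$; and (ii) applying the third property to singletons $Q=\{b\}$, $b\in\mB\setminus\tB$, together with $\mu_b(\infty)>0$ (Assumption~\ref{assmp: increasing concave smooth ell}), shows every such $b$ has a neighbour in $\mF\setminus\tF$, so $\mathcal B(\mF\setminus\tF)=\mB\setminus\tB$ and $h(\mF\setminus\tF)$ equals the capacity just computed. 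Then a routine network-flow reduction: for any finite $s$-$t$ cut $(S,T)$, finiteness forces $\mathcal B(\mF\cap S)\subseteq\mB\cap S$, so its capacity is $\sum_{f\notin\mF\cap S}\lambda_f+\sum_{b\in\mB\cap S}\mu_b(\infty)\ge h(\mF\cap S)$; conversely $(\{s\}\cup P\cup\mathcal B(P),\ \cdot)$ is a valid cut of capacity $h(P)$ for every $P$. Hence $\mathrm{mincut}(\tilde{\mathcal G})=\min_{P\subseteq\mF}h(P)$, and moreover any minimum cut must satisfy $\mB\cap S=\mathcal B(\mF\cap S)$ exactly, because moving a source-side backend with no source-side neighbour to the sink side strictly decreases the capacity (this uses $\mu_b(\infty)>0$).

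The heart of the proof is the inequality $h(P)\ge h(\mF\setminus\tF)$ for every $P\subseteq\mF$. I would split $P=P_1\cup P_2$ with $P_1=P\cap\tF$ and $P_2=P\setminus\tF$, observe (using (i)) that $\mathcal B(P_2)\cap\tB=\emptyset$ so $\mathcal B(P)\cap\tB=\mathcal B(P_1)\cap\tB=:R$, and expand
\[
h(P)-h(\mF\setminus\tF)=\Big(\sum_{f\in(\mF\setminus\tF)\setminus P_2}\lambda_f-\!\!\!\sum_{b\in(\mB\setminus\tB)\setminus\mathcal B(P)}\!\!\!\mu_b(\infty)\Big)+\Big(\sum_{b\in R}\mu_b(\infty)-\sum_{f\in P_1}\lambda_f\Big).
\]
The second bracket is $\ge 0$, and is $>0$ unless $P_1=\emptyset$, directly by the second property of Definition~\ref{def: stability decomposition} applied to $P_1\subseteq\tF$ (since $R=\tB\cap\mathcal B(P_1)$). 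For the first bracket I would set $Q:=(\mB\setminus\tB)\setminus\mathcal B(P_2)$; then $(\mB\setminus\tB)\setminus\mathcal B(P)\subseteq Q$, and one checks $(\bigcup_{b\in Q}\mathcal F(b))\setminus\tF\subseteq(\mF\setminus\tF)\setminus P_2$ because an $f\in P_2$ adjacent to some $b\in Q$ would put $b$ in $\mathcal B(P_2)$, contradicting $Q\cap\mathcal B(P_2)=\emptyset$. Hence the third property of Definition~\ref{def: stability decomposition} applied to $Q$ gives $\sum_{f\in(\mF\setminus\tF)\setminus P_2}\lambda_f\ge\sum_{b\in Q}\mu_b(\infty)\ge\sum_{b\in(\mB\setminus\tB)\setminus\mathcal B(P)}\mu_b(\infty)$, so the first bracket is $\ge 0$ too. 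Therefore $h$ is minimized at $\mF\setminus\tF$, which by the reduction of the first paragraph means the proposed partition is a minimum $s$-$t$ cut.

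For the "least balanced'' claim, let $(S,T)$ be any minimum cut. By the reduction $P:=\mF\cap S$ minimizes $h$, so $h(P)-h(\mF\setminus\tF)=0$; by the equality analysis of the previous paragraph this forces $P_1=P\cap\tF=\emptyset$ (otherwise the second bracket is strictly positive), i.e.\ $\tF\subseteq T$. Since $\mB\cap S=\mathcal B(P)\subseteq\mathcal B(\mF\setminus\tF)=\mB\setminus\tB$, we also get $\tB\subseteq T$, and $t\in T$ trivially, so $\{t\}\cup\tF\cup\tB\subseteq T$, as desired.

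The main obstacle is the key inequality $h(P)\ge h(\mF\setminus\tF)$: it requires the two-way split of $P$ into $P_1,P_2$, the matching split of $\mB$ through $R$ and $Q$, and careful bookkeeping so that the pieces land exactly in the form demanded by the second and third properties of the stability decomposition (in particular the containment $(\bigcup_{b\in Q}\mathcal F(b))\setminus\tF\subseteq(\mF\setminus\tF)\setminus P_2$ and the identity $\mathcal B(P)\cap\tB=\mathcal B(P_1)\cap\tB$). The supporting pieces — finiteness of the proposed cut, the identity $\mathrm{mincut}=\min_P h(P)$, the fact $\mathcal B(\mF\setminus\tF)=\mB\setminus\tB$, and the exactness $\mB\cap S=\mathcal B(\mF\cap S)$ in a minimum cut — are routine, provided one is careful that $\mu_b(\infty)>0$ for every $b$ so that superfluous source-side backends are always strictly suboptimal.
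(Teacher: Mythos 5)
Your proof is correct, and it takes a genuinely more unified route than the paper's. The paper proves Part~1 (the proposed partition is a minimum cut) by computing the max flow of $\tilde{\mathcal{G}}$ through a decomposition into two subgraphs together with two auxiliary Hall-type max-flow lemmas, then proves Part~2 (least-balanced cardinality) separately by contradiction, parameterizing alternate minimum cuts by the sink-side backend set $\hat B=\mB\cap T$ and splitting $\hat B=(\hat B\cap\tB)\sqcup(\hat B\setminus\tB)$. You instead parameterize finite cuts directly by the source-side frontend set $P=\mF\cap S$, reduce the problem to showing $\min_{P\subseteq\mF}h(P)=h(\mF\setminus\tF)$, and prove this by splitting $P=(P\cap\tF)\sqcup(P\setminus\tF)$ and arguing each of the two bracketed terms is nonnegative (strictly positive in the second bracket whenever $P\cap\tF\neq\emptyset$). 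This simultaneously yields both conclusions: the inequality gives Part~1 without invoking the max-flow subgraph decomposition, and the equality analysis gives Part~2. The core algebraic manipulation and the uses of the second and third properties of the stability decomposition are the same in spirit as the paper's Part~2; the parameterizations by $P$ (yours) and by $\hat B$ (the paper's) are complementary descriptions of the same finite cuts. What your approach buys is economy: Part~1 comes for free rather than needing a separate max-flow argument, and the two parts share a single computation. Both approaches rely on $\mu_b(\infty)>0$ (which indeed follows from Assumption~\ref{assmp: increasing concave smooth ell}) to rule out superfluous source-side backends and to establish $\mathcal B(\mF\setminus\tF)=\mB\setminus\tB$, and both implicitly apply the second property of Definition~\ref{def: stability decomposition} only to nonempty $P\subseteq\tF$ (as stated, the strict inequality fails for $P=\emptyset$); your handling of the $P_1=\emptyset$ case is consistent with this.
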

See Figure \ref{fig:maxflowmincutQ} for illustration of the minimum s-t cut.
\begin{proof}
The capacity of the s-t cut is given by the sum of the capacities of all the arcs connecting the two parts, i.e., $\sum_{f\in \tF} \lambda_f + \sum_{b\in \mB\setminus\tB} \mu_b(\infty)$.

\textbf{Part 1.}
We prove the first part of the result by showing the max-flow of this graph is $\sum_{f\in \tF} \lambda_f + \sum_{b\in \mB\setminus\tB} \mu_b(\infty)$, which by the Max-Flow Min-Cut Theorem implies that $(\{s\}\cup\{\mF\setminus\tF\}\cup\{\mB\setminus\tB\}, \{t\}\cup \tF\cup\tB)$ is a minimum s-t cut.

We first state two useful lemmas.
\begin{lemma}
If for every subset of frontends $P \subseteq \mF$,
\[\sum_{f\in P} \lambda_f < \sum_{b\in \cup_{f\in P}\mB(f)} \mu_b(\infty),\]
the max flow of the graph $\tilde{\mathcal{G}}$ is $\sum_{f\in \mF}\lambda_f$.
\end{lemma}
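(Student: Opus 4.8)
The plan is a direct application of the max-flow--min-cut theorem to the network $\tilde{\mathcal{G}}$. First I would record the trivial upper bound: the $s$-$t$ cut $(\{s\}, V\setminus\{s\})$ has capacity $\sum_{f\in\mathcal{F}}\lambda_f$, since the only arcs leaving $s$ are the $(s,f)$ with capacity $\lambda_f$. Hence the maximum flow value is at most $\sum_{f\in\mathcal{F}}\lambda_f$, and it suffices to show that every $s$-$t$ cut has capacity at least $\sum_{f\in\mathcal{F}}\lambda_f$; then the minimum cut equals $\sum_{f\in\mathcal{F}}\lambda_f$ and so does the maximum flow.

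Next I would take an arbitrary $s$-$t$ cut $(S,T)$ and assume it has finite capacity, since otherwise there is nothing to prove. Writing $F_S=\mathcal{F}\cap S$, $F_T=\mathcal{F}\cap T$, $B_S=\mathcal{B}\cap S$, $B_T=\mathcal{B}\cap T$, the key observation is that the arcs $(f,b)\in\mathcal{E}$ have infinite capacity, so a finite cut cannot sever any of them; therefore no arc of $\mathcal{E}$ crosses from $S$ to $T$, which forces $\bigcup_{f\in F_S}\mathcal{B}(f)\subseteq B_S$. Consequently the arcs crossing the cut are exactly the $(s,f)$ with $f\in F_T$ together with the $(b,t)$ with $b\in B_S$, and the cut capacity equals $\sum_{f\in F_T}\lambda_f+\sum_{b\in B_S}\mu_b(\infty)$.

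Finally I would lower-bound this quantity. If $F_S=\emptyset$ then $F_T=\mathcal{F}$ and the capacity is already at least $\sum_{f\in\mathcal{F}}\lambda_f$. Otherwise, applying the hypothesis with $P=F_S$ and using $\bigcup_{f\in F_S}\mathcal{B}(f)\subseteq B_S$ gives $\sum_{b\in B_S}\mu_b(\infty)\ge\sum_{b\in\bigcup_{f\in F_S}\mathcal{B}(f)}\mu_b(\infty)>\sum_{f\in F_S}\lambda_f$, whence the cut capacity strictly exceeds $\sum_{f\in F_T}\lambda_f+\sum_{f\in F_S}\lambda_f=\sum_{f\in\mathcal{F}}\lambda_f$. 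Combining with the upper bound and invoking max-flow--min-cut completes the argument. There is no genuine obstacle here; the only care needed is in arguing that a finite-capacity cut respects the infinite-capacity arcs of $\mathcal{E}$, and in handling the degenerate case $F_S=\emptyset$ separately (the stated hypothesis being vacuous at $P=\emptyset$).
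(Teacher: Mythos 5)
Your proof is correct and uses the same max-flow--min-cut argument as the paper, which in its proof of this lemma just points back to the analogous cut analysis in the proof of Lemma~\ref{lem: OPT lower bound} (with $\mu_b(\bar N_b)$ replaced by $\mu_b(\infty)$). Your writeup is a clean direct version of that cut analysis, handling finite cuts via $\bigcup_{f\in F_S}\mathcal{B}(f)\subseteq B_S$ just as the paper's three-type enumeration implicitly does.
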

\begin{proof}
If we relax the strict inequality condition, the result follows by the same max-flow argument used in the proof of Lemma~\ref{lem: OPT lower bound}. \Halmos
\end{proof}

\begin{lemma}
If for every subset of backends $Q \subseteq \mB$,
\[\sum_{f\in \cup_{b\in Q} \mF(b)} \lambda_f \geq \sum_{b\in Q} \mu_b(\infty),\]
the max flow of the graph $\tilde{\mathcal{G}}$ is $\sum_{b\in \mB} \mu_b(\infty)$.
\end{lemma}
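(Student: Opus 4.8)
The plan is to apply the Max-Flow Min-Cut theorem and show that the minimum $s$-$t$ cut of $\tilde{\mathcal{G}}$ has capacity exactly $\sum_{b\in\mathcal{B}}\mu_b(\infty)$, a finite quantity since the service rate functions are bounded (Assumption~\ref{assmp: increasing concave smooth ell}). For the easy direction, I would exhibit the cut $(S,T)$ with $S=\{s\}\cup\mathcal{F}\cup\mathcal{B}$ and $T=\{t\}$: the only arcs leaving $S$ are $\{(b,t):b\in\mathcal{B}\}$, so this cut has capacity $\sum_{b\in\mathcal{B}}\mu_b(\infty)$, and hence the maximum flow is at most this quantity.

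The substance of the proof is the reverse bound: every $s$-$t$ cut of $\tilde{\mathcal{G}}$ has capacity at least $\sum_{b\in\mathcal{B}}\mu_b(\infty)$. Let $(S,T)$ be an arbitrary $s$-$t$ cut. If some arc $(f,b)\in\mathcal{E}$ has $f\in S$ and $b\in T$, the cut capacity is infinite and there is nothing to prove, so assume no such arc crosses the cut. Write $P:=\mathcal{F}\cap S$ and $R:=\mathcal{B}\cap S$. The no-crossing assumption says that for every $f\in P$ we have $\mathcal{B}(f)\subseteq R$; equivalently, setting $Q:=\mathcal{B}\setminus R$, no frontend connected to a backend in $Q$ lies in $P$, i.e.\ $\bigcup_{b\in Q}\mathcal{F}(b)\subseteq\mathcal{F}\setminus P$. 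The capacity of the cut consists of the source arcs entering $T$ and the sink arcs leaving $S$, namely $\sum_{f\in\mathcal{F}\setminus P}\lambda_f+\sum_{b\in R}\mu_b(\infty)$.

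Using $\lambda_f\ge 0$, the inclusion just derived, and the hypothesis of the lemma applied to the subset $Q\subseteq\mathcal{B}$, we obtain
\[
\sum_{f\in\mathcal{F}\setminus P}\lambda_f\ \ge\ \sum_{f\in\bigcup_{b\in Q}\mathcal{F}(b)}\lambda_f\ \ge\ \sum_{b\in Q}\mu_b(\infty)\ =\ \sum_{b\in\mathcal{B}\setminus R}\mu_b(\infty),
\]
so the cut capacity is at least $\sum_{b\in\mathcal{B}\setminus R}\mu_b(\infty)+\sum_{b\in R}\mu_b(\infty)=\sum_{b\in\mathcal{B}}\mu_b(\infty)$. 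Combining with the upper bound, the minimum cut capacity equals $\sum_{b\in\mathcal{B}}\mu_b(\infty)$, and by Max-Flow Min-Cut so does the maximum flow. The only mildly delicate step is the bookkeeping of which arcs cross an arbitrary cut and the reduction to cuts containing no infinite-capacity $\mathcal{E}$-arc; beyond that, the argument is a direct application of the hypothesis, mirroring the preceding lemma and, at a higher level, the max-flow argument in the proof of Lemma~\ref{lem: OPT lower bound}.
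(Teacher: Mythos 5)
Your argument is correct. You apply Max-Flow Min-Cut directly to $\tilde{\mathcal{G}}$: the cut $(\{s\}\cup\mF\cup\mB,\{t\})$ gives the upper bound $\sum_{b\in\mB}\mu_b(\infty)$, and for the lower bound you correctly restrict to cuts with no $\mathcal{E}$-arc crossing (any other cut has infinite capacity), identify $P=\mF\cap S$, $R=\mB\cap S$, observe that the no-crossing condition forces $\bigcup_{b\in\mB\setminus R}\mF(b)\subseteq\mF\setminus P$, and then the hypothesis applied to $Q=\mB\setminus R$ yields cut capacity $\ge\sum_{b\in\mB}\mu_b(\infty)$.

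The paper proves this lemma in one line, by appealing to the symmetry of $\tilde{\mathcal{G}}$ under swapping source with sink, reversing arcs, and exchanging the roles of $\lambda_f$ and $\mu_b(\infty)$; the preceding lemma (which handles the dual hypothesis on subsets of frontends) is itself delegated to the max-flow min-cut argument in the proof of Lemma~\ref{lem: OPT lower bound}. Your proof spells out exactly what that symmetry appeal implicitly relies on, so it is the same underlying technique, just made self-contained rather than transferred by a meta-argument. The one mild advantage of your version is that it does not require the reader to trace through the chain of references and verify that the strict-versus-nonstrict inequality in the companion lemma's hypothesis is immaterial (the paper has to flag this explicitly when proving the companion lemma).
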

\begin{proof}
    The result is a direct application of the above lemma due to symmetry, i.e., switching frontends and backends. \Halmos
\end{proof}

Applying the two lemmas into different subgraphs of the connectivity graph $\tilde{\mathcal{G}}$, we have:

\begin{enumerate}
    \item the max flow of the subgraph
\[(\tF, \tB, \{(f,b)\in \mathcal{E}: f\in \tF, b\in \tB\})\]
is $\sum_{f\in \tF} \lambda_f$;
    \item the max flow of the subgraph
\[(\mF\setminus\tF, \mB\setminus\tB, \{(f,b)\in \mathcal{E}: f\in \mF\setminus\tF, b\in \mB\setminus\tB\})\]
is $\sum_{b\in \mB\setminus\tB} \mu_b(\infty)$.
    \item the max flow of the subgraph
\[(\mF\setminus \tF, \mB, \{(f,b)\in \mathcal{E}: f\in \mF\setminus \tF, b\in \mB\})\]
is equivalent to the one in 2), i.e, $\sum_{b\in \mB\setminus\tB} \mu_b(\infty)$, as $\tB$ and $\mF\setminus\tF$ are not connected.
\end{enumerate}
Using 1) and 2), as the disjoint union of the two subgraphs is a subgraph of the connectivity graph $\mathcal{G}$, the max flow of the whole graph is no smaller than $\sum_{f\in \tF} \lambda_f + \sum_{b\in \mB\setminus \tB} \mu_b(\infty)$.

Using 3) and the fact that adding the set of frontends $\tF$ back (along with all edges) can at most add $\sum_{f\in \tF}\lambda_f$ flow to the graph, the max flow of the whole graph is no bigger than $\sum_{f\in \tF} \lambda_f + \sum_{b\in \mB\setminus \tB} \mu_b(\infty)$.

Therefore, the max flow of the graph is $\sum_{f\in \tF} \lambda_f + \sum_{b\in \mB\setminus \tB} \mu_b(\infty)$.

\textbf{Part 2.} We proceed to show the stability decomposition is the minimum s-t cut with least balanced cardinality.
First, note that arc capacity $u(f,b) = \infty$ for frontend $f$ and backend $b$ connected in $\tilde{\mathcal{G}}$. Thus, for any minimum s-t cut $(S,T)$, if $f\in {S}$, $\mB(f) \subseteq S$ as otherwise the cut includes arcs with infinite capacity.
In other words, for any minimum s-t cut $(S,T)$, there exists $B\subseteq \mB$ such that $T = \{\{t\}\cup \{\cup_{b\in B} \mF(b)\} \cup B\}$.

We prove the result by contradiction. Suppose there exists $\hat{B} \subseteq\mB$ such that
\begin{itemize}
    \item there exists $b\in \tB$, $b\notin \hat{B}$,
    \item $(\hat{S}, \hat{T})$ is a minimum s-t cut with $\hat{T} = \{\{t\}\cup \{\cup_{b\in \hat{B}} \mF(b)\} \cup \hat{B}\}$
\end{itemize}
Let $\hat{B}_1 = \hat{B}\cap \tB, \hat{B}_2 = \hat{B} \setminus \tB$. Then we have $\hat{B}_1 \subset \tB$ and $\hat{B}_2 \subseteq \mB\setminus \tB$.
The capacity of cut $(\hat{S}, \hat{T})$ is:
\begin{align*}
    \sum_{f\in \cup_{b\in \hat{B}}\mF(b)} \lambda_f + \sum_{b\in \mB\setminus \hat{B}} \mu_b(\infty) & \stackrel{(a)}{=} \sum_{f\in \cup_{b\in \hat{B}_1}\mF(b)} \lambda_f + \sum_{f\in \cup_{b\in \hat{B}_2}\mF(b)} \lambda_f + \sum_{b\in \mB}\mu_b(\infty) -
    \sum_{b\in \hat{B}_1} \mu_b(\infty) - \sum_{b\in \hat{B}_2} \mu_b(\infty) \\
    & \stackrel{(b)}{\geq} \sum_{f\in \cup_{b\in \hat{B}_1}\mF(b)} \lambda_f + \sum_{b\in \mB}\mu_b(\infty) -
    \sum_{b\in \hat{B}_1} \mu_b(\infty) \\
    & \stackrel{(c)}{=} \left(\sum_{f\in \tF} \lambda_f - \sum_{f\in \tF \setminus \{\cup_{b\in \hat{B}_1}\mF(b)\}}\lambda_f \right)+ \sum_{b\in \mB}\mu_b(\infty) -
    \left(\sum_{b\in \tB} \mu_b(\infty) - \sum_{b\in \tB \setminus \hat{B}} \mu_b(\infty))\right) \\
    & \stackrel{(d)}{>} \sum_{f\in \tF} \lambda_f+ \sum_{b\in \mB}\mu_b(\infty) - \sum_{b\in \tB} \mu_b(\infty) {=} \sum_{f\in \tF} \lambda_f+ \sum_{b\in \mB\setminus \tB}\mu_b(\infty),
\end{align*}
thus, its capacity is strictly larger than the max flow of the graph $\tilde{\mathcal{G}}$, which is a contradiction.
Here $(a)$ follows from definition of $\hat{B}_1, \hat{B}_2$, and the fact that $\tF = \cup_{b\in \tB}\mF(b)$;
$(b)$ follows from the third property of stability decomposition in Definition \ref{def: stability decomposition} as $\hat{B}_2 \subseteq \mB\setminus\tB$; $(c)$ follows from $\hat{B}_1 = \tB \cap \hat{B} = \tB \setminus \{\tB \setminus \hat{B}\}$ thus $\cup_{b\in \hat{B}_1} \mF(b) \subseteq \tF$; $(d)$ follows from 1) $\tB \setminus \hat{B} \neq \emptyset$ by assumption that there exists $b\in \tB, b\notin \hat{B}$ and 2) $\cup_{f\in \tF \setminus \{\cup_{b\in \hat{B}_1} \mF(b)\} }\mB(f)= \tB\setminus\hat{B}$, thus by the second property of stability decomposition,
\[\sum_{f\in \tF \setminus \{\cup_{b\in \hat{B}_1} \mF(b)\}} \lambda_f < \sum_{b\in \tB\setminus\hat{B}}\mu_b(\infty).\Halmos\]
\end{proof}

Based on the proof of the equivalence between a stability decomposition and a minimum s-t cut, we have the following corollary which will be useful in future proofs.
\begin{corollary}\label{cor: max flow increase iff increase tF lambda}
Fix the maximum service rates $\{\mu_b(\infty)\}_{b\in \mB}$. The max flow of $\tilde{\mathcal{G}}$ increases when we increase $\lambda_f$ if and only if $f\in \tF$.
\end{corollary}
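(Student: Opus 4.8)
The plan is to reduce the statement to the max-flow--min-cut theorem combined with the ``least balanced cut'' property proved in Lemma~\ref{lem: stability decomposition is a cut}. Fix every parameter of $\tilde{\mathcal{G}}$ except $\lambda_f$, and view the max-flow value as a function $\Phi(\mu)$ of the single capacity $\mu := u(s,f)$, with $\mu$ ranging over $[\lambda_f,\infty)$. The first step is to observe that whether an s-t cut of $\tilde{\mathcal{G}}$ has finite capacity does not depend on $\mu$ (a cut has infinite capacity iff it cuts some $\mathcal{E}$-arc), so there is a \emph{fixed finite} family $\mathcal{C}$ of finite-capacity cuts, and for $C=(S,T)\in\mathcal{C}$ its capacity is the affine function $c(C;\mu)=c(C;\lambda_f)+\mathbf{1}[f\in T]\,(\mu-\lambda_f)$; that is, raising $\mu$ increases the capacity of $C$ by $\mu-\lambda_f$ when $f$ lies on the sink side of $C$ and leaves it unchanged otherwise. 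Since $\Phi(\mu)<\infty$ always, max-flow--min-cut gives $\Phi(\mu)=\min_{C\in\mathcal{C}}c(C;\mu)$, so $\Phi$ is nondecreasing and concave. Write $V^\ast=\Phi(\lambda_f)$.

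For the ``only if'' direction I would argue the contrapositive: if $f\notin\tF$, then the stability-decomposition cut $C^\ast=(\{s\}\cup(\mF\setminus\tF)\cup(\mB\setminus\tB),\ \{t\}\cup\tF\cup\tB)$ is a minimum cut of capacity $V^\ast$ by Lemma~\ref{lem: stability decomposition is a cut}, and $f$ sits on its \emph{source} side, so $c(C^\ast;\mu)=V^\ast$ for all $\mu\ge\lambda_f$. Hence $\Phi(\mu)\le V^\ast$ for all such $\mu$, and because $\Phi$ is nondecreasing, $\Phi(\mu)\equiv V^\ast$; increasing $\lambda_f$ does not change the max flow.

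For the ``if'' direction, suppose $f\in\tF$. By the least-balanced property in Lemma~\ref{lem: stability decomposition is a cut}, every minimum cut of $\tilde{\mathcal{G}}$ contains $\tF$ on its sink side, so every $C\in\mathcal{C}$ with $c(C;\lambda_f)=V^\ast$ has $f$ on its sink side. Set $\eta=\min\{c(C;\lambda_f)-V^\ast:\ C\in\mathcal{C},\ c(C;\lambda_f)>V^\ast\}$ (with $\eta=+\infty$ if $\mathcal{C}$ contains only minimum cuts); then $\eta>0$ since $\mathcal{C}$ is finite. For $\mu\in(\lambda_f,\lambda_f+\eta)$ every originally-minimum cut now has capacity $V^\ast+(\mu-\lambda_f)$, while every other cut of $\mathcal{C}$ still has capacity at least $V^\ast+\eta>V^\ast+(\mu-\lambda_f)$, whence $\Phi(\mu)=V^\ast+(\mu-\lambda_f)>V^\ast$. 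Combining this with monotonicity of $\Phi$ upgrades the conclusion to $\Phi(\mu)>V^\ast$ for \emph{every} $\mu>\lambda_f$, i.e.\ the max flow strictly increases. Together with the previous paragraph this yields the equivalence.

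The argument is essentially self-contained once Lemma~\ref{lem: stability decomposition is a cut} is in hand, and none of the steps involves heavy computation. The only point I expect to require care is the bookkeeping that an increase in $\lambda_f$ raises the capacity of a cut \emph{exactly} when $f$ lies on its sink side, together with the realization that the least-balanced property of the stability decomposition is precisely what forces this to happen simultaneously at \emph{all} minimum cuts when $f\in\tF$ (this is the crux of the ``if'' direction); extending strict monotonicity from $\mu$ near $\lambda_f$ to all $\mu>\lambda_f$ via the finite gap $\eta$ and monotonicity of $\Phi$ is routine.
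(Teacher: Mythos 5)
Your proof is correct, and it takes a route that is genuinely different in flavor from the paper's. The paper argues on the \emph{flow side}: it lets $\bm{w}$ be a max flow, observes that increasing $\lambda_f$ for $f \in \tF$ creates positive residual capacity on the arc $(s,f)$, and asserts that this yields an augmenting path (implicitly using that $f\in\tF$ can reach $t$ in the residual graph, which follows because the sink-side-of-residual-graph cut is a minimum cut and Lemma~\ref{lem: stability decomposition is a cut} forces $\tF$ onto its sink side); for $f\notin\tF$ the paper re-runs the max-flow computation from Lemma~\ref{lem: stability decomposition is a cut} and notes the value formula $\sum_{f\in\tF}\lambda_f + \sum_{b\in\mB\setminus\tB}\mu_b(\infty)$ does not involve $\lambda_f$. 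You instead argue entirely on the \emph{cut side}: you view the max flow as $\Phi(\mu)=\min_{C\in\mathcal{C}} c(C;\mu)$ over the fixed finite family $\mathcal{C}$ of finite-capacity cuts, note that each $c(C;\cdot)$ is affine with slope $\mathbf{1}[f\in T]$, and use the least-balanced property to show that for $f\in\tF$ \emph{every} currently minimum cut has slope $1$ (hence $\Phi$ strictly increases past a small finite gap $\eta$), while for $f\notin\tF$ the stability-decomposition cut itself has slope $0$ (hence $\Phi$ is capped). Both are valid; your cut-side argument has the advantage of making explicit exactly where the ``least balanced'' half of Lemma~\ref{lem: stability decomposition is a cut} enters, whereas the paper's augmenting-path step leaves that justification implicit, and your argument avoids appealing to residual-graph machinery at all. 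The small compactness step (finiteness of $\mathcal{C}$, the positive gap $\eta$) that you flag as requiring care is indeed needed to convert ``$f$ is on the sink side of all min cuts'' into ``$\Phi$ strictly increases,'' and you handle it correctly.
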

\begin{proof}
Let $\bm{w}$ be the max flow of $\tilde{\mathcal{G}}$.
If we increase $\lambda_f$ for $f\in \tF$, this implies that the residual capacity of this arc is positive under flow $\bm{w}$, allowing us to find an augmenting path that increases the max flow of $\tilde{\mathcal{G}}$.

If we increase $\lambda_f$ for $f\notin \tF$, by the third condition in Definition \ref{def: stability decomposition} and the fact that they are not connected to backends in $\tB$, this will only further overload backends in $\mB\setminus\tB$. Thus, we can repeat the above argument and get the same max flow.\Halmos
\end{proof}

Having established the correspondence between stability decomposition and minimum s-t cut. Two natural questions arise: Is this decomposition unique? Can we compute it effectively? The following results address these questions.

\begin{corollary}[Uniqueness of Stability Decomposition]\label{cor: unique stability composition}
The stability decomposition in Definition \ref{def: stability decomposition} is unique.
\end{corollary}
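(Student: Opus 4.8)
The plan is to derive uniqueness as an immediate corollary of Lemma~\ref{lem: stability decomposition is a cut}, which already identifies any stability decomposition with the maximal minimum $s$-$t$ cut of the augmented network $\tilde{\mathcal{G}}$. Concretely, suppose $(\tF_1,\tB_1)$ and $(\tF_2,\tB_2)$ both satisfy the three conditions of Definition~\ref{def: stability decomposition}. First I would apply Lemma~\ref{lem: stability decomposition is a cut} to the second pair to conclude that $C_2 := \bigl(\{s\}\cup(\mF\setminus\tF_2)\cup(\mB\setminus\tB_2),\ \{t\}\cup\tF_2\cup\tB_2\bigr)$ is a minimum $s$-$t$ cut of $\tilde{\mathcal{G}}$. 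Then I would invoke the ``for every minimum $s$-$t$ cut $(S,T)$ we have $\{t\}\cup\tF\cup\tB\subseteq T$'' clause of the same lemma, this time applied to the \emph{first} decomposition, and instantiate it with the minimum cut $C_2$. This yields $\{t\}\cup\tF_1\cup\tB_1 \subseteq \{t\}\cup\tF_2\cup\tB_2$.

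By symmetry — swapping the roles of the two decompositions — I obtain the reverse inclusion $\{t\}\cup\tF_2\cup\tB_2 \subseteq \{t\}\cup\tF_1\cup\tB_1$, hence $\{t\}\cup\tF_1\cup\tB_1 = \{t\}\cup\tF_2\cup\tB_2$. Since $s$ and $t$ are auxiliary vertices lying in neither $\mF$ nor $\mB$, and since $\mF$ and $\mB$ are disjoint, intersecting this set identity with $\mF$ gives $\tF_1=\tF_2$ and intersecting it with $\mB$ gives $\tB_1=\tB_2$. (The first condition of Definition~\ref{def: stability decomposition}, $\tF_i=\cup_{b\in\tB_i}\mF(b)$, is then automatically consistent.) This shows any two stability decompositions coincide, i.e., the decomposition is unique.

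There is essentially no technical obstacle here: the whole argument is a two-line consequence of Lemma~\ref{lem: stability decomposition is a cut}, which does the real work of characterizing the decomposition as the maximal minimum cut. The only points deserving a sentence of care are (i) noting that the lemma certifies the partition induced by a decomposition to be a genuine minimum cut, so it is a legitimate instance of the ``for every minimum cut'' clause, and (ii) the elementary observation that $\tF_i$ and $\tB_i$ are recovered from the sink side of the associated cut simply by intersecting with $\mF$ and $\mB$. Should one prefer an argument that does not route through Lemma~\ref{lem: stability decomposition is a cut}, the alternative would be to establish the standard lattice property of minimum $s$-$t$ cuts — the family of sink-side vertex sets of minimum cuts is closed under union, hence has a unique maximal element — and then identify any stability decomposition with that maximal element using the inequalities in Definition~\ref{def: stability decomposition}; but this is strictly more work than reusing the already-proved lemma.
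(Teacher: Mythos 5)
Your proof is correct, and it takes a slightly different and arguably cleaner route than the paper's. The paper's proof of this corollary first cites the lattice property of minimum $s$-$t$ cuts (Picard--Queyranne: if $(S,T)$ and $(S',T')$ are minimum $s$-$t$ cuts, so are the cuts induced by $S\cup S'$ and $S\cap S'$) and then combines that with Lemma~\ref{lem: stability decomposition is a cut}; roughly, one intersects the two sink sides, uses the lattice property to get a third minimum cut, and then uses the ``least balanced cardinality'' clause to force both sink sides to equal the intersection. You bypass the lattice property entirely: since Lemma~\ref{lem: stability decomposition is a cut} (i) certifies that the cut induced by any stability decomposition is a genuine minimum cut and (ii) asserts that the sink side of every minimum cut contains $\{t\}\cup\tF\cup\tB$, you can plug the cut from one decomposition directly into clause (ii) applied to the other, and symmetry gives the two inclusions. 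This is a strictly more self-contained argument — it needs nothing beyond Lemma~\ref{lem: stability decomposition is a cut} itself, whereas the paper imports an external structural fact about min cuts. Both proofs are short and correct; yours is preferable precisely because it shows that the earlier lemma already carries all the needed information, with no external citation required.
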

\begin{proof}
Minimum s-t cuts have an important property:
\begin{lemma}[Corollary 3 \cite{picard1980structure}]
    If $(S,T)$ and $(S',T')$ are minimum s-t cuts in $G = (V,A)$, then $(S\cup S', V\setminus \{S\cup S'\})$ and $(S\cap S', V\setminus \{S\cap S'\})$ are also minimum s-t cuts in $G$.
\end{lemma}
Uniqueness follows from combining Lemma \ref{lem: stability decomposition is a cut} with the above property of minimum s-t cuts.\Halmos
\end{proof}

\begin{corollary}[Construction of $\tF$ using Maximum s-t Flow]\label{cor: construct mF using max flow}
Let $R\subseteq \mF$ denote the set that can reach the sink node $t$ in the residual graph of $\tilde{\mathcal{G}}$ given a maximum s-t flow $w$ (i.e., there exists a directed path from the node to $t$ with non-saturated arcs $A_w$), then $R =  \tF$.
\end{corollary}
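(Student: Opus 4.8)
The plan is to identify $R$ with the sink side of a particular minimum $s$--$t$ cut of $\tilde{\mathcal{G}}$ and then quote Lemma~\ref{lem: stability decomposition is a cut}. Given the maximum flow $w$, let $T^{*}$ be the set of \emph{all} nodes of $\tilde{\mathcal{G}}$ (not just frontends) that can reach $t$ in the residual graph $\tilde{\mathcal{G}}_w$, so $R = T^{*}\cap\mF$. First I would check that $(V\setminus T^{*},\,T^{*})$ is a valid $s$--$t$ cut: $t\in T^{*}$ trivially, and $s\notin T^{*}$ since otherwise there is an augmenting path, contradicting maximality of $w$. Then I would verify it is a \emph{minimum} cut by showing $c(V\setminus T^{*},T^{*}) = |w|$: every arc $(u,v)$ with $u\notin T^{*}$, $v\in T^{*}$ must be saturated (else $u$ reaches $t$ via $v$), and by the reversed-arc convention every arc $(v,u)$ with $v\in T^{*}$, $u\notin T^{*}$ must carry zero flow (else the reverse residual arc $(u,v)$ has residual capacity $w(v,u)>0$, again letting $u$ reach $t$); hence the net flow across the cut equals its capacity, and since every cut has capacity at least $|w|$, this one is minimum.

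Second, I would show $T^{*}$ is contained in the sink side of \emph{every} minimum $s$--$t$ cut. Let $(S,T)$ be any minimum cut and suppose some $v\in S$ reached $t$ in $\tilde{\mathcal{G}}_w$ along a residual path. This path must cross from $S$ to $T$ along a residual arc, which is either an original arc from $S$ to $T$ (necessarily saturated at a minimum cut, hence not residual) or the reverse of an original arc from $T$ to $S$ (which carries zero flow at a minimum cut, hence again not residual). This contradiction gives $S\cap T^{*}=\emptyset$, i.e.\ $T^{*}\subseteq T$ for every minimum cut $(S,T)$.

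Finally, I would combine these two facts with Lemma~\ref{lem: stability decomposition is a cut}, which states that the cut $\bigl(\{s\}\cup(\mF\setminus\tF)\cup(\mB\setminus\tB),\ \{t\}\cup\tF\cup\tB\bigr)$ is minimum and that $\{t\}\cup\tF\cup\tB\subseteq T$ for every minimum cut $(S,T)$. Applying the latter to the minimum cut $(V\setminus T^{*},T^{*})$ yields $\{t\}\cup\tF\cup\tB\subseteq T^{*}$, while applying the containment $T^{*}\subseteq T$ from the second step to the minimum cut of Lemma~\ref{lem: stability decomposition is a cut} yields $T^{*}\subseteq\{t\}\cup\tF\cup\tB$. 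Hence $T^{*}=\{t\}\cup\tF\cup\tB$, and intersecting with $\mF$ gives $R = T^{*}\cap\mF = \tF$, since $t\notin\mF$ and $\tB\cap\mF=\emptyset$. The only mildly delicate point is getting the reversed-arc residual-capacity bookkeeping right in the two cut arguments; the rest is a direct appeal to Lemma~\ref{lem: stability decomposition is a cut}.
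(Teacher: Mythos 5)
Your argument is correct, and it reaches the conclusion along a genuinely different path than the paper. The paper's proof is extrinsic: it perturbs the capacity $\lambda_f$ of the source arc $(s,f)$, observes that the perturbed max flow strictly increases exactly when $f$ has a residual path to $t$, and then invokes Corollary~\ref{cor: max flow increase iff increase tF lambda} (whose own proof uses the structural conditions of the stability decomposition) to conclude $R=\tF$. Your proof is intrinsic to max-flow theory: you identify $T^* = \{\text{nodes that reach } t \text{ in } \tilde{\mathcal{G}}_w\}$, verify it is the sink side of a minimum cut via the standard saturated-forward/zero-backward argument, show $T^*$ is contained in the sink side of every minimum cut by the same residual-arc obstruction, and then squeeze $T^*$ against $\{t\}\cup\tF\cup\tB$ using the ``least balanced cardinality'' half of Lemma~\ref{lem: stability decomposition is a cut}. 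Since both $T^*$ and $\{t\}\cup\tF\cup\tB$ are shown to be contained in every minimum cut's sink side, they coincide, and intersecting with $\mF$ gives $R=\tF$. Your route is self-contained given Lemma~\ref{lem: stability decomposition is a cut} and avoids the auxiliary Corollary~\ref{cor: max flow increase iff increase tF lambda}; it is arguably tighter because it makes the reversed-arc bookkeeping explicit rather than leaving it implicit in the perturbation argument. The paper's route is shorter on the page because it reuses Corollary~\ref{cor: max flow increase iff increase tF lambda}, which was already established. One small point of care that you correctly handled: the paper's residual-graph definition uses the skew-symmetric convention $w(i,j)=-w(j,i)$, so the residual capacity of the reverse arc $(u,v)$ is exactly $w(v,u)$, and your zero-backward-flow step relies on this; it is consistent with the paper's Definition of the residual graph.
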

\begin{proof}
If $f\in  R$, if we increase $\lambda_f$, i.e., the arc connecting $s$ and $f$ with positive residual capacity, we can then find an augmenting path on the residual graph so that we increase the max flow.

If $f\notin R$, if we increase $\lambda_f$, this does not create an augmenting path on the residual graph so that we keep the max flow fixed.

Putting them together, by Corollary \ref{cor: max flow increase iff increase tF lambda}, we have $R = \tF$.\Halmos
\end{proof}
\begin{remark}
We can construct $\tF$ and thus $\tB$ (these are the backends that only connect to frontends in $\tF$) using algorithms such as the Ford--Fulkerson algorithm and the Edmonds--Karp algorithm. When the arc capacities are integral, the computational complexity is bounded by $\mathcal{O}(|A||w|)$ where $|w|$ denotes the maximum flow value. Otherwise, we can use the Edmonds--Karp algorithm with runtime $\mathcal{O}(|V||A|^2)$.
\end{remark}

\subsection{The Lyapunov Function}
To establish the convergence result, we use the same Lyapunov function as in Section~\ref{sec: stability analysis}.
    \[V(\bN,\bx) = \sum_{b\in \mB}\left|\sum_{f\in \mathcal{F}(b)} \lambda_f x_{f,b} - \mu_b(N_b)\right|,\]
\begin{lemma}\label{lem: property of V general}
For every state $\bN$ and job assignment $\bx \in  X_f(\bN)$,
    \begin{itemize}
        \item $V(\bN,\bx)\geq 0$;
        \item $V(\bN, \bx) \geq \sum_{f \in \mathcal{F} \setminus \tilde{\mathcal{F}}} \lambda_f - \sum_{b\in \mathcal{B}\setminus \tilde{\mathcal{B}}} \mu_b(\infty)$.
    \end{itemize}
\end{lemma}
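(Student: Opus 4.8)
The first inequality is immediate: $V(\bN,\bx)$ is a sum of absolute values, hence nonnegative. For the second inequality, the plan is to discard the terms of $V$ indexed by the ``stable'' backends $\tB$ and to exploit the first structural property of the stability decomposition (Definition~\ref{def: stability decomposition}), namely $\tF = \cup_{b\in\tB}\mF(b)$.

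Concretely, I would first drop the nonnegative terms $b\in\tB$ and replace each remaining absolute value by the quantity inside it via $|a|\ge a$:
\[
V(\bN,\bx) \;\ge\; \sum_{b\in\mB\setminus\tB}\Big(\sum_{f\in\mF(b)}\lambda_f x_{f,b} - \mu_b(N_b)\Big).
\]
Since each $\mu_b$ is increasing and bounded, $\mu_b(N_b)\le\mu_b(\infty)$, so the subtracted part is bounded below by $-\sum_{b\in\mB\setminus\tB}\mu_b(\infty)$. It then remains to show that the total inflow into $\mB\setminus\tB$ is at least $\sum_{f\in\mF\setminus\tF}\lambda_f$.

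The crux is reading off from $\tF=\cup_{b\in\tB}\mF(b)$ that no frontend outside $\tF$ is adjacent to any backend in $\tB$; equivalently, $\mB(f)\subseteq\mB\setminus\tB$ for every $f\in\mF\setminus\tF$. Swapping the order of summation in $\sum_{b\in\mB\setminus\tB}\sum_{f\in\mF(b)}\lambda_f x_{f,b}$ and using that $\bx_f\in X_f(\bN)$ forces $\sum_{b\in\mB(f)}x_{f,b}=1$ and $x_{f,b}=0$ for $(f,b)\notin\mathcal{E}$, each $f\in\mF\setminus\tF$ contributes exactly $\lambda_f$, while each $f\in\tF$ contributes a nonnegative amount. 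Hence $\sum_{b\in\mB\setminus\tB}\sum_{f\in\mF(b)}\lambda_f x_{f,b}\ge\sum_{f\in\mF\setminus\tF}\lambda_f$, and chaining the three displays gives the claimed bound.

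I do not expect a genuine obstacle here: every step is a rearrangement of finite sums together with the elementary inequalities $|a|\ge a$ and $\mu_b(N_b)\le\mu_b(\infty)$. The only point requiring care is correctly translating the first property of the stability decomposition into the statement that frontends in $\mF\setminus\tF$ route their entire unit of flow into $\mB\setminus\tB$; note that the other two properties of the decomposition play no role in this particular lemma.
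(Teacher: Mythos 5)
Your proposal is correct and takes essentially the same approach as the paper: drop the $b\in\tB$ terms, drop the absolute value, bound $\mu_b(N_b)\leq\mu_b(\infty)$, and use $\tF=\cup_{b\in\tB}\mF(b)$ to deduce that $\mB(f)\subseteq\mB\setminus\tB$ for every $f\in\mF\setminus\tF$, so each such frontend's full unit mass lands in $\mB\setminus\tB$. The only cosmetic difference is that you swap the order of summation and retain the $f\in\tF$ terms as nonnegative contributions, whereas the paper first discards the $f\in\tF$ terms from the inner sum before swapping; both routes hinge on the same structural consequence of the stability decomposition.
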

The second property captures a lower bound that reflects the total excess arrivals to the overloaded backend subset $\mB\setminus \tB$.

\begin{proof}
The first property holds as $V(\bN,\bx)$ is the sum of absolute values.

For the second property,
\begin{align}
    V(\bN,\bx) & = \sum_{b\in \mathcal{B}}\left|\sum_{f\in \mathcal{F}(b)} \lambda_f x_{f,b} - \mu_b(N_b)\right|  \geq \sum_{b\in \mathcal{B}\setminus \tilde{\mathcal{B}}}\left|\sum_{f\in \mathcal{F}(b)} \lambda_f x_{f,b} - \mu_b(N_b)\right| \label{eq: tB flow balanced}\\
    & {\geq} \sum_{b\in \mathcal{B}\setminus \tilde{\mathcal{B}}}\left(\sum_{f\in \mathcal{F}(b)} \lambda_f x_{f,b} - \mu_b(N_b)\right) = \sum_{b\in \mathcal{B}\setminus \tilde{\mathcal{B}}} \sum_{f\in \mathcal{F}(b)} \lambda_f x_{f,b} - \sum_{b\in \mathcal{B}\setminus \tilde{\mathcal{B}}} \mu_b(N_b)  \label{eq: not tB flow positive}\\
    & \geq \sum_{b\in \mathcal{B}\setminus \tilde{\mathcal{B}}} \sum_{f\in \mathcal{F}(b) \setminus \tilde{\mathcal{F}}} \lambda_f x_{f,b} - \sum_{b\in \mathcal{B}\setminus \tilde{\mathcal{B}}} \mu_b(N_b) \stackrel{(a)}{=}  \sum_{f\in \mathcal{F}\setminus \tilde{\mathcal{F}}} \lambda_f \sum_{b\in \mathcal{B}\setminus \tilde{\mathcal{B}}} x_{f,b} - \sum_{b\in \mathcal{B}\setminus \tilde{\mathcal{B}}} \mu_b(N_b) \label{eq: restrict frontends not in tF} \\
    & \stackrel{(b)}{=}  \sum_{f\in \mathcal{F}\setminus \tilde{\mathcal{F}}} \lambda_f  - \sum_{b\in \mathcal{B}\setminus \tilde{\mathcal{B}}} \mu_b(N_b) \stackrel{(c)}{\geq}  \sum_{f\in \mathcal{F}\setminus \tilde{\mathcal{F}}} \lambda_f  - \sum_{b\in \mathcal{B}\setminus \tilde{\mathcal{B}}} \mu_b(\infty), \label{eq: Nb infinity}
\end{align}
where $(a)$ follows as $\cup_{b\in \mathcal{B}\setminus \tilde{\mathcal{B}}} \mathcal{F}(b) \setminus \tilde{\mathcal{F}} = \mathcal{F}\setminus \tilde{\mathcal{F}}$; $(b)$ follows as for every $f\in \mathcal{F}\setminus\tilde{\mathcal{F}}, \mathcal{B}(f) \subseteq  \mathcal{B}\setminus \tilde{\mathcal{B}}$; $(c)$ follows as $\mu_b(\cdot)$ is increasing in workload. \Halmos
\end{proof}

We next establish the following global convergence theorem:
\begin{theorem}\label{thm: general N convergence result}
Each solution $\bN(\cdot)$ of the differential inclusion with any initial state $\bN(0)$ satisfies
\[\lim_{t\to\infty} \mu_b(N_b(t)) = \mu_b(\tilde{N}^*_b), \forall b \in \tB, \lim_{t\to\infty} \mu_b(N_b(t)) = \mu_b(\infty), \forall b \in \mB\setminus \tB,\]
where $\tilde{\bN}^*$ is the unique optimal solution to the following optimization problem for subgraph consisting only of frontends $\tF$, backends $\tB$ and the edges connecting them $\tilde{\mathcal{E}} = \{(f,b) \in \mathcal{E}: f\in \tF, b\in \tB\}$:
\begin{align}
\min_{ \{N_b\}_{b\in \tB}, \{x_{f,b}\}_{f\in \tF, b\in \tB}} \quad & \sum_{b\in \tB} N_b \label{eq: FLU two parts 1}\\
    \text{s.t.} \quad &  \sum_{f\in \tF } \lambda_f x_{f,b}= \mu_b(N_b), \forall b\in \tilde{\mathcal{B}}, \notag\\
    & \sum_{b\in \tB }x_{f,b} = 1, \forall f \in \tilde{\mathcal{F}}, \notag\\
    & x_{f,b} \geq 0, \forall (f,b) \in \tilde{\mathcal{E}}, \notag \\
    & x_{f,b} = 0, \forall (f,b) \notin \tilde{\mathcal{E}}. \notag
\end{align}
\end{theorem}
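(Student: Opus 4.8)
The plan is to prove the theorem in the two halves dictated by the stability decomposition $(\tF,\tB)$ of Definition~\ref{def: stability decomposition}: first that every backend in $\mB\setminus\tB$ has $N_b(t)\to\infty$, and second that the workloads of the backends in $\tB$ converge to the unique optimum $\tilde{\bN}^*$ of the restricted problem \eqref{eq: FLU two parts 1}. Throughout I would exploit the ``inverted-Z'' structure from the first condition of Definition~\ref{def: stability decomposition} (frontends in $\mF\setminus\tF$ connect only to backends in $\mB\setminus\tB$, and backends in $\tB$ connect only to frontends in $\tF$), together with the fact --- which follows verbatim from Lemmas~\ref{lem: slide}, \ref{lem:split} and \ref{lem:sync}, none of which use Assumption~\ref{assmp: system stable} --- that $V(\bN(t),\bx(t))$ is non-increasing, hence converges to some $V_\infty\ge 0$, with $V_\infty\ge V_{\min}:=\sum_{f\in\mF\setminus\tF}\lambda_f-\sum_{b\in\mB\setminus\tB}\mu_b(\infty)$ by Lemma~\ref{lem: property of V general}.

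For the overloaded part, let $\Phi(t)=\sum_{b\in\mB\setminus\tB}N_b(t)$. Since every frontend in $\mF\setminus\tF$ routes all its flow into $\mB\setminus\tB$, one has $\dot\Phi(t)\ge\sum_{f\in\mF\setminus\tF}\lambda_f-\sum_{b\in\mB\setminus\tB}\mu_b(N_b(t))\ge 0$, where the second inequality is the third condition of Definition~\ref{def: stability decomposition} applied to $Q=\mB\setminus\tB$ (noting $\cup_{b\in\mB\setminus\tB}\mF(b)\setminus\tF=\mF\setminus\tF$). Monotonicity of $\Phi$ together with the strict gap $\mu_b(N_b)<\mu_b(\infty)$ forces $\Phi(t)\to\infty$: if $\Phi$ were bounded, $\dot\Phi$ would be integrable, so along some $t_n\to\infty$ one gets $\sum_{b\in\mB\setminus\tB}\mu_b(N_b(t_n))\to\sum_b\mu_b(\infty)$, i.e.\ $N_b(t_n)\to\infty$ for all such $b$, a contradiction. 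To upgrade this to $N_b(t)\to\infty$ for \emph{each} individual $b\in\mB\setminus\tB$ I would argue by contradiction on the set $D=\{b\in\mB\setminus\tB:\liminf_t N_b(t)<\infty\}$: the backends in $(\mB\setminus\tB)\setminus D$ have gradients tending to $0$, hence for large $t$ sit strictly below every backend of $D$ in the tier order and thus attract no flow away from $D$ from frontends whose current best backend lies in $D$; combining this with the subset form of the third condition applied to (a minimal) $D$ yields a persistent strictly positive flow imbalance on $D$, contradicting $\liminf_t\sum_{b\in D}N_b(t)<\infty$. This gives $\mu_b(N_b(t))\to\mu_b(\infty)$ for all $b\in\mB\setminus\tB$.

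For the stable part, I would first show the workloads of $\tB$ stay bounded: the second condition of Definition~\ref{def: stability decomposition} is a strict Hall condition for the subgraph $(\tF,\tB,\tilde{\mathcal{E}})$, so --- mimicking the capacity-slack / invariant-set argument of Lemma~\ref{lem: capacity slack} and Proposition~\ref{prop: K invariant set}, and using that any flow the frontends in $\tF$ divert to $\mB\setminus\tB$ only \emph{decreases} the inflow to $\tB$ --- there is a workload level above which a $\tB$-tier must have negative imbalance, trapping the $\tB$-workloads below it. Consequently the gradients in $\tB$ are bounded below by some $\kappa'>0$ while the gradients in $\mB\setminus\tB$ tend to $0$, so there is a finite time $T^*$ after which no frontend in $\tF$ routes any flow to $\mB\setminus\tB$ (its best connected backend lies in $\tB$ with gradient $\ge\kappa'$). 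After $T^*$ the restriction of the trajectory to $\tB$ is exactly a solution of the GMSR differential inclusion for the subgraph $(\tF,\tB,\tilde{\mathcal{E}})$; since \eqref{eq: FLU two parts 1} is feasible (again by the second condition, via the max-flow--min-cut argument used for Lemma~\ref{lem: OPT lower bound}) and has a unique optimum $\tilde{\bN}^*$ (Lemma~\ref{lem: unique N star} applied to the subgraph), Theorem~\ref{thm:stability} gives $N_b(t)\to\tilde{N}^*_b$, hence $\mu_b(N_b(t))\to\mu_b(\tilde{N}^*_b)$, for $b\in\tB$.

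I expect the main obstacle to be the $D$-contradiction step establishing divergence of every overloaded backend: because $N_b(t)$ for $b\in D$ may oscillate, ``gradient relatively high'' must be replaced by a careful $\liminf$/tier argument, and one must rule out that frontends in $\tF$ (rather than in $\mF\setminus\tF$) continually shuttle just enough flow among the backends of $D$ to keep some of them bounded. A secondary obstacle is making precise that the one-directional ``leakage'' of flow from $\tF$ into $\mB\setminus\tB$ cannot destabilize $\tB$, i.e.\ adapting the invariant-set argument of Proposition~\ref{prop: K invariant set} in its presence. Once these are handled, the convergence on $\tB$ is a black-box application of the already-established Theorem~\ref{thm:stability}.
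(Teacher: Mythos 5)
You take a genuinely different route from the paper. The paper's proof of Theorem~\ref{thm: general N convergence result} continues the single-Lyapunov-function argument of Section~\ref{sec: stability analysis}: it notes that $V(\bN(t),\bx(t))$ is non-increasing (Lemmas~\ref{lem: slide}, \ref{lem:split}, \ref{lem:sync} use only the tier machinery, not Assumption~\ref{assmp: system stable}), is bounded below by the constant $V_{\min}=\sum_{f\in\mF\setminus\tF}\lambda_f-\sum_{b\in\mB\setminus\tB}\mu_b(\infty)$ via Lemma~\ref{lem: property of V general}, and is strictly decreasing whenever it exceeds that bound, so that $V\to V_{\min}$; it then traces through the chain of inequalities in the proof of Lemma~\ref{lem: property of V general} to show that $V(\bN,\bx)=V_{\min}$ forces exactly the claimed limit (flow balance at $\tilde{\bN}^*$ on $\tB$, zero leakage from $\tF$ to $\mB\setminus\tB$, and $N_b=\infty$ on $\mB\setminus\tB$). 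Your proposal instead handles the stability decomposition componentwise: a monotone potential $\Phi=\sum_{b\in\mB\setminus\tB}N_b$ for the overloaded part, and an eventual-decoupling argument that reduces the $\tB$-part to a black-box application of Theorem~\ref{thm:stability} on the restricted subgraph. The decomposition is more structurally transparent and the decoupling idea is attractive; the paper's unified argument is more economical and, crucially, sidesteps the two technical obstacles you flag.

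On those obstacles, I would push back on describing the first one as merely ``a careful $\liminf$/tier argument.'' The core difficulty is that backends in $D$ also belong to $\mB\setminus\tB$, so they too can spend long stretches of time with arbitrarily large workload and hence arbitrarily small gradient. During such stretches, a frontend $f\in\cup_{b\in D}\mF(b)\setminus\tF$ is free to route away from $D$, and you cannot maintain the claimed tier ordering between $D$ and $(\mB\setminus\tB)\setminus D$ uniformly in $t$; passing to a minimal $D$ does not obviously fix this, and even showing $\sum_{b\in D}N_b(t)\to\infty$ would not contradict $\liminf_t N_b(t)<\infty$ for each individual $b\in D$. The paper avoids this entirely because the inequality chain in Lemma~\ref{lem: property of V general} never singles out individual backends in $\mB\setminus\tB$ until the very end, where tightness forces $\mu_b(N_b)\to\mu_b(\infty)$ for all such $b$ simultaneously. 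Your second obstacle (the $\tB$ invariant set surviving one-directional leakage to $\mB\setminus\tB$) does look like a routine adaptation of Lemma~\ref{lem: capacity slack} and Proposition~\ref{prop: K invariant set}, since any diverted flow only decreases the inflow to $\tB$ and thus strengthens the required drift inequality. I would recommend either supplying a genuinely new argument for the $D$-step, or adopting the paper's Lyapunov route for the overloaded half and keeping your decoupling observation as an alternate, cleaner derivation of the $\tB$-limit once divergence off $\tB$ is already established.
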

When the fluid optimization problem \eqref{eq: FLU} is feasible, $\tB = \mB$, the above result is a restatement of Theorem \ref{thm:stability}.
Otherwise, $\tB$ is a strict subset of $\mB$, and this result formalizes the intuition that the backends in $\tB$ converge to a finite workload, whereas the remaining backends asymptotically operate at their maximum service levels $\mu_b(\infty)$. Recall that we let $\bm{L}^*$ denote the equilibrium service rates at the backends under GMSR; the above theorem characterizes the exact values of $\bm{L}^*$:
\[L_b^* = \mu_b(\tilde{N}^*_b), b\in \tB, L_b^* = \mu_b(\infty), b\in \mB\setminus\tB.\]

\begin{proof}
First the feasibility of the optimization problem \eqref{eq: FLU two parts 1} is guaranteed by the second property in Definition \ref{def: stability decomposition}.
The uniqueness of the $\tilde{\bN}^*$ follows from the uniqueness of solution result for the fluid optimization problem \eqref{eq: FLU} to the subgraph with $\tilde{\mathcal{F}}, \tilde{\mathcal{B}}$.

When the fluid optimization problem is infeasible, that is, $\tB$ is a strict subset of $\mB$, there must exist a tier with an imbalanced flow at any time $t$. Following the same argument as in the stability analysis, the Lyapunov function is strictly decreasing whenever there exists a tier with imbalanced flow. Therefore, the Lyapunov function is strictly decreasing at any time $t$.

Therefore,
\[\lim_{t\to\infty} V(\bN(t),\bx(t)) = \sum_{f \in \mathcal{F} \setminus \tilde{\mathcal{F}}} \lambda_f - \sum_{b\in \mathcal{B}\setminus \tilde{\mathcal{B}}} \mu_b(\infty),\]
where the lower bound is provided in Lemma \ref{lem: property of V general}.

We finish the proof by showing $V(\bN, \bx) = \sum_{f \in \mathcal{F} \setminus \tilde{\mathcal{F}}} \lambda_f - \sum_{b\in \mathcal{B}\setminus \tilde{\mathcal{B}}} \mu_b(\infty)$ if and only if $\bN, \bx = \hat{\bN}^*, \hat{\bx}^*$ that are defined as follows:
\begin{itemize}
    \item $\hat{N}^*_b = \tilde{N}_b^*$ for $b\in \tB, \hat{N}_b^* = \infty$ for $b\in \mB\setminus\tB$;
    \item for $f\in \tF, b\in \tB$, $\hat{x}^*_{f,b}$ is the optimal assignment matrix for optimization problem \eqref{eq: FLU two parts 1};
    \item for $f\in \tF, b \in \mB\setminus\tB$, $\hat{x}^*_{f,b} =0$;
    \item for $f\in \mF\setminus \tF, b \in \mB \setminus \tB$, $\hat{x}^*_{f,b}$ is a job assignment rule such that
\[\sum_{f \in \mF\setminus \tF} \lambda_f x_{f,b} \geq \mu_b(\infty), \forall b \in \mB\setminus \tB; \sum_{b\in \mB\setminus \tB} x_{f,b} = 1, \forall f\in \mF\setminus \tF; x_{f,b} \geq 0, \forall (f,b) \in \mathcal{E}; x_{f,b} = 0, \forall (f,b) \notin \mathcal{E}.\]
The existence of such $\hat{x}^*_{f,b}$ is guaranteed by the third property of stability decomposition in Definition~\ref{def: stability decomposition}.
\end{itemize}

The ``if" part is straight-forward, while on the other hand, if $V(\bN, \bx) = \sum_{f \in \mathcal{F} \setminus \tilde{\mathcal{F}}} \lambda_f - \sum_{b\in \mathcal{B}\setminus \tilde{\mathcal{B}}} \mu_b(\infty)$, we must have
\begin{itemize}
    \item $\sum_{b\in \tilde{\mathcal{B}}}\left|\sum_{f\in \mathcal{F}(b)} \lambda_f x_{f,b} - \mu_b(N_b)\right|=0$ by \eqref{eq: tB flow balanced}, i.e., flow balance at each backend $b\in \tB$,
    \item $x_{f,b} = 0$ for $f\in \tilde{\mathcal{F}}$, $b\in \mathcal{B}\setminus\tilde{\mathcal{B}}$ by \eqref{eq: restrict frontends not in tF}, i.e., all jobs from frontends in $\tF$ are sent to backends in $\tB$.
\end{itemize}
These together imply $N_b = \tilde{N}_b$ for $b\in \tB$ using a similar argument in the proof of Lemma \ref{lem: m V positive definite}.
Meanwhile, $N_b = \infty$ for $b\in \mathcal{B}\setminus \tilde{\mathcal{B}}$ by \eqref{eq: Nb infinity} and $\sum_{f\in \mF} \lambda_f x_{f,b} - \mu_b(N_b)\geq 0$ by \eqref{eq: not tB flow positive}.
Thus we have $\bN = \hat{\bN}^*, \bx = \hat{\bx}^*$.
\Halmos
\end{proof}

\subsection{Proof of Lemma \ref{lem: OPT TP bounds service rate}}
\begin{proof}
    To see $\textsf{OPT-TP}  \geq \limsup_{k\to\infty}\frac{1}{k}\sum_{i=0}^{k-1}\mathbb{E}_{\pi}\left[\sum_{b\in \mB}\mu_b(N_b(i))\right]$, by the system dynamics,
\[N_b(k) = N_b(0) + \sum_{i=0}^{k-1} \sum_{f\in \mF(b)} A_{f,b}(i) - \sum_{i=0}^{k-1} D_b(i)\geq 0,\]
thus
\[\frac{N_b(0)}{k} + \frac{1}{k}\sum_{i=0}^{k-1} \sum_{f\in \mF(b)} \mathbb{E}[A_{f,b}(i)] \geq \frac{1}{k}\sum_{i=0}^{k-1} \mathbb{E}[D_b(i)] = \frac{1}{k}\sum_{i=0}^{k-1} \mathbb{E}[\mu_b(N_b(i))].\]

On the other hand,
\[G_f(i+1) = G_f(i) + W_f(i) -\sum_{b\in \mB(f)} A_{f,b}(i),\]
thus
\[\frac{1}{k}\sum_{i=0}^{k-1} \sum_{b\in \mB(f)} \mathbb{E}[A_{f,b}(i)] = \frac{1}{k}\sum_{i=0}^{k-1} \mathbb{E}[W_f(i)] + \frac{G_f(0) - \mathbb{E}[G_f(k)]}{k} = \lambda_f +  \frac{G_f(0) - \mathbb{E}[G_f(k)]}{k} \leq \lambda_f + \frac{G_f(0)}{k}.\]

Let $\bar{L}_b(k) := \sum_{i=0}^{k-1} \mathbb{E}[\mu_b(N_b(i))]/k$, $\bar{x}_{f,b}(k) := \sum_{i=0}^{k-1} \mathbb{E}[A_{f,b}(i)]/(k\lambda_f)$, then we have $\bar{\bx}(k), \bar{\bm{L}}(k)$ is a feasible solution to the following optimization problem:
\begin{align}
\textsf{OPT-TP}(k) = \max_{\bm{L}, \bx} \quad & \sum_{b\in \mathcal{B}} L_b \label{eq: FLU TP k}\\
    \text{s.t.} \quad & \frac{N_b(0)}{k} + \sum_{f\in \mathcal{F}} \lambda_f x_{f,b} \geq L_b , \forall b\in\mathcal{B}, \notag\\
    & L_b \leq \mu_b(\infty), \forall b\in \mB, \notag\\
    & \sum_{b\in \mathcal{B}}x_{f,b} \leq 1 + \frac{G_f(0)}{k\lambda_f}, \forall f \in \mathcal{F}, \notag\\
    & x_{f,b} \geq 0, \forall (f,b) \in \mathcal{E}, \notag \\
    & x_{f,b} = 0, \forall (f,b) \notin \mathcal{E}. \notag
\end{align}
Thus \[\sum_{b\in \mB}\bar{L}_b(k) \leq \textsf{OPT-TP}(k).\]

We proceed to show the following:
\[\textsf{OPT-TP}(k) \leq \textsf{OPT-TP} + \sum_{b\in \tB}\frac{N_b(0)}{k} +  \sum_{f\in \tF} \frac{G_f(0)}{k},\]
then the result follows by taking $\limsup$ over $k$.

To see this, for any feasible solution $\bm{L}$ to \eqref{eq: FLU TP k}, we have
\begin{align*}
    \sum_{b\in \mB} L_b & = \sum_{b\in \tB} L_b + \sum_{b\in \mB\setminus\tB} L_b \stackrel{(a)}{\leq} \sum_{b\in \tB}\mu_b(\infty) + \sum_{b\in \tB}\frac{N_b(0)}{k} + \sum_{b\in \tB} \sum_{f\in \mF}\lambda_f x_{f,b} \\
    & \stackrel{(b)}{=} \sum_{b\in \tB}\mu_b(\infty) + \sum_{b\in \tB}\frac{N_b(0)}{k} + \sum_{f\in \tF}\lambda_f \sum_{b\in \tB}x_{f,b} \stackrel{(c)}{\leq}  \sum_{b\in \tB}\mu_b(\infty) + \sum_{b\in \tB}\frac{N_b(0)}{k} + \sum_{f\in \tF}\lambda_f \left(1+ \frac{G_f(0)}{k}\right)\\
    & \stackrel{(d)}{=} \textsf{OPT-TP} + \sum_{b\in \tB}\frac{N_b(0)}{k} +  \sum_{f\in \tF} \frac{G_f(0)}{k}
\end{align*}
where $(a)$ follows from 1) $L_b \leq \mu_b(\infty)$ for $b\in \mB\setminus \tB$ and 2) first constraint in \eqref{eq: FLU TP k}; $(b)$ follows as $\tF = \cup_{b\in \tB}\mF(b)$; $(c)$ follows from the third constraint in \eqref{eq: FLU TP k}; $(d)$ follows as $\textsf{OPT-TP} = \sum_{f\in \tF} \lambda_f + \sum_{b\in \mB\setminus\tB} \mu_b(\infty)$.
\end{proof}

\subsection{Proof of Proposition \ref{pro: maximum throughput}}
\begin{proof}
Theorem \ref{thm: general N convergence result} describes the precise values of the equilibrium service rates: $L_b^* = \mu_b(\tilde{N}_b^*)$ for $b\in \tB$ and $L_b^* = \mu_b(\infty)$ for $b\in \mB\setminus\tB$, with $\sum_{b\in \tB}\mu_b(\tilde{N}_b^*) = \sum_{f\in \tF} \lambda_f$. We need to show $\bm{L}^*$ is an optimal solution to \textsf{OPT-TP}.

First $(\bm{L}^*, \hat{\bx}^*)$ is feasible to $\textsf{OPT-TP}$ ($\hat{\bx}^*$ defined in the proof of Theorem \ref{thm: general N convergence result} as the equilibrium job assignment matrix).
To see $\sum_{b\in \mB} L_b^* = \textsf{OPT-TP}$, suppose not. Let ${\bm{L}}'$ denote the optimal service rates.
Then $\sum_{b\in \tB} {L}'_b > \sum_{b\in \tB} {L}^*_b$ as $L_b^* = \mu_b(\infty)$ for $b\in \mB\setminus\tB$ cannot be further improved. As $\sum_{b\in \tB}L_b^* = \sum_{f\in \tF} \lambda_f$, we must send more jobs to backends in $\tB$ to achieve service rates ${\bm{L}'}$. This is impossible as backends in $\tB$ are only connected to frontends in $\tF$.
\Halmos
\end{proof}

\subsection{Proof of Proposition \ref{pro: maximum stabilizing backend}}
\begin{proof}
Suppose not. There exists an optimal workload $\bm{L}' \in \mathcal{L}^*$ such that there exists $b \in B(\bm{L}'), b \notin B(\bm{L}^*)$. That is, there exists $b\in \mB\setminus\tB$ such that $L_b' < \mu_b(\infty)$.

Let $\bx'$ denote the corresponding optimal workload assignment matrix, we have
\begin{align*}
    \sum_{b\in \mB} L_b' &= \sum_{b\in \tB} L_b' + \sum_{b\in \mB\setminus\tB} L_b' \stackrel{(a)}{<}  \sum_{b\in \tB} L_b'  + \sum_{b\in \mB\setminus\tB} \mu_b(\infty) \stackrel{(b)}{\leq}  \sum_{b\in \tB} \sum_{f\in \mF} \lambda_f x_{f,b}'  + \sum_{b\in \mB\setminus\tB} \mu_b(\infty) \\
    & \stackrel{(c)}{\leq}  \sum_{f\in \tF}  \lambda_f \sum_{b\in \tB} x_{f,b}'  + \sum_{b\in \mB\setminus\tB} \mu_b(\infty) \stackrel{(d)}{\leq} \sum_{f\in \tF} \lambda_f + \sum_{b\in \mB\setminus\tB} \mu_b(\infty) \stackrel{(e)}{=} \textsf{OPT-TP},
\end{align*}
where $(a)$ follows as there must exist $b\in \mB\setminus \tB$ such that $L_b' < \mu_b(\infty)$; $(b)$ follows from the first constraint in \eqref{eq: FLU TP}; $(c)$ follows as $\tF = \cup_{b\in \tB}\mF(b)$; $(d)$ follows from the third constraint in \eqref{eq: FLU TP}; $(e)$ follows from Proposition \ref{pro: maximum throughput}. This contradicts $\bm{L}' \in \mathcal{L}^*$.
\Halmos
\end{proof}

\subsection{Proof of Proposition \ref{prof: minimum stabilized workload}}
\begin{proof}
The result follows directly from the definition of $\tilde{N}$ (restricted to $\tB$), which is the optimal solution to the optimization problem \eqref{eq: FLU two parts 1}.
\Halmos
\end{proof}

\section{Simulation}\label{apx: simulation}

In this section, we perform some simulations to illustrate the behavior of the GMSR policy in the discrete-time stochastic system and its convergence to the fluid model. We consider an example with two frontends connected to two backends as depicted in the ``N" model shown in Figure~\ref{fig: N model}. 

\begin{figure}[!htbp]
    \centering
\begin{tikzpicture}
    \node[circle, draw] (n1) at (0,0) {$f_1$};
    \node[circle, draw] (n2) at (0,-2) {$f_2$};
    \node[circle, draw] (l1) at (2,0) {$b_1$};
    \node[circle, draw] (l2) at (2,-2) {$b_2$};
    \node at (4,0) {$\mu_1(N_1) = \frac{N}{N+1}$};
    \node at (4,-2) {$\mu_2(N_2) = \frac{N}{N+2}$};
    \draw[->] (n1) -- (l1);
    \draw[->] (n2) -- (l2);
    \draw[->] (n2) -- (l1);
    \node (lambda1) at (-2,0) {$\lambda_1 = 0.4$};
    \node (lambda2) at (-2,-2) {$\lambda_2 = 0.6$};
    \draw[->] (lambda1) -- (n1);
    \draw[->] (lambda2) -- (n2);
\end{tikzpicture}
    \caption{``N'' Model}
    \label{fig: N model}
\end{figure}

We consider a sequence of discrete systems indexed by $(c)$ whose dynamics within time interval $[0,T]$ are as follows:
\begin{align*}
    &N_1^{(c)}(i+1) = N_1^{(c)}(i) + A_{1,1}^{(c)}(i) + A_{2,1}^{(c)}(i) - D_1^{(c)}(i), \quad N_2^{(c)}(i+1) = N_2^{(c)}(i) + A_{2,2}^{(c)}(i) - D_2^{(c)}(i), \\
    & A_{1,1}^{(c)}(i) = \text{Poisson}(0.4), \quad A_{2,1}^{(c)}(i) + A_{2,2}^{(c)}(i) = \text{Poisson}(0.6), \\
    & D_1^{(c)}(i) = 0 \text{ with probability } \frac{1}{N+1}, D_1^{(c)}(i) = 1 \text{ with probability } \frac{N}{N+1}, \\
    & D_2^{(c)}(i) = 0 \text{ with probability } \frac{2}{N+2}, D_2^{(c)}(i) = 1 \text{ with probability } \frac{N}{N+2}.
\end{align*}

We will plot the normalized workload defined as
\begin{align*}
    &Y_1^{(c)}(i+1) = Y_1^{(c)}(i) + \frac{1}{c}\left(A_{1,1}^{(c)}(i) + A_{2,1}^{(c)}(i) - D_1^{(c)}(i)\right), \quad Y_2^{(c)}(i+1) = Y_2^{(c)}(i) + \frac{1}{c}\left(A_{2,2}^{(c)}(i) - D_2^{(c)}(i)\right),
\end{align*}
and the solution, denoted by $N_1(t), N_2(t)$ to the differential inclusion
\[\dot{N}_1(t) = \lambda_1 + \lambda_2 x_{2,1}(t)- \mu_1(N_1(t)), \dot{N}_2(t) = \lambda_2 x_{2,2}(t)- \mu_2(N_2(t)),\]
where $ (x_{2,1}(t), x_{2,2}(t)) \in X_2((N_1(t), N_2(t))$. Note that the physical time for one discrete time step is $1/c$ in the $c^{th}$ step, thus the normalized workload at time $t$ with $t = i/c$ is $Y_b^{(c)}(i)$. The equilibrium of the system is $N_1^* = 2, N_2^* = 1$.

In panels (a)--(c) of Figure \ref{fig: workloads}, we plot the workloads at the two backends as a function of time until $T = 50$, each panel starting with a different initial state. These plots show both the workloads in the discrete-time stochastic model as well as the solution to the differential inclusion. We vary the system scaling parameter $c$, which affects the physical length of the time step and job size. We observe 1) as time progresses, the system workloads approach the equilibrium values and oscillate around it; and 2) as $c$ increases, the fluctuations in the stochastic system decrease and fluid model better approximates the stochastic system.

In panel (d) of Figure \ref{fig: workloads}, we present a scatter plot of the workloads at the two backends for various values of $c$. Here the system starts at the equilibrium point $N_1 = 2, N_2 = 1$, and we can see the system states oscillate around the equal gradient curve (the dotted curve) in the plot. As $c$ increases, the range of oscillation reduces, i.e., the system state orbits closely around the equilibrium point.

\begin{figure}[!htbp]
    \centering
    \captionsetup[subfigure]{font=footnotesize}
    \subcaptionbox{Initial state $(0,0)$}[0.48\textwidth]
    {%
    \resizebox{0.46\textwidth}{!}{%
    \begin{tikzpicture}[every node/.style={scale=0.8}]
        \begin{axis}[
            width=0.7\textwidth,
            height=0.4\textwidth,
            xlabel={Time $t$},
            ylabel={Workload},
            xmin = -2, xmax = 52,
            legend style={at={(1.05,1)}, anchor=north west},
            grid=major,
            cycle list name=color list,
        ]
        \pgfplotsset{
            cycle list={blue, red, green, orange, violet, cyan, teal, magenta, brown, gray}
        }
        \addplot table [x=Time, y={Queue 1 (c=20)}, col sep=comma] {data/queue_lengths_and_diff_inclusion.csv};
        \addlegendentry{$Y_1^{(20)}(t)$}
        \addplot table [x=Time, y={Queue 2 (c=20)}, col sep=comma] {data/queue_lengths_and_diff_inclusion.csv};
        \addlegendentry{$Y_2^{(20)}(t)$}
        \addplot table [x=Time, y={Queue 1 (c=100)}, col sep=comma] {data/queue_lengths_and_diff_inclusion.csv};
        \addlegendentry{$Y_1^{(100)}(t)$}
        \addplot table [x=Time, y={Queue 2 (c=100)}, col sep=comma] {data/queue_lengths_and_diff_inclusion.csv};
        \addlegendentry{$Y_2^{(100)}(t)$}
        \addplot[thick, color=black] table [x=Time, y={N1 (Differential Inclusion)}, col sep=comma] {data/queue_lengths_and_diff_inclusion.csv};
        \addlegendentry{$N_1(t)$}
        \addplot[thick, color=violet] table [x=Time, y={N2 (Differential Inclusion)}, col sep=comma] {data/queue_lengths_and_diff_inclusion.csv};
        \addlegendentry{$N_2(t)$}
        \end{axis}
    \end{tikzpicture}
    }%
    }%
    \hfill
    \subcaptionbox{Initial state $(1,2)$}[0.48\textwidth]
    {%
    \resizebox{0.46\textwidth}{!}{%
    \begin{tikzpicture}[every node/.style={scale=0.8}]
        \begin{axis}[
            width=0.7\textwidth,
            height=0.4\textwidth,
            xlabel={Time $t$},
            ylabel={Workload},
            xmin = -2, xmax = 52,
            legend style={at={(1.05,1)}, anchor=north west},
            grid=major,
            cycle list name=color list,
        ]
        \pgfplotsset{
            cycle list={blue, red, green, orange, violet, cyan, teal, magenta, brown, gray}
        }
        \addplot table [x=Time, y={Queue 1 (c=20)}, col sep=comma] {data/queue_lengths_and_diff_inclusion12.csv};
        \addlegendentry{$Y_1^{(20)}(t)$}
        \addplot table [x=Time, y={Queue 2 (c=20)}, col sep=comma] {data/queue_lengths_and_diff_inclusion12.csv};
        \addlegendentry{$Y_2^{(20)}(t)$}
        \addplot table [x=Time, y={Queue 1 (c=100)}, col sep=comma] {data/queue_lengths_and_diff_inclusion12.csv};
        \addlegendentry{$Y_1^{(100)}(t)$}
        \addplot table [x=Time, y={Queue 2 (c=100)}, col sep=comma] {data/queue_lengths_and_diff_inclusion12.csv};
        \addlegendentry{$Y_2^{(100)}(t)$}
        \addplot[thick] table [x=Time, y={N1 (Differential Inclusion)}, col sep=comma] {data/queue_lengths_and_diff_inclusion12.csv};
        \addlegendentry{$N_1(t)$}
        \addplot[thick, color=violet] table [x=Time, y={N2 (Differential Inclusion)}, col sep=comma] {data/queue_lengths_and_diff_inclusion12.csv};
        \addlegendentry{$N_2(t)$}
        \end{axis}
    \end{tikzpicture}
    }%
    }%
    \\[-0.2em]
    \subcaptionbox{Initial state $(2,4)$}[0.48\textwidth]
    {%
    \resizebox{0.46\textwidth}{!}{%
    \begin{tikzpicture}[every node/.style={scale=0.8}]
        \begin{axis}[
            width=0.7\textwidth,
            height=0.4\textwidth,
            xlabel={Time $t$},
            ylabel={Workload},
            xmin = -2, xmax = 52,
            legend style={at={(1.05,1)}, anchor=north west},
            grid=major,
            cycle list name=color list,
        ]
        \pgfplotsset{
            cycle list={blue, red, green, orange, violet, cyan, teal, magenta, brown, gray}
        }
        \addplot table [x=Time, y={Queue 1 (c=20)}, col sep=comma] {data/queue_lengths_and_diff_inclusion24.csv};
        \addlegendentry{$Y_1^{(20)}(t)$}
        \addplot table [x=Time, y={Queue 2 (c=20)}, col sep=comma] {data/queue_lengths_and_diff_inclusion24.csv};
        \addlegendentry{$Y_2^{(20)}(t)$}
        \addplot table [x=Time, y={Queue 1 (c=100)}, col sep=comma] {data/queue_lengths_and_diff_inclusion24.csv};
        \addlegendentry{$Y_1^{(100)}(t)$}
        \addplot table [x=Time, y={Queue 2 (c=100)}, col sep=comma] {data/queue_lengths_and_diff_inclusion24.csv};
        \addlegendentry{$Y_2^{(100)}(t)$}
        \addplot[thick] table [x=Time, y={N1 (Differential Inclusion)}, col sep=comma] {data/queue_lengths_and_diff_inclusion24.csv};
        \addlegendentry{$N_1(t)$}
        \addplot[thick, color=violet] table [x=Time, y={N2 (Differential Inclusion)}, col sep=comma] {data/queue_lengths_and_diff_inclusion24.csv};
        \addlegendentry{$N_2(t)$}
        \end{axis}
    \end{tikzpicture}
    }%
    }%
    \hfill
    \subcaptionbox{Scatter around equilibrium\label{fig: workloads scatter 12}}[0.48\textwidth]
    {%
    \resizebox{0.46\textwidth}{!}{%
    \begin{tikzpicture}
        \begin{axis}[
            width=0.8\textwidth,
            height=0.5\textwidth,
            xlabel={$Y_1^{(c)}$ (Queue 1 Length)},
            ylabel={$Y_2^{(c)}$ (Queue 2 Length)},
            xmin=1.6, xmax=2.6,
            ymin=0.6, ymax=1.7,
            legend style={at={(0.95,0.05)}, anchor=south east},
            grid=major
        ]
        \addplot[only marks, mark=square*, color=pink] table [x=N1_c20, y=N2_c20, col sep=comma] {data/queue_scatter_data_combined.csv};
        \addlegendentry{$c=20$}
        \addplot[only marks, mark=triangle*, color=orange] table [x=N1_c50, y=N2_c50, col sep=comma] {data/queue_scatter_data_combined.csv};
        \addlegendentry{$c=50$}
        \addplot[only marks, mark=diamond*, color=teal] table [x=N1_c100, y=N2_c100, col sep=comma] {data/queue_scatter_data_combined.csv};
        \addlegendentry{$c=100$}
        \addplot[only marks, mark=pentagon*, color=cyan] table [x=N1_c500, y=N2_c500, col sep=comma] {data/queue_scatter_data_combined.csv};
        \addlegendentry{$c=500$}
        \addplot[only marks, mark=*, color=blue] table [x=N1_c1000, y=N2_c1000, col sep=comma] {data/queue_scatter_data_combined.csv};
        \addlegendentry{$c=1000$}
        \addplot[only marks, mark=*, color=red] coordinates {(2,1)};
        \addlegendentry{$\bN^*$}
        \addplot[dashed, gray] coordinates {(1,0) (3,2)};
        \end{axis}
    \end{tikzpicture}
    }%
    }%
    \caption{Finite-system behavior in the stochastic model. Panels (a)--(c) compare normalized workloads for different values of $c$ with the fluid model as a function of physical time $t$. Panel (d) shows the scatter plot of normalized workloads for different $c$ values from initial state $(2,1)$; the dotted line is the equal-gradient curve $\{(N_1,N_2): \mu_1'(N_1)=\mu_2'(N_2)\}$.}
    \label{fig: workloads}
\end{figure}

\end{document}